\definecolor{ForestGreen}{rgb}{0.1333,0.5451,0.1333}
\definecolor{DarkRed}{rgb}{0.8,0,0}
\definecolor{Red}{rgb}{1,0,0}
\renewcommand*\backref[1]{\ifx#1\relax \else (cit. on p. #1) \fi} %http://latex.org/forum/viewtopic.php?t=3670
\newcommand{\curly}[1]{\left\{ #1 \right\}}
\newcommand{\paren}[1]{\left( #1 \right)}
\newcommand{\angularbrac}[1]{\langle{#1}\rangle}
\newcommand{\CONGEST}{\ensuremath{\mathsf{CONGEST}}\xspace}
\newcommand{\PRAM}{\ensuremath{\mathsf{PRAM}}\xspace}
\newcommand{\poly}{\mbox{\rm poly}}
\newcommand{\myvol}{\operatorname{vol}}
\newcommand{\MSGC}{\operatorname{MSGC}}
\newcommand{\tripartition}{\texttt{Tripartition}}
\newcommand{\id}{\ensuremath{{id}}\xspace}
\newcommand{\clustid}{\ensuremath{{groupId}}\xspace}
\newcommand{\clr}{\operatorname{color}}
\newcommand{\core}{\operatorname{Core}}
\newcommand{\regnodes}{\operatorname{Regular}}
\newcommand{\dist}{\operatorname{distance}}
\newcommand{\tree}{\mathcal{T}}
\DeclareMathOperator{\anc}{anc}
\DeclareMathOperator{\children}{children}
\newcommand\ancestor[2][]{%
	\ifstrempty{#1}{%
		\anc\paren{#2}
	}{%
		{\anc}_{#1}\paren{#2}
	}%
}
\newcommand\level[2][]{%
	\ifstrempty{#1}{%
		\texttt{level}\paren{#2}
	}{%
		\texttt{level}_{#1}\paren{#2}
	}%
}
\newcommand\parent[2][]{%
	\ifstrempty{#1}{%
		\pi\paren{#2}
	}{%
		\pi_{{#1}}\paren{#2}
	}%
}
\newcommand\edge[2][]{%
	\ifstrempty{#1}{%
		e\paren{#2}
	}{%
		e_{{#1}}\paren{#2}
	}%
}
\newcommand\child[2][]{%
	\ifstrempty{#1}{%
		\children\paren{#2}
	}{%
		\children_{{#1}}\paren{#2}
	}%
}
\newcommand\desc[2][]{%
	\ifstrempty{#1}{%
		{#2}^{\downarrow}
	}{%
		{#2}^{\downarrow {#1}}
	}%
}
\def\degree{\mathrm{deg}}
\def\vlow{V_{\operatorname{low}}}
\def\ndel{n^{\gamma}}
\def\ndell{n^{2\delta}}
\def\Remove#1{{\sf Remove}\text{-}#1}
\def\Split#1{{\sf Split}\text{-}#1}
\newcommand{\newvol}{V_{\operatorname{vol}}}
\newcommand{\EhExp}{E_h}
\newcommand{\msg}{\texttt{msg}}
\DeclareMathOperator*{\Prob}{\ensuremath{\textnormal{Pr}}}
\renewcommand{\Pr}{\Prob}
\newcommand{\mix}{\ensuremath{\tau_{\operatorname{mix}}}}
\newcommand{\ID}{{\operatorname{ID}}}
\DeclarePairedDelimiter\ceil{\lceil}{\rceil}
\DeclareMathOperator{\polylog}{polylog}
\DeclareMathOperator{\mapping}{\texttt{mapping} }
\DeclareMathOperator{\false}{false}
\DeclareMathOperator{\true}{true}
\def\danupon#1{\marginpar{$\leftarrow$\fbox{dn}}\footnote{$\Rightarrow$~{\sffamily #1 --Danupon}}}
\def\thatchaphol#1{\marginpar{$\leftarrow$\fbox{ts}}\footnote{$\Rightarrow$~{\sffamily #1 --Thatchaphol}}}
\def\mohit#1{\marginpar{$\leftarrow$\fbox{md}}\footnote{$\Rightarrow$~{\sffamily #1 --Mohit}}}
\def\monika#1{\marginpar{$\leftarrow$\fbox{mh}}\footnote{$\Rightarrow$~{\sffamily #1 --Monika}}}
\def\danupon#1{}
\def\thatchaphol#1{}
\def\mohit#1{}
\def\monika#1{}
\def\thmt@refnamewithcomma #1#2#3,#4,#5\@nil{%
	\@xa\def\csname\thmt@envname #1utorefname\endcsname{#3}%
	\ifcsname #2refname\endcsname
	\csname #2refname\expandafter\endcsname\expandafter{\thmt@envname}{#3}{#4}%
	\fi
}
\declaretheorem[numberwithin=section,refname={Theorem,Theorems},Refname={Theorem,Theorems},name={Theorem}]{thm}
\declaretheorem[numberlike=thm,refname={Lemma,Lemmas},Refname={Lemma,Lemmas},name={Lemma}]{lem}
\declaretheorem[numberlike=thm,refname={Lemma,Lemmas},Refname={Lemma,Lemmas},name={Lemma}]{lemma}
\declaretheorem[numberlike=thm,refname={Observation,Observations},Refname={Observation,Observations},name={Observation}]{obs}
\declaretheorem[numberlike=thm,refname={Corollary,Corollaries},Refname={Corollary,Corollaries},name={Corollary}]{cor}
\declaretheorem[numberlike=thm,refname={Observation,Observations},Refname={Observation,Observations}]{observation}
\declaretheorem[numberlike=thm,refname={Definition,Definitions},Refname={Definition,Definitions},name={Definition}]{defn}
\declaretheorem[numberlike=thm,refname={Claim,Claims},Refname={Claim,Claims}]{claim}
\newcommand{\shortOnly}[1]{\ifthenelse{\boolean{short}}{#1}{}}
\newcommand{\longOnly}[1]{\ifthenelse{\boolean{short}}{}{#1}}
\newcommand{\squishlist}{
	\begin{list}{$\bullet$}
		{ \setlength{\itemsep}{0pt}
			\setlength{\parsep}{2pt}
			\setlength{\topsep}{2pt}
			\setlength{\partopsep}{0pt}
			\setlength{\leftmargin}{1.5em}
			\setlength{\labelwidth}{1em}
			\setlength{\labelsep}{0.5em} } }
	\newcommand{\squishend}{
\end{list}  }
\renewcommand{\paragraph}[1]{\medskip\noindent{\bf #1}\xspace}
\def\*#1*\ {}
\renewcommand{\paragraph}{%
	\@startsection{paragraph}{4}%
	{\z@}{1ex \@plus 1ex \@minus .2ex}{-1em}%
	{\normalfont\normalsize\bfseries}%
}
	\let\ref\Cref
\Crefname{algocf}{Algorithm}{Algorithms}
\let\cref@old@stepcounter\stepcounter
\def\stepcounter#1{%
	\cref@old@stepcounter{#1}%
	\cref@constructprefix{#1}{\cref@result}%
	\@ifundefined{cref@#1@alias}%
	{\def\@tempa{#1}}%
	{\def\@tempa{\csname cref@#1@alias\endcsname}}%
	\protected@edef\cref@currentlabel{%
		[\@tempa][\arabic{#1}][\cref@result]%
		\csname p@#1\endcsname\csname the#1\endcsname}}
\author[1]{Mohit Daga}
\author[2]{Monika Henzinger}
\author[1]{Danupon Nanongkai}
\author[3]{Thatchaphol Saranurak\thanks{Work partially done while at KTH Royal Institute of Technology, Sweden.}}
\affil[1]{KTH Royal Institute of Technology, Stockholm, Sweden}
\affil[2]{University of Vienna, Vienna, Austria}
\affil[3]{Toyota Technological Institute, Chicago, USA}
\begin{document}

	\newcommand*\samethanks[1][\value{footnote}]{\footnotemark[#1]}
	
%\title{Distributed Minimum Cut of a Simple Graph in Sublinear Time}
\title{Distributed Edge Connectivity in Sublinear Time}	
%\title{Deterministic Global Minimum Cut of a Simple Graph in Sublinear Time}
%	\title{Sublinear Time Exact Distributed Min-Cut}

%	\date{\today, \currenttime}	
	\date{}
	\pagenumbering{roman}
	\maketitle
	
% V3:Go very short. No talk about general exact vs. approx
\begin{abstract}
%Consider computing	edge connectivity and finding the minimum-cardinality cut on distributed message-passing networks (the CONGEST model). 
%
%We present the first sublinear-time algorithm that can compute the edge connectivity $\lambda$ of a network {\em exactly} on distributed message-passing networks (the CONGEST model), as long as the network contains no multi-edge. 
We present the first sublinear-time algorithm for a distributed message-passing network sto compute its edge connectivity $\lambda$ {\em exactly} in the CONGEST model, as long as there are no parallel edges. 
Our algorithm takes $\tilde O(n^{1-1/353}D^{1/353}+n^{1-1/706})$ time to compute $\lambda$ and a cut of cardinality $\lambda$ with high probability, where $n$ and $D$ are the number of nodes and the diameter of the network, respectively, and $\tilde O$ hides polylogarithmic factors. This running time is sublinear in $n$ (i.e. $\tilde O(n^{1-\epsilon})$) whenever $D$ is. 
Previous sublinear-time distributed algorithms can solve this problem either (i) exactly only when $\lambda=O(n^{1/8-\epsilon})$ [Thurimella PODC'95; Pritchard, Thurimella, ACM Trans. Algorithms'11; Nanongkai, Su, DISC'14] or (ii)  approximately [Ghaffari, Kuhn, DISC'13; Nanongkai, Su, DISC'14].\footnote{Note that the algorithms of [Ghaffari, Kuhn, DISC'13] and [Nanongkai, Su, DISC'14] can in fact approximate the minimum-weight cut.} 
	
To achieve this we develop and combine several new techniques.\danupon{I'm not sure we really need the first sentence.} 
First, we design the first distributed algorithm that can compute a {\em $k$-edge connectivity certificate} for any $k=O(n^{1-\epsilon})$ in time $\tilde O(\sqrt{nk}+D)$. The previous sublinear-time algorithm can do so only when $k=o(\sqrt{n})$ [Thurimella PODC'95]. 
\danupon{The running time in actually $\tilde O(\sqrt{nk}+D)$ but this is not stated yet.}
%Our algorithm takes $\tilde O(\sqrt{nk}+D)$ time, whic
In fact, our algorithm can be turned into the first parallel algorithm with polylogarithmic depth and near-linear work. Previous near-linear work algorithms are essentially sequential and previous polylogarithmic-depth algorithms require $\Omega(mk)$ work in the worst case (e.g. [Karger, Motwani, STOC'93]). 
\danupon{The theorem is not stated or proven.}
Second, we show that by combining the recent  distributed expander decomposition technique of [Chang, Pettie, Zhang, SODA'19] with  techniques from the sequential deterministic edge connectivity algorithm of [Kawarabayashi, Thorup, STOC'15], we can decompose the network into a sublinear number of clusters with {\em small average diameter} and without any mincut separating a cluster (except the ``trivial'' ones). This leads to a simplification of the Kawarabayashi-Thorup framework (except that we are randomized while they are deterministic). This might make this framework more useful in other models of computation. 
Finally, by extending the tree  packing technique from [Karger STOC'96], we can find the minimum cut in time proportional to the number of components. As a byproduct of this technique, we obtain an $\tilde O(n)$-time algorithm for computing exact minimum cut for \emph{weighted} graphs.
%	
%	- sublinear component when graph has high min-degree
%	
%	- connectivity certificate in sublinear time ... Thurimella can only do for small $\lambda$. Side: PRAM polylogarithmic depth (indep of $\lambda$), near-linear work. Previously near-linear work algorithms require depth depending on $\lambda$. [Cheriyan, Thurimella, STOC'91]
%	
%	- Thorup is pretty sequential  (also Henzinger...) ... Also conceptually simpler 
%	
%	- Karger tree packing [STOC'96]
\end{abstract}

	\pagebreak{}
	\tableofcontents{}
	
	\pagebreak{}
	\pagenumbering{arabic}	
	
\section{Introduction\label{sec:intro}}

%\thatchaphol{Our sparse $k$-edge connectivity certificate algorithm is the first $\tilde{O}(m)$ work and $\tilde{O}(1)$ depth algorithms for general $k$. I tried look up the literature and it seems all previous works need $\Omega(mk)$ work. (e.g. https://www.sciencedirect.com/science/article/pii/S0196677400911441) 

Edge connectivity is a fundamental graph-theoretic concept measuring the minimum number of edges to be removed to disconnect a graph $G$. We give a new algorithm for computing this measure in the \CONGEST model of distributed networks.
In this model a network is represented by an unweighted, undirected, connected $n$-node graph $G=(V, E)$.
Nodes represent processors with unique IDs and infinite computational power that initially only know their incident edges. 
They can communicate with each other in {\em rounds}, where in each round each node can send a message of size $O(\log n)$ to each neighbor. 
The goal is for nodes to finish some tasks together in the smallest number of rounds, called {\em time complexity}.
%\footnote{This problem is sometimes referred to as {\em name-independent routing schemes}. See, e.g. \cite{LenzenP_stoc13,LenzenP-podc15}  for discussions and results on another variant called {\em name-dependent routing schemes} which is not considered in this paper.}   
%
%It is usually expressed in terms of $n$ and $D$, where $n$ is the numer of nodes and $\diam$ is the diameter of the network when edge weights are omitted. Throughout we use $\tilde \Theta$, $\tilde O$ and $\tilde \Omega$ to hide polylogarithmic factors in $n$. See \Cref{sec:prelim} for details of the model.
%
The time complexity is usually expressed in terms of $n$ and $D$, the number of nodes and the diameter of the network. Throughout we use $\tilde \Theta$, $\tilde O$ and $\tilde \Omega$ to hide polylogarithmic factors in $n$. (See \Cref{sec:prelim} for details of the model.)

There are two natural objectives for computing a network's edge connectivity. The first is to make every node knows the edge connectivity of the network, denoted by $\lambda$. 
%Another objective is for each node to learn which of its incident edges are in the {\em minimum cut} (in short, {\em mincut}), referring to the set of $\lambda$ edges whose removals disconnect the graph.
The second is to learn about a set $C$ of $\lambda$ edges whose removals disconnect the graph, typically called a {\em mincut}. In this case, it is required that every node knows which of its incident edges are in $C$.\footnote{Readers who are new to distributed computing may wonder whether it is also natural to have a third objective where every node is required to know about {\em all} edges in the mincut. This can be done fairly quickly after we achieve the second objective, i.e. in $O(\min(\lambda,\sqrt{n\lambda})+D)$ rounds \cite{Censor-HillelGK14-decomposition}. 
For this reason, we do not consider this objective here.}
% It is easy to argue that this requires $\Omega(\lambda+D)$ time. Moreover, this can be achieved in $O(\lambda+D)$ time after we achieve the second objective. So, we do not discuss this third objective in the paper.} 
%
Since our results and other results hold for both objectives, we do not distinguish them in the discussion below.

It is typically desired that distributed algorithms run in {\em sublinear time}, meaning that they take $\tilde O(n^{1-\epsilon}+D)$ time for some constant $\epsilon>0$.\footnote{An exception is when a linear-time lower bound can be proved, e.g. for all-pairs shortest paths and diameter (e.g. \cite{BernsteinN19,HuangNS17,Elkin-STOC17,LenzenP_podc13,HolzerW12,FrischknechtHW12,PelegRT12,AbboudCK16}).}. 
%Starting from the dist of Garay, Kutten, and Peleg
%
Such algorithms have been achieved for many problems in the literature, such as minimum spanning tree, single-source shortest paths, and maximum flow \cite{Elkin-STOC17,GhaffariL18,ForsterN18,Censor-HillelGK14-decomposition,BeckerKKL16,Nanongkai-STOC14,NanongkaiS14,GhaffariK13,GhaffariKKLP15,KuttenP98,GarayKP98}. In the context of edge connectivity, the first sublinear-time algorithm, due to Thurimella \cite{Thurimella95}, was for finding if $\lambda=1$ (i.e. finding a {\em cut edge}) and takes $O(\sqrt{n}\log^* n + D)$ time. This running time was improved to $O(D)$ by Pritchard and Thurimella \cite{PritchardT11}, who also presented an algorithm with the same running time for $\lambda=2$ (i.e. they can find a so-called {\em cut pair}). More recently, by adapting Thorup's tree packing \cite{Thorup07}, Nanongkai and Su \cite{NanongkaiS14} presented a $O((\sqrt{n}\log^* n + D)\lambda^4)$-time algorithm, achieving sublinear time for any $\lambda=n^{1/8-\epsilon}$. 

%By random sampling, this algorithm can also be easily extended to a $O((\sqrt{n}\log^* n + D)\epsilon^{-5}\log^3(n))$-time algorithms that can  $(1+\epsilon)$-approximate $\lambda$, for any $\lambda$, as well as the minimum-weight cut, giving an improvement over the previous approximation algorithms by Ghaffari and Kuhn \cite{GhaffariK13}. 

To compute $\lambda$ when $\lambda\geq n^{1/8}$, we are only aware of {\em approximation algorithms}. The state-of-the-art is the $O((\sqrt{n}\log^* n + D)\epsilon^{-5}\log^3(n))$-time $(1+\epsilon)$-approximation algorithm of Nanongkai and Su \cite{NanongkaiS14}, which is an improvement over the previous approximation algorithms by Ghaffari and Kuhn \cite{GhaffariK13}. In fact,  both algorithms can approximate the minimum-weight cut, and the running time of $O((\sqrt{n}\log^* n + D)\epsilon^{-5}\log^3(n))$ matches a lower bound of \cite{DasSarmaHKKNPPW12} up to polylogarithmic factors; this lower bound holds even for $\poly(n)$-approximation algorithms and on unweighted graphs \cite{GhaffariK13} (also see  \cite{ElkinKNP14,KorKP13,Elkin06,PelegR00}).

% (it was also later extended to unweighted graphs \cite{GhaffariK13}). In other words, no significantly better approximation algorithm exists. 

%A more dramatic improvement was made when {\em approximate solutions} are allowed: 
% *** \danupon{I omitted the detailed discussions for lower bounds (commented below) since they are not really relevant.} ***
%
%On the negative side, it was known that $\tilde \Omega(\sqrt{n})$ time is needed even to approximate the minimum-weight cut on unweighted multigraphs on a $O(\log n)$-diameter network, due to  Ghaffari and Kuhn \cite{GhaffariK13} (extending from the lower bound for weighted graphs from \cite{DasSarmaHKKNPPW12}).\footnote{In our distributed setting, unweighted multigraphs are different from weighted graphs in that a multi-edge allows more communication while the edge weights do not affect communication.} This lower bound holds even when a fairly large approximation ratio is allowed. 
%%
%In simple graphs, \cite{GhaffariK13} showed that any $\alpha$-approximation algorithm requires $\tilde \Omega(\sqrt{\frac{n}{\alpha\lambda}})$ time on a $\tilde O(\frac{1}{\lambda}\cdot \sqrt{\frac{n}{\alpha\lambda}})$-diameter network.  

Given that approximating edge connectivity is well-understood, a big open problem that remains is whether we can compute $\lambda$ {\em exactly}. This question in fact reflects a bigger issue in the field of distributed graph algorithms: While there are plenty of sublinear-time {\em approximation} algorithms, many of which are tight, very few sublinear-time {\em exact} algorithms are known. This is the case for, e.g., minimum cut, maximum flow, and maximum matching (e.g. \cite{HenzingerKN-STOC16,BeckerKKL16,Nanongkai-STOC14,NanongkaiS14,GhaffariK13,GhaffariKKLP15,AhmadiKO18-matching}). 
To the best of our knowledge, the only exceptions  are the classic exact algorithms for minimum spanning tree \cite{GarayKP98,KuttenP98} and very recent results on exact single-source and all-pairs shortest paths \cite{Elkin-STOC17,GhaffariL18,ForsterN18,BernsteinN19,HuangNS17}. A fundamental question here is whether other problems also admit sublinear-time exact algorithm, and to what extent such algorithms can be efficient. 
\paragraph{Our Contributions.} 
We present the first sublinear-time algorithm that can compute  $\lambda$ {\em exactly} for any $\lambda$. Our algorithm works on simple graphs, i.e. when the network contains no multi-edge.
\begin{thm}\label{thm:main}
There is a distributed algorithm that, after \sloppy $\tilde O(n^{1-1/353}D^{1/353}+n^{1-1/706})$ time, w.h.p. (i) every node knows the network's edge connectivity $\lambda$, and (ii) there is a cut $C$ of size $\lambda$ such that every node knows which of its incident edges are in $C$.\footnote{We say that an event holds with high probability (w.h.p.) if it holds with probability at least $1-1/n^\epsilon$, where $\epsilon$ is an arbitrarily large constant.}
\end{thm}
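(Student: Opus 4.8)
The plan is to combine three ingredients, each of which we develop in a dedicated section: (1) a fast distributed $k$-edge-connectivity certificate, (2) a distributed expander-decomposition-based clustering that isolates all nontrivial mincuts while keeping the cluster count sublinear and the average cluster diameter small, and (3) a tree-packing-based mincut finder whose running time scales with the number of clusters. The high-level dichotomy is the standard one for exact edge connectivity on simple graphs: either the mincut is ``trivial'' (one side is a single low-degree vertex, so $\lambda = \delta$, the minimum degree), which every node can detect locally, or the mincut is nontrivial, in which case $\lambda = \Omega(\log n)$-type sparsity guarantees let us apply the Kawarabayashi--Thorup contraction machinery. I would first handle the trivial case, which is immediate: in $O(D)$ rounds compute $\delta$ via aggregation over a BFS tree, and keep $\delta$ as a running upper bound on $\lambda$ together with the corresponding star cut.

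Next I would set a threshold $k = \tilde O(n^{\beta})$ for an appropriately chosen exponent $\beta$ and run the certificate algorithm to reduce to a sparse subgraph $H$ with $O(nk)$ edges that preserves all cuts of size $\le k$; if $\delta \le k$ we are already done by the trivial case, so assume $\lambda > k$ is possible and every remaining candidate mincut is nontrivial. On $H$ I would run the distributed expander decomposition (Chang--Pettie--Zhang) and then the Kawarabayashi--Thorup-style trimming/sparsification to contract each expander cluster so that no nontrivial mincut is split; the point advertised in the abstract is that this produces a contracted multigraph on a \emph{sublinear} number of supernodes, and moreover the clusters have small \emph{average} diameter, which is what lets us simulate computation on the contracted graph efficiently in \CONGEST. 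Finally I would apply the extended Karger tree-packing step: pack $O(\log n)$ spanning trees (or a fractional packing) of the contracted graph, and for each tree search for a minimum cut crossing it in $1$ or $2$ tree edges; because the number of components/supernodes is sublinear, this search costs $\tilde O(n^{1-\epsilon} + D)$. Take the best of this and the trivial-case upper bound $\delta$.

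For correctness, the two cases are exhaustive, the certificate step preserves all small cuts, the contraction step provably does not merge the two sides of any nontrivial mincut, and tree packing provably exposes any mincut of the contracted graph in $\le 2$ tree edges of some packed tree; putting these together gives that the value reported equals $\lambda$ w.h.p., and the cut can be propagated so each node learns its incident cut edges. For the running time, one adds the certificate cost $\tilde O(\sqrt{nk}+D)$, the expander-decomposition cost (polynomially sublinear plus $\poly(D)$ overhead), and the tree-packing/search cost, then optimizes $\beta$ and the expander parameters; the specific exponents $1/353$ and $1/706$ fall out of balancing the $n^{1-1/\text{const}}D^{1/\text{const}}$ terms coming from the different subroutines.

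The main obstacle I expect is the second step: making the Kawarabayashi--Thorup contraction work \emph{distributedly} within the budget. Sequentially one freely recurses, re-decomposes, and contracts; here every such operation must be realized by communication, and the contracted graph's edges may be long paths in the original network, so one needs the ``small average diameter'' guarantee (and careful routing/scheduling, e.g. via low-congestion shortcuts or the part-wise aggregation primitives) to avoid a $\Theta(n)$ blowup. A secondary difficulty is that expander decomposition and trimming are randomized and only guarantee the no-split property w.h.p. for the clusters produced, so one must argue that conditioning does not interact badly with the later tree-packing randomness — this is handled by a union bound over the $\poly(n)$-many relevant events.
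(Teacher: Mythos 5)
Your overall architecture (certificate $\to$ expander-decomposition contraction $\to$ tree packing) matches the paper, but there is a genuine gap in the way you close the small-connectivity case, and it is precisely the case that the paper's final argument spends most of its effort on.

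You propose to handle the ``easy'' side of the dichotomy by computing $\delta$ in $O(D)$ rounds and declaring that if $\delta \le k$ you are ``already done by the trivial case.'' That is not correct: $\delta \le k$ only gives you the bound $\lambda \le \delta$, and the mincut may well be nontrivial with $\lambda < \delta$. In that regime the Kawarabayashi--Thorup contraction is of no help either, because the whole point of the contraction is that it produces $\tilde O(n/\delta)$ clusters; when $\delta$ is small this is near-linear, so your Step (2) does not make the graph small and your Step (3) does not run in sublinear time. So a small nontrivial mincut on a low-degree graph (say $\lambda = 3$, $\delta = 5$) defeats every branch of your algorithm. Taking the best of your output and the trivial upper bound $\delta$ does not repair this, since neither branch produces $\lambda$.

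What the paper actually does is run two algorithms and take the better one: the new contraction-plus-tree-packing algorithm $\mathcal{A}$, whose running time improves as $\delta$ grows (write $\delta=n^{2\epsilon}$; the cost is $\tilde O(D + n^{1-\epsilon/44})$), and the Nanongkai--Su exact algorithm from \Cref{thm:NS14_main}, which runs in $\tilde O((\sqrt{n}+D)\lambda^4)$ and is therefore fast precisely when $\lambda$ (hence $\delta$) is small. Your proposal never invokes this second algorithm, and without it there is no sublinear guarantee in the small-$\delta$, nontrivial-mincut regime. The specific exponents $1/353$ and $1/706$ do not ``fall out of balancing the subroutine costs'' in the way you describe; they come from cross-over points between $\mathcal{A}$ and Nanongkai--Su, balanced over the two parameters $\epsilon$ (determined by $\delta$) and $\mu$ (determined by $D$ via $D \approx n^{1-\mu}$), with separate cases for $\mu > 1/2$ and $\mu \le 1/2$. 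To fix your proof, replace the ``trivial-case'' fallback with the Nanongkai--Su exact algorithm and carry out the two-parameter balance; the rest of your plan then matches the paper's argument.
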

As a byproduct of our technique, we also obtain a $O(n \text{ polylog}n)$-round algorithm for computing exact minimum cut in \emph{weighted} graphs (see \Cref{thm:min_cut_o_n_general_graph}).

To achieve \Cref{thm:main}, we develop and combine several new techniques from both distributed and static settings. First, note that we can also assume that we know the approximate value of $\lambda$ from \cite{NanongkaiS14,GhaffariK13}.
More importantly, the previous algorithm of \cite{NanongkaiS14} can already compute $\lambda$ in sublinear time when $\lambda$ is small; so, we can focus on the case where $\lambda$ is large here (say $\lambda=\Omega(n^{c})$ for some constant $c>0$). 
Our algorithm for this case is influenced by the static connectivity algorithm of Kawarabayashi and Thorup (KT) \cite{kawarabayashi2015deterministic}, but we have to make many detours. The idea is as follows. In \cite{kawarabayashi2015deterministic}, it is shown that if a {\em simple} graph $G=(V, E)$ of minimum degree $\delta$ has edge connectivity strictly less than $\delta$, then there is a near-linear-time static algorithm that partitions nodes in $G$ into  $\tilde O(n/\delta)$ many clusters in such a way that no mincut separates a cluster; i.e. for any mincut $C\subseteq E$, every edge in $C$ must have two end-vertices in different clusters. Once this is found, we can apply a fast static algorithm on a graph where each cluster is contracted into one node.
%a new graph $H$ with$ nodes and $\tilde O(m/\delta)$ edges. 
%
Since in our case $\delta\geq \lambda=\Omega(n^{\epsilon})$, the KT algorithm gives hope that we can partition our network into  $\tilde O(n/\delta)=\tilde O(n^{1-\epsilon})$ clusters. Then we maybe able to design a distributed algorithm that takes time near-linear in the number of clusters. There are however several obstacles:
\begin{enumerate}[label=(\roman*)]
	\item \label{item:sparse_cert} The KT algorithm requires to start from a {\em $\lambda$-edge connectivity certificate}, i.e. a subgraph of $O(n\lambda)$ edges with connectivity $\lambda$. However, existing distributed algorithms can compute this only for $\lambda=o(\sqrt{n})$. 
	\item \label{item:sequential} The KT algorithm is highly sequential. For example, it alternatively applies the contraction and trimming steps to the graph several times. 
	%it consists of a sequence of decomposition and contraction steps of the graph. 
	\item \label{item:mincut_contracted} Even if we can get the desired clustering, it is not clear how to compute $\lambda$ in time linear in the number of clusters. In fact, there is even no 
%\mohit{\textbf{some issue with the sentence structure, perhaps this should be good: there does not exist even an}} 
$\tilde O(n)$-time algorithm for computing $\lambda$. 
\end{enumerate}

For the first obstacle, the previous algorithm for computing a $\lambda$-edge connectivity certificate is by Thurimella \cite{Thurimella95}. It takes $O((\sqrt{n}\log^*n+D)\lambda)$, which is too slow when $\lambda$ is large. To get around this obstacle, we design a new distributed algorithm that can compute a $\lambda$-edge connectivity certificate in $\tilde O(\sqrt{n\lambda}+D)$ time. 
The algorithm is fairly intuitive: We randomly partition edges into $c=\lambda/\polylog(n)$ groups. Then we compute an $O(\polylog(n))$-edge connectivity certificate for each group {\em simultaneously}. This is doable in $\tilde O(\sqrt{nc}+D)$ time by fine-tuning parameters of Kutten-Peleg's minimum spanning tree algorithm \cite{KuttenP98} and using the scheduling of \cite{ghaffari2015near}, as discussed in \cite{ghaffari2015near}. 

This algorithm also leads to the first parallel algorithm for computing a $2$-edge connectivity certificate with polylogarithmic depth and near-linear work. To the best of our knowledge, previous near-linear work algorithms are essentially sequential and previous polylogarithmic-depth algorithms require $\Omega(mk)$ work in the worst case (e.g. \cite{KargerM97}). 

For the second obstacle, we first observe that the complex sequential
algorithm for finding clusters in the KT algorithm can be significantly
simplified into a few-step algorithm, if we have a black-box algorithm called
\emph{expander decomposition}. Expander decomposition was introduced
by Kannan et al. \cite{KannanVV00} and is proven to be useful for
devising many fast algorithms \cite{SpielmanT04,OrecchiaV11,OrecchiaSV12,KelnerLOS14,CohenKPPRSV17,ChuGPSSW18}
and also dynamic algorithms \cite{NanongkaiS16,NanongkaiSW17,Wulff-Nilsen16a}. 
With this algorithm, we do not need most of the KT algorithm, except some simple procedures called trimming and shaving, which can be done locally at each node. More importantly, we can avoid the long sequence of contraction and trimming steps (we need to apply these steps only once). 
%More importantly, we can avoid the long sequence of contraction, decomposition, shaving, and trimming steps (we need the shaving and trimming steps only once). 
%
Unfortunately, there is no efficient distributed algorithm for computing the expander decomposition.\footnote{It would be possible to obtain this using the balanced sparse cut  algorithm claimed by Kuhn and Molla \cite{KuhnM15}, but as noted in \cite{chang2018distributed}, the claim is incorrect. We thank Fabian Kuhn for clarifying this issue. After our paper is announced, an efficient distributed algorithm for computing balanced sparse cut is correctly shown in \cite{ChangS19}.} However, we can slightly adjust a very recent algorithm by Chang~et~al.~\cite{chang2018distributed} to obtain a weaker variant of the expander decomposition, which is enough for us.
%to achieved something close enough. 
%We can adjust their algorithm to get a decomposition that we need.  

For the third obstacle, our main insight is the observation that the clusters obtained from the KT algorithm (even after our modification) has {\em low average diameter} ($O(n^c)$ for some small constant $c$). Intuitively, if every cluster has small diameter, then we can run an algorithm on a smaller network where we pretend that each cluster is a node. The fact that clusters have lower average degree is not as good, but it is good enough for our purpose: we can adjust Karger's near-linear-time algorithm \cite{karger2000minimum} to compute $\lambda$ in time near-linear in the number of clusters. 

%To this end, we note that running the algorithm without the clust

%we first observe that we can compute $\lambda$ in $\tilde O(n)$ time by adjusting Karger's near-linear-time algorithm \cite{karger2000minimum}. 

\section{Preliminaries\label{sec:prelim}}
\paragraph{Model}
We work in the \CONGEST model \cite{peleg2000distributed}. This is a distributed model for networks which allows synchronous message-passing between any two nodes in the network connected by a direct communication link. The bandwidth is considered to be bounded. Also, the links and nodes are considered to be fault resistant. 
More formally defined, in the \CONGEST model, communication network is modeled as a undirected graph $G = (V, E) $ where each node in $V$ models a processor and each pair of nodes $ \{ u, v \} \in E \subseteq \binom{V}{2} $ is modeled as a link between the processors corresponding to $ u $ and $ v $, respectively.
In the remainder of this paper, we identify vertices, nodes and processors. Also, we use edges for links. In the \CONGEST model, at the beginning each node $v \in V$  has a unique identifier $\id(v)$ of size $O(\log{n})$ (where $ n = |V| $) which is known to node $v$ itself and all its neighbors, i.e., the nodes to which $v$ is connected with a direct communication link. For brevity we will assume that for all node $\id(v) \in [n]$. \footnote{This is a restricted property from general \CONGEST model and can be achieved in $O(D)$ rounds.} In \CONGEST model, message passing between any two nodes connected with direct links occur in synchronous rounds. Lets fix an arbitrary node $v\in V$. At the beginning of each round, node $v$ may send to each of its neighbors a message of size $\Theta (\log{n})$ to all its neighbors. Before the next round begins node $v$ may perform internal computation based on all messages it has received so far and its local knowledge of the network. In the $\CONGEST$ model, the complexity of any algorithm is a measure of the total number of rounds required before the algorithm terminates. The internal computation is not charged.

\paragraph{Notations}
We are given a undirected unweighted simple graph $G = (V,E)$ where $V$ is the vertex set and $E$ is the edge set. We use $n = |V|$ and $m = |E|$. Throughout this paper, we will use $\delta$ to denote the min-degree and $\lambda$ for edge-connectivity of the graph. For $E'\subseteq E$, we use $G[E']$ to be the subgraph of $G$ induced by $E'$. Similarly $G[V']$ for $V' \subseteq V$. Also for any graph $H$, we use $Diam(H) \triangleq \max_{u,v \in V}\dist_H(u,v)$ to denote the diameter of graph $G$. For any vertex $v$, $\deg(v)$ is the degree of the vertex. For a  $U\subset V$, $\myvol(U) = \sum_{u \in U}\deg(u)$. For some subgraph $H$ of $G$ we use $\deg_H(v)$ to denote the degree of vertex $v$ in the subgraph $H$. Lack of subscript implies that the degree is considered with respect to given graph $G$. Similarly, we skip subscript for $\myvol$. A cut (edge cut) is a set of edges $C$, whose deletion from the graph partitions the vertex set $V$ into two connected components $\curly{U,V\setminus U}$. We will represent a cut as an edge set, in which case we say \emph{a cut $C$ of $G$}. At times we will use a partition $\curly{U,T = V\setminus U}$ of vertex set $V$ to represent a cut, then we say \emph{a cut $(U,T)$ of $G$}. For any vertex $v$, we call cuts of the form $(\curly{v},V \setminus \curly{v})$ \emph{trivial}. We use $\partial(U)$ to mean the edges in the cut $(U,V\setminus U)$. For any $U \subset V$, conductance of the cut  $(U,V\setminus {U})$ is defined as $\phi(U) \triangleq \frac{\partial(U)}{\min\curly{\myvol(U),\myvol(V\setminus U)}}$. Further, conductance of a graph $G$ is defined by $\Phi(G) \triangleq \min_{U\subset V} \phi(U)$. For brevity, for any $X \subset V$, we use $\Phi(X)$ instead of $\Phi(G[X])$ to mean the conductance of the subgraph $G[X]$.

\paragraph{Organization of this paper}
In \shortOnly{Algorithm }\ref{algo:overview}, we give a high level overview of the min-cut algorithm. In \cref{sec:connect-cert}, we find the Sparse Connectivity certificate.  In \cref{sec:contraction-of-graph}, we give details of our contraction algorithm which guarantees sublinear number of nodes. Lastly, in \cref{sec:min-cut-contracted-graph}, we give details of our algorithm that finds min-cut in the contracted graph.
\begin{algorithm}[h]
	\DontPrintSemicolon
	\SetKwFor{Forp}{for each}{do \normalfont{(\cref{sec:contraction-of-graph})}}{endfor}
	$G \leftarrow \texttt{Sparse-Connectivity-Certificate}$ (\cref{sec:connect-cert})\;
	$\curly{\EhExp,E_r,E_s} \leftarrow \tripartition$ (\cref{thm:fast_exp_decomp})\;
	$\mathcal{X} \leftarrow $ connected components of subgraph $G[\EhExp]$ (High Expansion Components)\;
	\Forp{$X \in \mathcal{X}$}{ 
		$\curly{\core(X),\regnodes(X)} \gets \texttt{TrimAndShave}(X)$\;
		update $G$ by collapsing $\core(X)$ into a single node
	}
	run distributed algorithm to find min-cut in updated graph $G$ (\cref{sec:min-cut-contracted-graph})
	\caption{High Level Overview of Min-Cut Algorithm}
	\label{algo:overview}
\end{algorithm}
\paragraph{Previously known result for finding Min-Cut} In \cref{sec:intro}, we briefly discussed the result from \cite{NanongkaiS14}. Here we state their main result.
\begin{thm}[From \cite{NanongkaiS14}]
	There exists an algorithm in the \sloppy\CONGEST model which finds $1+\epsilon$ approximation of the min-cut in $O((\sqrt{n}\log^*n + D)\epsilon^{-5}\log^3 n)$ rounds where $\epsilon > 0$. Further, exact value of min-cut can be found exactly in $O((\sqrt{n}\log^* n + D)\lambda^4\log^2 n)$ rounds where $\lambda$ is the size of the min-cut.
	\label{thm:NS14_main}
\end{thm}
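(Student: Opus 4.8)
The plan is to prove this via the tree-packing framework of Karger and Thorup, implemented in the \CONGEST model as in \cite{NanongkaiS14}. The structural starting point is Karger's observation, sharpened by Thorup's \emph{greedy tree packing}: if one repeatedly computes a minimum-weight spanning tree of $G$ under integer edge-loads that are incremented by $1$ on the chosen tree each iteration, then after $\poly(\log n)$ iterations the resulting family $\mathcal T$ has the property that the minimum cut $C$ satisfies $|C\cap T|\le 2$ for a constant fraction, by weight, of the trees $T\in\mathcal T$. Consequently, $\Theta(\log n)$ trees sampled from $\mathcal T$ contain, w.h.p., a tree $T$ that $C$ \emph{$2$-respects}, i.e., $|C\cap T|\le 2$. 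The problem thus reduces to: (a) building such $O(\log n)$ spanning trees distributedly; and (b) for each sampled tree $T$, computing the minimum value of a cut of $G$ that crosses at most $2$ edges of $T$. Taking the minimum of (b) over the $O(\log n)$ trees equals $\lambda$ — it is $\le\lambda$ because the true mincut $2$-respects one tree, and $\ge\lambda$ because every $1$- or $2$-respecting set is a genuine cut — and it exhibits a cut of size $\lambda$.

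For (a), each greedy-packing step is one minimum-weight spanning-tree computation, which costs $O(\sqrt n\log^* n + D)$ rounds by Kutten--Peleg \cite{KuttenP98}, so the packing stage costs $O((\sqrt n\log^* n+D)\polylog n)$. It is important to first \emph{sparsify}, so that the cut quantities we later aggregate are small and the number of edges is reduced. For the $(1+\epsilon)$-approximation, I would replace $G$ by a random skeleton $H$ keeping each edge independently with probability $p=\Theta(\epsilon^{-2}\lambda^{-1}\log n)$, after first obtaining a constant-factor estimate of $\lambda$ (e.g.\ via \cite{GhaffariK13,NanongkaiS14}); by Karger's uniform-sampling bound, w.h.p.\ every cut of $H$ is within a $(1\pm\epsilon)$ factor of $p$ times its value in $G$ and $\lambda_H=\Theta(\epsilon^{-2}\log n)$, which is the source of the $\epsilon^{-5}\log^3 n$ overhead. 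For the exact bound, instead replace $G$ by a $\lambda$-edge-connectivity certificate $H$, computable in $O((\sqrt n\log^* n+D)\lambda)$ rounds by Thurimella \cite{Thurimella95}, which has $O(n\lambda)$ edges and the same minimum cut (and same minimum-cut set) as $G$; the remaining polynomial-in-$\lambda$ overhead then comes from running subroutine (b) on this $\Theta(n\lambda)$-edge graph together with the $\lambda$-factor of the certificate itself, totalling $\lambda^4\log^2 n$.

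For (b), root $T$ arbitrarily and, for a vertex $v$, write $T_v$ for the vertex set of its subtree and $\mathrm{val}(v)=\myvol(T_v)-2\,|E(G[T_v])|$ for the value of the cut $(T_v,V\setminus T_v)$. A cut $1$-respecting $T$ is of this form for a single $v$, and all the $\mathrm{val}(v)$ are subtree aggregates over $T$, computed by a pipelined bottom-up scan. A cut $2$-respecting $T$ is indexed by an unordered pair $\{v_1,v_2\}$: if $v_2$ is a descendant of $v_1$ its value is $\mathrm{val}(v_1)+\mathrm{val}(v_2)-2\,|E_G(T_{v_2},V\setminus T_{v_1})|$, and if $v_1,v_2$ are incomparable it is $\mathrm{val}(v_1)+\mathrm{val}(v_2)-2\,|E_G(T_{v_1},T_{v_2})|$; we must minimise over all such pairs. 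This is done by propagating, for each vertex, partial sums of these cross terms along $T$ and combining them in a constant number of sweeps. A recurring subtlety is that a packed spanning tree can have diameter far larger than $D$, so none of these aggregations may be run along $T$ itself; each must instead be routed through the underlying \CONGEST network, for instance after a fragment or heavy-path decomposition of $T$, and controlling the resulting message congestion on the sparsified graph is precisely what forces the polynomial dependence on $\epsilon^{-1}$ (resp.\ on $\lambda$).

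The hard part is step (b) for $2$-respecting cuts: even sequentially this needs dynamic-tree machinery, and distributing it while $T$ is not low-diameter is what necessitates working on a skeleton or certificate of small connectivity and is the origin of the $\poly(\epsilon^{-1})$ and $\poly(\lambda)$ losses. Once (a) and (b) are available, take the minimum over the $O(\log n)$ trees and over all $1$- and $2$-respecting candidates; since each edge of $H$ is a known edge of $G$, every node then learns which of its incident edges lie in the returned cut. Correctness is: in the approximate case, from Karger's sampling bound plus the tree-packing guarantee (so the returned value is $(1+O(\epsilon))\lambda$, and rescaling $\epsilon$ gives $(1+\epsilon)$); in the exact case, from $H$ preserving the minimum cut exactly together with the exactness of the $2$-respecting guarantee. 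Summing the round costs of the estimation, sparsification/certificate, packing, and per-tree stages — run sequentially so that the $\sqrt n\log^* n+D$ term multiplies the appropriate $\poly(\epsilon^{-1}\log n)$ or $\poly(\lambda,\log n)$ factor — yields the two stated bounds.
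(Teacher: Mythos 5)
This is not a result the paper proves: \Cref{thm:NS14_main} is stated verbatim as ``[From \cite{NanongkaiS14}]'' and is used purely as a black box (to get a constant-factor estimate of $\lambda$ for the sparse-certificate step, and as the fallback algorithm for small $\lambda$ in \Cref{sec:concl}). There is therefore no ``paper's own proof'' to compare your argument against; what you have written is a reconstruction of the Nanongkai--Su argument, which is a different task.

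As a reconstruction, the high-level skeleton (greedy tree packing \`a la Thorup, skeleton/certificate sparsification, Kutten--Peleg MST for each packing step) is correct, but the part you flag yourself as ``the hard part'' is exactly where it diverges from what \cite{NanongkaiS14} actually does, and where the sketch does not hold together. You propose to compute, for every packed tree $T$, the minimum \emph{$2$-respecting} cut, i.e.\ to minimise over all pairs $\{v_1,v_2\}$ the quantity $C(\desc[T]{v_1})+C(\desc[T]{v_2})-2C(\desc[T]{v_1},\desc[T]{v_2})$. Distributing that computation is nontrivial precisely because packed trees have large diameter and because computing $C(\desc[T]{v_1},\desc[T]{v_2})$ for \emph{all} pairs is a global aggregation; in fact, giving an $\tilde O(n)$-round \CONGEST algorithm for this $2$-respecting subproblem is presented in this very paper (\Cref{thm:min_cut_o_n_general_graph} and \Cref{sub_sec:min-cut-contracted-graph:warmup1}) as a new contribution, and it does \emph{not} run in $O((\sqrt n\log^* n+D)\lambda^4\log^2 n)$. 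Nanongkai and Su sidestep $2$-respecting cuts entirely: their exact algorithm packs $\poly(\lambda,\log n)$ trees and computes only minimum \emph{$1$-respecting} cuts, which are subtree aggregates $C(\desc[T]{v})$ computable by a single pipelined convergecast per tree; the $\lambda^4\log^2 n$ multiplier is the number of trees needed so that, w.h.p., some tree crosses the minimum cut exactly once. Your accounting of where $\lambda^4\log^2 n$ comes from (``running subroutine (b) on the $\Theta(n\lambda)$-edge certificate together with the $\lambda$-factor of the certificate itself'') does not produce a fourth power and does not correspond to the actual source of the blow-up. So: the tree-packing framing is right, the sparsification and packing stages are fine, but the per-tree cut computation you describe is neither what \cite{NanongkaiS14} does nor is it justified to run in the claimed number of rounds.
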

In this paper, we use \cref{thm:NS14_main} to find the approximate value of min-cut value. This is used in finding the connectivity certificate in \Cref{sec:connect-cert}. Further, in \Cref{sec:concl}, we use the exact version of the algorithm but only limited to restricted values of $\lambda$.

%F\input{camera_ready_sub_parts/overview.tex}
\section{Connectivity Certificate\label{sec:connect-cert}}
In this section, we give our algorithm for $k$-edge connectivity certificate which significantly reduces the number of edges in the graph. In  the resultant sparse connectivity certificate, we sample $O(k n)$ edges from the graph and prove that these edges are enough to guarantee $k$ edge connectivity of the graph.

\begin{thm}
Let $G = (V,E)$ be an unweighted graph and $k\leq \lambda$. Then in total of $\tilde{O}(\sqrt{nk} + D)$ rounds in the \CONGEST model we can find a $k$-edge connectivity certificate $E'$ of size $O(kn)$ such that every vertex $v$ knows all the adjacent edges in $E'$ and w.h.p. for every cut $C$ of $G$ we have $E' \cap C \geq k$.
\label{thm:sparse_certificate_exists}
\end{thm}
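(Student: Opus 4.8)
The plan is to reduce a $k$-edge connectivity certificate to computing many \emph{low}-connectivity certificates in parallel, one per colour class of a random edge colouring. First, if $k=O(\polylog n)$ there is nothing new to do: Thurimella's distributed algorithm~\cite{Thurimella95} produces a $k$-edge connectivity certificate of size $O(kn)$ — a union of spanning forests $F_1,\dots,F_k$ where $F_j$ is a spanning forest of $G$ with $F_1\cup\dots\cup F_{j-1}$ removed — in $O((\sqrt n\log^* n + D)k)=\tilde O(\sqrt n + D)=\tilde O(\sqrt{nk}+D)$ rounds, with every vertex learning its incident certificate edges. So I would assume $k=\Omega(\polylog n)$. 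Now colour each edge of $E$ independently and uniformly with one of $c:=\Theta(k/\polylog n)$ colours; let $E_i$ be the class of colour $i$. In parallel over all $i$, compute a $t$-edge connectivity certificate $E'_i\subseteq E_i$ of the subgraph $G[E_i]$ with $t=\Theta(\polylog n)$, again as a union of $t$ Nagamochi--Ibaraki forests of $G[E_i]$, so that $|E'_i|\le t(n-1)$ and, for every $U\subseteq V$, $|E'_i\cap\partial(U)|\ge\min(|E_i\cap\partial(U)|,t)$. Output $E':=\bigcup_{i=1}^c E'_i$; each vertex knows its incident edges in every $E'_i$, hence in $E'$. For a suitable choice of the hidden constants one has $ct=\Theta(k)$, so $|E'|\le ct(n-1)=O(kn)$, as required.

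For correctness, fix any cut $C=\partial(U)$. Since $k\le\lambda$ we have $|C|\ge\lambda\ge k$, so it suffices to show $|E'\cap C|\ge k$ with high probability, simultaneously over all cuts. Writing $X_i:=|E_i\cap C|$, the $X_i$ are negatively associated, each $X_i\sim\mathrm{Bin}(|C|,1/c)$ with mean $|C|/c\ge k/c=\Theta(\polylog n)$, and $|E'\cap C|\ge\sum_i|E'_i\cap\partial(U)|\ge\sum_i\min(X_i,t)$. The quantity $\sum_i\min(X_i,t)$ should be at least $k$: I would tune $t$ to be a constant factor above $|C|/c$ in the worst relevant case, so that when $|C|$ is close to $k$ a Chernoff bound makes every $X_i\le t$ with very high probability (hence $\sum_i\min(X_i,t)=|C|\ge k$ with no loss from truncation), while when $|C|$ is at least a constant factor above $k$ a Chernoff bound makes a constant fraction of the classes ``saturated'' ($X_i\ge t$), and $t$ times that many classes already exceeds $k$. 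To pass from a fixed cut to all cuts I would invoke the classical bound that $G$ has at most $n^{2\alpha}$ cuts of size at most $\alpha\lambda$~\cite{karger2000minimum}: since the $X_i$ sum to $|C|$, the per-cut failure probability decays exponentially in $|C|$, which beats $n^{2|C|/\lambda}\le n^{2|C|/k}$ once $k=\Omega(\log n)$ — exactly the regime we are in.

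For the running time, the only nontrivial work is the $c$ parallel certificate computations, i.e.\ computing $\Theta(c\cdot\polylog n)$ spanning forests, each of a subgraph of one colour class. Each forest is obtained from Kutten--Peleg's minimum spanning tree algorithm~\cite{KuttenP98} with its parameters tuned to the per-class edge count, and because the colour classes partition $E$ these computations can be executed concurrently using the scheduling framework of Ghaffari~\cite{ghaffari2015near}, for a total of $\tilde O(\sqrt{nc}+D)=\tilde O(\sqrt{nk}+D)$ rounds.

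I expect the correctness analysis to be the main obstacle: it is a union bound over exponentially many cuts, and it is tight in two opposite regimes. For a cut of size just above $k$, every one of its edges must survive the forest truncation, leaving no slack; for a very large cut, the number $n^{2|C|/\lambda}$ of cuts of that size is itself enormous, so the per-cut concentration must be correspondingly strong. Reconciling these with a single choice of the polylogarithmic factors in $c$ and $t$, and handling the intermediate cut sizes, is the delicate part; everything else is routine.
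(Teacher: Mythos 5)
Your algorithm and its distributed implementation match the paper's essentially exactly: colour edges with $c=\Theta(k/\polylog n)$ colours, iteratively extract $t=\Theta(\polylog n)$ Nagamochi--Ibaraki forests per colour class (implemented via the $c$-slot MST from a fine-tuned Kutten--Peleg plus Ghaffari's scheduling), and take the union, giving $O(kn)$ edges in $\tilde O(\sqrt{nk}+D)$ rounds. Where you part ways with the paper is the correctness analysis, and there you have a genuine gap that you yourself flag: you never close the two-regime concentration argument for $\sum_i\min(X_i,t)$, and the proposed fix of ``tuning $t$ to a constant factor above $|C|/c$ in the worst relevant case'' cannot literally be carried out, since $t$ must be a single parameter of the algorithm while $|C|$ ranges from $\lambda$ up to $\Theta(n^2)$ over the cuts you must union-bound across. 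Handling the intermediate sizes cleanly under one $(c,t)$ is precisely the part you have not supplied.

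The paper avoids this entirely, and the observation you are missing is that no lower-tail bound, no saturation count, and no case split on $|C|$ are needed. The \emph{only} probabilistic fact used is a one-sided uniform upper bound (Theorem~\ref{thm:random_skeletons_all_cuts}, Karger's sampling lemma via the $n^{\ceil{2\alpha}}$ cut-counting bound): w.h.p.\ simultaneously for every cut $C$ and every colour $i$, $X_i\le(1+\epsilon)p|C|$ with $p=1/c=\tau\ln n/(\epsilon^2 k)$. Everything else is deterministic. Since the colour classes partition $E$, $\sum_i X_i=|C|$. Letting $B=\curly{i:X_i\ge t}$ and $x=\sum_{i\notin B}X_i$, the upper bound forces $|B|\ge(|C|-x)/((1+\epsilon)p|C|)$, and with the paper's $t=(1+\epsilon)\tau\ln n/\epsilon^2$ one has $t/((1+\epsilon)p)=k$, so
\begin{align*}
\sum_i\min(X_i,t)\;&\ge\; t\,|B|+x\;\ge\;\frac{t}{(1+\epsilon)p}\cdot\frac{|C|-x}{|C|}+x\;=\;k-\frac{kx}{|C|}+x\;\ge\;\min(k,|C|),
\end{align*}
which is $\ge k$ whenever $|C|\ge\lambda\ge k$. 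This single chain of inequalities replaces both of your Chernoff regimes and handles every cut size at once, which is exactly what makes one fixed choice of $c$ and $t$ suffice. Your plan of attack was right; the missing idea is that an upper bound on each $X_i$ plus the identity $\sum_i X_i=|C|$ already pins $\sum_i\min(X_i,t)$ from below.
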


The key idea of our sparse connectivity certificate algorithm is as follows: we first pick a set of random ``skeletons'' based on  \cite{karger1999random}. We then construct a small set of spanning forests in each random skeleton. Further, we argue that the union of all spanning forests leads to the required connectivity certificate.

\begin{thm}
\label{thm:random_skeletons_all_cuts}
Let $G = (V,E)$ be any unweighted, undirected graph. Let $E' \subset E$ be such that each edge $e \in E$ is independently included in $E'$ with probability $p = O\paren{\frac{\ln n}{\epsilon^2 k}}$ for any $k \leq \lambda$. Then w.h.p. all cuts $C$ have less than $(1+\epsilon)p|C|$ edges sampled in $E'$. 
\end{thm}

\cref{thm:random_skeletons_all_cuts} is a standard argument relating random sampling and the proof 
\longOnly{is left to \cref{appendix:proof_random_skeletons_all_cuts} for completeness.}
\shortOnly{is given in the full version of this paper \cite{distributed-min-cut-full}.}
In \shortOnly{Algorithm }\ref{algo:connect-cert}, we give the sequential version of our distributed algorithm to find sparse connectivity certificate.
Further, in \cref{claim:correctness_sparse_cert_tec}, we prove that for every cut $C$ of $G$, w.h.p., at least $k$ edges finally make to the $k$-edge connectivity certificate which implies  that the edge connectivity is at least $k$ as shown in \cref{cor:correctness}. 
Lastly, in \cref{lemma:number_of_edges_Sparse_Connectiviity} we show that the number of edges selected in the  $k$-edge connectivity certificate is $O(k n)$ thus completing the correctness argument. 

\begin{algorithm}
	\SetEndCharOfAlgoLine{}
	\SetKwInOut{Output}{output}
	\Output{$E'$ is $k$-edge connectivity certificate}
	fix $p =  \tau \ln n/(\epsilon^2 k)$, where $\tau$ is some constant such that $1/p$ is an integer for any $\epsilon \in (0,1)$. \label{line:2}\;
	Give each edge a random color in $\curly{1, 2, ..., c}$, where $c=1/p$. \label{line:split_edge_set}\;
	Partition the edge set $E$ into $\mathcal{E} = \curly{E_1,\ldots,E_c}$ such that $E_i = \curly{e \in E \mid \clr\text{ of }e\text{ is }i}$ for all $i \in [c]$\;
	Let $H_1,\ldots,H_c$  be the subgraphs induced by above edge sets.\label{line:set_of_subgraph}\;
	\tcc{In each subgraph $H_i$ construct a set of $\ceil{\frac{(1+\epsilon)\tau \ln n}{\epsilon^2}}$ spanning forests as follows:}
	\For{$i \in [c]$}{\label{line:sparse_cert_pack_spanning_forests}
			${E}_i' \leftarrow \emptyset$\;
		\For{$j=$ $1$ to $\ceil{(1+\epsilon)\frac{\tau \ln n}{\epsilon^2}}$}{
			$E_{i}^j \leftarrow$ edges in an arbitrary spanning forest constructed in the subgraph $H_i$\;
			${E}_i' = {E}_i'\cup E_{i}^j$\;
			$H_i \leftarrow H_i \setminus E_{i}^j$ \tcp{remove edges $E_{i}^j$ from the subgraph $H_i$}
		}	
	}
	\Return $E' = \bigcup_i E_i'$ \label{line:return_line}
	\caption{$\texttt{k-edge connectivity certificate}(G,k)$}
	\label{algo:connect-cert}
\end{algorithm}
\begin{lem}
For any $\epsilon \in (0,1)$, let $k \leq (1+\epsilon)\lambda$, let $E'$ be the  $k$-edge connectivity certificate returned by \shortOnly{Algorithm }\ref{algo:connect-cert}. For any cut $C$, let $F(C) = C \cap E'$.  Then for all cuts $C$ of $G$ we have $|F(C)| \geq \min\paren{k,|C|}$ w.h.p.
\label{claim:correctness_sparse_cert_tec}
\end{lem}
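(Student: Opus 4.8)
\medskip\noindent\textbf{Proof plan.}
The plan is to bound $|F(C)|$ for a single fixed cut $C$ and then union-bound over all cuts, using the standard cut-counting fact (Karger) that a graph of edge connectivity $\lambda$ has at most $n^{O(\alpha)}$ cuts of size at most $\alpha\lambda$ --- which, since $k\le (1+\epsilon)\lambda=O(\lambda)$, means there are only $n^{O(1)}$ cuts of size $O(k)$. Fix $C$, regarded as a vertex bipartition $(U,V\setminus U)$; write $\ell=|C|$, let $t=\ceil{(1+\epsilon)\tau\ln n/\epsilon^2}$ be the number of spanning forests extracted from each color class in \cref{algo:connect-cert}, and for $i\in[c]$ set $C_i=C\cap E_i$, so that $C_i$ is exactly the set of edges of $H_i$ crossing $(U,V\setminus U)$ and $\sum_{i=1}^{c}|C_i|=\ell$. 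The first, deterministic, step is the classical sparse-certificate bound (as used by Thurimella~\cite{Thurimella95}): $|C_i\cap E_i'|\ge\min(t,|C_i|)$ for every $i$. Indeed $E_i^j$ is a maximal spanning forest of $H_i$ after $E_i^1\cup\dots\cup E_i^{j-1}$ has been removed, so as long as that residual graph still has an edge crossing $(U,V\setminus U)$ its endpoints lie in a common component and are joined by a path in $E_i^j$ that must traverse a crossing edge; hence each of the first $\min(t,|C_i|)$ forests captures a fresh edge of $C_i$. Summing, $|F(C)|\ge\sum_{i=1}^{c}\min(t,|C_i|)$, and it remains to show this is $\ge\min(k,\ell)$ w.h.p.\ for all $C$.

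For cuts with $\ell\le k$ I would argue directly. Each edge of $C$ lands in $E_i$ independently with probability $p=1/c$, so $H_i=G[E_i]$ is precisely the $p$-random subgraph of the coloring step of \cref{algo:connect-cert}, and \cref{thm:random_skeletons_all_cuts} applied to color $i$ --- together with a union bound over the $c=\poly(n)$ colors --- gives that w.h.p.\ every cut $C$ and color $i$ satisfy $|C_i|<(1+\epsilon)p\ell$. When $\ell\le k$ this is at most $(1+\epsilon)pk=(1+\epsilon)\tau\ln n/\epsilon^2\le t$, so $\min(t,|C_i|)=|C_i|$ and $\sum_i\min(t,|C_i|)=\sum_i|C_i|=\ell=\min(k,\ell)$; in fact $F(C)=C$.

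The real work is the case $\ell>k$, where the goal becomes $O:=\sum_i(|C_i|-t)^+\le\ell-k$ (``the total overflow past $t$ is small''). Here $\E[|C_i|]=p\ell$ while $t=(1+\epsilon)pk\,(1+o(1))$, and I would split into three overlapping regimes. First, if $\ell$ is only mildly above $k$ --- say $k<\ell\le(1-\epsilon/\sqrt\tau)\,ct$ --- then $t$ exceeds $p\ell$ by $\Omega(\epsilon pk)=\Omega(\log n/\epsilon)$, so a Chernoff bound gives $\Pr[|C_i|\ge t]\le n^{-\Omega(\tau)}$ and hence $\Pr[O\ge1]\le c\cdot n^{-\Omega(\tau)}=n^{-\Omega(\tau)}$; thus $O=0$ and $\sum_i\min(t,|C_i|)=\ell>k$ w.h.p. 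Second, for $\ell$ larger but still $O(\lambda)$ one has $\ell-k=\Omega(\epsilon k)$ with slack, because $ct\ge(1+\epsilon)k$; estimating $\E[O]\le c\,(p\ell-t)^+ + \tfrac{c}{2}\sqrt{p\ell}=(\ell-ct)^+ + \tfrac12\sqrt{c\ell}$ with $\tfrac12\sqrt{c\ell}=O(\epsilon k/\sqrt{\tau\ln n})=o(\epsilon k)$, and noting that $O$ is a function of the $\ell$ edge-colors with bounded differences and so concentrates around $\E[O]$ within $O(\sqrt{\ell\ln n})=o(\epsilon k)$ by McDiarmid, yields $O\le\ell-k$ w.h.p. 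Third, for enormous cuts ($\ell\ge C_0\lambda$ for a suitable constant) we have $\E[|C_i|]=p\ell\gg t$, so every color class overflows w.h.p.\ and $\sum_i\min(t,|C_i|)=ct\ge(1+\epsilon)k>k$. Taking $\tau$ a large enough constant makes each failure probability dominate the $n^{O(\alpha)}$ cut-counting union bound over cuts of size $\approx\alpha\lambda$, so w.h.p.\ $|F(C)|\ge\min(k,\ell)$ for all $C$ simultaneously.

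The crux, I expect, is the $\ell>k$ case when $\ell$ is only slightly larger than $k$: a naive ``apply \cref{thm:random_skeletons_all_cuts} per color, then union bound'' loses a multiplicative $(1+\epsilon)$ factor (it only yields $\sum_i\min(t,|C_i|)\gtrsim(1-\epsilon)\ell$, which need not reach $k$), and the remedy is to use the \emph{exact} identity $\sum_i|C_i|=\ell$ together with the fact that $t=(1+\epsilon)pk$ sits a full additive $\Omega(\log n/\epsilon)$ --- not merely a constant --- above $\E[|C_i|]\approx pk$, so that the overflow is genuinely $0$ there. All of this is in the regime $k=\Omega(\log n/\epsilon^2)$ in which \cref{algo:connect-cert} is run, so that $c\ge1$ and the Chernoff/McDiarmid estimates have the room they need.
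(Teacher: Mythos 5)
Your proof takes a genuinely different — and considerably more involved — route than the paper's, and the extra machinery is driven by a misconception. You correctly establish the deterministic per-color fact that $|C_i\cap E_i'|\ge\min(t,|C_i|)$, and you correctly observe that the concentration bound $|C_i|\le(1+\epsilon)p|C|$ holds per color for all cuts w.h.p. (this is precisely how the paper invokes \cref{thm:random_skeletons_all_cuts}). But you then abandon this approach, claiming the ``naive per-color application loses a multiplicative $(1+\epsilon)$ factor,'' and instead set up the overflow quantity $O=\sum_i(|C_i|-t)^+$, split $\ell>k$ into three regimes, and apply Chernoff/McDiarmid/cut-counting separately in each. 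That concern is incorrect: with $t=(1+\epsilon)\tau\ln n/\epsilon^2$ and $p=\tau\ln n/(\epsilon^2 k)$ one has the exact identity $t/\bigl((1+\epsilon)p|C|\bigr)=k/|C|$, so for $i$ in the overflowing set $B$ one may write $t\ge |C_i|\cdot k/|C|$ and hence
\[
|F(C)|\;\ge\;\sum_{i\in B}t+\sum_{i\notin B}|C_i|\;\ge\;\frac{k}{|C|}\sum_{i\in B}|C_i|+\sum_{i\notin B}|C_i|\;=\;\frac{k(|C|-x)}{|C|}+x,
\]
where $x=\sum_{i\notin B}|C_i|$, which is $\ge k$ when $|C|\ge k$ and $\ge|C|$ otherwise. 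This is exactly the paper's (four-line) algebra after \cref{eqn:correctness_sparse_certificate_1}, and it requires a \emph{single} application of \cref{thm:random_skeletons_all_cuts} plus a union bound over the $\poly(n)$ colors — no regime-splitting, no McDiarmid, no separate cut-counting calibration.

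Your three-regime argument is plausibly salvageable (the regimes overlap, the moment estimate $\E[O]\le(\ell-ct)^++\tfrac12\sqrt{c\ell}$ is sound, and the bounded-differences step has Lipschitz constant $2$), but several of its quantitative checks are left implicit: that the McDiarmid deviation $\ell-k-\E[O]$ stays $\Omega(\epsilon k)$ across the entire middle regime, that the resulting failure probability $e^{-\Omega(\epsilon^2 k^2/\ell)}$ really beats the $n^{\ceil{2\alpha}}$ cut count for every $\ell$, and that the constant $C_0$ in regime three is compatible with the crossover out of regime two. None of this is needed. What the simpler route ``buys'' is precisely that the concentration statement is applied once, uniformly over all cut sizes, and the case analysis collapses to the single observation $k/|C|\le 1$ versus $k/|C|\ge 1$ at the very end.
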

\longOnly{
\begin{proof}
Let $H_1,H_2,\ldots,H_c$ be the set of subgraphs in  \cref{line:set_of_subgraph} of \shortOnly{Algorithm }\ref{algo:connect-cert}. For any cut $C$ of $G$, let $C_i = H_i \cap C$. These are the set of edges sampled in the subgraph $H_i$ from the cut $C$ in \shortOnly{Algorithm }\ref{algo:connect-cert}. Using the value of $p$ chosen in \shortOnly{Algorithm }\ref{algo:connect-cert} and \cref{thm:random_skeletons_all_cuts}, we know that w.h.p. for all cuts $C$ of $G$ we have
	
\begin{equation}
\label{eqn:correctness_sparse_certificate_1}
|C_i| \leq (1+\epsilon)\frac{\tau \ln n}{\epsilon^2 k} |C|
\end{equation}  
	
In this claim, sampling of edges in a subgraph $H_i$ is the only randomized part. Let's fix an arbitrary cut $C$. Since $k \leq \lambda$ thus $|C| \geq k$. In our algorithm, we construct $\ceil{\frac{(1+\epsilon)\tau \ln n}{\epsilon^2}}$ spanning forests one after the other in each subgraph $H_i$ and aggregate the edges of all these spanning forests in $E_i'$. 
Hence, when $|C_i| \geq \ceil{\frac{(1+\epsilon)\tau \ln n}{\epsilon^2}}$, at least $\ceil{\frac{(1+\epsilon)\tau \ln n}{\epsilon^2}}$ of the total edges from $C_i$ make it to $E_i'$. Let $F_i(C) = C \cap E_i'$. 
These are the edges from $C$ in the subgraph $H_i$ which are finally selected to the sparse connectivity certificate. 
We segregate the subgraphs into two sets based on $|C_i|$. Let $B$ be the set of indices corresponding to the subgraphs such that $|C_i| \geq \ceil{\frac{(1+\epsilon)\tau \ln n}{\epsilon^2}}$. Recall that $C_i$ is the set of edges sampled in the subgraph $H_i$ from the cut $C$. Also, each edge of $C$ belongs to exactly one subgraph $H_i$ by  \cref{line:set_of_subgraph} of \shortOnly{Algorithm }\ref{algo:connect-cert}. Thus, we have

\begin{equation}
\sum_{i \in B}|C_i| = |C|- \sum_{i \notin B}|C_i| = |C| - \sum_{i \notin B}F_i(C)
\label{eqn:correctness_sparse_certificate_5}
\end{equation}
Here the last equality is true because when $|C_i| \leq \ceil{(1+\epsilon)\frac{\tau \ln n}{\epsilon^2}}$ then all the edges from $C_i$ make to $E'$, thus $F_i(C) = C_i$. Also $F(C) = \bigcup_i F_i(C)$ thus,
	
\begin{align}
|F(C)| &= \sum_i |F_i(C)| \nonumber\\
& \geq \sum_{i \in B} {\frac{(1+\epsilon)\tau \ln n}{\epsilon^2}} + \sum_{i \notin B} |F_i(C)|&\text{\ let } x= \sum_{i\notin B} |F_i(C)|\nonumber\\
&\geq \frac{(1+\epsilon)\tau \ln n}{\epsilon^2}\sum_{i\in B} \frac{|C_i|}{(1+\epsilon)\frac{\tau \ln n}{\epsilon^2 k} |C|} + x&\text{using \cref{eqn:correctness_sparse_certificate_1}} \nonumber\\
&= \frac{(1+\epsilon)\tau\ln n}{\epsilon^2}\cdot\frac{|C|-x}{(1+\epsilon)\frac{\tau \ln n}{\epsilon^2 k} |C|} + x& \text{by \cref{eqn:correctness_sparse_certificate_5}} \nonumber\\
&=\frac{|C|-x}{\frac{|C|}{k}} + x \nonumber\\
\label{eqn:correctness_sparse_certificate_6}	
\end{align}
If $|C| > k$, then $\frac{|C|-x}{\frac{|C|}{k}} + x > k-x +x = k$ and if $|C| \leq k$ then $\frac{|C|-x}{\frac{|C|}{k}} + x \geq |C|-x +x = |C|$. Also this is true w.h.p for all cuts $C$. Because \cref{eqn:correctness_sparse_certificate_1} holds for all cuts w.h.p.
\end{proof}
}

\begin{cor}
If $k \leq (1+\epsilon)\lambda$, then the $k$-edge connectivity certificate output by \shortOnly{Algorithm }\ref{algo:connect-cert} has edge connectivity at least $\min(k,\lambda)$ w.h.p. 
\label{cor:correctness}
\end{cor}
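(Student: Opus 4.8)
The plan is to derive this directly from \Cref{claim:correctness_sparse_cert_tec}, which has already done the real work. First I would recall the basic correspondence between cuts of $G$ and cuts of the certificate subgraph $G[E']$: since $E' \subseteq E$, every edge cut of $G[E']$ is induced by a vertex bipartition $(U, V\setminus U)$, and its cardinality in $G[E']$ equals $|\partial(U) \cap E'| = |F(C)|$ where $C = \partial(U)$ is the corresponding cut of $G$ and $F(C) = C \cap E'$ as in \Cref{claim:correctness_sparse_cert_tec}. Consequently the edge connectivity of $G[E']$ is exactly $\min_C |F(C)|$, where the minimum ranges over all (nontrivial) cuts $C$ of $G$.

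Next I would invoke \Cref{claim:correctness_sparse_cert_tec}: under the hypothesis $k \le (1+\epsilon)\lambda$, w.h.p.\ \emph{simultaneously} for every cut $C$ of $G$ we have $|F(C)| \ge \min(k, |C|)$. Since by definition of edge connectivity every cut $C$ of $G$ satisfies $|C| \ge \lambda$, this gives $\min(k,|C|) \ge \min(k,\lambda)$, hence $|F(C)| \ge \min(k,\lambda)$ for all cuts $C$ at once w.h.p. Taking the minimum over $C$ and using the correspondence above shows that the edge connectivity of $G[E']$ is at least $\min(k,\lambda)$ w.h.p., which is exactly the claim.

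The only point that needs a word of care is "simultaneously": the high-probability guarantee of \Cref{thm:random_skeletons_all_cuts}, and therefore of \Cref{claim:correctness_sparse_cert_tec}, already holds over all (exponentially many) cuts at once, so there is no union-bound loss when we pass to the minimum over cuts. Beyond that there is essentially no obstacle — this corollary is just the translation from the statement "every cut of $G$ keeps at least $\min(k,\lambda)$ of its edges in $E'$" to "$G[E']$ is $\min(k,\lambda)$-edge-connected." One may additionally observe the trivial matching upper bound: because $E' \subseteq E$, the edge connectivity of $G[E']$ never exceeds $\lambda$, so when $k \le \lambda$ the certificate's connectivity lies in the interval $[k,\lambda]$; in particular $E'$ is a bona fide $k$-edge connectivity certificate, and since $G$ is connected ($\lambda \ge 1$) the subgraph $G[E']$ is in particular connected.
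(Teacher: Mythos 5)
Your proof is correct and follows essentially the same route as the paper's: both derive the corollary directly from \Cref{claim:correctness_sparse_cert_tec} by observing that $|F(C)| \ge \min(k,|C|) \ge \min(k,\lambda)$ for every cut $C$ simultaneously w.h.p., then translating this per-cut guarantee into a lower bound on the connectivity of $G[E']$. Your version just spells out the cut-correspondence between $G$ and $G[E']$ and the simultaneity of the high-probability event a bit more explicitly than the paper does, which is fine but not a different argument.
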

\begin{proof}
	For each $C$ of $G$, any edge in $C$ is selected at most once in the edge connectivity certificate output by \shortOnly{Algorithm }\ref{algo:connect-cert}. Also, w.h.p., for all cuts $C$ of the graph $G$, from \cref{claim:correctness_sparse_cert_tec}, if $|C| > k$, at least $k$ edges are included in the $k$-edge connectivity certificate and if $|C|\leq k$ then all the edges from cut $C$ are included in the  $k$-edge connectivity certificate. Hence w.h.p. the edge connectivity of $E'$ is  $\min(k,\lambda)$.
\end{proof}
\begin{lem}
	\label{lemma:number_of_edges_Sparse_Connectiviity}
	The number of edges in the  $k$-edge connectivity certificate returned by \shortOnly{Algorithm }\ref{algo:connect-cert} is $O(k n)$.
\end{lem}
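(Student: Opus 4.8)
The plan is to bound the number of edges contributed by each subgraph $H_i$ separately and then sum over $i \in [c]$. Recall that in \shortOnly{Algorithm }\ref{algo:connect-cert} we partition $E$ into $c = 1/p$ color classes $E_1, \dots, E_c$, and from each induced subgraph $H_i$ we extract $\ell := \ceil{(1+\epsilon)\tau \ln n / \epsilon^2}$ edge-disjoint spanning forests, taking their union as $E_i'$. Since a spanning forest on the vertex set $V$ has at most $n-1$ edges, each $E_i'$ contributes at most $\ell(n-1) = O(\ell n)$ edges, so $|E'| = \sum_{i=1}^c |E_i'| \le c \cdot \ell (n-1)$.

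Now I would just substitute the chosen parameters. We have $c = 1/p = \epsilon^2 k / (\tau \ln n)$ and $\ell = O\paren{\tau \ln n / \epsilon^2}$ (absorbing the $(1+\epsilon)$ factor, which is $O(1)$ since $\epsilon \in (0,1)$, and the ceiling, into the $O(\cdot)$). Multiplying, the $\tau \ln n / \epsilon^2$ factors cancel against $\epsilon^2 k / (\tau \ln n)$, leaving $c \cdot \ell = O(k)$. Hence $|E'| = O(k) \cdot (n-1) = O(kn)$, as claimed.

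There is essentially no obstacle here: the bound is purely syntactic bookkeeping, the only point worth stating carefully being that the number of colors $c$ and the number of forests $\ell$ are reciprocal in the $\ln n / \epsilon^2$ term by design, so their product is clean. One minor care point is that when some color class $E_i$ is small (fewer than $\ell$ forests' worth of edges, or even empty), the subgraph $H_i$ simply runs out of edges and the later forests are empty; this only makes $|E_i'|$ smaller, so the upper bound $\ell(n-1)$ still holds. I would note this in one sentence and conclude.

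\begin{proof}
By \cref{line:split_edge_set,line:set_of_subgraph} of \shortOnly{Algorithm }\ref{algo:connect-cert}, the edge set $E$ is partitioned into $c = 1/p$ classes inducing subgraphs $H_1, \dots, H_c$, and for each $i$ the set $E_i'$ is the union of $\ell := \ceil{(1+\epsilon)\tau \ln n / \epsilon^2}$ spanning forests, each removed from $H_i$ before the next is constructed. Each such forest spans at most the $n$ vertices of $V$ and hence has at most $n-1$ edges (if $H_i$ has already been emptied, the forest is simply empty, which only helps); therefore $|E_i'| \le \ell (n-1)$ for every $i$. Since $E' = \bigcup_{i} E_i'$ and the $E_i'$ lie in disjoint color classes,
\[
|E'| = \sum_{i=1}^{c} |E_i'| \le c \cdot \ell (n-1) = \frac{1}{p} \cdot \ceil{\frac{(1+\epsilon)\tau \ln n}{\epsilon^2}} (n-1).
\]
Plugging in $p = \tau \ln n / (\epsilon^2 k)$, so that $1/p = \epsilon^2 k / (\tau \ln n)$, and using $1+\epsilon = O(1)$ for $\epsilon \in (0,1)$, we get $c \cdot \ell = O(k)$, and hence $|E'| = O(kn)$.
\end{proof}
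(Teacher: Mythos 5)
Your proof is correct and follows essentially the same route as the paper's: count $c = 1/p$ color classes, each contributing at most $\ceil{(1+\epsilon)\tau\ln n/\epsilon^2}$ spanning forests of at most $n-1$ edges, and observe that the $\tau\ln n/\epsilon^2$ factors cancel to give $O(kn)$. Your additional remark about empty or small color classes is a reasonable sanity check but not a new idea.
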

\longOnly{
\begin{proof}
	In \shortOnly{Algorithm }\ref{algo:connect-cert}, we partition edge set into $c = \frac{\epsilon^2 k}{\tau \ln n}$ subsets. Further, in each partition we construct $\ceil{\frac{(1+\epsilon)\tau \ln n}{\epsilon^2}}$ many spanning forest and use them as sparse connectivity certificate. We know that a spanning forest has at most $n-1$ edges. Thus in total we have $O(kn)$ edges in the  $k$-edge connectivity certificate.
\end{proof}
}
\shortOnly{
\subsection{Distributed and Parallel Implementation of Algorithm \ref{algo:connect-cert}}
}
\longOnly{\subsection{Distributed and Parallel Implementation of \ref{algo:connect-cert}}}
We now show how to implement \shortOnly{Algorithm }\ref{algo:connect-cert} in the $\CONGEST$ and $\PRAM$ model. We  start with the  $\CONGEST$  model. We give the required algorithm in  \shortOnly{Algorithm }\ref{algo:dist_sparse_connectivity_certificate}. This is a two phase algorithm. In the first phase we randomly partition the edge set and in the second phase construct a connectivity certificate in each partition. For constructing the connectivity certificate, we use a known result about finding a \emph{$l$-slot MST}. The $l$-slot version of the MST problem is as follows: for a given  graph $G=(V,E)$ and $l$ weight functions $W_1$ to $W_l$, where $W_i : E \rightarrow \mathbb{R}$; the $l$-slot MST problem is to find an MST for each of the weight function $W_i$ for $1\leq i \leq l$. The following theorem about $l$-slot MST is obtained by fine-tuning parameters of Kutten-Peleg's minimum spanning tree algorithm \cite{KuttenP98} and using the scheduling of \cite{ghaffari2015near}. We refer to the concluding remarks of \cite{ghaffari2015near}  for details. 
%implicit from \cite{ghaffari2015near}.
\begin{thm}
	The $l$-slot MST problem can be solved in $\tilde{O}(D + \sqrt{nl})$ rounds.
	\label{thm:c-slot-algorithm}
\end{thm}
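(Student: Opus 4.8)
The plan is to reduce the $l$-slot MST problem to running $l$ copies of an MST algorithm and then invoke a generic scheduling theorem to run them all together without paying the full $l\cdot(\text{cost of one MST})$ bound. First I would recall the structure of the Kutten--Peleg MST algorithm \cite{KuttenP98}: it runs in two phases. The ``controlled-GHS'' phase grows fragments by local (Bor\r{u}vka-style) merging for $O(\sqrt n \log^* n)$ rounds until there are at most $O(\sqrt n)$ fragments, and this phase uses only messages that travel along fragment-internal edges (low-congestion, no dependence on $D$). The second ``pipeline'' phase collects, at a single coordinator via a BFS tree of the whole graph, the $O(\sqrt n)$ lightest inter-fragment edges and finishes the MST; this phase costs $O(D+\sqrt n)$ rounds because it is essentially a convergecast/broadcast over a global BFS tree together with pipelining $O(\sqrt n)$ edge weights. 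The key observation is that the first phase is ``dilation-$O(\sqrt n\log^* n)$, congestion-$\tilde O(1)$'' and the second phase is a small number of broadcast/convergecast operations of $\tilde O(\sqrt n)$-size data over one fixed BFS tree.

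Next I would fine-tune parameters so that the per-slot cost is balanced against $l$. Instead of stopping the controlled-GHS phase at $O(\sqrt n)$ fragments, stop it at $O(\sqrt{n/l})$ fragments; this takes $\tilde O(\sqrt{n/l}\cdot l)=\tilde O(\sqrt{nl})$ rounds if we ran the $l$ instances sequentially, but — and this is the point — the $l$ instances of the local merging phase can be scheduled simultaneously. Each instance is a set of messages of total dilation $\tilde O(\sqrt{n/l})$ and congestion $\tilde O(1)$ per edge; overlaying $l$ of them gives dilation $\tilde O(\sqrt{n/l})$ and congestion $\tilde O(l)$ per edge. By the random-delay scheduling lemma (Leighton--Maggs--Rao, in the distributed form used in \cite{ghaffari2015near}), such a collection runs in $\tilde O(\text{dilation}+\text{congestion})=\tilde O(\sqrt{n/l}+l)$ rounds — wait, this is only good when $l\le \sqrt n$; for larger $l$ one instead stops the local phase immediately and relies on the global phase, and in all regimes the bound one gets after the cleanup below is $\tilde O(D+\sqrt{nl})$. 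For the second (pipeline) phase, build a single global BFS tree once in $O(D)$ rounds and reuse it for all $l$ slots; each slot needs to convergecast $\tilde O(\sqrt{n/l})$ candidate inter-fragment edge weights, so all $l$ slots together convergecast $\tilde O(l\cdot\sqrt{n/l})=\tilde O(\sqrt{nl})$ words up the tree, which pipelines in $\tilde O(D+\sqrt{nl})$ rounds; the subsequent broadcast of the chosen MST edges back down is symmetric.

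Then I would assemble the two phases: $O(D)$ for the initial BFS tree, $\tilde O(\sqrt{nl})$ for the overlaid local-merging phases via random-delay scheduling, and $\tilde O(D+\sqrt{nl})$ for the pipelined global phases, for a total of $\tilde O(D+\sqrt{nl})$ rounds; correctness of each slot is inherited verbatim from the correctness of Kutten--Peleg, since scheduling only reorders messages and changes no logical step. I would be careful to state that the weight functions $W_1,\dots,W_l$ are each known locally at the edges (each endpoint of $e$ knows all $W_i(e)$), which is exactly the setting in which \cite{algo:connect-cert} is applied, so no extra communication is needed to distribute the inputs.

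The main obstacle I expect is making the scheduling argument genuinely rigorous: the clean LMR-type bound $\tilde O(\text{dilation}+\text{congestion})$ applies to a fixed collection of packets with fixed routes, whereas the controlled-GHS phase is adaptive (which fragment-edge a message uses depends on merges that already happened). The standard fix — and the one I would use, citing \cite{ghaffari2015near} — is to argue per ``Bor\r{u}vka level'': within one level the routes are fixed, there are $O(\log n)$ levels, and the dilation/congestion bounds hold level-by-level, so one pays an extra $O(\log n)$ factor that is absorbed into $\tilde O$. The second delicate point is congestion on the global BFS tree edges near the root, which can be $\Omega(\sqrt{nl})$ in the worst case; this is fine because $\sqrt{nl}$ is exactly the budget, but it must be checked that the pipelining schedule actually achieves $O(D+\sqrt{nl})$ and not $O(D\cdot\sqrt{nl})$, which follows from the standard upcast/downcast pipelining lemma for trees. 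I would therefore phrase the proof as: (1) cite Kutten--Peleg for the two-phase structure and its correctness; (2) re-parameterize the fragment-count threshold; (3) invoke the scheduling theorem of \cite{ghaffari2015near} for the local phase and the tree-pipelining lemma for the global phase; (4) sum the costs. I would explicitly defer the fully detailed scheduling analysis to \cite{ghaffari2015near,KuttenP98}, as the excerpt's phrasing (``obtained by fine-tuning parameters \dots and using the scheduling of \dots. We refer to the concluding remarks of \cite{ghaffari2015near} for details'') already signals that this theorem is meant to be black-boxed rather than reproven from scratch.
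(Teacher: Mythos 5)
Your approach --- reduce the $l$-slot problem to $l$ independent instances of a re-parameterized Kutten--Peleg MST algorithm and overlay them via the random-delay scheduling framework of \cite{ghaffari2015near} --- is exactly the route the paper intends: the paper gives no proof of its own, simply stating that the theorem is ``obtained by fine-tuning parameters of Kutten--Peleg and using the scheduling of \cite{ghaffari2015near}'' and deferring to the concluding remarks of \cite{ghaffari2015near}. Your reconstruction is the same argument.

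There is, however, a consistent error in your local-phase bookkeeping. You correctly stop the controlled-GHS phase at $O(\sqrt{n/l})$ fragments, but you then assert that each instance of the local phase has dilation $\tilde O(\sqrt{n/l})$. This has the dependence backwards. With $F\approx\sqrt{n/l}$ fragments partitioning $n$ vertices, the fragments have size (and, under controlled growth, diameter) up to $n/F\approx\sqrt{nl}$, so the per-instance dilation of phase~1 is $\tilde O(\sqrt{nl})$, not $\tilde O(\sqrt{n/l})$. A single instance of this re-parameterized phase~1 already costs $\tilde O(\sqrt{nl})$ rounds, and running $l$ of them sequentially would cost $\tilde O(l\sqrt{nl})=\tilde O(l^{3/2}\sqrt n)$ --- not $\tilde O(\sqrt{nl})$ as you write. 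The scheduling lemma is what saves the day: with dilation $\tilde O(\sqrt{nl})$ per instance and aggregate congestion $\tilde O(l)$ over the $l$ instances, the overlaid phase~1 runs in $\tilde O(\sqrt{nl}+l)=\tilde O(\sqrt{nl})$ rounds for $l\le n$. So your final total $\tilde O(D+\sqrt{nl})$ is correct, but the intermediate ``dilation $\tilde O(\sqrt{n/l})$'' figure and the digression about ``only good when $l\le\sqrt n$'' are artefacts of the sign error. The correct intuition is the opposite of what your phrasing suggests: you deliberately let each instance run its local phase \emph{longer} (to dilation $\sqrt{nl}$) so as to reduce the fragment count and hence the phase-2 pipelining load from $l\sqrt n$ to $l\sqrt{n/l}=\sqrt{nl}$, and scheduling ensures the longer local phases do not multiply by $l$.
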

\begin{algorithm}[h]
 \SetKwInOut{Input}{input}
\SetKwInOut{Output}{output}
\SetKwFor{Forp}{for}{parallely}{endfor}
\SetEndCharOfAlgoLine{}
\SetKwProg{phase}{Phase}{}{}
\SetKwFunction{FMain}{Dist-Sparse-k-Connectivity-Certificate}

\SetKwProg{Fn}{}{}{}
\Fn{\FMain{$G$,$\epsilon$}}{
\Output{$E'$ (edges in  $k$-edge connectivity certificate)\\ when an edge $e\in E'$, both the end points of $e$ know about it}
$E' \leftarrow \emptyset$\;
\phase{1: Partition edge set by assigning random color to each edge}{
A leader node $v$ broadcasts the value of $c = \frac{1}{p}$, where $p = \tau \ln n/(\epsilon^2/k)$ \label{line:broadcast}\;
Construct $W_1$ to $W_c$ weight functions such that $W_i(e) \leftarrow  \infty,\ \forall e\in E,\ \forall i\in [1,c]$\;
\Forp{all node $x \in V$, $\forall e = (x,y) \in E$}{
	\If{$\id(x) > \id(y)$}{
		$\clr(e) \sim \mathcal{U}(1,c)$ \tcc{$\mathcal{U}(1,c):$ uniform random value from $\curly{1,\ldots,c}$}
		send $\clr(e)$ to $y$\;
		$W_{\clr(e)}(e) \leftarrow  1$
	}	

}}
\phase{2: Construct a connectivity certificate in each partition}{
\For{$j \in [1,\ceil{\frac{\tau \ln n}{\epsilon^2}}$}{ \label{line:dist_scc_iteration}
	construct $c$-slot MST with weight functions $W_1$ to $W_c$. \;
	Let $M_i$ be the edges in the MST corresponding to the weight $W_i$.\;
	\For{$i \in [1,c]$, $e \in M_i$}{
		\If{$W_i(e) = 1$}{
			add $e$ to $E'$\;
			$W_i(e) \leftarrow \infty$
		}
	}	
}}
\Return{$E'$}
}
\caption{$\texttt{Dist-Sparse-k-Connectivity-Certificate}$}
\label{algo:dist_sparse_connectivity_certificate}
\end{algorithm}
Using the $l$-slot MST algorithm we give a distributed version of \shortOnly{\shortOnly{Algorithm }}\ref{algo:connect-cert} in \shortOnly{Algorithm }\ref{algo:dist_sparse_connectivity_certificate}. Here we make $c$ disjoint partitions of the edges set $E$. This is done by assigning $c$ weight functions $W_1(e),\ldots,W_c(e)$ to each edge $e$ and if an edge $e$ belongs to some partition $i$ then $W_i(e) = 1$ otherwise $W_i(e) = \infty$. We then construct $\ceil{\frac{(1+\epsilon)\tau \ln n}{\epsilon^2}}$ spanning forests in each of these partitions. This is done by constructing c-slot MST using these weight functions iteratively $\ceil{\frac{(1+\epsilon)\tau \ln n}{\epsilon^2}}$ times. Further, in any iteration $j$, while constructing a spanning forest in a partition $i$, an edge $e$ from constructed MST is selected if it belongs to the partition and has not been selected in a spanning forest prior to iteration $j$. This is ensured by appropriately checking the weight $W_i(e) =1$ and assigning $W_i(e) \leftarrow \infty$.
\begin{lem}
	The  distributed $k$-edge connectivity certificate procedure given in \shortOnly{Algorithm }\ref{algo:dist_sparse_connectivity_certificate} requires $\tilde{O}(D+ \sqrt{nk})$ rounds. \label{lem:sparse_certificate_complexity} 
\end{lem}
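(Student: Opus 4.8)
The plan is to charge the two phases of Algorithm~\ref{algo:dist_sparse_connectivity_certificate} separately and observe that the bottleneck is a polylogarithmic number of invocations of the $c$-slot MST routine with $c\le k$, each of which is handled by \Cref{thm:c-slot-algorithm}.

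First I would bound Phase~1. Identifying the minimum-ID node as leader and building a BFS tree rooted at it takes $O(D)$ rounds, after which the leader broadcasts the single value $c=1/p=\epsilon^2 k/(\tau\ln n)$ down the tree in $O(D)$ rounds; since $c\le k\le n$, this value fits in $O(\log n)$ bits, so it is one message. Everything else in Phase~1 is local: for each edge $e=\{x,y\}$ the endpoint with the larger $\id$ draws $\clr(e)$ uniformly from $\{1,\dots,c\}$ and sends it to the other endpoint (again an $O(\log n)$-bit message, hence one round), and each node then fixes the weight functions $W_1,\dots,W_c$ on its incident edges without further communication. So Phase~1 costs $O(D)$ rounds.

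Next I would bound Phase~2. The loop over $j$ runs $\lceil(1+\epsilon)\tau\ln n/\epsilon^2\rceil=O(\log n)$ times because $\epsilon$ and $\tau$ are constants. Each iteration performs exactly one $c$-slot MST computation on $W_1,\dots,W_c$, which by \Cref{thm:c-slot-algorithm} takes $\tilde O(D+\sqrt{nc})$ rounds; the post-processing — each node learning which of its incident edges lie in some $M_i$, moving those with $W_i(e)=1$ into $E'$ and resetting their weight to $\infty$ — is local, since an MST algorithm reports each chosen tree edge at both of its endpoints, so the two endpoints stay consistent without extra communication. Using $c=1/p\le k$ gives $\sqrt{nc}\le\sqrt{nk}$, so one iteration costs $\tilde O(D+\sqrt{nk})$ and Phase~2 costs $O(\log n)\cdot\tilde O(D+\sqrt{nk})=\tilde O(D+\sqrt{nk})$ rounds. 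Summing the two phases yields the claimed $\tilde O(D+\sqrt{nk})$ bound.

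I do not expect a genuine obstacle here: all the real work is inside \Cref{thm:c-slot-algorithm}, and the proof only needs to verify that Phase~1 adds merely $O(D)$, that the outer loop contributes just a polylogarithmic multiplicative factor, and that $c\le k$ keeps $\sqrt{nc}$ within $\sqrt{nk}$. The one point worth a sentence of care is the degenerate regime $k=O(\log n)$, where $1/p$ need not exceed $1$; there one simply sets $c=1$ (a single partition), and the target bound $\tilde O(D+\sqrt{nk})$ is met trivially, matching what a direct spanning-forest packing (or Thurimella's certificate algorithm) would give.
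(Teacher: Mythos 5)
Your proof is correct and follows essentially the same route as the paper: bound the leader election/broadcast and local coloring by $O(D)$, then observe that Phase~2 is $O(\log n)$ invocations of the $c$-slot MST routine, each costing $\tilde O(D+\sqrt{nc})=\tilde O(D+\sqrt{nk})$ by \Cref{thm:c-slot-algorithm} since $c = \epsilon^2 k/(\tau\ln n) \le k$. Your extra remark on the degenerate regime $k=O(\log n)$ (take $c=1$) is a reasonable clarification the paper leaves implicit, but the argument is otherwise identical.
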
\longOnly{
\begin{proof}
	In  \shortOnly{Algorithm }\ref{algo:dist_sparse_connectivity_certificate}, we select an arbitrary leader node $v$, which broadcasts the value of $c$ to all nodes. This takes $O(D)$ rounds. To assign a random color to each edge as in \cref{line:split_edge_set}, every vertex $x \in V$, assigns an independently drawn uniformly random $\clr(e) \in [1,c]$ to all the edges $e$ incident on $x$ such that the other end point of $e$ has smaller $\id$ than $x$. If node $x$ assigns $\clr(e)$ to some incident edge $e$ then it communicates the same to the other endpoint of $e$. This takes $O(1)$ rounds.
	Lastly we construct $c$-slot MST for $\ceil{\frac{(1+\epsilon)1\tau \ln n}{\epsilon^2}}$ many times, where $c = \frac{\epsilon^2 k}{\tau \ln n}$. By \cref{thm:c-slot-algorithm}, we know that to computer $c$-slot MST in $\tilde{O}(D+ \sqrt{nc}) = \tilde{O}(D+ \sqrt{nk})$ rounds.
\end{proof}}
To prove the correctness of  \shortOnly{Algorithm }\ref{algo:dist_sparse_connectivity_certificate}, we make the following simple observation.
\begin{obs}
	\label{obs:simple_MST}
	Let $G = (V,E)$ be a simple weighted graph with weight function $W$. Let $E' \subset E$ such that $w(e) = 1\ \forall e\in E'$ and $w(e) = \infty\ \forall e\notin E'$. Let $T$ be an MST of $G$. Construct a forest $T'$ by remove all edges $e$ from $T$ such that $w(e) = \infty$. Then $T'$ is a spanning forest of the subgraph $G[E']$.
\end{obs}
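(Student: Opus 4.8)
The plan is to establish the two defining properties of ``spanning forest of $G[E']$'' for $T'$ in turn: that $T'$ is acyclic and spans the vertices, and that $T'$ connects precisely those pairs of vertices that are connected in $G[E']$. Acyclicity is immediate, since $T'$ is obtained by deleting edges from the minimum spanning tree (minimum spanning forest, if $G$ is disconnected) $T$, and every subgraph of a forest is a forest; likewise $T'$ retains the vertex set of $T$, namely $V$, so it spans. One half of the connectivity claim is also free: every edge kept in $T'$ has weight $1$ and therefore lies in $E'$, so any two vertices joined by a path in $T'$ are joined by a path in $G[E']$.

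The substance is the converse: if $u$ and $v$ lie in the same connected component of $G[E']$, they are connected in $T'$. I would argue by contradiction. Let $P$ be a $u$--$v$ path using only edges of $E'$ (all of weight $1$), and let $Q$ be the unique $u$--$v$ path in $T$, which exists because $u$ and $v$, being connected in $G[E']$, are connected in $G$. If $u$ and $v$ were disconnected in $T'$, then $Q$ would have to traverse at least one edge $e$ with $W(e)=\infty$, for otherwise $Q$ itself would be a path inside $T'$.

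Now apply the standard cut-exchange property of minimum spanning trees to $e$. Removing $e$ from $T$ splits $T$ into two parts, inducing a cut $(S,V\setminus S)$ of the vertex set; since $e$ lies on the $T$-path $Q$ between $u$ and $v$, these two vertices fall on opposite sides of this cut. But $P$ is a $u$--$v$ path, so it contains some edge $e'$ crossing $(S,V\setminus S)$, and $e'\in E'$ gives $W(e')=1<\infty=W(e)$. Then $T-e+e'$ is again a spanning tree (resp.\ spanning forest) of $G$ of strictly smaller total weight, contradicting the minimality of $T$. Hence no such pair $u,v$ exists, so the connected components of $T'$ coincide with those of $G[E']$, which completes the proof. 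The main step is this exchange argument, though it is routine; the only mild subtlety is giving meaning to the weight $\infty$ when comparing total weights, which is handled either by replacing $\infty$ throughout by any fixed finite value exceeding $n$ (this changes neither the set of MSTs nor the argument) or simply by observing that $T-e+e'$ uses one fewer $\infty$-weight edge than $T$.
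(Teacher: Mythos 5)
Your proof is correct. The paper states \cref{obs:simple_MST} without proof, so there is no argument in the text to compare against, but the cut-exchange argument you give is the natural and standard one, and all the steps check out: acyclicity and one direction of the connectivity equivalence are immediate, and the converse follows from the exchange property applied to the $\infty$-weight edge on the unique $T$-path. Your closing remark about giving rigorous meaning to the weight $\infty$ (replace it by a sufficiently large finite value, or count $\infty$-edges lexicographically) is a worthwhile clarification that the paper elides. The only hair to split is that $G[E']$, as the paper defines it, has vertex set $V(E')$ rather than $V$, so $T'$ (which inherits the vertex set $V$ from $T$) may carry extra isolated vertices; this does not affect the connectivity claim, which is what the algorithm actually uses, but if one wants $T'$ to be literally a spanning forest of $G[E']$ one should also drop those isolated vertices.
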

\begin{lem}
	\shortOnly{Algorithm }\ref{algo:dist_sparse_connectivity_certificate} correctly finds the  $k$-edge connectivity certificate.
\end{lem}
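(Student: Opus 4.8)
The plan is to show that \Cref{algo:dist_sparse_connectivity_certificate} is nothing more than a distributed simulation of \Cref{algo:connect-cert}: I will exhibit a coupling of the two algorithms' random choices under which the returned sets $E'$ coincide, and then the correctness of the output follows from \Cref{cor:correctness} (and the size bound from \Cref{lemma:number_of_edges_Sparse_Connectiviity}). The distributed output convention --- both endpoints of every selected edge know it lies in $E'$ --- is then checked separately.

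For Phase~1, every edge $e=(x,y)$ receives exactly one color: the endpoint with the larger identifier draws $\clr(e)\sim\mathcal{U}(1,c)$ uniformly and independently and sends it to the other endpoint. Hence both endpoints know $\clr(e)$, and the resulting coloring has exactly the distribution of \cref{line:split_edge_set} of \Cref{algo:connect-cert}. The weight functions merely encode the induced partition: $W_i(e)=1$ precisely when $\clr(e)=i$ and $W_i(e)=\infty$ otherwise, so the subgraph of $G$ spanned by the weight-$1$ edges of $W_i$ is exactly $H_i=G[E_i]$ from \cref{line:set_of_subgraph} of \Cref{algo:connect-cert}.

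For Phase~2, I would argue by induction on the iteration index $j$ that, after the $j$-th iteration, the edges added to $E'$ from slot $i$ are exactly the union of the first $j$ edge-disjoint spanning forests that \Cref{algo:connect-cert} removes from $H_i$, and that the current weight-$1$ edges of $W_i$ are exactly what remains of $H_i$ after removing those forests. For the inductive step, the $c$-slot MST routine (\Cref{thm:c-slot-algorithm}) returns an MST $M_i$ of $G$ under $W_i$; by \Cref{obs:simple_MST} the weight-$1$ edges of $M_i$ form a spanning forest of the subgraph induced by the current weight-$1$ edges, i.e.\ a legal choice for the ``arbitrary spanning forest'' of the current $H_i$. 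The algorithm adds exactly those edges to $E'$ and raises their weight to $\infty$, which mirrors removing a spanning forest from $H_i$; the weight-$\infty$ edges of $M_i$ (used by the MST only to bridge components of $H_i$) fail the test $W_i(e)=1$ and are ignored, so no previously removed edge is ever re-added --- matching the property that in \Cref{algo:connect-cert} each edge enters $E'$ at most once. Running the outer loop the same number of times as the inner loop of \Cref{algo:connect-cert} then makes $E'$ equal to its output for the coupled coloring.

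It remains to import correctness from the simulated algorithm: since $k\le(1+\epsilon)\lambda$, \Cref{cor:correctness} gives that w.h.p.\ $G[E']$ has edge connectivity $\min(k,\lambda)$, hence $|C\cap E'|\ge\min(k,|C|)$ for every cut $C$ by \Cref{claim:correctness_sparse_cert_tec}. For the output convention, both endpoints know $\clr(e)$ and thus $W_i(e)$ from Phase~1, and both endpoints of every MST edge learn its membership from the distributed MST procedure, so both endpoints of each edge placed in $E'$ know this. The only genuinely non-routine point is the identification of the weight-$1$ part of an MST with a spanning forest of the weight-$1$ subgraph, which is precisely \Cref{obs:simple_MST}; everything else is straightforward bookkeeping, with the round complexity handled independently in \Cref{lem:sparse_certificate_complexity}.
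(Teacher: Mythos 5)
Your proposal is correct and follows essentially the same route as the paper: establish that the random coloring/partition coincides with that of Algorithm~\ref{algo:connect-cert}, then invoke \Cref{obs:simple_MST} to identify the weight-$1$ part of each slot's MST with a spanning forest of the current $H_i$, so that the distributed algorithm is a faithful simulation and correctness is inherited from \Cref{cor:correctness}. Your version is somewhat more careful --- it makes the coupling and the iteration-by-iteration invariant explicit via induction and separately checks the ``both endpoints know'' output convention --- but the key lemma and overall structure match the paper's proof.
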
\longOnly{
\begin{proof}
The edge partition  established in 
\shortOnly{Algorithm }\ref{algo:dist_sparse_connectivity_certificate} is similar to \shortOnly{Algorithm }\ref{algo:connect-cert}. This is because in both  cases each edge is assigned an independent random color from $1$ to $c$ which governs which partition an edge is assigned. Let $\mathcal{E} = \curly{E_1,\ldots,E_c}$ be this partition. To complete this proof we have to show that in both the sequential and the distributed algorithm, the way the spanning forests are constructed in each of the partition is the same.  Let's pick an arbitrary edge set $E_i$ in $\mathcal{E}$ and let $H_i$ be the subgraph induced by $E_i$. In \shortOnly{Algorithm }\ref{algo:connect-cert}, we iteratively construct $\ceil{\frac{\tau \ln n}{\epsilon^2}}$ many spanning forests one after another in the subgraph induced by $H_i$. In \shortOnly{Algorithm }\ref{algo:dist_sparse_connectivity_certificate}, we re-weight the edges used earlier in a spanning forest to $\infty$ and compute an MST. This is done $\ceil{\frac{\tau \ln n}{\epsilon^2}}$ times.  By \cref{obs:simple_MST}, this is similar to constructs spanning forests one after the other in the subgraph $H_i$ resulting in a sparse connectivity certificate.
\end{proof}}
Using \cref{thm:sparse_certificate_exists}, we give the following corollary which will be used in finding the graph contraction in \cref{sec:contraction-of-graph} where we will call this using $\texttt{Sparse-Connectivity-Certificate}(G,\epsilon)$ for some $\epsilon \in (0,\frac{1}{2})$.
\begin{cor}[From \cref{thm:sparse_certificate_exists}]

	Let $0 < \epsilon < \frac{1}{2}$ be a constant. Let $G = (V,E)$ be an unweighted graph with $\lambda$ being the edge connectivity. Then in total of $\tilde{O}(n^{1-\epsilon} + D)$ rounds in the \CONGEST model, we can find a sparse connectivity certificate $E'$ of size $O(\lambda^{\frac{1}{1-2\epsilon}} n)$ such that  the induced subgraph $G[E']$ has connectivity $\lambda$ and every vertex $v$ knows all the adjacent edges in $E'$.
	\label{corr:sparse_certificate_exists}
\end{cor}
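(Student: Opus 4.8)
The plan is to derive \Cref{corr:sparse_certificate_exists} from \Cref{thm:sparse_certificate_exists} (through its distributed implementation \Cref{algo:dist_sparse_connectivity_certificate}) by choosing the connectivity parameter just below the affordable threshold $n^{1-2\epsilon}$, and falling back to the trivial certificate $E'=E$ when $\lambda$ is larger than that. First I would invoke the $(1+\epsilon)$-approximation algorithm of \Cref{thm:NS14_main} with a sufficiently small constant accuracy parameter; since $\epsilon<\tfrac12$ this costs $\tilde O(\sqrt n + D)=\tilde O(n^{1-\epsilon}+D)$ rounds and produces an integer $k$ with $\lambda\le k\le (1+\epsilon)\lambda$. (We need an over-estimate of $\lambda$ precisely because we cannot compute $\lambda$ exactly but must pick $k\ge\lambda$; when $\lambda$ is below a constant the rounding is delicate, but then the exact version of \Cref{thm:NS14_main} already runs in $\tilde O(\sqrt n + D)$ rounds, so we may instead set $k=\lambda$.)

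Next I would split on whether $k\le n^{1-2\epsilon}$. In this case, run \Cref{algo:dist_sparse_connectivity_certificate} with this $k$. Its running time is $\tilde O(\sqrt{nk}+D)\le \tilde O(\sqrt{n\cdot n^{1-2\epsilon}}+D)=\tilde O(n^{1-\epsilon}+D)$ by \Cref{lem:sparse_certificate_complexity}; by \Cref{lemma:number_of_edges_Sparse_Connectiviity} the output $E'$ has $O(kn)=O(\lambda n)$ edges, and since $\lambda\ge 1$ and $1/(1-2\epsilon)\ge 1$ this is $O(\lambda^{1/(1-2\epsilon)}n)$; and because $\lambda\le k\le (1+\epsilon)\lambda$, \Cref{cor:correctness} (together with the lemma asserting that \Cref{algo:dist_sparse_connectivity_certificate} produces the same output as \Cref{algo:connect-cert}) gives that $G[E']$ has edge connectivity $\min(k,\lambda)=\lambda$ w.h.p. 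The distributed implementation guarantees that both endpoints of every selected edge know it belongs to $E'$.

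If instead $k>n^{1-2\epsilon}$, then from $k\le(1+\epsilon)\lambda$ we get $\lambda>n^{1-2\epsilon}/(1+\epsilon)$, so $\lambda^{1/(1-2\epsilon)}=\Omega(n)$ with a hidden constant depending only on the fixed constant $\epsilon$. Here I would simply output $E'=E$: it has edge connectivity exactly $\lambda$, every vertex already knows its incident edges, no communication beyond the initial approximation is needed, and its size $|E|\le\binom{n}{2}=O(n^2)=O(\lambda^{1/(1-2\epsilon)}n)$ is within the claimed bound. Combining the two cases yields the stated $\tilde O(n^{1-\epsilon}+D)$ round complexity, the $O(\lambda^{1/(1-2\epsilon)}n)$ size bound, and correctness w.h.p.

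The only genuinely delicate point is matching the size bound in the large-$\lambda$ branch, and it works exactly because the threshold $n^{1-2\epsilon}$ is calibrated so that $(n^{1-2\epsilon})^{1/(1-2\epsilon)}=n$: once $\lambda$ exceeds the threshold, the ``oversized'' exponent $1/(1-2\epsilon)$ already buys a factor $\Omega(n^2)$, which absorbs all of $E$. Everything else is routine bookkeeping on top of \Cref{thm:sparse_certificate_exists}, \Cref{cor:correctness}, and \Cref{thm:NS14_main}.
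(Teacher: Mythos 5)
Your proposal is correct and follows essentially the same route as the paper's own proof: approximate $\lambda$ via \Cref{thm:NS14_main}, case-split on whether the estimate is below $n^{1-2\epsilon}$, run \Cref{algo:dist_sparse_connectivity_certificate} in the small case, and output the full edge set $E$ in the large case where the exponent $1/(1-2\epsilon)$ is calibrated so that $O(n^2)=O(\lambda^{1/(1-2\epsilon)}n)$. Your write-up is a bit more careful than the paper's in flagging that the estimate must over-approximate $\lambda$ so that \Cref{cor:correctness} yields connectivity $\min(k,\lambda)=\lambda$, but the underlying argument is identical.
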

\longOnly{
\begin{proof}
	We use \cite{NanongkaiS14}, which for any $\epsilon>0$ finds the $(1+\epsilon)$ approximation of min-cut in $O((\sqrt{n} \log^* n + D)\epsilon^{-5}\log^3 n)$ rounds. Let $\lambda'$ be this approximate value thus $\lambda' \leq (1+\epsilon)\lambda$. If $\lambda' < n^{1-2\epsilon}$, then we output the result of \sloppy$\texttt{Dist-Sparse-k-Connectivity-Certificate}(G,\lambda')$ (see \shortOnly{Algorithm }\ref{algo:dist_sparse_connectivity_certificate}). 
	By \cref{lem:sparse_certificate_complexity}, it takes $\tilde{O}(D + n^{1-\epsilon})$ rounds to compute 
	and by \cref{lemma:number_of_edges_Sparse_Connectiviity}, it has $O(\lambda n)$ edges. 
	Also, based on \cref{cor:correctness}, the edge connectivity is $\lambda$. If $\lambda' \geq n^{1-2\epsilon}$, we output the whole edge set. This is of size $O(n^2) = O(\lambda^{\frac{1}{1-2\epsilon}} n)$ and has the required edge connectivity.
\end{proof}}
\shortOnly{\paragraph{Parallel Implementation of Algorithm \ref{algo:connect-cert}}
}
\longOnly{\paragraph{Parallel Implementation of \ref{algo:connect-cert}}}
In this subsection, we prove that \shortOnly{Algorithm }\ref{algo:connect-cert} has an efficient parallel implementation to find a $\texttt{Sparse-k-Connectivity-Certificate}(G,k)$ taking $\tilde{O}(1)$ depth and total of $\tilde{O}(m)$ work. Recall that in \shortOnly{Algorithm }\ref{algo:connect-cert}, we partition the edge set into $c$ partitions where $c$ depends on $k$. This takes $O(m)$ work and $O(1)$ depth. Now in each partition, we construct $\polylog$ many spanning forests one after the other. This process is done independently in each partition. Also, each edge participates in construction of $\polylog$ many spanning forests thus the total work is $\tilde{O}(m)$.
\begin{thm}
Let $G = (V,E)$ be an unweighted graph, let $k \leq \lambda$. Then in $\tilde{O}(1)$ depth and total of $\tilde{O}(m)$ work in the \PRAM model we can find a $k$-edge connectivity certificate $E'$ of size $O(kn)$ such that every vertex $v$ knows all the adjacent edges in $E'$ and w.h.p. for every cut $C$ of $G$ of size at least $k$ we have $E' \cap C \geq k$.
\label{thm:parallel_sparse_certificate_exists}
\end{thm}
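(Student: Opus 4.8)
The plan is to observe that almost all the content of this theorem has already been established: by \cref{lemma:number_of_edges_Sparse_Connectiviity} the output $E'$ of \cref{algo:connect-cert} has $O(kn)$ edges, and by \cref{claim:correctness_sparse_cert_tec} together with \cref{cor:correctness} every cut $C$ with $|C|\ge k$ keeps at least $k$ of its edges in $E'$ w.h.p. So the only thing left to do is to run \cref{algo:connect-cert} in the \PRAM model within $\polylog(n)$ depth and $\tilde O(m)$ work, drawing the \emph{same} random edge-coloring (and, to keep the probabilistic analysis unchanged, introducing no further randomness). The output convention that every vertex knows its incident certificate edges is then free, since \PRAM processors share memory.

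First I would handle Step~1 of \cref{algo:connect-cert}: set $p=\tau\ln n/(\epsilon^2 k)$ and $c=1/p$ exactly as there, and assign each edge an independent uniform color in $\{1,\dots,c\}$. This costs $O(1)$ depth and $O(m)$ work and produces precisely the partition $E=E_1\cup\cdots\cup E_c$ used by the sequential algorithm, with $\sum_i|E_i|=m$. I would assume $k\ge\tau\ln n/\epsilon^2$ so that $c\ge 1$; the remaining regime $k=O(\log n)$ is the strictly easier special case in which one directly builds $k$ pairwise edge-disjoint spanning forests of $G$, obtaining a certificate with at most $k(n-1)$ edges within the same resource bounds.

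Next, for the forest-packing step I would process the $c$ color classes fully in parallel. Inside class $i$ one must construct $t=\lceil(1+\epsilon)\tau\ln n/\epsilon^2\rceil=\Theta(\log n)$ pairwise edge-disjoint spanning forests; since the sequential algorithm deletes each forest before building the next, these $t$ constructions must be carried out one after another. Each individual construction is a single spanning-forest computation on the subgraph spanned by the at most $|E_i|$ edges of $E_i$, which I would perform with a standard work-efficient low-depth parallel connectivity routine (e.g.\ Shiloach--Vishkin), costing $O(\log n)$ depth and $\tilde O(|E_i|)$ work, after which the $\le n-1$ tree edges are read off and removed. Thus class $i$ costs $O(\log^2 n)$ depth and $\tilde O(|E_i|)$ work; running all $c$ classes simultaneously and using $\sum_i|E_i|=m$ gives $O(\log^2 n)=\polylog(n)$ depth and $\tilde O(m)$ work overall, after which $E'=\bigcup_i E_i'$ is returned.

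The step I expect to carry the real content is the work bound, and the key observation is precisely why the random partition is present: building $t=\Theta(\log n)$ forests inside each of the $c=\Theta(k/\log n)$ classes would cost $\tilde O(mk)$ work if every class were the whole graph, but because the classes partition $E$ exactly the total collapses to $\tilde O(m)$ (an edge participates in at most $t=\polylog(n)$ forest computations, all within its own class). The only other subtlety is the sequential dependency among the $t$ forests within a class; this merely multiplies the depth by $\Theta(\log n)$, leaving it polylogarithmic, and since the $j$-th forest of class $i$ operates on a shrinking subset of $E_i$ the per-class work stays $\tilde O(|E_i|)$. No new probabilistic argument beyond \cref{claim:correctness_sparse_cert_tec} is needed, since the \PRAM run uses the identical random coloring and a deterministic connectivity primitive.
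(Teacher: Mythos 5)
Your proposal is correct and takes essentially the same route as the paper: partition the edges into $c$ color classes, run the $t=\Theta(\log n)$ forest constructions sequentially within each class but all classes in parallel, and observe that each edge participates in only $\polylog(n)$ forest computations so the total work is $\tilde O(m)$ while the depth stays $\polylog(n)$. Your write-up is more detailed (you name a concrete connectivity primitive, tally depth as $O(\log^2 n)$, and flag the $k=O(\log n)$ corner case where $c<1$), but the underlying argument and its reliance on \cref{lemma:number_of_edges_Sparse_Connectiviity} and \cref{claim:correctness_sparse_cert_tec} match the paper's.
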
	

\section{Graph Contraction\label{sec:contraction-of-graph}} 
In this section, we describe an algorithm which outputs a contracted graph. It uses the sparse connectivity certificate  from \cref{sec:connect-cert} and the graph decomposition from \cref{thm:fast_exp_decomp}.  This contracted graph preserves all {non-trivial} min-cuts and has a sub-linear number of nodes as in {\cite{kawarabayashi2015deterministic} and \cite{henzinger2017local}}. The idea essentially is to pick specialized vertex sets and contract them. Any such contracted vertex set is called as \emph{core}.  We formally define the contraction in the following definition.
\begin{defn}[Min-Cut preserving Sublinear Graph Contraction ($\MSGC(G,\epsilon)$)]
	\label{defn:Min-Cut-Preserving-Sub-linear-Graph-Contraction}
	Let $G = (V,E)$ be a simple unweighted graph such that $m = |E|$ and $n = |V|$. For $0 < \epsilon< 1$, an $\MSGC(G,\epsilon)$ partitions the vertex set $V$	into $\mathcal C = \curly{C_1,C_2,\ldots,C_{O({n}^{1-\Theta(\epsilon)})}}$ and contracts specialized vertex set with the following properties:
	\begin{enumerate}
		\item For all $C \in \mathcal{C}$ , such that $|C| > 1$, we partition $C = \core(C) \cup \regnodes(C)$. $\core(C)$ is called the {\em core} of $C$. $\regnodes(C)$ is a set of {\em regular nodes} in $C$. If $|C| = 1$ (\emph{trivial vertex group}), we set $C = \regnodes(C)$.
		\item For every $C \in \mathcal{C}$, the vertices in $\core(C)$ can be contracted to form a \emph{core vertex} $s(C)$ by deleting the edges which have both end points in $\core(C)$ and collapse the nodes in $\core(C)$ to one node. The contracted graph  thus formed is the $\MSGC(G,\epsilon)$.  Here, $s(C)$ is a vertex of $\MSGC(G,\epsilon)$. 
		\item $\MSGC(G,\epsilon)$ preserves all non-trivial min-cuts of $G$
		\item $\sum_{C \in \mathcal C}diam(G[C]) = O({n}^{1-\Theta(\epsilon)})$
		\item $\sum_{C \in \mathcal C}|\regnodes(C)| = O({n}^{1-\Theta(\epsilon)})$
	\end{enumerate}
	\label{defn:contracted_graph}
\end{defn}

\begin{thm}
	Let $G=(V,E)$ be a given simple unweighted graph. Let $\epsilon \in (0,\frac{1}{2})$ be such that $\delta = n^{2\epsilon}$, then we can find an $\MSGC(G,\epsilon)$  as given in \cref{defn:contracted_graph} in $O(n^{1-\epsilon/44})$ rounds such that
	\begin{enumerate}
		\item A partition $\mathcal C = \curly{C_1,C_2,\ldots,C_{O({n}^{1-\frac{\epsilon}{22}})}}$ of the vertex set $V$	is established where each cluster $C \in \mathcal{C}$ has a unique $\clustid(C) \in [2n]$. Henceforth, $\mathcal C$ is called the \emph{set of vertex groups} of $\MSGC(G,\epsilon)$. Also, $\sum_{C \in \mathcal C}diam(G[C]) = O({n}^{1-\epsilon/20})$ and $\sum_{C \in \mathcal C}|\regnodes(C)| = O({n}^{1-\epsilon/22})$.
		\item Every vertex $v \in V$, knows the $\clustid(C)$ of the vertex group $C$ it is part of. When, $|C| > 1$, then node $v$ also knows if it is part of $\regnodes(C)$ or $\core(C)$ 
	\end{enumerate}
	\label{thm:main_contracted_graph_can_be_found}
\end{thm}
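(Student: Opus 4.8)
\emph{Overview and Phase~1 (sparsification).} The plan is to instantiate the three–phase scheme of \cref{algo:overview} and verify that its output meets the five requirements of \cref{defn:contracted_graph} within $O(n^{1-\epsilon/44})$ rounds. First I would run the sparse connectivity certificate of \cref{corr:sparse_certificate_exists} with parameter $\epsilon$: in $\tilde O(n^{1-\epsilon}+D)$ rounds this returns a spanning subgraph $G[E']$ of edge connectivity exactly $\lambda$ with $|E'|=O(\lambda^{1/(1-2\epsilon)}n)=n^{1+\Theta(\epsilon)}$, using $\lambda\le\delta=n^{2\epsilon}$; since a graph of minimum degree $n^{2\epsilon}$ has diameter $O(n^{1-2\epsilon})$, this step costs only $\tilde O(n^{1-\epsilon})$ rounds. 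Every later step runs on $G[E']$, except that the shave substep and the final contraction refer back to the original edge set $E$. This is sound because $E'\subseteq E$ and $G[E']$ is $\lambda$-edge connected, so every mincut of $G$ is also a mincut of $G[E']$; hence it suffices to build cores that no mincut of $G[E']$ separates.

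\emph{Phase~2 (decomposition and clustering).} Next I would run the distributed tripartition of \cref{thm:fast_exp_decomp} on $G[E']$, obtaining $E'=E_h\cup E_r\cup E_s$ with $G[E_h]$ a disjoint union of $\phi$-expanders for a suitably chosen $\phi=n^{-\Theta(\epsilon)}$ and with $|E_r|+|E_s|=\tilde O(\phi\,|E'|)$. Let $\mathcal C$ consist of the connected components $\mathcal X$ of $G[E_h]$ together with one singleton group per vertex incident to no edge of $E_h$. A component of $G[E_h]$ that survives trimming (every vertex having $E_h$-degree $\ge\delta/c$) must contain $\Omega(\delta)$ vertices, so there are at most $O(n/\delta)=O(n^{1-2\epsilon})$ non-trivial groups; and the number of singletons is at most $(|E_r|+|E_s|)/\delta=O(n^{1-\Theta(\epsilon)})$ since each such vertex has all $\ge\delta$ of its edges in $E_r\cup E_s$. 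Summing gives $|\mathcal C|=O(n^{1-\epsilon/22})$. Moreover each non-trivial group, being a $\phi$-expander, has diameter $O(\phi^{-1}\log n)=n^{\Theta(\epsilon)}$ while singletons contribute $0$, so $\sum_{C\in\mathcal C}\operatorname{diam}(G[C])=O(n^{1-2\epsilon}\cdot n^{\Theta(\epsilon)})=O(n^{1-\epsilon/20})$ once the exponents are balanced. A BFS/aggregation inside each component then fixes a unique $\clustid$ and informs every vertex which group it is in, in $\tilde O(\max_C\operatorname{diam}(G[C])+D)=\tilde O(n^{1-\Theta(\epsilon)})$ rounds.

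\emph{Phase~3 (TrimAndShave and contraction).} For each $X\in\mathcal X$ I would run the local $\texttt{TrimAndShave}$ routine adapted from \cite{kawarabayashi2015deterministic, henzinger2017local}: iteratively delete every vertex whose $E_h$-degree inside the surviving set drops below a $\Theta(\delta)$ threshold (trim), then delete every remaining vertex more than a constant fraction of whose $E$-edges leave the set (shave); declare the survivors to be $\core(X)$ and set $\regnodes(X)=X\setminus\core(X)$, with $\regnodes(C)=C$ for singleton groups. Two properties remain to be checked. First, \emph{the regular set is small:} each deleted vertex can be charged to $\Omega(\delta)$ distinct edges of $E_r\cup E_s$ that certify its deficiency, so $\sum_C|\regnodes(C)|=O((|E_r|+|E_s|)/\delta)$ plus the $O(n^{1-\Theta(\epsilon)})$ singletons, hence $O(n^{1-\epsilon/22})$; and since every trim/shave decision is local, each vertex also learns whether it lies in $\core$ or in $\regnodes$. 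Second, \emph{the cores are mincut-proof:} if a non-trivial mincut $(U,V\setminus U)$ of $G$ separated some $\core(X)$, then it would induce a non-trivial cut of the $\phi$-expander on $X$, and the combination of the conductance bound with the lower bound on internal degree (trim) and the upper bound on external degree (shave) is calibrated so that this cut must contain strictly more than $\delta\ge\lambda$ edges of $G$, a contradiction; together with the observation that collapsing a core lying wholly on one side of a cut deletes no cut edge, this gives property~3. Finally, the collapse of each $\core(X)$ into its super-node $s(X)$ is one more intra-component aggregation to agree on a representative identifier, costing $\tilde O(\operatorname{diam}(G[X]))$ rounds per component and done in parallel.

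\emph{Round count and main obstacle.} Summing the three phases gives $\tilde O(n^{1-\epsilon})+O(n^{1-\epsilon/44})+\tilde O(n^{1-\Theta(\epsilon)})=O(n^{1-\epsilon/44})$, the bottleneck being \cref{thm:fast_exp_decomp}. I expect the genuine difficulty to be the joint choice of parameters rather than any single step: one must pick $\phi$, the trim threshold and the shave fraction so that simultaneously (i) $\phi^{-1}$ and each component's diameter are small enough to keep $\sum_C\operatorname{diam}(G[C])=O(n^{1-\epsilon/20})$ and all aggregations within $O(n^{1-\epsilon/44})$; (ii) trimming and shaving discard only $O(n^{1-\epsilon/22})$ vertices; (iii) a non-trivial cut of a $\phi$-expander on $X$ strictly exceeds $\lambda\le\delta$, which forces a lower bound of the form $\phi=\Omega(1/\delta)$ and is the point at which the Kawarabayashi--Thorup and Henzinger--Rao--Wang trim-and-shave analysis must be redone for the weaker, randomized, $n^{-\Theta(\epsilon)}$-conductance guarantee of \cref{thm:fast_exp_decomp}; and (iv) the certificate's $n^{1+\Theta(\epsilon)}$ edges do not push \cref{thm:fast_exp_decomp} beyond $n^{1-\epsilon/44}$ rounds. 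Reconciling these constraints is exactly what pins down the constants $1/20$, $1/22$ and $1/44$ in the statement.
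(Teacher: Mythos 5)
Your proposal follows the paper's construction essentially step for step: sparse certificate (\cref{corr:sparse_certificate_exists}), tripartition (\cref{thm:fast_exp_decomp}), cluster by connected components of $G[\EhExp]$, trim and shave, and charge deleted vertices to $\Omega(\delta)$ inter-component edges to bound $|\regnodes|$ and the number of trimmed nodes, exactly as in \cref{claim:inter_component_edges,lemma:number_trimmed_nodes,lemma:regular_nodes_are_bounded,lemma:number_of_clusters,lemma:diameter_cluster_sublinear}, while the min-cut–preservation is proved via the conductance/trimming/shaving calculation of \cref{claim:technical_shaving_1,claim:critical_obs_about_number_of_nodes_in_one_side_of_cut_intersect_suppernode,lemma:no_min_cut}. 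The one concrete detail you left unresolved is precisely where the paper's arithmetic lives: you invoke \cref{corr:sparse_certificate_exists} with parameter $\epsilon$, giving $|E'|=O(\lambda^{1/(1-2\epsilon)}n)$, but the paper calls it with parameter $\epsilon/44$ so that $m=O(\lambda^{1/(1-\epsilon/22)}n)$ and hence $|E_r|=O(m^{1-\rho/2})=O(\lambda\, n^{1-\epsilon/22})\le O(\delta\, n^{1-\epsilon/22})$ with $\rho=\epsilon/11$, which is what makes the charge-per-trim bound of $\delta/5$ produce exactly $O(n^{1-\epsilon/22})$ trimmed nodes; with your parameter the exponent on $\lambda$ is too large and the bound breaks — you correctly flagged that this parameter tuning is the remaining work, so this is a gap in completeness rather than in approach.
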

Our definition of graph contraction has similar properties as in \cite{henzinger2017local} and \cite{kawarabayashi2015deterministic} (i.e. sublinear number of nodes and preserving min-cuts). We have an additional property regarding the diameter which enables us to give efficient distributed algorithm to find a min-cut (see \cref{sec:min-cut-contracted-graph}). The algorithm to find graph contraction given by \cref{defn:contracted_graph} is described in \shortOnly{Algorithm }\ref{algo:dist_sparse_connectivity_certificate}. Note that our algorithm runs without the ``outer loop'' of the algorithms in \cite{henzinger2017local} and \cite{kawarabayashi2015deterministic}. Thus by using the expander decomposition algorithm in \cite{SaranurakW19}, also leads to a simplified static algorithm to find min-cuts. For our setting, we find the high-expansion components using a recent result by \cite{chang2018distributed}. We change the parameters from their presentation leading to a modified definition to suit our requirements. This is given in \cref{defn:decomposition}. 
%A proof of 	\cref{thm:fast_exp_decomp} is given in \longOnly{\cref{sec:proof_exp_decomposition} for verification.}\shortOnly{full version of this paper \cite{distributed-min-cut-full}.}

\begin{defn}
	For $0< \rho,\gamma < 1$, $\operatorname{Tripartition}(\gamma,\rho)$ of a simple, unweighted, undirected graph $G = (V,E)$ is a partition of the edge set $E$ to $\mathcal E =  \curly{\EhExp, E_s, E_r}$ satisfying the following:
	\begin{enumerate}
		\item 
		Each connected component induced by $\EhExp$ is such that $\Phi(X) \geq \frac{c}{n^{\rho}}$ for some constant $c>0$. 
		\item $E_s = \bigcup_{v \in V}E_{s,v}$, where each vertex $v$ knows about each edge in $E_{s,v}$, edges in $E_{s,v}$ are viewed as oriented away from $v$ and the sub-graph induced by $E_s$ has arborocity $O(n^{\gamma})$. \mohit{\textbf{rework this line}}\mohit{Done.}
		\item $|E_r|	 = O(m^{1-\rho/2})$ and each edge of $E_r$ has endpoints in different connected components in the subgraph induced by the edge set $\EhExp$.
	\end{enumerate}
	\label{defn:decomposition}
\end{defn}
\begin{thm}
	\label{thm:fast_exp_decomp}
	For $0<\gamma,\rho<1$, in  $O(n^{1-\gamma + 10\rho})$ rounds, we can find the $\operatorname{Tripartition}(\gamma,\rho)$ of a graph $G = (V,E)$ which partitions the edge set $E$ to $\mathcal E =  \curly{\EhExp, E_s, E_r}$ such that every node $v$ knows which of its incident edges belong to $\EhExp,E_s$  and $E_r$.
\end{thm}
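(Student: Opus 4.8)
The idea is to produce $\tripartition(\gamma,\rho)$ by running the distributed expander decomposition of Chang, Pettie, and Zhang~\cite{chang2018distributed} with the conductance parameter set to $\phi \coloneqq c/n^{\rho}$ (instead of their $1/\polylog n$) and with one extra rule that stops the recursion on small clusters and diverts their edges into the sparse bucket. Maintain a partition of $V$ into clusters, starting from $\{V\}$. To process a cluster $U$ with $|U|\ge n^{\gamma}$, run the nibble / random-walk sparse-cut primitive of~\cite{chang2018distributed} on $G[U]$: if it certifies $\Phi(G[U])=\Omega(\phi)$, freeze $U$ and put its internal edges into $\EhExp$; if it returns a balanced sparse cut, put the $O(\phi\,\myvol(U))$ crossing edges into $E_r$ and recurse on both sides; if it returns a sufficiently unbalanced sparse cut, apply a \CONGEST implementation of the Saranurak--Wang trimming/pruning step~\cite{SaranurakW19} to carve out a core $U'\subseteq U$ with $\Phi(G[U'])=\Omega(\phi)$ and $\myvol(U\setminus U')$ small, freeze $U'$ into $\EhExp$, move its boundary edges into $E_r$, and recurse on $U\setminus U'$. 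Whenever a cluster $U$ has fewer than $n^{\gamma}$ vertices, orient every edge of $G[U]$ from its lower-ID endpoint to its higher-ID endpoint, let that endpoint record it as a member of $E_{s,v}$, add all of them to $E_s$, and do not recurse. This is exactly the recursive template of~\cite{chang2018distributed} with the reparametrization announced after \cref{defn:decomposition}.

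\textbf{Correctness.} Property~1 of \cref{defn:decomposition} is immediate: each connected component of $G[\EhExp]$ is a frozen cluster (or core) $U$ with $\Phi(G[U])=\Omega(n^{-\rho})$, and no $\EhExp$ edge leaves such a $U$. For Property~2, a cluster diverted to $E_s$ spans fewer than $n^{\gamma}$ vertices, so $G[U]$ has arboricity below $n^{\gamma}$; the lower-to-higher-ID orientation is acyclic with out-degree $<n^{\gamma}$; and since the $E_s$-clusters partition a subset of $V$ they are pairwise vertex-disjoint, so the union of their edge sets still has arboricity $O(n^{\gamma})$, with the orientation and the sets $E_{s,v}$ computed in $O(1)$ rounds of local communication. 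For Property~3, the frozen clusters and cores are vertex-disjoint, so every edge of $E_r$ joins two distinct final clusters and therefore has its endpoints in distinct connected components of $G[\EhExp]$ (possibly singletons); for the count, each recursion level adds at most $\tilde O(\phi m)=\tilde O(m\,n^{-\rho})$ edges to $E_r$, balanced cuts halve volume and each trimming step resolves all of a cluster's unbalanced cuts at once, so there are $\polylog n$ effective levels and $|E_r|=\tilde O(m\,n^{-\rho})=O(m^{1-\rho/2})$, where the last step uses $m\le n^{2}$.

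\textbf{Round complexity and the main obstacle.} The sparse-cut primitive at conductance threshold $n^{-\rho}$ simulates lazy random walks of length $\tilde O(\phi^{-2})=\tilde O(n^{2\rho})$, and the trimming step propagates its certificates over a distance polynomial in $\phi^{-1}$; running these across the recursion and absorbing the per-level congestion from processing many same-level clusters simultaneously costs $\tilde O(n^{O(\rho)})$ per level over $\polylog n$ levels, while the number of recursion subproblems / trimming phases — which our size threshold $n^{\gamma}$ caps — contributes a factor that re-derives to $O(n^{1-\gamma})$ exactly as in~\cite{chang2018distributed}; balancing these yields $O(n^{1-\gamma+10\rho})$ rounds, with every node learning, for each incident edge, whether it lies in $\EhExp$, $E_s$, or $E_r$. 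I expect the crux to be the \CONGEST trimming/pruning step: extracting a core of conductance $\Omega(n^{-\rho})$ whose complement has small volume using only the \emph{unbalanced}-cut primitive of~\cite{chang2018distributed} (a balanced-sparse-cut oracle is not available, the claim of~\cite{KuhnM15} being erroneous), and showing that it, together with the overlapping peeling phases of different clusters, stays within the $n^{10\rho}$ round slack when $\phi$ is polynomially small rather than $1/\polylog n$.
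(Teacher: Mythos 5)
Your proposal correctly identifies the high-level strategy — rerun the Chang--Pettie--Zhang recursion with conductance threshold $\phi = \Theta(n^{-\rho})$ and divert low-degree debris into the sparse bucket $E_s$ — but it introduces a step that is neither needed nor established, and misses the two ingredients the paper actually uses to close the argument.

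\textbf{The Saranurak--Wang trimming step is a genuine gap.} You propose handling an unbalanced sparse cut by running a \CONGEST implementation of Saranurak--Wang trimming to carve out a high-conductance core, and you explicitly flag this as ``the crux'' that you have not shown. The paper never does this, because it never encounters an unbalanced cut. The key device is a Low Degree subroutine (Lemma~\ref{le:lowdegree}), run \emph{before} the sparse-cut test, which removes (into $E_s$) every vertex whose residual degree falls below $n^\gamma/2$, in time proportional to the number of removed vertices divided by $n^\gamma$. After that step every surviving vertex has degree $> n^\gamma/2$ inside its component, and then \emph{any} cut of conductance at most $1/(12\,n^\rho\log m)$ has $\min(|C|,|\bar C|) = \Omega(n^\gamma)$: a too-small side would have almost all its edges crossing, forcing conductance $\Omega(1)$. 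Hence every split, balanced or not, removes $\tilde\Omega(n^\gamma)$ vertices from at least one side, and each vertex participates in only $O(n^{1-\gamma})$ splits. This is what pays for the $n^{1-\gamma}$ factor — no trimming or pruning and no new \CONGEST primitive are required. There is a second missing piece: the recursion must also handle components of \emph{large} diameter, for which the random-walk primitive is useless; the paper uses a separate High Diameter subroutine (Lemma~\ref{le:highdiameter}) that extracts a sparse BFS-level cut in time $O(\tilde D)$ and again loses $\Omega(n^\gamma)$ vertices per round. Without it the $O(n^{1-\gamma+10\rho})$ budget cannot be argued.

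\textbf{The bound on $|E_r|$ does not follow from ``polylog effective levels.''} Your accounting for $E_r$ assumes $\polylog n$ levels because balanced cuts halve volume and ``trimming resolves all unbalanced cuts at once,'' but the actual recursion depth is $\Theta(n^{1-\gamma})$ (a vertex can lose $n^\gamma$ neighbours at each level), so a per-level bound of $\tilde O(m\,n^{-\rho})$ summed naively would be far too large. The paper instead proves the potential inequality (condition (C5)) $|E_r'| \le \big(|E'|\log|E'| - \sum_i |\mathcal{E}_i|\log|\mathcal{E}_i|\big)/(6\,n^\rho\log m)$, maintained across every {\sf Remove}/{\sf Split} because each removed cut has $|\partial(C)| \le \min(\myvol(C),\myvol(\bar C))/(12\,n^\rho\log m)$ and the smaller side contributes an entropy gain at least $\log(|\mathcal{E}_j|/|\mathcal{E}_j^1|) \ge 1$. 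This telescopes to $|E_r| \le m/(6 n^\rho)$ independently of recursion depth, and only then does one invoke $m\le n^2$ to get $O(m^{1-\rho/2})$. Your final numeric step $m\,n^{-\rho}\le m^{1-\rho/2}$ is fine, but the amortization that reaches that per-level quantity is the content of the claim and is absent.

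In short: replace the Saranurak--Wang trimming with the Low Degree peel that guarantees minimum degree $> n^\gamma/2$ (this turns every sparse cut into a $\Omega(n^\gamma)$-balanced one), add the High Diameter case, and replace the ``polylog levels'' heuristic with the entropy potential of (C5). Those are the three ideas the paper's proof rests on that your proposal does not supply.
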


We use \cref{thm:fast_exp_decomp} in \shortOnly{Algorithm }\ref{algo:contraction}. In the remaining part of this sub-section we  give an overview of \shortOnly{Algorithm }\ref{algo:contraction}. 
 We fix the value of $\epsilon$ such that $\delta = n^{2\epsilon}$.  In this algorithm, we first find a sparse connectivity certificate (from \cref{sec:connect-cert}). Subsequently, in this section, we  use $G$ to represent the graph with reduced number of edges received from sparse connectivity certificate. We then use \cref{thm:fast_exp_decomp} with $\gamma = \epsilon$ and $\rho = \epsilon/11$ resulting in a tripartition of edge set $E$ into $\EhExp,E_s$ and $E_r$. We use the connected components induced by the edge set $\EhExp$, and  by \cref{defn:decomposition} each of these components has high expansion. We then \emph{trim} each component followed by \emph{shaving}. The process of \emph{trimming} and \emph{shaving} are same as  \cite{kawarabayashi2015deterministic} and described below.

\paragraph{Trimming and Shaving} Given $U \subset V$, to be trimmed, such that all $u \in U$ have same $\clustid$. In \emph{trimming} process, we repeatedly remove any vertex $u \in U$ if it has less than $2\deg_G(u)/5$ neighbours in $U$ until it is not possible to remover a vertex further. We call a set of vertices $U \subseteq V$  {\em trimmed} if all vertices $u \in U$ have at least $2\deg_G(u)/5$ of their neighbours in $U$. Suppose $U$ is a vertex set to be {\em trimmed}, let $U' \subset U$ be the set of vertices which are removed from $U$ in this process. Then the set of edges which are lost during the trimming process of the vertex set $U$ are the edges which have one end point in $U'$ and the other in $U \setminus U'$.  Also, each $u'\in U'$ assigns itself a new distinct $\clustid(u)$. The \emph{trimming} phase is followed by a \emph{shaving} phase in \shortOnly{Algorithm }\ref{algo:trimming}. \emph{Shaving} does not induce a  modification of vertex groups rather partitions each vertex group $C$ into two sets: $\core(C)$ and $\regnodes(C)$. For any vertex group $C$, during \emph{Shaving}, all nodes $v \in C$ are put into $\regnodes(C)$ if at least $\deg_G(v)/2-1$ edges incident on $v$ leave $C$. We call all such nodes  $\emph{shaved}$. The remaining vertices from $\core(C)$. 

\begin{algorithm}
	\DontPrintSemicolon
	\SetKwFor{Forp}{for}{parallely}{endfor}
	$G \leftarrow \texttt{Sparse-Connectivity-Certificate}(G,\epsilon/44)$ (\cref{corr:sparse_certificate_exists})\label{line:connectivity_certificate}\;
	Let $\curly{\EhExp,E_r,E_s}$ be the partition of edge sets $E$ found using $\tripartition(\gamma = \epsilon,\rho = \epsilon/11)$ (\cref{thm:fast_exp_decomp})\;
	$\mathcal{X} \leftarrow $ connected components of subgraph $G[\EhExp]$\label{line:connected_components}\;
	$\mathcal{C} \leftarrow \curly{C \mid X \in \mathcal{X}; C\text{ is vertex set of }X}$\; 
	\Forp{$v \in V$}{
			$C \in \mathcal{C}$ be the cluster such that $v \in C$\;
			$\clustid(v) \leftarrow \max_{u \in C}\id(u)$
	}
	\lForp{each $v \in V$}{
		run $\texttt{trim\_shave(v)}$
	}	
	\caption{Algorithm to find $\MSGC(G,\epsilon)$}
	\label{algo:contraction}
\end{algorithm}

\shortOnly{\subsection{Correctness of \ref{algo:contraction} \label{sub-sec:contraction_contraction}}}
\longOnly{\subsection{Correctness of \ref{algo:contraction} \label{sub-sec:contraction_contraction}}}
\mohit{\textbf{shall we change the sub-section title to Correctness of Algo. 4 and get rid of the preceding sentence?}}
Let $\mathcal{C}$ be the set of vertex groups output by \shortOnly{Algorithm }\ref{algo:contraction}. In this subsection,  we first prove that collapsing a $\core(C)$ of a vertex group $C \in \mathcal{C}$ does not affect a non-trivial min-cut. Then we show that the number of nodes which are trimmed and shaved are bounded. Lastly, we will show that the sum total of diameter of subgraph induced by the vertex groups in $\mathcal{C}$ is bounded.
\paragraph{Clusters and preserving non-trivial cuts in contracted graph} Similar to \cite{kawarabayashi2015deterministic}, we call $C \subset V$ a \emph{cluster} if for every  min-cut $(U,T)$ of $G$ both $|C \cap T|>2$ and $|C \cap U| >2$ are not true simultaneously. \shortOnly{Algorithm }\ref{algo:contraction} establishes a partition $\mathcal{C}$ of the vertex $V$ set by assigning a $\clustid(v)$ to each vertex $v$, such that each $C \in \mathcal C$ is given by $C = \curly{v \mid \clustid(v) = i}$ for some $i \in [2n]$.

In \shortOnly{Algorithm }\ref{algo:contraction}, we start with a partition of vertex set $\mathcal{C}$ corresponding to connected components induced by the edge set $\EhExp$ (recieved from $\tripartition(\gamma = \epsilon,\rho = \epsilon/11)$). 
We assign a unique $\clustid$ to each $C \in \mathcal{C}$ which is known to every vertex $v \in C$ and thus the vertex group it is part of. We run the distributed algorithm $\texttt{trim\_shave}$ on each node which \emph{trims} each vertex group $C \in \mathcal{C}$ followed by \emph{shaving}.
In the process $\clustid$ values of some vertices are changed. 
In \ref{claim:technical_shaving_1}, we give a technical claim which uses the property of high expansion of each component of $G[\EhExp]$ and properties of \emph{trimming} and \emph{shaving}. Using this claim, we prove that at the end of \shortOnly{Algorithm }\ref{algo:contraction}, each vertex group established by these $\clustid$'s is a \emph{cluster}. Finally, using the properties of the cluster and \emph{shaving} process, in \ref{lemma:no_min_cut}, we show that the $\core(C)$ of a cluster can be collapsed without affecting any non-trivial min-cut.
\begin{claim}
	Let $(T,U)$ be any arbitrary min-cut of the graph $G$. At the end of \shortOnly{Algorithm }\ref{algo:contraction}, for any $i \in [2n]$ let $C = \curly{v \mid \clustid(v) = i}$ be an arbitrary vertex group. Then both $|C \cap T|\geq \frac{\delta}{100}$ and $|C\cap U| \geq \frac{\delta}{100}$ are not true simultaneously.
	\label{claim:technical_shaving_1}
\end{claim}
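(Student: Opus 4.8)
The plan is to show that after trimming and shaving, each vertex group $C$ is "too dense internally" to be split by a min-cut into two large pieces. The intuition from Kawarabayashi–Thorup is: a high-expansion component, once trimmed, has the property that no sparse cut (and a min-cut, being of size $\lambda \le \delta$, is sparse relative to the volume of a large set) can cut off a large chunk of it; and shaving handles the remaining low-internal-degree vertices. Let me sketch the steps.

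First I would fix an arbitrary min-cut $(T,U)$ of $G$ and an arbitrary vertex group $C$ at the end of Algorithm~\ref{algo:contraction}. There are two kinds of vertex groups: those that arose as (subsets of) a high-expansion component $X \in \mathcal{X}$ after trimming, and singleton groups created when a vertex was removed during trimming and gave itself a fresh $\clustid$. The singleton case is trivial since $|C| = 1 < \delta/100$ (using $\delta = n^{2\epsilon} = \omega(1)$ and $\delta/100 \ge 2$ for $n$ large). So the real work is the first case: $C$ is a trimmed subset of some high-expansion $X$, meaning every $v \in C$ has at least $\tfrac{2}{5}\deg_G(v)$ neighbors inside $C$.

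The core argument: suppose for contradiction that both $|C \cap T| \ge \delta/100$ and $|C \cap U| \ge \delta/100$. Write $A = C \cap T$, $B = C \cap U$; the edges of the min-cut $(T,U)$ that lie inside $C$ form an edge cut of $G[C]$ separating $A$ from $B$, and its size is at most $\lambda \le \delta$. Now I would lower-bound $\myvol_{G[C]}(A)$ and $\myvol_{G[C]}(B)$: by the trimming property each $v \in C$ has $\deg_{G[C]}(v) \ge \tfrac{2}{5}\deg_G(v) \ge \tfrac{2}{5}\delta$, so $\min(\myvol_{G[C]}(A), \myvol_{G[C]}(B)) \ge \tfrac{2}{5}\delta \cdot \tfrac{\delta}{100} = \Omega(\delta^2)$. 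Meanwhile the conductance of the component $X$ is $\ge c/n^{\rho}$ by \Cref{defn:decomposition}, and I need to transfer this to a bound on conductance of $G[C]$ after trimming — this is exactly the standard fact that trimming a $\phi$-expander yields a graph whose conductance is still $\Omega(\phi)$ (the trimmed set keeps a constant fraction of each vertex's degree, and one shows no sparse cut survives, using that the removed vertices in total carry few edges). Combining: the cut $(A,B)$ inside $G[C]$ has conductance at most $\lambda / \Omega(\delta^2) = O(1/\delta) = O(n^{-2\epsilon})$, which for $\rho = \epsilon/11$ is smaller than $\Omega(c/n^{\rho})$, contradicting that $G[C]$ is an $\Omega(n^{-\rho})$-expander. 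Hence one of $|C\cap T|, |C\cap U|$ is below $\delta/100$.

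**The main obstacle** I anticipate is the conductance-after-trimming step: formally proving that the trimmed subgraph $G[C]$ inherits (up to constants) the expansion of the original high-expansion component $X$, given that trimming removes vertices iteratively. The delicate point is that the min-cut $(A,B)$ is a cut of $G[C]$, but I want to relate it to a cut of $X$ where the expansion guarantee lives; one must account for the edges lost during trimming (edges from removed vertices), and argue these do not destroy expansion — this is where the $\tfrac{2}{5}$ threshold in the trimming rule is chosen precisely so that the removed portion has small volume relative to what remains. I would isolate this as a separate lemma (the analogue of the Kawarabayashi–Thorup trimming lemma), prove that after trimming every nonempty subset $S \subsetneq C$ with $\myvol_{G[C]}(S) \le \myvol_{G[C]}(C)/2$ satisfies $|\partial_{G[C]}(S)| \ge \Omega(n^{-\rho}) \cdot \myvol_{G[C]}(S)$, and then plug in $S = A$ (or $S = B$, whichever has smaller volume) with the volume lower bound $\Omega(\delta^2)$ and the cut-size upper bound $\lambda \le \delta$ to reach the contradiction. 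The remaining bookkeeping — that $\delta \ge n^{2\epsilon}$, $\rho = \epsilon/11$, so $n^{-2\epsilon} = o(n^{-\rho})$ — is routine.
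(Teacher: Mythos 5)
Your overall setup matches the paper's (singleton groups dismissed, contradiction from both sides being large, volume lower bound $\tfrac{2}{5}\delta\cdot\tfrac{\delta}{100}$ via the trimming invariant, arithmetic $n^{-2\epsilon}=o(n^{-\rho})$), but the route you take runs into exactly the obstacle you flag, and that obstacle is avoidable.

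You build the contradiction on the conductance of the \emph{trimmed} induced subgraph $G[C]$, which forces you to establish a lemma of the form ``trimming an $\Omega(n^{-\rho})$-expander yields an $\Omega(n^{-\rho})$-expander.'' That lemma is nontrivial (and, taken literally, needs care: iterated removal can in principle disconnect the subgraph, and the threshold $\tfrac{2}{5}\deg_G(v)$ is relative to the degree in $G$, not in $X$), and the paper never proves or uses anything of the sort. Instead, the paper's proof applies the conductance guarantee where it is \emph{given} --- to the untrimmed component $X$ --- and to the cut $(T\cap X,\,U\cap X)$ of $X$, not to a cut of $G[C]$. The only role of trimming is to furnish a \emph{volume} lower bound inside $X$: since $C\subseteq X$ and every $u\in C$ has at least $\tfrac{2}{5}\deg_G(u)\ge\tfrac{2}{5}\delta$ neighbors in $C$ (hence in $X$), one has $\myvol_{G[X]}(T\cap X)\ge\myvol_{G[X]}(T\cap C)\ge\tfrac{2}{5}\delta\cdot|T\cap C|$, and symmetrically for $U$. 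With the WLOG that $T\cap X$ has the smaller $G[X]$-volume, $\lambda\ge |E(T\cap X,U\cap X)|\ge\Phi(X)\cdot\myvol_{G[X]}(T\cap X)\ge\tfrac{c}{n^{\rho}}\cdot\tfrac{2}{5}\delta\cdot\tfrac{\delta}{100}>\delta$, contradiction. So the missing ingredient in your proposal is not a gap to be filled by the trimming-preserves-expansion lemma; it is the observation that you do not need conductance of $G[C]$ at all --- monotonicity of volume under $T\cap C\subseteq T\cap X$ lets you apply the conductance bound directly in $X$.
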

\begin{proof}
Recall that in \shortOnly{Algorithm }\ref{algo:contraction}, we use $\tripartition(\gamma = \epsilon,\rho = \epsilon/11)$ resulting in a tripartition of the edge set into $\EhExp,E_r$ and $E_s$. Furthermore each non-trivial cluster $C$ is formed by trimming \monika{and shaving}\mohit{basically vertices which are shaved remain in a cluster, thus \sout{and shaving}} some vertices from a connected component $X$ in the subgraph $G[\EhExp]$. Note that in \shortOnly{Algorithm }\ref{algo:trimming}, we set	 the $\clustid(v) = n + \id(v)$ of every trimmed node $v$. Thus at the end of $\texttt{trim\_shave}$, these nodes form a vertex group of  single node \monika{what does "of single node" mean?}\mohit{all trimmed nodes are trivial clusters of single vertex}. 	
When $|C| = 1$, this claim is trivial. Each trimmed vertex group $C$ is formed by trimming some vertices from a connected component $X$ in the subgraph $G[\EhExp]$. Thus for every vertex group $C$ there is a connected component $X$ in the subgraph $G[\EhExp]$ such that $C \subseteq X$. 
WLOG, assume that $\myvol_{G[X]}(T \cap X) < \myvol_{G[X]}(U \cap X)$, otherwise we use $ \myvol_{G[X]}(U \cap X)$ in the below equation. For the sake of contradiction, we assume that both $|C \cap T|\geq \frac{\delta}{100}$ and $|C\cap U| \geq \frac{\delta}{100}$. \longOnly{
We have
\begin{align*}
\lambda &= |E(T , U )|  \geq |E(T \cap X, U \cap X)|  \\
&\geq  \Phi(X) \cdot \myvol_{G[X]}(T \cap X)	& \text{since $\myvol_{G[X]}(T \cap X) < \myvol_{G[X]}(U \cap X)$}\\
&\geq  \Phi(X) \cdot \myvol_{G[X]}(T \cap C) & \text{since $C \subseteq X$}\\
&\geq  \Phi(X) \cdot\frac{2}{5}\delta\cdot|T \cap C| &\text{$C$ is \emph{trimmed}, $\forall u\in C$, $\frac{2\deg_G(u)}{5}$ edges incident to $u$ are in $C$}.\\
&\geq \frac{c}{n^{\rho}} \cdot \frac{2}{5}\delta  \cdot \delta/100& \text{$\Phi(X) = \Omega\paren{\frac{1}{n^{\rho}}}$ from  \ref{defn:decomposition}}\\
&> \frac{c}{\delta} \cdot \frac{2}{5}\delta  \cdot \delta/100 & \frac{1}{n^{\rho}} = \frac{1}{n^{\epsilon/11}} = \frac{1}{(n^{2\epsilon})^{1/22}} = \frac{1}{\delta^{1/22}}> \frac{1}{\delta}\\	
&> \delta& \text{choosing $c = 1000$}
\end{align*}

The above is a contradiction since the size of min-cut can not be larger than the min-degree. Thus both $|C \cap T|\geq \frac{\delta}{100}$ and $|C\cap U| \geq \frac{\delta}{100}$ are not true simultaneously.
}	\shortOnly{
This leads to $\lambda > \delta$ (for details see \cite{distributed-min-cut-full}) which is a contradiction since the size of min-cut can not be larger than the min-degree. Thus both $|C \cap T|\geq \frac{\delta}{100}$ and $|C\cap U| \geq \frac{\delta}{100}$ are not true simultaneously.
%		\begin{align*}
%	&\lambda = |E(T , U )|\\  
%	&\geq |E(T \cap X, U \cap X)|  \\
%	&\geq  \Phi(X) \cdot \myvol_{G[X]}(T \cap X)\quad \quad \quad  \myvol_{G[X]}(T \cap X) < \myvol_{G[X]}(U \cap X)\\
%	&\geq  \Phi(X) \cdot \myvol_{G[X]}(T \cap C)\quad \quad \quad  C \subseteq X\\
%	&\geq  \Phi(X) \cdot\frac{2}{5}\delta\cdot|T \cap C| \quad \quad \quad \quad  \text{$C$ is \emph{trimmed}}\\
%	&\geq \frac{c}{n^{\rho}} \cdot \frac{2}{5}\delta  \cdot \delta/100 \quad \quad \quad \quad \quad  \text{$\Phi(X) = \Omega\paren{\frac{1}{n^{\rho}}}$ from  \ref{defn:decomposition}}\\
%	&> \frac{c}{\delta} \cdot \frac{2}{5}\delta  \cdot \delta/100 \quad \quad \quad \quad \quad  \frac{1}{n^{\rho}} = \frac{1}{(n^{2\epsilon})^{1/22}} = \frac{1}{\delta^{1/22}}> \frac{1}{\delta}\\	
%	&> \delta \quad \quad \quad \quad \quad \quad \quad \quad \quad \quad\text{choosing $c = 1000$}
%\end{align*}
	}\monika{Is this in the wrong line?}\mohit{Line 5 looks correct to me.}
\end{proof}
\begin{lemma}
	Let $(T,U)$ be any min-cut of the graph $G$. At the end of \shortOnly{Algorithm }\ref{algo:contraction}, for any $i \in [2n]$ let $C = \curly{v \mid \clustid(v) = i}$ be an arbitrary vertex group. Then both $|C \cap T|>2$ and $|C \cap U| >2$ are not true simultaneously.
	\label{claim:critical_obs_about_number_of_nodes_in_one_side_of_cut_intersect_suppernode}
\end{lemma}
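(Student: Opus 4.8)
The plan is to derive this lemma from \Cref{claim:technical_shaving_1} by a short ``amplification'' argument that uses only the \emph{trimmed} property of the vertex groups, with no further appeal to the expansion guarantee of \Cref{defn:decomposition}; this is the analogue of the corresponding finishing step in the Kawarabayashi--Thorup framework~\cite{kawarabayashi2015deterministic}. Fix a min-cut $(T,U)$ and a vertex group $C=\{v:\clustid(v)=i\}$, and suppose for contradiction that $|C\cap T|\ge 3$ and $|C\cap U|\ge 3$. Then $|C|\ge 6>1$, so $C$ is a non-trivial vertex group, i.e.\ a \emph{trimmed} subset of some connected component $X$ of $G[\EhExp]$; in particular every $u\in C$ has at least $\tfrac25\deg_G(u)\ge\tfrac25\delta$ of its $G$-neighbours inside $C$.

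Next, invoking \Cref{claim:technical_shaving_1}, at most one of $|C\cap T|$ and $|C\cap U|$ can be $\ge \delta/100$, so without loss of generality $|C\cap T|<\delta/100$. I would put $S:=C\cap T$, so that $3\le|S|<\delta/100$ and $C\setminus S=C\cap U$. For each $v\in S$, among its $\ge\tfrac25\delta$ neighbours inside $C$ at most $|S|-1<\delta/100$ can lie in $S$, hence at least $\tfrac25\delta-\tfrac1{100}\delta=\tfrac{39}{100}\delta$ lie in $C\cap U$, and every such edge is cut by $(T,U)$. The edge sets counted for distinct $v\in S$ are disjoint (their endpoints lie in the disjoint sets $S$ and $C\cap U$), so
\[
\lambda=|E(T,U)|\ \ge\ \sum_{v\in S}\bigl|\{w\in C\cap U:\{v,w\}\in E\}\bigr|\ \ge\ |S|\cdot\tfrac{39}{100}\delta\ \ge\ 3\cdot\tfrac{39}{100}\delta\ >\ \delta,
\]
contradicting $\lambda\le\delta$ (witnessed by the trivial cut around a minimum-degree vertex). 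Hence $|C\cap T|$ and $|C\cap U|$ cannot both exceed $2$.

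The hard part is really only the book-keeping in the displayed chain: one must be sure the trimmed lower bound $\tfrac25\delta$ on in-cluster degree strictly dominates the at most $|S|-1$ neighbours of a vertex of $S$ that remain inside $S$, which is exactly why the crude bound $\delta/100$ coming out of \Cref{claim:technical_shaving_1} already suffices, and also why the threshold lands at $2$ rather than $1$ (with $|S|=2$ one only gets $\tfrac{78}{100}\delta<\delta$, which is not a contradiction). I do not anticipate any other obstacle; unlike \Cref{claim:technical_shaving_1}, this step needs neither the conductance bound of \Cref{defn:decomposition} nor the shaving phase, both of which enter only later, in \Cref{lemma:no_min_cut}.
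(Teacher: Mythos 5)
Your proof is correct and is essentially the same as the paper's: both use the trimmed property to lower-bound in-cluster degree by $\tfrac25\delta$, both use simplicity to bound edges internal to the small side by roughly its size squared, both invoke \Cref{claim:technical_shaving_1} to cap the small side below $\delta/100$, and both derive $\lambda>\delta$ when that side has at least three vertices. The only difference is cosmetic bookkeeping — you argue per-vertex after first applying \Cref{claim:technical_shaving_1}, whereas the paper writes a single volume inequality $\delta\ge\tfrac25\delta\,|C\cap T|-|C\cap T|^2$, rules out $|C\cap T|\in[3,\delta/100]$, and then uses \Cref{claim:technical_shaving_1} to exclude the remaining branch.
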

\begin{proof}
	We know that the size of min-cut is always smaller or equal to the min-degree. WLOG assume that $|C \cap T| \leq  |C \cap U|$. Thus 
	\begin{align}
	\delta &\geq \lambda = |E(T, U)|\geq|E(C \cap T, C \cap U)| \nonumber\\
	&= \myvol_{G[C]}(C \cap T) - |E(C \cap T, C \cap T)|\nonumber\\  
	&\geq (2/5)\cdot \delta \cdot |C \cap T| - |C \cap T|^2 & \text{$G$ is simple}
	\label{eqn:lemma_technical_shaving_2}
	\end{align}
	Here  \cref{eqn:lemma_technical_shaving_2} is true for $|C \cap T| \leq 2$, but this equation cannot be true for any $|C \cap T|$ between $3$ and $\delta/100$. Thus, if $|C \cap T| > 2$, it must hold that $|C \cap T| > \delta/100$. But then $|C \cap U| \geq |C \cap T|$ implies that both $|C \cap U|$ and $|C \cap T|$ are larger than $\delta/100$, which is not possible by Claim 5.6. It follows that $|C \cap T| \leq 2$. Thus  both $|C \cap T|>2$ and $|C \cap U| >2$ are not true simultaneously.
\end{proof}
\begin{lemma}
	\label{lemma:no_min_cut}
	Let $(T,U)$ be any non-trivial min-cut of the graph $G$. Then for every non-trivial cluster $C$ in cluster set of $\MSGC(G,\epsilon)$ either we have $T \cap \core(C) = \emptyset$ or $U \cap \core(C) = \emptyset$.
\end{lemma}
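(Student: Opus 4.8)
The plan is to argue by contradiction: if some core vertex of a non-trivial cluster $C$ sat on the ``minority'' side of a non-trivial min-cut $(T,U)$, then relocating that single vertex to the other side would produce a cut with fewer than $\lambda$ edges, contradicting that $\lambda$ is the edge connectivity. This mirrors the corresponding step in \cite{kawarabayashi2015deterministic,henzinger2017local}, but all quantities we need are already established in \Cref{claim:critical_obs_about_number_of_nodes_in_one_side_of_cut_intersect_suppernode} together with the definition of the shaving step.

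First I would fix a non-trivial min-cut $(T,U)$ and a non-trivial cluster $C$. By \Cref{claim:critical_obs_about_number_of_nodes_in_one_side_of_cut_intersect_suppernode}, at least one of $|C\cap T|,|C\cap U|$ is at most $2$; assume w.l.o.g.\ that $|C\cap T|\le 2$, so it suffices to prove $\core(C)\cap T=\emptyset$ (the symmetric argument handles the other case, giving $\core(C)\cap U=\emptyset$). Suppose for contradiction that there is a vertex $v\in\core(C)\cap T$. Since $v$ was not shaved, strictly fewer than $\deg_G(v)/2-1$ of its incident edges leave $C$, hence strictly more than $\deg_G(v)/2+1$ of its neighbors lie inside $C$ (using that $G$ is simple, so neighbors and incident edges coincide). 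Because $v\in C\cap T$ and $|C\cap T|\le 2$, at most one neighbor of $v$ inside $C$ lies in $T$; therefore $v$ has strictly more than $\deg_G(v)/2$ neighbors in $C\cap U$, so strictly more than $\deg_G(v)/2$ edges of the cut $(T,U)$ are incident to $v$. Dually, the neighbors of $v$ lying in $T$ are either the at most one inside $C$ or among the fewer than $\deg_G(v)/2-1$ neighbors outside $C$, so $v$ has strictly fewer than $\deg_G(v)/2$ neighbors in $T\setminus\{v\}$.

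Now I would move $v$ across the cut, considering the partition $(T\setminus\{v\},\,U\cup\{v\})$. Since $(T,U)$ is non-trivial we have $|T|\ge 2$, so $T\setminus\{v\}\neq\emptyset$ and this is a genuine partition of $V$ into two nonempty parts; removing its cut edges disconnects $G$, so it has at least $\lambda$ edges. Writing $a$ for the number of edges from $v$ to $U$ and $b$ for the number of edges from $v$ to $T\setminus\{v\}$, the size of the new cut equals $\lambda-a+b$. By the two bounds above, $a>\deg_G(v)/2>b$, and since $a,b$ are integers this gives $\lambda-a+b\le\lambda-1<\lambda$, a contradiction. Hence $\core(C)\cap T=\emptyset$, which is what we wanted; note the statement is vacuous for trivial clusters since those have no core.

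The one place that requires care, rather than deep ideas, is the integer bookkeeping around the shaving threshold: it is precisely because the shaving rule uses $\deg_G(v)/2-1$ (and not merely $\deg_G(v)/2$) that the count of cut edges at $v$ strictly exceeds the count of internal edges from $v$ to its own side, which is exactly the slack needed to make the relocated cut strictly smaller than $\lambda$. I would state these inequalities with explicit constants to make sure the strict inequality survives, exactly as in the proof of \Cref{claim:technical_shaving_1}.
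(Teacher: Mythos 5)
Your proof is correct and follows essentially the same route as the paper's: invoke \Cref{claim:critical_obs_about_number_of_nodes_in_one_side_of_cut_intersect_suppernode} to bound one of $|C\cap T|,|C\cap U|$ by $2$, use the shaving threshold to show a core vertex on that side has more than $\deg_G(v)/2$ cut edges incident to it, and conclude by flipping that vertex to obtain a strictly smaller cut. Your write-up is slightly more careful than the paper's — you verify that $(T\setminus\{v\},U\cup\{v\})$ is a genuine nontrivial partition and make the integer slack explicit, and your framing (``whichever side has $\le 2$ intersection, the core avoids it'') cleanly covers the case where both intersections are $\le 2$, which the paper leaves to symmetry — but none of this changes the substance of the argument.
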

\begin{proof}
	Fix an arbitrary vertex group $C$ of $\MSGC(G,\epsilon)$. Suppose $\core(C) \neq \emptyset$. For each node $r \in \core(C)$, let us define the indegree  of node $r$ w.r.t to $C$ as the number of edges incident on $r$ which have the other endpoint in $C$ and denote this by $\texttt{indegree}_C(r)$. By the property of shaving if $r \in \core(C)$, then we have $\texttt{indegree}_C(r) > \texttt{degree}(r)/2 + 1$. Let $(T,U)$ (here $T \cup U = V$) be a min-cut. From \ref{claim:critical_obs_about_number_of_nodes_in_one_side_of_cut_intersect_suppernode} we  know that both $|C \cap T|>2$ and $|C \cap U|>2$ are not simultaneously true. Suppose $|C \cap T|>2$, thus $|C \cap U|\leq2$. We prove that $U \cap \core(C) = \emptyset$. For the sake of contradiction let's assume $ u \in U \cap \core(C)$. \mohit{prelim:For $A,B \subset V$ let $E(A,B)$ be the set of edges which have one end point in $A$ and the other in $B$, Also $(A,B)$ is defined as a cut if $A\cup B = V$}.
	\longOnly{
		\begin{align*}
		|E(\curly{u},T)| &\geq |E(\curly{u}, T \cap C)| \\
		&= |E(\curly{u},C\setminus U)|\\
		&=\texttt{indegree}_C(u) - |E(\curly{u},U\cap C)|\\
		&>\texttt{degree}(u)/2 + 1 - 1&(|U\cap C| \leq 2 \text{ and the graph is simple})\\
		&\geq \texttt{degree}(u)/2 
		\end{align*}
	}
	\shortOnly{
		\sloppy	\begin{align*}
		&|E(\curly{u},T)| \geq |E(\curly{u}, T \cap C)| \\
		&\quad=|E(\curly{u},C\setminus U)|\\
		&\quad=\texttt{indegree}_C(u) - |E(\curly{u},U\cap C)|\\
		&\quad>\texttt{degree}(u)/2 + 1 - 1\quad \quad (|U\cap C| \leq 2 \text{ \& simple Graph})\\
		&\quad\geq \texttt{degree}(u)/2 
		\end{align*}
	}
	The contradiction comes from the fact that flipping $u$'s side in the cut $(T,U)$ decreases the size of $(T,U)$, in contradiction to the fact that it is a min-cut.
\end{proof}

\paragraph{Number of trimmed and shaved nodes is bounded}
Here, we first prove that the number of edges going between the connected components of the subgraph $G[E_m]$ is bounded. Then by using counting argument, we show that the number of trimmed and shaved nodes is bounded. This is similar to \cite[Lemma 17]{kawarabayashi2015deterministic}.
\begin{claim}
	Let $\epsilon \in (0,1/2)$ be such that $n^{2\epsilon} = \delta$. Let $\mathcal{X}$ be the connected components in the subgraph $G[E_m]$ as in \ref{line:connected_components} of \shortOnly{Algorithm }\ref{algo:contraction} while finding $\MSGC(G,\epsilon)$. Then the total number of edges going between any $X,Y \in \mathcal{X}$ is $O(\delta n^{1-\epsilon/22})$.
	\label{claim:inter_component_edges}
\end{claim}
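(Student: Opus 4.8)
The plan is to count these inter-component edges by locating every one of them inside the tripartition $\{\EhExp,E_s,E_r\}$ that \cref{algo:contraction} computes via \cref{thm:fast_exp_decomp}, and then bounding $E_s$ and $E_r$ separately. The first observation is that no inter-component edge can lie in $\EhExp$: by definition (see \cref{line:connected_components} of \cref{algo:contraction}) $\mathcal{X}$ is the set of connected components of $G[\EhExp]$, so every edge of $\EhExp$ has both endpoints in one component. Since $\{\EhExp,E_s,E_r\}$ partitions the edge set of the (post-certificate) graph $G$, every edge between distinct $X,Y\in\mathcal{X}$ lies in $E_s\cup E_r$; hence it suffices to show $|E_s|+|E_r|=O(\delta n^{1-\epsilon/22})$.

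Next I would bound $|E_r|$. By \cref{thm:fast_exp_decomp} and \cref{defn:decomposition}, $|E_r|=O(m^{1-\rho/2})$, and \cref{algo:contraction} invokes $\tripartition$ with $\rho=\epsilon/11$, so $1-\rho/2=1-\epsilon/22$. Here $m$ is the number of edges remaining after \cref{line:connectivity_certificate}, where the sparse certificate is built with parameter $\epsilon/44$; by \cref{corr:sparse_certificate_exists} this leaves $m=O(\lambda^{1/(1-\epsilon/22)}n)$ edges (and since $\lambda\le\delta=n^{2\epsilon}$ this is in the $O(\lambda n)$ regime of that corollary, although the stated bound is all we use). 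The parameters are coupled precisely so that the exponents telescope: $m^{1-\rho/2}=O\!\big((\lambda^{1/(1-\epsilon/22)}n)^{1-\epsilon/22}\big)=O(\lambda\,n^{1-\epsilon/22})$, and $\lambda\le\delta$ then gives $|E_r|=O(\delta n^{1-\epsilon/22})$.

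It remains to bound $|E_s|$. \cref{defn:decomposition} guarantees that the subgraph induced by $E_s$ has arboricity $O(n^{\gamma})=O(n^{\epsilon})$ (as \cref{algo:contraction} uses $\gamma=\epsilon$), hence at most $O(n^{\epsilon})\cdot n=O(n^{1+\epsilon})$ edges; using $\delta=n^{2\epsilon}$ this is $O(\delta\,n^{1-\epsilon})=O(\delta\,n^{1-\epsilon/22})$. Adding the two estimates yields the claimed $O(\delta n^{1-\epsilon/22})$ bound on the number of inter-component edges, as in \cite[Lemma 17]{kawarabayashi2015deterministic}.

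The only delicate point — and the step I would be most careful with — is the exponent bookkeeping in the $|E_r|$ bound: one must feed in the post-certificate edge count $m=O(\lambda^{1/(1-\epsilon/22)}n)$ rather than the trivial $m=O(n^2)$ (the latter is too weak once $\epsilon<\tfrac12$), and notice that the coupling between the certificate parameter $\epsilon/44$ and the tripartition parameter $\rho=\epsilon/11$ is exactly what makes $m^{1-\rho/2}$ collapse to $\lambda\,n^{1-\epsilon/22}$. The containment of all counted edges in $E_s\cup E_r$ and the arboricity-to-edge-count conversion are routine.
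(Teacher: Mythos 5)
Your proposal matches the paper's proof essentially step for step: you locate all inter-component edges in $E_s\cup E_r$, bound $|E_s|$ via the arboricity condition of \cref{defn:decomposition} and $|E_r|$ via the $O(m^{1-\rho/2})$ bound fed with the post-certificate edge count $m=O(\lambda^{1/(1-\epsilon/22)}n)$ from \cref{corr:sparse_certificate_exists}, and observe that the choice $\rho=\epsilon/11$ makes the exponents telescope to $\lambda n^{1-\epsilon/22}\le\delta n^{1-\epsilon/22}$ — exactly the paper's calculation. (Your parenthetical that this falls in the $O(\lambda n)$ regime is not quite right for all $\epsilon<1/2$, but you correctly note you don't rely on it.)
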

\begin{proof}
	From \ref{defn:decomposition}, we know that the total number of edges which connect any two components is contributed by $E_r$ and $E_s$. Since the arboricity of sub-graph induced by $E_s$ is $O(n^{\gamma})$ thus we have $|E_s| = O( n\times n^{\gamma})$. Further, the number of edges in $E_r = O(m^{1-\rho/2})$. 
	Thus the total number of edges going between any two components $X,Y$ is $O(n^{1+\gamma} + m^{1-\rho/2})$.  
	In \ref{line:connectivity_certificate}, we have used the sparse connectivity certificate, thus by \ref{corr:sparse_certificate_exists}, we have $m = \lambda^{\frac{1}{1-2\epsilon/44}}n$. 
	Recall that $\rho = \epsilon/11$ and $\delta \geq \lambda$. 
	The total number of trimmed edges is $O(m^{1-\rho/2} + n\cdot n^{\gamma}) = O((\lambda^{\frac{1-\rho/2}{1-\epsilon/22}}n^{1-\rho/2} 
	+ n^{1-\gamma}\cdot n^{2\gamma}) = O(\delta n^{1-\rho/2} + \delta n^{1-\gamma}) = O(\delta \cdot n^{1-\epsilon/22})$.
\end{proof}
\begin{lemma}
	Let $\epsilon \in (0,1/2)$ be such that $n^{2\epsilon} = \delta$. The number of nodes trimmed in \shortOnly{Algorithm }\ref{algo:contraction} to find $\MSGC(G,\epsilon)$  is $O(n^{1-{\epsilon/22}})$.
	\label{lemma:number_trimmed_nodes}
	\monika{\sout{find $\MSGC(G,\epsilon)$}}\mohit{My  apprehension: there is a dependence on $\epsilon$ in the result of this lemma, }
\end{lemma}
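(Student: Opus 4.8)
The plan is to bound the number of trimmed nodes by a charging argument against the inter-component edges counted in Claim~\ref{claim:inter_component_edges}. Recall that a node $u$ is removed during the trimming of its vertex group $C$ (a connected component $X$ of $G[\EhExp]$) precisely when, at the moment it is removed, it has fewer than $2\deg_G(u)/5$ of its neighbours still inside the surviving part of $C$. Since $\deg_G(u)\ge\delta$ for every node, each trimmed node, at the time of its removal, has more than $\tfrac35\deg_G(u)\ge\tfrac35\delta$ incident edges leaving the current cluster. The key point is to show that each such ``lost'' edge can be charged only a bounded number of times, so that the total number of trimmed nodes is at most $O(1/\delta)$ times the number of edges that ever leave a component during trimming.

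First I would set up the accounting carefully. Order the trimmed nodes $u_1,u_2,\dots$ by the time they are removed. When $u_j$ is removed from its cluster, consider its incident edges: those going to already-removed nodes $u_{j'}$ ($j'<j$) and those going outside the original component $X$. Every edge of the second type is one of the inter-component edges, of which there are $O(\delta\, n^{1-\epsilon/22})$ by Claim~\ref{claim:inter_component_edges}; every edge of the first type is an edge between two trimmed nodes. Since $u_j$ has more than $\tfrac35\deg_G(u_j)$ such ``outgoing'' edges and $\deg_G(u_j)\ge\delta$, summing over all trimmed nodes gives
\begin{equation}
\tfrac35 \delta \cdot (\text{\# trimmed nodes}) \;\le\; \sum_{j}\bigl(\text{outgoing edges of } u_j\bigr) \;\le\; O(\delta\, n^{1-\epsilon/22}) + 2\cdot(\text{\# edges with both endpoints trimmed}).
\end{equation}
The factor $2$ in the last term is because an edge between two trimmed nodes $u_{j'}$ and $u_j$ with $j'<j$ is counted once as an outgoing edge of $u_j$ (toward the already-removed $u_{j'}$) and possibly once more — but in fact, by the ordering, an internal edge $\{u_{j'},u_j\}$ is only counted as outgoing for the later endpoint $u_j$, so the coefficient is actually $1$, not $2$; either way it is $O(1)$. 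The remaining task is to bound $(\text{\# edges with both endpoints trimmed})$, and here I would observe that each trimmed node $u$ retains at most $\tfrac25\deg_G(u)$ of its edges inside the surviving cluster but may have up to all of its edges go to other trimmed nodes — so this term is not automatically small, and we instead need a direct bound: the number of edges incident to trimmed nodes is at most $\sum_{u\text{ trimmed}}\deg_G(u)$, which is circular. The fix is the standard KT argument: charge internal trimmed-trimmed edges to the moment of removal and note that at removal $u_j$ loses these edges \emph{from the cluster}, so each internal edge is ``paid for'' exactly once by its later endpoint and can be moved to the left-hand side, yielding $\tfrac35\delta\cdot(\#\text{trimmed}) - (\#\text{internal trimmed edges}) \le \tfrac25\delta\cdot(\#\text{trimmed})$ is false in general; the correct inequality is that each trimmed node, \emph{at removal time}, has $> \tfrac35\deg_G(u)$ edges to the complement of the current cluster, and every such edge either leaves $X$ (inter-component) or goes to a node already removed, hence is never ``reused'' by a future removal as a complement-edge. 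This makes the total number of complement-edges generated over the whole process at most $O(\delta\,n^{1-\epsilon/22})$, and dividing by $\tfrac35\delta$ gives $O(n^{1-\epsilon/22})$ trimmed nodes.

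The main obstacle, and the step I would be most careful with, is exactly this last charging: ensuring that an edge counted as ``leaving the current cluster'' when its first endpoint is removed is not double-counted when its second endpoint is later removed. The clean way is to fix the removal order and charge each trimmed-trimmed edge to the \emph{earlier} of its two removals (at which point the later endpoint is still inside the cluster, so the edge is genuinely a complement-edge for the earlier node), and charge each inter-component edge to the unique trimmed endpoint that is removed. Then every complement-edge appearing in the sum $\sum_j(\text{outgoing edges of }u_j)$ is charged to at most one $u_j$, so that sum is $O(\delta\,n^{1-\epsilon/22})$, and since each term is $>\tfrac35\delta$, we conclude the number of trimmed nodes is $O(n^{1-\epsilon/22})$. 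This mirrors \cite[Lemma~17]{kawarabayashi2015deterministic}, with the inter-component edge count from Claim~\ref{claim:inter_component_edges} replacing their sparse-certificate bound.
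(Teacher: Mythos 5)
Your overall approach is the same as the paper's: bound the number of trimmed nodes by an amortized charge against the inter-component edges of Claim~\ref{claim:inter_component_edges}, using the $\tfrac35$/$\tfrac25$ trimming thresholds. You also correctly set up the double-counting and correctly reduce the factor on internal trimmed--trimmed edges from $2$ to $1$. However, the final paragraph has a genuine gap.

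First, the charging direction is stated backwards. For an internal edge $\{u_{j'},u_j\}$ with $j'<j$, at the moment $u_{j'}$ is removed $u_j$ is \emph{still inside} the cluster, so the edge is one of $u_{j'}$'s at-most-$\tfrac25\deg_G(u_{j'})$ \emph{internal} edges, not a complement-edge. It becomes a complement-edge only at the \emph{later} removal, of $u_j$. Second, and more importantly, even with the charging corrected (each complement-edge charged to at most one removal), the conclusion ``so that sum is $O(\delta n^{1-\epsilon/22})$'' does not follow: trimmed--trimmed edges are complement-edges for their later endpoint, yet they are \emph{not} among the inter-component edges of the original partition $\mathcal{X}$ counted by Claim~\ref{claim:inter_component_edges}. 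So the set of complement-edges can be much larger than $O(\delta n^{1-\epsilon/22})$, and the $\tfrac35\delta$ lower bound alone does not close the argument.

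The fix is the line you wrote down in the middle and then abandoned: you must use the $\tfrac25$ bound together with the $\tfrac35$ bound. Concretely, with $a(u_j)$ the number of edges from $u_j$ to outside its current cluster and $b(u_j)$ the number to the surviving cluster at the moment $u_j$ is trimmed, the internal trimmed--trimmed edges are counted once in $\sum_j a(u_j)$ (by the later endpoint) and once in $\sum_j b(u_j)$ (by the earlier endpoint), so
\[
\sum_j a(u_j) \;\le\; c + \bigl(\#\text{internal}\bigr) \;\le\; c + \sum_j b(u_j),
\]
where $c=O(\delta n^{1-\epsilon/22})$ is the inter-component edge count. Since $a(u_j)>\tfrac35\deg_G(u_j)$ and $b(u_j)<\tfrac25\deg_G(u_j)$, subtracting gives $\tfrac15\sum_j\deg_G(u_j)<c$, hence $\delta\cdot(\#\text{trimmed})\le\sum_j\deg_G(u_j)<5c$ and the claimed $O(n^{1-\epsilon/22})$ bound follows. (Note: work with $\deg_G(u_j)$ throughout rather than replacing by $\delta$ uniformly on both sides, since the $\tfrac35$ term needs a lower bound and the $\tfrac25$ term an upper bound.) This is exactly the amortized-potential argument the paper itself invokes, though it states it only informally.
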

\begin{proof}
	By \ref{claim:inter_component_edges}, the number of edges going between the connected components $\mathcal{X}$ at \ref{line:connected_components} of \shortOnly{Algorithm }\ref{algo:connect-cert} is $O(n^{1-\epsilon/22})$. Whenever a node $v$ is \emph{trimmed} from a component $X \in \mathcal{X}$ then this splits the component $X$ into $X\setminus \curly{v}$ and $\curly{v}$ thus updating the component set $\mathcal{X}$. For brevity, let,s say that there are total $c$ edges which go between any two components of $\mathcal{X}$ before the start of \emph{trimming} process. When a node $v$ decides to \emph{trim} from some component $X$, then based on the properties it uses at least $3\deg_G(v)/5$ edges among the $c$ edges going between components. Also node $v$ has at most $2\deg_G(v)/5$ edges going to the vertices in $X$, thus when $v$ is trimmed these are added back to the inter component edges. Hence, trimming a node uses at least $\delta/5$ inter component edges. Thus, there are $O(n^{1-{\epsilon/22}})$ nodes which can be trimmed. By definition \emph{trimmed edges} are the edges which go between connected components of $\mathcal{X}$.
\end{proof}
Using similar argument in the above lemma we can prove that the number of nodes which are \emph{shaved} (removed from a cluster) is bounded by $O(n^{1-\epsilon/22})$. This implies the following lemma. 
\begin{lemma}
	Let $\epsilon \in (0,1/2)$ be such that $n^{2\epsilon} = \delta$. Let $\mathcal{C}$ be the cluster set of $\MSGC(G,\epsilon)$ then  $\sum_{C \in \mathcal C}|\regnodes(C)| = O({n}^{1-\epsilon/22})$.
	\label{lemma:regular_nodes_are_bounded}
\end{lemma}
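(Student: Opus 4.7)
The plan is to mirror the charging argument used in the trimming bound (the previous lemma), but now apply it to the shaved nodes. The essential observation is that \emph{every shaved node must burn a large number of already-existing inter-cluster edges}, whereas the pool of such edges is capped by \cref{claim:inter_component_edges} (plus at most a constant-factor inflation coming from trimming).

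More concretely, I would first argue that after the trimming phase of \cref{algo:contraction}, the total number of edges whose endpoints lie in distinct vertex groups of $\mathcal{C}$ is still $O(\delta\cdot n^{1-\epsilon/22})$. Indeed, \cref{claim:inter_component_edges} bounds the inter-component edges of $G[\EhExp]$ by $O(\delta\cdot n^{1-\epsilon/22})$, and each trimming step converts at most $\tfrac{2\deg_G(v)}{5}$ intra-cluster edges into inter-cluster edges while consuming at least $\tfrac{3\deg_G(v)}{5}$ of the current inter-cluster edges; combined with the bound $O(n^{1-\epsilon/22})$ on the number of trimmed vertices from \cref{lemma:number_trimmed_nodes}, the pool of inter-cluster edges grows by at most $O(\delta\cdot n^{1-\epsilon/22})$, preserving the asymptotic bound.

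Next, the shaving rule guarantees that any $v\in \regnodes(C)$ has at least $\deg_G(v)/2-1 \geq \delta/2-1$ of its incident edges leaving its own vertex group $C$, and every such edge is an inter-cluster edge. Since each inter-cluster edge has two endpoints and is therefore charged to at most two shaved nodes, summing the local bound over all clusters gives
\[
\paren{\tfrac{\delta}{2}-1}\cdot \sum_{C\in\mathcal{C}}|\regnodes(C)| \;\leq\; 2\cdot O(\delta\cdot n^{1-\epsilon/22}).
\]
Because $\delta = n^{2\epsilon} = \omega(1)$, the leading term on the left is $\Theta(\delta)\cdot \sum_{C}|\regnodes(C)|$, so dividing through yields $\sum_{C\in\mathcal{C}}|\regnodes(C)|=O(n^{1-\epsilon/22})$, as claimed.

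The only place requiring a little care is the first step: showing that trimming does not inflate the inter-cluster edge pool by more than a constant factor. This is the main obstacle, but it follows directly from the bookkeeping inside the proof of \cref{lemma:number_trimmed_nodes} — each trimmed vertex is amortized against $\Omega(\delta)$ inter-cluster edges it consumes, so the net change is $O(\delta\cdot n^{1-\epsilon/22})$. Everything else is a straightforward double-counting argument.
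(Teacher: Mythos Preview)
Your proposal is correct and is exactly the approach the paper has in mind: the paper's own proof is just the single sentence ``Using similar argument in the above lemma we can prove that the number of nodes which are \emph{shaved} is bounded by $O(n^{1-\epsilon/22})$,'' and you have fleshed out precisely that charging argument against the $O(\delta\cdot n^{1-\epsilon/22})$ pool of inter-cluster edges. One small remark: your sentence ``the shaving rule guarantees that any $v\in\regnodes(C)$ \ldots'' tacitly assumes $C$ is non-trivial, but for trivial $C=\{v\}$ every edge of $v$ leaves $C$ anyway (and such $v$ are already counted among the trimmed nodes by \cref{lemma:number_trimmed_nodes}), so the bound goes through uniformly.
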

\paragraph{Aggregate cluster diameter is bounded} Now, we prove that that the aggregate diameter of all clusters output by \shortOnly{Algorithm }\ref{algo:connect-cert} is bounded. This requires us to show that the number of clusters $C$ in $\MSGC(G,\epsilon)$, such that $|C|>1$ is bounded. We prove this in the following lemma.
\begin{lemma}
	Let $\mathcal{C}$ be the vertex groups of $\MSGC(G,\epsilon)$ output by \shortOnly{Algorithm }\ref{algo:connect-cert}. The total number of vertex groups $C \in \mathcal{C}$ such that $|C| > 1$ are $O({n^{1-2\epsilon}})$.	\label{lemma:number_of_clusters}\end{lemma}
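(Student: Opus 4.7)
The plan is to show that every non-trivial cluster has size $\Omega(\delta) = \Omega(n^{2\epsilon})$; since $\mathcal{C}$ partitions $V$, the count on non-trivial clusters then follows immediately as $O(n/n^{2\epsilon}) = O(n^{1-2\epsilon})$.

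First, I would pinpoint exactly where non-trivial clusters come from. Inspecting \ref{algo:contraction}, the initial vertex groups are the connected components of $G[\EhExp]$, and the only step that can modify cluster membership is trimming: whenever a vertex $v$ is removed from its group, it is reassigned a fresh $\clustid$ and becomes a trivial cluster of size one. Shaving does not modify cluster membership at all --- it only partitions each surviving $C$ into $\core(C) \cup \regnodes(C)$. Hence a non-trivial cluster $C$ is precisely the vertex set that survives the trimming of some connected component of $G[\EhExp]$, and every $u \in C$ satisfies the post-trimming condition.

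Next, I would use this trimming condition to lower bound $|C|$. Every surviving $u \in C$ has
\[
|N_G(u) \cap C| \;\geq\; \frac{2 \deg_G(u)}{5} \;\geq\; \frac{2\delta}{5} \;=\; \frac{2 n^{2\epsilon}}{5},
\]
and since $G$ is simple those neighbours are distinct vertices of $C$, giving $|C| \geq 1 + 2n^{2\epsilon}/5 = \Omega(n^{2\epsilon})$. Summing over non-trivial clusters and using $\sum_{C \in \mathcal{C}} |C| = n$ then yields the claimed bound $O(n^{1-2\epsilon})$.

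The only delicate point is the book-keeping step above: one must verify that no later operation shrinks a non-trivial $C$ below this size bound. Shaving merely relabels core versus regular within $C$, and the later contraction of each $\core(C)$ into a single vertex $s(C)$ that defines $\MSGC(G,\epsilon)$ is a view of the output graph rather than a modification of the underlying partition $\mathcal{C}$, so the $\Omega(n^{2\epsilon})$ lower bound on $|C|$ is preserved throughout. Notably, no other ingredient from this section (expansion of the components of $G[\EhExp]$, the tripartition bounds on $|E_r|,|E_s|$, or the connectivity certificate sparsification) is needed for this particular lemma; those are used elsewhere to control edges, not cluster sizes.
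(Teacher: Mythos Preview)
Your proposal is correct and follows essentially the same argument as the paper's proof: both use the trimming invariant to conclude that every vertex in a non-trivial cluster has at least $2\delta/5$ neighbours inside the cluster, hence (by simplicity of $G$) each non-trivial cluster has size $\Omega(\delta)=\Omega(n^{2\epsilon})$, and a counting argument over the partition of $V$ gives the $O(n^{1-2\epsilon})$ bound. Your additional remarks that shaving and the later contraction do not alter $\mathcal{C}$, and that the expansion and tripartition bounds are irrelevant here, are correct but go slightly beyond what the paper bothers to spell out.
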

\begin{proof}
	As per the definition of {\em trimming}, each node in a non-trivial cluster, at the end of \shortOnly{Algorithm }\ref{algo:contraction} has $\frac{2\delta}{5}$ neighbors in the cluster. Since we are dealing with simple graphs, hence the size of cluster is at least $\frac{2\delta}{5}$. Suppose there are more than $3\frac{n}{\delta}$ non-trivial clusters. Thus, the total nodes in the graph would be $\frac{2\delta}{5}\cdot 3\frac{n}{\delta} = \frac{6}{5}n$, which is a contradiction. Then the number of non-trivial clusters in $\MSGC(G,\epsilon)$ are $O(\frac{n}{\delta}) = O(n^{1-2\epsilon})$.
\end{proof}
\begin{lemma} 	Let $\epsilon \in (0,1/2)$ be such that $n^{2\epsilon} = \delta$. Let $\mathcal{C}$ be the cluster set of $\MSGC(G,\epsilon)$ then $\sum_{C \in \mathcal C}diam(G[C]) = O({n}^{1-\epsilon/20})$.
	\label{lemma:diameter_cluster_sublinear}\monika{When you define vertex groups, also define what you mean by a trivial and a non-trivial vertex group.}\mohit{done, see in \ref{defn:Min-Cut-Preserving-Sub-linear-Graph-Contraction}, do you think that is appropriate place?}
\end{lemma}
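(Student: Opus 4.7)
The plan is to reduce the sum over all vertex groups to a sum over only non-trivial clusters, bound the diameter of each non-trivial cluster by $\tilde O(n^{\epsilon/11})$ using the expansion of the corresponding high-expansion component from the tripartition, and then multiply by the bound on the number of non-trivial clusters from \Cref{lemma:number_of_clusters}.

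First, trivial clusters (with $|C|=1$) contribute $0$ to the sum, so it suffices to bound the sum over non-trivial clusters $\mathcal{C}_{nt}\subseteq\mathcal{C}$. Every non-trivial $C\in\mathcal{C}_{nt}$ is produced by the trimming of some connected component $X$ of $G[\EhExp]$ output by $\tripartition(\gamma=\epsilon,\rho=\epsilon/11)$ in \shortOnly{Algorithm }\ref{algo:contraction}. By \Cref{defn:decomposition} each such $X$ satisfies $\Phi(X)\ge c/n^{\epsilon/11}$ for an absolute constant $c>0$, and the standard ball-growth / mixing-time argument for conductance implies $diam(G[X]) = O(\log(n)/\Phi(X)) = \tilde O(n^{\epsilon/11})$.

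The key technical step is to transfer this diameter bound from $G[X]$ to the induced subgraph $G[C]$ on the trimmed vertex set $C\subseteq X$. For this I plan to show, following the Kawarabayashi--Thorup style analysis, that trimming only shaves off a ``peripheral'' set of vertices and therefore the conductance of $G[C]$ is still $\Omega(1/n^{\epsilon/11})$ (losing at most constant factors). The intuition is that every trimmed vertex $v$ had, at the moment of its removal, at least $3\deg_G(v)/5\ge 3\delta/5$ edges crossing to the complement of the current set, so trimmed vertices carry most of their ``boundary mass'' to edges that lie outside any cut one would like to analyze inside $C$. Given a cut $(A,B)$ inside $G[C]$ with $A$ the smaller side, combining this with the expansion guarantee for $(A, X\setminus A)$ in $G[X]$ yields $|E(A,B)_{G[C]}|\ge (\Phi(X)/2)\cdot\vol_{G[C]}(A)$, giving $\Phi(G[C])=\Omega(1/n^{\epsilon/11})$ and hence $diam(G[C])=\tilde O(n^{\epsilon/11})$.

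Combining with \Cref{lemma:number_of_clusters} which asserts $|\mathcal{C}_{nt}| = O(n^{1-2\epsilon})$, we obtain
\[
\sum_{C\in\mathcal C}diam(G[C]) \;=\; \sum_{C\in\mathcal C_{nt}}diam(G[C]) \;\le\; O(n^{1-2\epsilon})\cdot \tilde O(n^{\epsilon/11}) \;=\; \tilde O(n^{1-21\epsilon/11}).
\]
Since $21/11 > 1/20$, there is polynomial slack in the exponent, so the $\tilde O$ polylogarithmic factors can be absorbed and we conclude $\sum_{C\in\mathcal C}diam(G[C])=O(n^{1-\epsilon/20})$, as desired. The main obstacle is the conductance-preservation claim under trimming; the two alternative bounds $diam(G[C])\le |C|$ or $diam(G[C])\le diam(G[X])+|X\setminus C|$ are both too weak (they can even exceed $n$ when summed), so the expansion argument on $G[C]$ itself appears necessary.
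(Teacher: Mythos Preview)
Your conductance-preservation step has a genuine gap. In the inequality you sketch, you need to control the edges from $A\subseteq C$ into the trimmed part $X\setminus C$. Trimming only guarantees that each $v\in C$ has at least $\tfrac{2}{5}\deg_G(v)$ neighbours in $C$; it says nothing upper-bounding $|E_{G[X]}(A,X\setminus C)|$ better than $\tfrac{3}{5}\myvol_G(A)$. Since $\Phi(X)\approx n^{-\epsilon/11}\ll 1$, the expansion bound $|E_{G[X]}(A,X\setminus A)|\ge \Phi(X)\,\myvol_{G[X]}(A)$ is far smaller than this potential loss, so from $|E_{G[C]}(A,B)| = |E_{G[X]}(A,X\setminus A)| - |E_{G[X]}(A,X\setminus C)|$ you cannot conclude $|E_{G[C]}(A,B)|\ge \tfrac{1}{2}\Phi(X)\,\myvol_{G[C]}(A)$. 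There is no reason to expect $G[C]$ to inherit conductance $\Omega(n^{-\epsilon/11})$.

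The paper avoids this entirely with a much more elementary argument that you dismissed too quickly. Trimming guarantees that every vertex of a non-trivial $C$ has at least $\tfrac{2}{5}\delta$ neighbours \emph{inside} $C$, so $G[C]$ has minimum degree $\ge \tfrac{2}{5}\delta$. The classical bound of Erd\H{o}s et al.\ (a graph on $N$ vertices with minimum degree $d$ has diameter $O(N/d)$) then gives $diam(G[C])=O(|C|/\delta)$, not merely $O(|C|)$. Summing yields
\[
\sum_{|C|>1} diam(G[C]) \;\le\; \sum_{|C|>1}\Big(\tfrac{|C|}{\tfrac{2}{5}\delta}+1\Big)
\;=\; O\!\left(\tfrac{n}{\delta}\right) + O\!\big(|\{C:|C|>1\}|\big)
\;=\; O(n^{1-2\epsilon})
\]
by \Cref{lemma:number_of_clusters}, and the trivial clusters are bounded in number by the trimmed nodes (\Cref{lemma:number_trimmed_nodes}). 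So the refinement you were missing is precisely the $1/\delta$ factor coming from the high minimum degree of $G[C]$; no expansion argument on $G[C]$ is needed.
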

\begin{proof}
	We know by \cite{erdHos1989radius}, that a graph of $n$ nodes with min-degree $d$ has a diameter $\ceil{\frac{n}{d}}$. From the \emph{trimming} process, we know that for every non-trivial cluster $C$, each node $v \in C$ has at least $\frac{2}{5}\deg_G(v)$ neighbors in $C$. Thus $diam(G[C]) = \ceil{\frac{C}{\frac{2}{5}\delta}}$. For each trivial cluster $C$, $diam(G[C]) = 1$. Thus
	\longOnly{
	\begin{align*}
	\sum_{C \in \mathcal C}diam(G[C] &= \sum_{\substack{C \in \mathcal C\\|C| > 1}}\ceil{\frac{|C|}{\delta}} + \sum_{\substack{C \in \mathcal C\\|C| = 1}}1\\
	&\leq \sum_{\substack{C \in \mathcal C\\|C| > 1}}\paren{\frac{|C|}{\delta} + 1}  + O(n^{1-\epsilon/20})&\text{trimmed nodes by \ref{lemma:number_trimmed_nodes} are }O(n^{1-\epsilon/20})\\
	&\leq \frac{n}{\delta} +  \sum_{\substack{C \in \mathcal C\\|C| > 1}}1+ O(n^{1-\epsilon/20})& \sum_{C \in \mathcal{C}} |C| \leq n \\																										
	&=O(n^{1-\epsilon/20})
	\end{align*}}
\shortOnly{
	\begin{align*}
		&\sum_{C \in \mathcal C}diam(G[C] = \sum_{\substack{C \in \mathcal C\\|C| > 1}}\ceil{\frac{|C|}{\delta}} + \sum_{\substack{C \in \mathcal C\\|C| = 1}}1\\
		&\quad\leq \sum_{\substack{C \in \mathcal C\\|C| > 1}}\paren{\frac{|C|}{\delta} + 1}  + O(n^{1-\epsilon/20})\\
		&\quad\leq \frac{n}{\delta} +  \sum_{\substack{C \in \mathcal C\\|C| > 1}}1+ O(n^{1-\epsilon/20})& \sum_{C \in \mathcal{C}} |C| \leq n \\																										
		&\quad=O(n^{1-\epsilon/20})
\end{align*}
}
	Here the last equation is true {since $\delta = n^{2\epsilon}$ and since we know by \ref{lemma:number_of_clusters} that the number of non-trivial clusters is $O(n^{1-2\epsilon})$}.
\end{proof}

\shortOnly{\subsection{Distributed implementation of Algorithm \ref{algo:contraction} \label{sub-sec:contraction_run_time}}}
\longOnly{\subsection{Distributed implementation of \ref{algo:contraction} \label{sub-sec:contraction_run_time}}}
In \shortOnly{Algorithm }\ref{algo:trimming}, we implement the \emph{trimming} process in the distributed setting. Initially, each vertex $v$ is assigned a $\clustid(v)$ establishing disjoint vertex groups (a partition of vertex set $V$). The algorithm then makes sure that each vertex group is {\em trimmed}. During this process vertices which do not satisfy the {\em trimmed} condition remove themselves from the corresponding vertex group and assign themselves a new distinct $\clustid$. This is followed by \emph{shaving}. In \cref{lemma:number_trimmed_nodes} we prove that the total numbers of nodes which are \emph{trimmed} is bounded. This will allow us to bound the run time of \shortOnly{Algorithm }\ref{algo:trimming}.

\begin{algorithm}
	\DontPrintSemicolon
	\SetKwInOut{Input}{input}
	\SetKwInOut{Output}{output}
	\Input{node $v$ has a $\clustid(v)$ and
		\emph{trimming} and \emph{shaving} are performed on vertex groups.  For any $i$ a vertex group $C_i =\curly{v \mid \clustid(v) = i}$}
	$\texttt{regStatus}(v) \leftarrow \false$\;
	$\mathcal{N}(v) \leftarrow \curly{u \mid (u,v) \in E}$\;
	send $\clustid(v)$ to all $u \in \mathcal{N}(v)$\;
	receive $\clustid(u)$ from all $u \in \mathcal{N}(v)$\;
	\tcc{trimming}
	\While{nodes exist to be trimmed}{
		$GOOD(v) = \curly{u \mid u \in \mathcal{N}(v) \ \text{ and }\ \clustid(v) = \clustid(u) }$\;
		\If{$|GOOD(v)| < \frac{2}{5}\deg_G(v)$}{
			$\clustid(v) \leftarrow n + \id(v)$\;
			send $\clustid(v)$ to all $u \in \mathcal{N}(v)$\;
			{\bf \texttt{break}}
		}
	}
	\tcc{shaving: condition for vertex $v$ to be \lq{shaved}\rq:at least $\ \deg_G(v)/2 - 1$ nbrs have been trimmed}
	$GOOD(v) = \curly{u \mid u \in \mathcal{N}(v) \ \text{ and }\ \clustid(v) = \clustid(u) }$\;
	
	\lIf{$\clustid(v) \leq n$ and $|GOOD(v)| \leq  \deg_G(v)/2 + 1$}{
		$\texttt{regStatus}(v) \leftarrow \true$
	}
	\caption{$\texttt{trim\_shave}(v)$}
	\label{algo:trimming}
\end{algorithm}

\begin{claim}
	If total number of nodes that are trimmed is bounded by $O(k)$ then \shortOnly{Algorithm }\ref{algo:trimming} runs in $O(k + D\log k)$ rounds.
	\label{lemma:distributed_running_time_trimming_shaving}
\end{claim}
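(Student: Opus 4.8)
The round complexity of \ref{algo:trimming} is governed entirely by the \emph{trimming} \texttt{while}-loop: the initial $\clustid$ exchange costs $O(1)$ rounds, and the \emph{shaving} step is a purely local test that each vertex performs after one more round of exchanging the (now final) $\clustid$s. So the plan is to show that (i) the trimming loop stabilises within $O(k)$ rounds, and (ii) this stabilisation can be detected globally with only an additive $O(D\log k)$ overhead by running the loop on a doubling schedule.

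For (i), I would use that trimming is monotone: when a vertex $u$ fails the test $|GOOD(u)| \ge \tfrac{2}{5}\deg_G(u)$ it sets $\clustid(u) = n + \id(u)$ and thereby becomes a singleton group that never merges with another group again, and any such relabelling can only delete $u$ from its neighbours' $GOOD$ sets, never enlarge them. Hence in every round of the loop either at least one vertex relabels itself, or no vertex violates the trimmed condition — and in the latter case all groups are trimmed and stay trimmed forever after. Since by hypothesis only $O(k)$ vertices are ever relabelled, there are at most $t^\star = O(k)$ rounds in which a relabelling occurs, so the configuration is trimmed from round $t^\star$ onward. What keeps this from immediately giving the bound is that ``nodes exist to be trimmed'' is a global predicate, so no vertex can locally decide to stop looping.

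For (ii), I would first spend $O(D)$ rounds electing the minimum-$\id$ vertex as a leader and building a BFS tree of depth $O(D)$ rooted at it. Then I run the trimming loop in consecutive batches of lengths $1, 2, 4, \dots$; at the start of a batch each vertex clears a bit, sets it if it relabels during the batch, and at the end of the batch the leader collects the OR of these bits by a convergecast up the BFS tree and broadcasts a halt signal iff the OR is $0$ — one such check costing $O(D)$ rounds (relabelled vertices still participate in the convergecast). Let $j^\star$ be the least index with $\sum_{j \le j^\star} 2^{j} \ge t^\star + 1$; then $j^\star = O(\log k)$ and $\sum_{j \le j^\star + 1} 2^{j} = O(2^{j^\star}) = O(k)$. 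Because trimming has stabilised by round $t^\star$, the check at the end of batch $j^\star$ — or of batch $j^\star + 1$ if the last relabelling lies inside batch $j^\star$ — returns $0$ and all vertices halt. Summing up, the loop itself runs for $O(k)$ rounds, termination detection costs $O(\log k)$ checks of $O(D)$ rounds each, and setup plus the final $\clustid$ exchange and shaving cost $O(D)$, for a total of $O(k + D\log k)$. The main obstacle is precisely this termination detection: checking the global predicate once per round would cost $\Theta(D)$ per round and inflate the bound to $\Theta(kD)$; it is the doubling schedule, combined with the a priori $O(k)$ bound on the number of relabelling rounds coming from the hypothesis, that brings the overhead down to the additive $O(D\log k)$.
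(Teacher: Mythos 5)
Your proposal is correct and follows essentially the same approach as the paper's proof: trimming decisions are local and cost $O(1)$ per round, the hypothesis bounds the number of trimming rounds by $O(k)$, and a leader-coordinated doubling schedule (with an $O(D)$ convergecast/broadcast per check) yields the additive $O(D\log k)$ termination-detection overhead. Your write-up is somewhat more explicit than the paper's about why trimming is monotone and why the convergecast check is sound, but there is no substantive difference in the argument.
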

\longOnly{
\begin{proof}
	At the start of \shortOnly{Algorithm }\ref{algo:trimming}, each node is assigned to a vertex group. {\em Trimming} is a iterative process. In each iteration, a node $v$ trims itself if it does not have at least $2/5$ of its neighbours in its group. To decide if a node satisfies the property of $\emph{trimming}$, each node just needs to locally check the neighbour's $\clustid$ which can be done in $O(1)$ rounds. Now if a node $v$ satisfies the criteria of trimming (when it does not have $2/5$ of its neighbours in the same group) then it trims itself from the group and assigns itself a new \clustid, different from other nodes namely $n + \id(v)$.  The node then communicates its trimmed status to all its neighbours. Note that only the neighbors of a node and not all the nodes in a cluster $C$ need to know if a node $v\in C$ is trimmed. All this can be done in $O(1)$ rounds. Further, it is given that at most $O(k)$ nodes could be trimmed. The only difficult part is how to determined when the trimming process has stopped. To do so we use the bound on the number of nodes which can be trimmed as follows: Suppose for some constant $c$ the total number of trimmed nodes is less than $ck$. We assign a leader node $v$ which will track if it is safe to terminate the trimming process. We start by a limit of $l = 2$ rounds. At the end of $l$ rounds, by a simple broadcast, the leader node can find if a there was a node trimmed in last round. This takes $O(D)$ rounds. If there was node which was trimmed, then it allows for doubling of the limit such that $ l = 2l$. This keeps on happening, until it finds that no node was trimmed at the end of the previous process. This coordination takes an extra $O(D \log k)$ overhead. {\em Shaving} is a trivial process which requires each node to check the cluster {\id} of its neighbors. Thus this can be done locally in $O(1)$ rounds. 
\end{proof}}
\shortOnly{Using \cref{{lemma:distributed_running_time_trimming_shaving}}, we get the distributed time complexity of  \shortOnly{Algorithm }\ref{algo:contraction}.}
\begin{lemma}
	The \shortOnly{Algorithm }\ref{algo:contraction} runs in total of $\tilde{O}(n^{1-\epsilon/44} + D)$ rounds in the \CONGEST model to find $MSGC(G,\epsilon)$.
\end{lemma}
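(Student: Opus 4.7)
The plan is to bound the round complexity of each of the four main subroutines of Algorithm \ref{algo:contraction} and show that the sparse connectivity certificate call is the bottleneck. Concretely, I intend to sum:
(i) the cost of $\texttt{Sparse-Connectivity-Certificate}(G,\epsilon/44)$ on line \ref{line:connectivity_certificate};
(ii) the cost of $\tripartition(\gamma=\epsilon,\rho=\epsilon/11)$;
(iii) the cost of identifying connected components of $G[\EhExp]$ and computing $\clustid(v)=\max_{u\in C}\id(u)$ for each component $C$;
(iv) the cost of the parallel run of $\texttt{trim\_shave}(v)$.

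By \Cref{corr:sparse_certificate_exists}, step (i) takes $\tilde O(n^{1-\epsilon/44}+D)$ rounds. By \Cref{thm:fast_exp_decomp}, step (ii) takes $O(n^{1-\gamma+10\rho})=O(n^{1-\epsilon+10\epsilon/11})=O(n^{1-\epsilon/11})$ rounds. For step (iv), \Cref{lemma:number_trimmed_nodes} gives that the total number of nodes ever trimmed is $k=O(n^{1-\epsilon/22})$, so \Cref{lemma:distributed_running_time_trimming_shaving} yields an $O(k+D\log k)=\tilde O(n^{1-\epsilon/22}+D)$ round bound; the subsequent shaving is a local $O(1)$-round check. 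All three of these are comfortably within the target budget.

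The main obstacle is step (iii): the max-$\id$ aggregation has to be carried out inside each connected component of $G[\EhExp]$ rather than over the whole graph. I would argue this as follows. By \Cref{defn:decomposition}, each such component $X$ has conductance $\Phi(X)=\Omega(1/n^{\rho})=\Omega(1/n^{\epsilon/11})$, and standard spectral bounds (Cheeger plus the diameter-vs.-spectral-gap inequality) give $\operatorname{diam}(X)=n^{O(\epsilon)}$. A Bellman--Ford style max-aggregation along the edges of $\EhExp$ therefore converges in $\tilde O(n^{O(\epsilon)})$ rounds inside each component. Because distinct components share no edges, the parallel aggregations induce no congestion, so step (iii) completes in $\tilde O(n^{O(\epsilon)})$ rounds.

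Summing the four contributions, the round complexity is dominated by step (i), giving $\tilde O(n^{1-\epsilon/44}+D)$ as claimed. The only non-routine part of the write-up will be justifying the polynomial-in-$n^{\epsilon}$ diameter bound used in step (iii) and confirming that no congestion arises when aggregating across many components in parallel; everything else is a direct appeal to the previously stated corollaries and lemmas followed by a straightforward maximization of exponents.
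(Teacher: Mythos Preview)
Your proposal is correct and follows essentially the same four-step decomposition as the paper's proof, invoking the same auxiliary results (\Cref{corr:sparse_certificate_exists}, \Cref{thm:fast_exp_decomp}, \Cref{lemma:number_trimmed_nodes}, \Cref{lemma:distributed_running_time_trimming_shaving}) for steps (i), (ii), and (iv). The only minor difference is in step (iii): the paper uses directly the standard fact that a graph with conductance $\Phi$ has diameter $O(\log n/\Phi)$, yielding the explicit bound $O(n^{\rho}\log n)=O(n^{\epsilon/11}\log n)$ for each component of $G[\EhExp]$, whereas you route through Cheeger and the spectral-gap--diameter inequality to get a vaguer $n^{O(\epsilon)}$; both are valid, but you should make the exponent explicit (it is $\epsilon/11$, comfortably below $1-\epsilon/44$) rather than leave it as $O(\epsilon)$, since the argument only works if that constant is small enough.
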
\longOnly{
\begin{proof}
	Firstly from \cref{corr:sparse_certificate_exists} and \cref{thm:fast_exp_decomp}, we know that the sparse connectivity certificate and the tripartition can be found in sublinear rounds. By the choice of the parameters in \shortOnly{Algorithm }\ref{algo:connect-cert}, the sparse connectivity certificate algorithm takes $\tilde{O}(n^{1-\epsilon/44} + D)$ rounds and $\tripartition$ procedure takes $\tilde{O}(n^{1-\gamma + 10\rho}) = \tilde{O}(n^{1-\epsilon/11})$. In procedure $\tripartition$, we partition the edge set into $\EhExp,E_r$ and $E_s$. Let $\mathcal{X}$ be the set of connected components of the subgraph $G[\EhExp]$. We assign a unique $\clustid$ to every $X \in \mathcal{X}$ known to every vertex in $X$. This is done in a distributed fashion and takes $\max_{X\in \mathcal{X}}Diam(G[X])$ rounds. 
	By \cref{defn:decomposition}, we know that for each $X$, $\Phi(X) \geq  \frac{c}{n^{\rho}}$. Also, we know that the diameter of any graph with expansion $\Phi$ is ${O}(\log n\frac{1}{\Phi})$. Hence the diameter of any component $X$ is ${O}\paren{n^{\rho}\log n}$.	
	As per \cref{lemma:number_trimmed_nodes} the number of trimmed nodes are $O(n^{1-\epsilon/22})$. Hence, by \cref{lemma:distributed_running_time_trimming_shaving}, the distributed algorithm for trimming and shaving takes $O(n^{1-\epsilon/22}  + D\log n)$ rounds. Thus the overall running time is $\tilde{O}(n^{1-\epsilon/44} + D)$ rounds.
\end{proof}}
%!TEX root = ../main_Min-Cut_latex.tex
\section{Min-Cut in Contracted Graph\label{sec:min-cut-contracted-graph}}
In this section, we show that given a contracted graph $\MSGC(G,\epsilon)$ from \cref{thm:main_contracted_graph_can_be_found}, we can find a min-cut in $O({n}^{1-\eta})$ rounds where $\eta = \Theta(\epsilon)$. Here we use the idea from \cite{karger2000minimum}, which gives a near linear time randomized min-cut algorithm for general graph in the centralized setting. Essentially, \cite{karger2000minimum} illustrates that given a graph, we can construct a set of few spanning trees such that at least one of them crosses a min-cut twice. Further, in each tree \cite{karger2000minimum} can find the cut of minimum size which crosses the tree twice.

The main contribution from this section is two folds. First, in \cref{sub_sec:min-cut-contracted-graph:warmup1}, we show that \cite{karger2000minimum} can be implemented in distributed setting in $\tilde O(n)$ rounds. This is the first ever algorithm which finds exact min-cut in linear time in \CONGEST model. Here, we develop the required machinery which gives a distributed algorithm to do the same in the contracted graph from \cref{thm:main_contracted_graph_can_be_found}, hence giving a sub-linear running time of the algorithm. 
%\mohit{In last feedback danupon asked: why do we have to do this first? does that mean why Sec. 6.1 first? If yes, then does the last line resolve this already? Or we need to say more?}

\subsection{Min-Cut in General Graph\label{sub_sec:min-cut-contracted-graph:warmup1}}
In this section, we give an algorithm for finding min-cut in weighted graphs. A widely used assumption in the $\CONGEST$ model is that, each edge weight is in $\curly{1,2,\ldots,\poly(n)}$. This allows to exchange edge weight between any two nodes in a single round of communication. Further, this limits the size of any cut to $\poly(n)$, hence can be represented in $O(\log n)$ bits. 
\begin{thm}
		Given a weighted simple graph $G=(V,E)$, with weight function \sloppy $w:E \rightarrow \curly{1,2,\ldots,\poly(n)}$, in $\tilde{O}(n)$ w.h.p. (i) every node knows the network's edge connectivity $\lambda$, and (ii) there is a cut $C$ of size $\lambda$ such that every node knows which of its incident edges are in $C$.
	\label{thm:min_cut_o_n_general_graph}
\end{thm}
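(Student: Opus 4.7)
The plan is to adapt Karger's sequential near-linear-time min-cut algorithm \cite{karger2000minimum} to the \CONGEST model. That algorithm rests on (a) Thorup-style tree packing, which reduces the problem to finding, for each tree $T$ in a small collection $\mathcal{T}$, the minimum cut that ``2-respects'' $T$ (i.e., uses at most two tree edges), together with (b) efficient per-tree routines for finding the minimum 1- and 2-respecting cuts. My proof distributes each component so that the total round complexity is $\tilde O(n)$.

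\emph{Phase 1 (tree packing).} First compute an approximate min-cut value $\tilde\lambda$ via \cref{thm:NS14_main}, then construct a tree packing $\mathcal{T} = \{T_1,\ldots,T_k\}$ with $k = \polylog n$ such that w.h.p.\ some min-cut 2-respects at least one $T_i$. Because weights lie in $\{1,\ldots,\poly(n)\}$, we can reduce to the unweighted multigraph setting by random skeleton sampling at rate $\Theta(\log n / \tilde\lambda)$, preserving cuts up to $(1+o(1))$ factors. Each skeleton MST/spanning tree can be computed via Kutten--Peleg's distributed MST in $\tilde O(\sqrt n + D)$ rounds; iterating over $k = \polylog n$ trees and broadcasting so every vertex knows its parent in each $T_i$ costs $\tilde O(n)$ rounds total.

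\emph{Phase 2 (1-respecting cuts).} Root each $T_i$ at an arbitrary vertex. For every tree edge $e$, let $S_e$ be the subtree below $e$; the 1-respecting cut value is $f(e) = \sum_{u \in S_e} w(u) - 2 \sum_{u,v \in S_e} w(u,v)$, where $w(u)$ denotes weighted degree. Each term is a subtree sum, so a standard bottom-up convergecast along $T_i$ computes all $f(e)$ values in $\tilde O(n)$ rounds: messages carry $O(\log n)$-bit aggregates and travel along paths of length at most $n$.

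\emph{Phase 3 (2-respecting cuts).} This is the main technical step. For each $T_i$ we must find the minimum cut removing exactly two tree edges. For an ancestor-descendant pair $(e_1,e_2)$ with $e_2$ below $e_1$, the cut value equals $f(e_1) + f(e_2) - 2 w(S_{e_2}, S_{e_1}\setminus S_{e_2})$; for incomparable pairs it is $f(e_1) + f(e_2) - 2 w(S_{e_1}, S_{e_2})$. Following Karger's recursive heavy-path decomposition of $T_i$, I would process pairs level by level: at each level we collapse light subtrees, collect their contributions to path sums along the current heavy path, and pipeline the resulting updates toward the path's endpoints. Each of the $O(\log n)$ levels contributes $\tilde O(n)$ pipelined aggregates, yielding $\tilde O(n)$ rounds per tree and $\tilde O(n)$ rounds overall since $k = \polylog n$.

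\emph{Main obstacle.} The bottleneck is Phase 3. In the sequential setting Karger relies on dynamic path data structures (link-cut trees with path-sum updates), which are not native to \CONGEST. The crux of the proof is to show that the $\tilde O(n)$ path-sum queries implicit in Karger's recursion can be batched and pipelined along an Euler tour or heavy-path decomposition so that each communication link carries only $O(\polylog n)$ words per round while still correctly identifying the minimum 2-respecting cut. Once this pipelining is established, combining Phases 1--3 and returning the overall minimum yields a cut of weight $\lambda$ known by every node in $\tilde O(n)$ rounds, proving the theorem.
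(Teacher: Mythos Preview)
Your Phase~3 is the gap, and you flag it yourself: you assert that Karger's path-sum machinery ``can be batched and pipelined along an Euler tour or heavy-path decomposition'' so that no link is overloaded, but that is exactly the statement that needs proving. Karger's 2-respecting routine relies on link--cut trees supporting path-add and path-minimum operations in a specific sequential order; a single heavy path may have length $\Theta(n)$ and must absorb contributions from $\Theta(n)$ light subtrees, with updates and queries interleaved. Without a concrete pipelining schedule and an argument bounding both congestion and correctness, Phase~3 remains a plan rather than a proof.

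The paper sidesteps this obstacle with a much more elementary argument that does not port Karger's dynamic-tree step at all. It uses the identity
\[
|\partial(v^{\downarrow T}\oplus u^{\downarrow T})|=C(v^{\downarrow T})+C(u^{\downarrow T})-2\,C(v^{\downarrow T},u^{\downarrow T}),
\]
and observes that once every node $x$ knows the ancestor list of itself and of each of its neighbors (obtainable in $O(\operatorname{Depth}(T))\le O(n)$ rounds by a simple downcast), $x$ can \emph{locally} compute $C(u^{\downarrow T},\{x\})$ for every $u$. For each fixed $u$, a single subtree-sum convergecast then delivers $C(u^{\downarrow T},v^{\downarrow T})$ to every $v$; these $n$ convergecasts are independent and pipeline along the tree in $O(n+\operatorname{Depth}(T))=O(n)$ rounds. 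After broadcasting all $n$ values $C(u^{\downarrow T})$ in another $O(n+D)$ rounds, for every pair $(u,v)$ at least one of the two endpoints has all three terms and can evaluate the cut. No heavy-path decomposition, no dynamic trees---just $n$ pipelined aggregations. Your Phases~1 and~2 are essentially what the paper does; for Phase~3 the paper's route is both simpler and complete.
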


 We replace an edge $e$ with weight $w(e)$ with $w(e)$ parallel edges. But the total communication across all these edges in any given round, is still restricted to $O(\log n)$ bits. Let $T$ be spanning tree of $G$. We say that a cut in $G$, $k$-\emph{respects} a spanning tree $T$ if it cuts at most $k$ edges of the tree. \mohit{changed the defn as per karger. so that this is more standardized, so from now cut will respect a tree and not nthe other way round.}
\begin{lem}	
	Given a graph $G$, in $\tilde O(\sqrt{n} + D)$ rounds, we can find a set of spanning trees $\tree = \curly{T_1,\ldots,T_{k}}$ for some $k = {\Theta}(\log^{2.2} n)$ such that w.h.p. there exists a min-cut of $G$ which 2-respects at least one spanning tree $T \in \tree$. Also each node $v$ knows which edges incident to it are part of the spanning tree $T_i$ for $1\leq i \leq k$.
	\label{lem:set_of_spanning_trees}
\end{lem}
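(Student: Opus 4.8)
The plan is to follow Karger's near-linear-time framework \cite{karger2000minimum}: use a random skeleton to bring the minimum cut down to $\Theta(\log n)$, greedily pack a polylogarithmic number of spanning trees by repeated MST computations, and argue that some minimum cut of $G$ $2$-respects one of them. First I would compute a constant-factor estimate $\lambda'$ of $\lambda$ via \Cref{thm:NS14_main} in $\tilde O(\sqrt n + D)$ rounds, set $p=\Theta(\log n/\lambda')$, and form the skeleton $H$ by independently keeping each edge with probability $p$ (for the multigraph representation of a weighted $G$ used above, keep a $\mathrm{Bin}(w(e),p)$ number of copies of $e$; each such count is $\poly(n)$, hence fits in $O(\log n)$ bits), where the two endpoints of each edge agree on the outcome in $O(1)$ rounds exactly as in \Cref{sec:connect-cert}. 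By \Cref{thm:random_skeletons_all_cuts} together with its standard lower-bound counterpart \cite{karger1999random}, w.h.p.\ every cut $C$ of $G$ has weight $(1\pm\epsilon)p\,|C|$ in $H$; in particular $H$ is connected (its minimum cut is $\Omega(\log n)\ge 1$), the minimum cut of $H$ has value $\Theta(\log n)$, and the image in $H$ of a minimum cut of $G$ is a near-minimum cut of $H$, within a factor $(1+\epsilon)/(1-\epsilon)$ of the minimum cut of $H$. Since \Cref{thm:random_skeletons_all_cuts} is itself a w.h.p.\ statement over the choice of $H$, a single skeleton suffices.

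Next I would build a greedy tree packing $\tree=\{T_1,\dots,T_k\}$ with $k=\Theta(\log^{2.2}n)$: in iteration $i$, maintain for each edge $e$ a load $\ell_e^{(i)}$ equal to the number of trees among $T_1,\dots,T_{i-1}$ containing $e$ (each node stores the loads of its own incident edges), and take $T_i$ to be a minimum-weight spanning tree of $G$ under the weight $\ell_e^{(i)}$ on edges $e\in H$ and $+\infty$ on edges $e\notin H$, with ties broken by edge identifiers. Since $H$ is connected, $T_i$ uses only edges of $H$, so it is a spanning tree of both $H$ and $G$, and the MST subroutine tells every node which of its incident edges lie in $T_i$. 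Each iteration is one MST computation on the $n$-node network, costing $\tilde O(\sqrt n + D)$ rounds by Kutten--Peleg \cite{KuttenP98}; importantly, we run it on the topology of $G$ (edges outside $H$ are merely reweighted, not deleted), so we pay only for the diameter $D$ of $G$ and not for the possibly larger diameter of $H$. Over $k=\polylog(n)$ iterations the total cost is $\tilde O(\sqrt n + D)$, and the $\lambda'$ computation and skeleton construction fit within the same bound.

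Finally I would invoke the tree-packing analysis of Karger \cite{karger2000minimum} and Thorup \cite{Thorup07}: because the minimum cut of $H$ has value $\Theta(\log n)$, a greedy packing of $k=\Theta(\log^{2.2}n)$ trees balances the load well enough that every near-minimum cut of $H$ is crossed at most twice by a constant fraction of the trees; applying this to the image of a fixed minimum cut $C$ of $G$ yields some $T_i$ with $|E(T_i)\cap C|\le 2$, i.e.\ a minimum cut of $G$ that $2$-respects $T_i$. The step that needs care — and the only real departure from the static algorithm — is this last one: Karger uses an \emph{exact} edge-disjoint spanning-tree packing (Nash--Williams/Gabow), for which $\Theta(\log n)$ trees already suffice via a direct counting argument $\sum_i|E(T_i)\cap C|\le|C|$; a distributed algorithm cannot afford Gabow's packing, so we substitute the greedy packing computable by repeated MST, and one must verify through Thorup's load-balancing bound that the $2$-respecting guarantee survives, at the price of raising the number of trees from $\Theta(\log n)$ to $\Theta(\log^{2.2}n)$. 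The remaining details — the weighted/multigraph bookkeeping in the skeleton and in the loads, and the fact that a minimum-weight spanning tree of $G$ restricted to finite-weight edges is a spanning tree of $H$ — are routine.
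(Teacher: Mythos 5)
Your proposal follows essentially the same route as the paper's proof: sample a random skeleton à la \cite{karger1999random} to bring the (skeleton) minimum cut down to polylogarithmic size, then build a greedy tree packing by $\Theta(\log^{2.2}n)$ repeated MST computations (each $\tilde O(\sqrt n+D)$ rounds on the topology of $G$, with non-skeleton edges given weight $\infty$), and invoke Thorup's load-balancing bound to get the $2$-respecting guarantee. The only cosmetic divergence is the sampling rate: you take $p=\Theta(\log n/\lambda')$ so $\lambda_H=\Theta(\log n)$ and argue via the $(1\pm\epsilon)$ concentration, while the paper picks $p\lambda=\Theta(\log^{1.1}n)$ so it can cite Karger's $(1\pm o(1))$ skeleton lemma verbatim; either choice pairs correctly with Thorup's $\omega(\lambda_H\ln m)$-tree requirement and the stated exponent $2.2$.
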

The proof of \cref{lem:set_of_spanning_trees} is based on \emph{tree packing}, where a set of $\Theta(\log^{2.2}n)$ MSTs are constructed by appropriately assigning weights to the edges. This is based on \cite{karger1999random,Thorup07} and details of which \longOnly{are left to \cref{appendix:proof_section_6}}\shortOnly{are given in full version of this paper \cite{distributed-min-cut-full}}. The important step of our algorithm is a sub-routine, which given any spanning tree $T$, finds a minimum-sized cut which 2-respects the tree $T$. 
We run this sub-routine on all the trees in the set of trees $\tree$ received from \cref{lem:set_of_spanning_trees}. 
For all the spanning trees $T$ in $\tree$ we  fix an arbitrary root denoted by $r_T$.
For any vertex $v$ other than the root, we use $\parent[T]{v}$ to denote the parent of vertex $v$ and $\edge[T]{v}$ to denote the tree edge $(\parent[T]{v}, v)$. 
We use $\desc[T]{v}$ to denote the set of decedents of the vertex $v$ in tree $T$ and let $\ancestor[T]{v}$ be the set of ancestors of a node $v$ in the spanning tree $T$ including $v$ itself and let $\children_T(v)$ be the set of child nodes of $v$ in the rooted spanning tree $T$. Also let $Depth(T)$ be the distance from root $r_T$ to the furthest node. We give the following lemma which describes high level distributed algorithms in $\CONGEST$ model. These are standard algorithms and details \longOnly{are left to \cref{appendix:proof_section_6}.}\shortOnly{are given in full version of this paper \cite{distributed-min-cut-full}.}
\begin{lemma}
	Let $T$ be a rooted spanning tree of $G$ then,
	\begin{enumerate}
		\item If each node $v$ of $G$ has a message $\msg_{v}$ to be sent to each and every node in $\desc[T]{v}$, then to deliver all such messages it takes $O(Depth(T))$ rounds.   
		\label{point:downcast_messages_general}
		
		\item Let $f:V \rightarrow \curly{0,1,\ldots,\poly(n)}$ and \sloppy$g:V^2\rightarrow \curly{0,1,\ldots,\poly(n)}$ be some functions. Let $f(v) = \sum_{x\in \desc[T]{v}}g(v,x)$ and $g(v,x)$ is precomputed by every node $x$ for all $v \in \anc_{T}(x)$. \monika{what do you mean by "for all $v$ in $anc_T(x)$"} \mohit{$\anc_T\paren{x}$ is a set of all the ancestors of $x$, defined above} Then in total of $O(Depth(T))$ rounds $f$ can be computed by all nodes $v$.
		
		\item  Let $f,g: V \rightarrow \curly{0,1,\ldots,\poly(n)}$ be some functions. For every node $v$ of $G$, let $f(v) = g(v) + \sum_{c \in \child[T]{v}} f(c)$. If $g(v)$ is precomputed by every node $v$, then $f$ can be computed in total of $O(Depth(T))$ rounds by every node $v$ of $G$. Further if we have $k$ such functions  then every node $v$ of $G$ can compute them in $O(Depth(T) + k)$ rounds\mohit{prelim:talk about depth}.
	\end{enumerate}
\label{fact:congest_general}
\end{lemma}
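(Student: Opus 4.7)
The lemma collects three standard tree-pipelining primitives, so I would handle each of the three claims separately, all based on the usual convergecast/broadcast-on-a-rooted-tree template.

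For \textbf{part 1 (downcast of $\msg_v$ to $\desc[T]{v}$)}, the plan is a pipelined broadcast on the rooted tree $T$. For each tree edge $(u,c)$ where $c \in \children_T(u)$, the only messages that ever need to traverse this edge in the downward direction are those originating at $u$ or at one of $u$'s ancestors; thus the number of distinct messages that the edge has to carry is at most $1 + |\anc_T(u)| = O(Depth(T))$. Pipelining them in order of increasing ancestor depth (i.e.\ root's message first, then its child's, etc.), each message enters the pipeline at its source and advances one hop per round, so the last message arrives at any node $w$ after at most $Depth(T) + |\anc_T(w)| - 1 = O(Depth(T))$ rounds. Each per-round transmission is a single $O(\log n)$-bit value, which fits the \CONGEST bandwidth.

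For \textbf{part 2 (convergecast of $f(v) = \sum_{x\in\desc[T]{v}} g(v,x)$)}, I would run a pipelined upcast dual to the one above. Define the partial sums $h(u,v) = \sum_{x \in \desc[T]{u}} g(v,x)$ for every pair $u \in V$ and $v \in \anc_T(u)$; these satisfy the recurrence $h(u,v) = g(v,u) + \sum_{c \in \children_T(u)} h(c,v)$. The value we ultimately want at node $v$ is $f(v) = h(v,v)$. On the edge $(u,\parent{u})$, the only partial sums that need to be transmitted are the $|\anc_T(u)|=O(Depth(T))$ values $h(u,v)$ for $v \in \anc_T(u)$. I would schedule the transmissions in order of decreasing ancestor depth (deepest ancestor first): at round $t$, node $u$ forwards $h(u,v_t)$ to $\parent{u}$, where $v_t$ is the ancestor of $u$ at depth $Depth(T)-t+1$ (padding with zeros when $t$ falls outside the range). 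The internal nodes add incoming contributions from children to their own $g(v,u)$ term before forwarding. A standard pipelining argument shows that each aggregated value $h(v,v)$ is assembled at $v$ within $O(Depth(T))$ rounds, because contributions from any leaf $x$ reach $v$ after at most $Depth(T)$ hops plus the $O(Depth(T))$ ``queueing'' delay caused by ancestors of $x$ deeper than $v$.

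For \textbf{part 3 (recurrence $f(v) = g(v) + \sum_{c \in \children_T(v)} f(c)$)}, the single-function statement is immediate from the same convergecast template: leaves ($\children_T(v) = \emptyset$) evaluate $f(v) = g(v)$ at round $1$; every other node waits to receive $f(c)$ from each child, sums them with $g(v)$, and forwards $f(v)$ to its parent. Since a node at depth $d$ finishes by round $Depth(T)-d+1$, the whole computation terminates in $O(Depth(T))$ rounds. For $k$ such recurrences $(f_1,g_1),\dots,(f_k,g_k)$, I would pipeline them along the upward tree edges: in round $t$, each node that has completed $f_i(v)$ sends it to its parent; by a standard pipelining argument the last value is delivered by round $Depth(T)+k$, giving the claimed $O(Depth(T)+k)$ bound. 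The only thing worth checking is that each edge never has to send more than one $O(\log n)$-bit message per round, which follows because each node transmits only one $f_i$ value per round to its parent.

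The only non-trivial step is the scheduling argument in part 2, where we have to pipeline upcasts for many different targets simultaneously; everything else reduces to the textbook broadcast/convergecast analysis on a rooted tree of depth $Depth(T)$.
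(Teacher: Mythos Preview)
Your approach matches the paper's for all three parts: pipelined downcast for (1), pipelined convergecast of per-ancestor partial sums for (2), and a standard bottom-up aggregation (pipelined across the $k$ functions) for (3).

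The one concrete issue is your schedule in part (2). You let the round in which $u$ transmits $h(u,v)$ depend only on the depth of the target $v$ (via $v_t$ at depth $Depth(T)-t+1$), not on the depth of $u$. Then $u$ and each child $c$ of $u$ are scheduled to send their partial sums for the \emph{same} target $v$ in the \emph{same} round, so $u$ has not yet received $h(c,v)$ when it is supposed to form $h(u,v)=g(v,u)+\sum_c h(c,v)$. The paper's schedule avoids this by making the send-round depend on both depths: node $x$ at level $\ell_x$ sends the aggregate for its ancestor at level $l$ in round $Depth(T)-\ell_x+l+1$, so ancestors are processed in \emph{increasing} depth order (root first) and every child fires exactly one round before its parent. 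Your deepest-first ordering can be made to work too, but the round formula must include a $-\ell_u$ offset (or equivalent) so that causality is respected; as written it does not.
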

For any vertex set $A \subset V$, let $\partial(A)$ denote the cut induced by $A$. Further, let $C(A) \triangleq |\partial(A)|$ and $C(A,B) \triangleq |\partial(A) \cap \partial(B)|$ for any $A,B \subset V$. We use the operator $\oplus$ to represent the set symmetric difference. In this section, the vertex sets we focus on will be based on some rooted spanning tree $T$. Recall that for any vertex $v$, $\desc[T]{v}$ is the vertex set containing all the decedents of vertex $v$ in the rooted spanning tree $T$. Note that for any two vertices $v,u \in V$, the vertex sets $\desc[T]{v}$ and $\desc[T]{u}$ are either disjoint or one of them is contained in the other.  \longOnly{In most of the proofs given in this section we will have these two cases as illustrated in the \cref{figure:two_cases_section_6}.}
\longOnly{
\begin{figure}[h]
	\centering
	\includegraphics[scale=3.0]{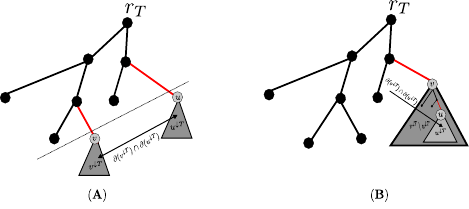}
	\caption{Two different cases illustrating a cut which two respects a spanning tree $T$. In figure $(A)$  the cut is induced by the vertex set $\desc[T]{v} \oplus \desc[T]{u} = \desc[T]{v} \cup \desc[T]{u}$ and in $(B)$ by $\desc[T]{v} \oplus \desc[T]{u} = \desc[T]{v} \setminus \desc[T]{u}$.}
	\label{figure:two_cases_section_6}
\end{figure}\mohit{make (A) and (B) of smaller font}}

\begin{lem}
	For any rooted spanning tree $T$ and for any two nodes $v,u \in V\setminus \curly{r_T}$ we have 
	\begin{enumerate}
		\item $|\partial(\desc[T] v \oplus \desc[T] u)| = C(\desc[T]{v}) + C(\desc[T]{u}) - 2C(\desc[T]{v},\desc[T]{u})$
		\item $E[T] \cap \partial(\desc[T]{v}\oplus \desc[T]{u}) =  \curly{\edge[T]{v},\edge[T]{u}}$ 
	\end{enumerate} 
	\label{lem:imp_characteristics}
\end{lem}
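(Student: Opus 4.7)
The plan is to prove both parts by a clean indicator/XOR argument, handling the two relative positions of $v$ and $u$ (disjoint subtrees vs.\ one subtree nested in the other) in a uniform way rather than case by case.

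For part~(2) I will first establish the simple structural fact that for \emph{any} non-root node $x$, the unique tree edge in $\partial(\desc[T] x)$ is $\edge[T]{x}$. This holds because the subgraph of $T$ induced by $\desc[T] x$ is itself a tree rooted at $x$, so inside $\desc[T] x$ all tree edges stay within, and the only tree edge with exactly one endpoint in $\desc[T] x$ is the edge from $x$ to its parent. Next I will show that for any two vertex sets $A,B\subseteq V$ and any edge $e=(x,y)$, the equality
\[
  \mathbf 1\bigl[e\in\partial(A\oplus B)\bigr]
  \;=\;\mathbf 1\bigl[e\in\partial A\bigr]\,\oplus\,\mathbf 1\bigl[e\in\partial B\bigr]
\]
holds, by writing the indicator of ``$e$ crosses $S$'' as $\mathbf 1[x\in S]\oplus\mathbf 1[y\in S]$ and expanding $\mathbf 1[\cdot\in A\oplus B]=\mathbf 1[\cdot\in A]\oplus\mathbf 1[\cdot\in B]$. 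Applied with $A=\desc[T] v$ and $B=\desc[T] u$, this says a tree edge crosses $\desc[T] v\oplus\desc[T] u$ iff it crosses exactly one of $\desc[T] v$, $\desc[T] u$. Combined with the structural fact, the only candidates are $\edge[T]{v}$ and $\edge[T]{u}$, and a short check (whether $u\in\desc[T] v$ or not) shows each of them crosses exactly one of the two subtrees, yielding $E[T]\cap\partial(\desc[T] v\oplus\desc[T] u)=\{\edge[T]{v},\edge[T]{u}\}$.

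For part~(1) I will again use the XOR identity, but now counting over \emph{all} edges rather than just tree edges. Summing $\mathbf 1[e\in\partial A]\oplus\mathbf 1[e\in\partial B]$ over $e\in E$ equals the number of edges that cross exactly one of the two sets, which is $|\partial A|+|\partial B|-2|\partial A\cap\partial B|$ by inclusion--exclusion (each edge that crosses both is counted twice in $|\partial A|+|\partial B|$ but contributes $0$ to the XOR sum). Since the XOR identity identifies this count with $|\partial(A\oplus B)|$, substituting $A=\desc[T] v$, $B=\desc[T] u$ and using the definitions $C(\cdot)$ and $C(\cdot,\cdot)$ gives exactly the claimed formula.

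I expect no serious obstacle: both parts follow from the single XOR identity for cut-membership indicators together with the elementary observation that a rooted subtree has a single tree-edge boundary. The only mildly delicate point is making sure the argument for part~(2) treats the degenerate/overlap configurations correctly (for instance $v$ being an ancestor of $u$, in which case $\edge[T]{u}$ has both endpoints in $\desc[T] v$ and hence does not cross $\desc[T] v$ but does cross $\desc[T] u$); this is precisely the situation illustrated in Figure~\ref{figure:two_cases_section_6}, and the XOR formulation handles it automatically.
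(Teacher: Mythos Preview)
Your proposal is correct. The XOR identity $\mathbf 1[e\in\partial(A\oplus B)]=\mathbf 1[e\in\partial A]\oplus\mathbf 1[e\in\partial B]$ is exactly right (it follows from $\mathbf 1[z\in A\oplus B]=\mathbf 1[z\in A]\oplus\mathbf 1[z\in B]$ and associativity of XOR), and summing it over all edges immediately gives part~(1), while restricting to tree edges together with the fact that $\partial(\desc[T] x)\cap E[T]=\{\edge[T]{x}\}$ gives part~(2). One small remark: the ``short check'' you mention for part~(2) is unnecessary once you have the identity---since the only tree edge in $\partial(\desc[T] u)$ is $\edge[T]{u}$, the edge $\edge[T]{v}$ automatically lies in exactly one of the two boundaries whenever $v\neq u$, with no case split on the relative position of $v$ and $u$.

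The paper's own proof takes a more concrete route: it explicitly splits into the two cases $\desc[T]{v}\cap\desc[T]{u}=\emptyset$ and $\desc[T]{u}\subset\desc[T]{v}$, and in each case counts edges directly to verify the formula. Your indicator/XOR argument is genuinely different and buys uniformity: a single algebraic identity replaces two separate geometric counts, and the nested-vs-disjoint distinction never has to be made. The paper's case-by-case argument, on the other hand, makes the meaning of the term $C(\desc[T]{v},\desc[T]{u})$ more visually explicit in each configuration (edges between the two subtrees when disjoint; edges jumping over the annulus when nested), which is useful intuition for the subsequent algorithmic claims that compute these quantities. Either proof is fine here; yours is shorter.
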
\longOnly{
\begin{proof}
	Here we have two cases $\desc[T]{v} \cap \desc[T]{u} = \phi$ or $\desc[T]{v} \cap \desc[T]{u} \neq \phi$. Taking the first case, let's assume that $\desc[T]{v} \cap \desc[T]{u} = \emptyset$.  
	Since, $\oplus$ is the symmetric difference operator, hence, $\desc[T]{v}\oplus \desc[T]{u} = \desc[T]{v} \cup \desc[T]{u}$ as given in \cref{figure:two_cases_section_6}(A). 
	Thus  $\partial(\desc[T]{v}\oplus \desc[T]{u}) = \partial(\desc[T]{v} \cup \desc[T]{u})$. And the edges in $\partial(\desc[T]{v} \cup \desc[T]{u})$ have exactly one end-point in the vertex set $ \desc[T]{v}\cup \desc[T]{u}$. 
	As per definition,  $C(\desc[T]{v}) = |\partial(\desc[T]{v})|$ and are the number of edges which have exactly one end point in the vertex set $\desc[T]{v}$. 
	Also, since $\desc[T]{v} \cap \desc[T]{u} = \phi$, thus $C(\desc[T] v, \desc[T] u)$ is the number of edges which have one end point in ${\desc[T] v}$ and other end point in $\desc[T] u$. 
	Thus the number of edges which have exactly one end point in $\desc[T] v$ and not in $\desc[T] u$ are $C(\desc[T]{v}) - C(\desc[T] v, \desc[T] u)$. 
	Similarly, the number of edges which have exactly one end point in $\desc[T] u$ and not in $\desc[T] v$ are $C(\desc[T]{u}) - C(\desc[T] v, \desc[T] u)$. 
	Adding these two, the total number of edges which have exactly one end point in either $\desc[T]{v}$ or $\desc[T]{u}$, but not both are $C(\desc[T]{v}) + C(\desc[T]{u}) - 2C(\desc[T] v, \desc[T] u)$. Note that there is only one edge $\edge[T]{v}$ connecting $\desc[T]{v}$ to the tree $T$. Also, this is part of the cut. Similarly for $\edge[T]{u}$.

	For second case, when $\desc[T]{v} \cap \desc[T]{u} \neq \phi$. Since, we have an underlying tree $T$ thus either $v \in \desc[T]{u}$ or $u \in \desc[T]{v}$. WLOG, let's assume $u \in \desc[T]{v}$ and hence $\desc[T]{u} \subset \desc[T]{v}$. Thus, we have, $\desc[T]{v} \oplus \desc[T]{u} = \desc[T]{v} \setminus \desc[T]{u}$ as given in \cref{figure:two_cases_section_6}(B).
	Hence, the edges that are in the cut $\partial(\desc[T] v \oplus \desc[T] u)$ have exactly one end point in $\desc[T]{v} \setminus \desc[T]{u}$. 
	Also, $C(\desc[T]{v},\desc[T]{u}) = |\partial(\desc[T]{v}) \cap \partial(\desc[T]{u})|$ are the number edges which have one end point in $\desc[T]{u}$ and other in the vertex set $V \setminus \desc[T]{v}$ as shown by the arrow in \cref{figure:two_cases_section_6}(B).
	Thus $C(\desc[T]{v}) - C(\desc[T]{v},\desc[T]{u})$ is the number of edges with one end point in the vertex set $\desc[T]{v} \setminus \desc[T]{u}$ and other in the vertex set $V \setminus \desc[T]{v}$. 
	But the cut $\partial(\desc[T] v \oplus \desc[T] u)$ also have those edges which have one endpoint in the vertex set $\desc[T]{v}\setminus \desc[T]{u}$ and the other end point in the vertex set $\desc[T]{u}$. And the total number of such edges is $C(\desc[T]{u}) - C(\desc[T]{v},\desc[T]{u})$. Hence  $|\partial(\desc[T] v \oplus \desc[T] u)| = C(\desc[T]{v}) + C(\desc[T]{u}) - 2C(\desc[T]{v},\desc[T]{u})$. And lastly, similar to the previous case, no other tree edge apart from $\curly{\edge[T]{v},\edge[T]{u}}$  has one end point in $V \setminus (\desc[T]{v} \setminus \desc[T]{u})$ and the other in $\desc[T]{v} \setminus \desc[T]{u}$.
\end{proof}}
From \cref{lem:imp_characteristics} it is clear that for any spanning tree $T$ and for any nodes $v,u \in V\setminus T$ the cut $\partial(\desc[T] v \oplus \desc[T] u)$ shares only two tree edges $\curly{\edge[T]{v},\edge[T]{u}}$. Hence if we know $|\partial(\desc[T] v \oplus \desc[T] u)|$ for all $v,u$ then we can find the size of all the cuts which 2-respects the tree $T$ and hence the minimum. In order to find $|\partial(\desc[T] v \oplus \desc[T] u)|$, for any two nodes $v,u$, we will first ensure that every node $v$ finds $C(\desc[T]{v})$. Further, for every node $u \in V\setminus \desc[T]{v}$ we will make sure that node $v$ finds $C(\desc[T]{u})$ and $C(\desc[T]{u},\desc[T]{v})$. Thus for every pair of tree edges $\curly{\edge[T]{v},\edge[T]{u}}$, we have at least one node which can find $|\partial(\desc[T] v \oplus \desc[T] u)|$.  For any rooted spanning tree $T$, let $\child[T]{v}$ be children of node $v$ in $T$. The following simple observation will be handy in finding these information.

\begin{observation}For any rooted spanning tree $T$, we have
	\begin{enumerate}
		\item \sloppy	$\forall v\in V\setminus r_T,\ C(\desc[T]{v}) = \sum_{x \in \desc[T]{v}}C(x,\desc[T]{v})
		 = C(\desc[T]{v},\curly{v}) + \sum_{c \in \child[T]{v}}C(\desc[T]{v},\desc[T]{c})$
		\item  \sloppy$\forall v \in V\setminus r_T$ and $u \in V\setminus \desc[T]{v},\ C(\desc[T]v,\desc[T] u ) = \sum_{x \in \desc[T]{v}}C(x,\desc[T]{u}) = C(\curly{v},\desc[T] u) + \sum_{c \in \child[T]{v}}C(\desc[T]{c},\desc[T] u )$
		.
	\end{enumerate}
	\label{obs:required_quantities_are_sum}
\end{observation}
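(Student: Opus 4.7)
The plan is to prove both identities by a direct expansion of the definitions, using only (a) the description of $\partial(A)\cap\partial(B)$ in terms of edges between $A\setminus B$, $B\setminus A$, $A\cap B$, and $V\setminus(A\cup B)$, and (b) the fact that in any rooted tree $\desc[T]{v}$ is the disjoint union $\{v\}\sqcup\bigsqcup_{c\in\child[T]{v}}\desc[T]{c}$, and that two subtree sets $\desc[T]{v}$ and $\desc[T]{u}$ are either disjoint or nested.

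For item 1, the first step is to observe that for any $x\in\desc[T]{v}$, the set $\partial(\{x\})\cap\partial(\desc[T]{v})$ is exactly the set of edges from $x$ to $V\setminus\desc[T]{v}$ (since $x$ lies in $\desc[T]{v}$). Therefore $\sum_{x\in\desc[T]{v}}C(x,\desc[T]{v})$ sums, for each $x\in\desc[T]{v}$, the number of edges leaving $\desc[T]{v}$ at $x$; since every edge in $\partial(\desc[T]{v})$ has exactly one endpoint in $\desc[T]{v}$, it is counted once, giving $C(\desc[T]{v})$. The second equality is then obtained by partitioning this sum according to the tree-partition $\desc[T]{v}=\{v\}\sqcup\bigsqcup_{c}\desc[T]{c}$; the same ``one endpoint in $S$, the other in $V\setminus\desc[T]{v}$'' reasoning applied to $S=\{v\}$ and to each $S=\desc[T]{c}\subseteq\desc[T]{v}$ identifies these partial sums as $C(\desc[T]{v},\{v\})$ and $C(\desc[T]{v},\desc[T]{c})$, respectively.

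Item 2 follows the same template but needs a small case analysis because $u\in V\setminus\desc[T]{v}$ allows either $\desc[T]{v}\cap\desc[T]{u}=\emptyset$ or $\desc[T]{v}\subseteq\desc[T]{u}$ (the only two possibilities for two subtrees of a rooted tree once one is not a descendant of the other). In the disjoint case, for $x\in\desc[T]{v}$ we have $x\notin\desc[T]{u}$, so $C(x,\desc[T]{u})$ counts edges from $x$ to $\desc[T]{u}$; summing over $x\in\desc[T]{v}$ counts each edge between $\desc[T]{v}$ and $\desc[T]{u}$ once, matching $C(\desc[T]{v},\desc[T]{u})$ in this regime. In the nested case $\desc[T]{v}\subseteq\desc[T]{u}$, for $x\in\desc[T]{v}\subseteq\desc[T]{u}$ the set $C(x,\desc[T]{u})$ counts edges from $x$ to $V\setminus\desc[T]{u}$; summing over $x\in\desc[T]{v}$ again counts each edge in $\partial(\desc[T]{v})\cap\partial(\desc[T]{u})$ exactly once. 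The second equality in item 2 is then immediate from the disjoint decomposition of $\desc[T]{v}$, applied term-by-term to the already-established first equality.

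I do not foresee a real obstacle here; the only subtlety is being careful about the nested case in item 2, where $C(x,\desc[T]{u})$ counts edges leaving $\desc[T]{u}$ at $x$ rather than edges entering $\desc[T]{u}$ at $x$. Once this case is handled explicitly, both identities reduce to a one-line double-counting argument combined with the tree-partition of $\desc[T]{v}$.
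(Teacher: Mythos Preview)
Your proof plan is correct. The paper states this as an observation without proof, treating both identities as self-evident from the definitions; your double-counting argument together with the partition $\desc[T]{v}=\{v\}\sqcup\bigsqcup_{c\in\child[T]{v}}\desc[T]{c}$ and the disjoint/nested case analysis for item~2 is exactly the natural justification and matches how the surrounding lemmas in the paper (e.g., the proof of \cref{lem:imp_characteristics}) reason about $\partial(A)\cap\partial(B)$.
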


\begin{claim} \label{lemma:atomic_values_can_be_found}Let $T$ be a rooted spanning tree. Any node $u$ can locally\monika{what do you mean by internally?}\mohit{\sout{internally} and made locally} find $C(\desc[T]{v},u)$ for all $v \in V\setminus \desc{u}$, if the node $u$ knows the set $\ancestor[T]{u}$ and $\ancestor[T]{x}$ for each of its neighbors $x$.
\danupon{Does it mean that $u$ will know all IDs of nodes $v \in V\setminus \desc{u}$?}	
%	\danupon{One thing that is worth noting: $u$ might not know if $v$ exists in the network at all. Nevertheless, $u$ can still tell the value of $C(\desc[T]{v},u)$ (basically it will be zero).}
\end{claim}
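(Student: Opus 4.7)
The plan is to express $C(\desc[T]{v},u)$ as a local sum over the edges incident to $u$, where each summand is determined purely from $\ancestor[T]{u}$ and the ancestor sets of $u$'s neighbors.

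First I would record the basic equivalence $w \in \desc[T]{v} \iff v \in \ancestor[T]{w}$. From this, an edge $(u,x)$ belongs to $\partial(\desc[T]{v}) \cap \partial(\{u\})$ exactly when the two endpoints $u,x$ lie on opposite sides of the cut $\partial(\desc[T]{v})$, which in turn happens iff $v$ lies in the symmetric difference $\ancestor[T]{u} \oplus \ancestor[T]{x}$. Summing over $u$'s neighbours therefore yields
\[
C(\desc[T]{v}, u) \;=\; \bigl|\{\, x \in \mathcal{N}(u) : v \in \ancestor[T]{u} \oplus \ancestor[T]{x}\,\}\bigr|.
\]

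Next I would verify that this single identity covers both sub-cases of $v \in V \setminus \desc[T]{u}$: when $v \in \ancestor[T]{u}$, node $u$ sits inside $\desc[T]{v}$ and the count captures edges at $u$ that leave $\desc[T]{v}$; when $v$ lies in neither $\ancestor[T]{u}$ nor $\desc[T]{u}$, then $u$ is outside $\desc[T]{v}$ and the count captures edges at $u$ that enter $\desc[T]{v}$. In both situations the symmetric-difference criterion works verbatim, because what it tracks is membership-parity of the endpoints in $\desc[T]{v}$.

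Finally, since $u$ knows $\ancestor[T]{u}$ and, by hypothesis, $\ancestor[T]{x}$ for every $x \in \mathcal{N}(u)$, it can form the $|\mathcal{N}(u)|$ symmetric differences $\ancestor[T]{u} \oplus \ancestor[T]{x}$ and then, for each $v$ in the locally known union $\ancestor[T]{u} \cup \bigcup_{x \in \mathcal{N}(u)} \ancestor[T]{x}$, simply count the neighbours $x$ for which $v$ lies in the corresponding symmetric difference. Any $v \in V \setminus \desc[T]{u}$ that does not appear in this union satisfies $C(\desc[T]{v},u) = 0$ and is assigned zero by default. The only mild subtlety is the clean case-split above encoded via the symmetric-difference identity; once that is set up the rest is a purely local counting step with no communication, which is exactly what the claim requires.
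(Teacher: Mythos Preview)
Your proof is correct and takes essentially the same approach as the paper: the paper does an explicit case split on whether $v\in\ancestor[T]{u}$, giving two separate summation formulas over the edges incident to $u$, and your symmetric-difference criterion $v\in\ancestor[T]{u}\oplus\ancestor[T]{x}$ is precisely the unification of those two cases. One minor point: this section is set up for weighted graphs, so the neighbour count in your displayed identity should be a sum of edge weights $w((u,x))$ (as in the paper's formula) rather than a cardinality; this is a one-symbol fix and does not affect the argument.
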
\longOnly{
\begin{proof}
\begin{algorithm}[h]
	\DontPrintSemicolon
	\lIf{$v \notin \anc_T(u)$}{
		$C(\desc[T]v,u) = \sum\limits_{\substack{(x,u) \in E\\ v\in \ancestor[T]{x}}} w((x,u))$   %|\curly{x\ |\ (x,u) \in E, v\in \ancestor[T]{x}}|$		
	}\lElse{
		$C(\desc[T]v,u) = \sum\limits_{\substack{(x,u) \in E\\ v\notin \ancestor[T]{x}}} w((x,u))$   %|\curly{x\ |\ (x,u) \in E, v\notin \ancestor[T]{x}}|$
	}
	\caption{Finding $C(\desc[T]{v},x)$ at node $x$}
	\label{algo:claim_atomic_values_can_be_found}
\end{algorithm}\monika{This is just a formula such as 6.5? Why do you call it an algorithm?}\mohit{Danupon: Should I make it an observation like \cref{obs:required_quantities_are_sum}?}
We claim that $u$ can compute $C(\desc[T]{v},u)$ using \shortOnly{Algorithm }\ref{algo:claim_atomic_values_can_be_found}. We will now prove its correctness. Recall that $C(\desc[T]{v},u)= |\partial(\desc[T]{v}) \cap \partial(u)|$.
For each node  $v \notin \desc[T]{u}$ that $u$ wants to compute $C(\desc[T]{v},u)$, there are two cases to consider. 

\medskip\noindent {\em Case 1: $v \notin \anc_T(u)$}, i.e.  $v$ is not an ancestor of $u$ (see \cref{figure:two_cases_section_6}(A) for an illustration). Here we have
\begin{align}
\partial(\desc[T]{v}) \cap \partial(u) &= \left\{(a,b) | (a,b) \in E, a \notin \desc[T]{v}, b \in \desc[T]{v}\right\} \cap \Big\{(u,x) | (u,x) \in E \Big\}\nonumber\\
&= \curly{(u,x)\ |\ (u,x) \in E, x \in \desc[T]{v}} & \mbox{(since $\ u\notin \desc[T]{v}$)}\nonumber\\
&= \curly{(u,x)\ |\ (u,x) \in E, v \in \anc_{T}(x)}
\label{eqn:claim_atomic_values_can_be_found_1}
\end{align}

Thus, the first line of \shortOnly{Algorithm }\ref{algo:claim_atomic_values_can_be_found} computes $C(\desc[T]{v},u)$  correctly in this case.\danupon{Where in the inequality you use the fact that $v$ is not a descendant of $u$?}

\medskip\noindent {\em Case 2: $v \in \anc_T(u)$}, i.e.  $v$ is  an ancestor of $u$ (see \cref{figure:two_cases_section_6}(B) for an illustration). We have
\begin{align}
\partial(\desc[T]{v}) \cap \partial(u) &= \left\{(a,b) | (a,b) \in E, a \notin \desc[T]{v}, b \in \desc[T]{v}\right\} \cap \Big\{(u,x) | (u,x) \in E \Big\}\nonumber\\
&= \curly{(u,x)\ |\ (u,x) \in E, x \notin \desc[T]{v}} & \mbox{(since $\ u\in \desc[T]{v}$)}\nonumber\\
&= \curly{(u,x)\ |\ (u,x) \in E, v \notin \anc_{T}(x)}. 
\label{eqn:claim_atomic_values_can_be_found_2}
\end{align}
Thus, the second line of \shortOnly{Algorithm }\ref{algo:claim_atomic_values_can_be_found} computes $C(\desc[T]{v},u)$  correctly in this case. This completes the proof of \cref{lemma:atomic_values_can_be_found}.
\end{proof}}
Note that \cref{lem:imp_characteristics,obs:required_quantities_are_sum,lemma:atomic_values_can_be_found} hold for any spanning tree $T$. Now, we will give  \cref{obs:ancestor_list_can_be_found,lemma:atomic_value_type_1,lemma:atomic_value_type_2,claim:tree_T_delta_u_plus_v} the purpose of which is to prove that every node $v$ can find $C(\desc[T]{u}),C(\desc[T]{v})$ and $C(\desc[T]{v},\desc[T]{u})$ for all $u \in V\setminus \desc[T]{v}$. These claims are an application of \cref{fact:congest_general}. They also use the characterization given in  \cref{obs:required_quantities_are_sum,lemma:atomic_values_can_be_found}.\shortOnly{The detailed proofs of these claims are available in full version of this paper \cite{distributed-min-cut-full}.}
\begin{claim}
	Given a rooted spanning tree $T$, in $O(Depth(T))$ rounds, every node $v$ can find $\ancestor[T]{v}$ and also for all the non-tree neighbors $u$ of $v$, node $v$ can find $\ancestor[T]{u}$.
	\label{obs:ancestor_list_can_be_found}
\end{claim}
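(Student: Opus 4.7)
The plan is to carry out two pipelined broadcasts on the rooted spanning tree $T$, each taking $O(Depth(T))$ rounds. In the first phase every node $v$ learns its own ancestor list $\ancestor[T]{v}$ by a top-down broadcast along tree edges; in the second phase each node ships this (already-computed) ancestor list over its non-tree edges so its non-tree neighbors can record it.

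For the first phase, let $a_i(v)$ denote the depth-$i$ ancestor of $v$ along $T$ (so $a_0(v)=r_T$ and $a_{d}(v)=v$ for a depth-$d$ node, and $a_{d-1}(v)=\parent[T]{v}$ is known for free since $\parent[T]{v}$ is a tree neighbor). The rule is: in round $t$, every node $v$ at depth at least $t-1$ transmits the identifier of its depth-$(t-1)$ ancestor along each of its tree edges to its children. The root knows $a_0(r_T)=r_T$ at the outset, so it starts the pipeline in round $1$. An easy induction on $t$ shows that by the end of round $t$, every node at depth exactly $t$ has received its depth-$(t-1)$ ancestor from its parent and therefore holds it before round $t+1$, ready to forward. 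Thus after $O(Depth(T))$ rounds every node $v$ knows the full sequence $a_0(v), a_1(v), \ldots, a_{d(v)}(v) = \ancestor[T]{v}$. Each node can locally distinguish tree edges from non-tree edges (this information is available once the spanning tree is constructed), so the pipeline uses only the correct edges.

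For the second phase, once $v$ holds $\ancestor[T]{v}$, which has length at most $Depth(T)+1$, it sends this list one identifier per round along every non-tree edge in parallel. All non-tree edges operate simultaneously and each carries $O(\log n)$ bits per round, so transmitting $O(Depth(T))$ identifiers per edge takes $O(Depth(T))$ rounds. At the end of this phase, for every non-tree edge $(u,v)$ both endpoints know the ancestor list of the other, which in particular gives $v$ the set $\ancestor[T]{u}$ for each non-tree neighbor $u$.

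The only real subtlety is verifying that the pipeline in the first phase never stalls, i.e.\ that in round $t$ the depth-$(t-1)$ ancestor is already available at every node of depth $\ge t-1$ that must forward it. This follows from the invariant that ``depth-$k$ ancestor information reaches depth $k+1$ in round $k+1$ and then advances one tree level per round,'' which is exactly what the inductive step establishes. Summing the two $O(Depth(T))$-round phases yields the claimed bound.
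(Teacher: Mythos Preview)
Your overall two-phase approach---pipelined top-down broadcast of ancestor IDs along tree edges, followed by shipping the ancestor list over each non-tree edge---is exactly what the paper does (it simply cites the downcast primitive of \cref{fact:congest_general}(1) for the first phase). So the strategy is fine.

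However, the concrete pipeline rule you wrote does not work. You say that in round $t$ every node at depth $\ge t-1$ sends its depth-$(t-1)$ ancestor $a_{t-1}(v)$. But a node at depth $d\ge t+1$ does not know $a_{t-1}(v)$ at the start of round $t$ (its grandparent and higher are not yet available), so the rule asks nodes to transmit data they have not received. Worse, your stated invariant ``depth-$k$ ancestor information reaches depth $k+1$ in round $k+1$ and then advances one level per round'' implies that \emph{every} $a_k$ reaches a depth-$d$ node in round $d$; that is $d$ distinct identifiers arriving on a single tree edge in a single round, violating the \CONGEST bandwidth bound. The correct pipeline sends ancestors in the opposite order: in round $t$ a depth-$j$ node sends $a_{\,j-t+1}(v)$ (i.e.\ first its own ID, then its parent's, then its grandparent's, \ldots, finally the root's). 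One checks inductively that $a_{j-t+1}(v)$ was received from the parent at the end of round $t-1$, so each edge carries one identifier per round and every depth-$d$ node has its full ancestor list after $d+1\le Depth(T)+1$ rounds. With this fix, your Phase~2 goes through unchanged.
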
\longOnly{
\begin{proof}
	Firstly, in $O(Depth(T))$ rounds, each node $v$ can find $\ancestor[T]{v}$ by \cref{fact:congest_general}(1). Now, for any node $u$, $|\ancestor[T]{u}|  \leq Depth(T)$ hence in $O(Depth(T))$ rounds any non-tree neighbor $v$ of $u$ can receive $\ancestor[T]{u}$
\end{proof}}

\begin{claim}
	Let $T$ be a rooted spanning tree. In $O(Depth(T))$ rounds, every node $v \in V$ can find $C(\desc[T]{v})$.
	\label{lemma:atomic_value_type_1}
\end{claim}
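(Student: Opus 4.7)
The plan is to express $C(\desc[T]{v})$ as a subtree sum of locally computable atomic values and then invoke the standard convergecast of \cref{fact:congest_general}(2). The key identity, which is the specialization of \cref{obs:required_quantities_are_sum}(1) to the case $A=B=\desc[T]{v}$, is
\[
C(\desc[T]{v}) \;=\; \sum_{x \in \desc[T]{v}} C(\desc[T]{v},x),
\]
which simply says that every edge of $\partial(\desc[T]{v})$ contributes exactly once when counted from its unique endpoint inside $\desc[T]{v}$. Defining $g(v,x):=C(\desc[T]{v},x)$, this puts $f(v):=C(\desc[T]{v})$ precisely in the shape $f(v)=\sum_{x\in \desc[T]{v}} g(v,x)$ required by \cref{fact:congest_general}(2); consequently, once $g(v,x)$ is precomputed at each node $x$ for every ancestor $v$ of $x$, the aggregation itself takes $O(Depth(T))$ rounds.

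First, I would invoke \cref{obs:ancestor_list_can_be_found} to let each node $x$ learn the list $\ancestor[T]{x}$ as well as $\ancestor[T]{y}$ for every neighbor $y$ of $x$; this costs $O(Depth(T))$ rounds. With this information in hand, for every \emph{strict} ancestor $v$ of $x$ we have $v \in V\setminus \desc[T]{x}$, so \cref{lemma:atomic_values_can_be_found} applies and $x$ computes $g(v,x)=C(\desc[T]{v},x)$ by a purely local rule. The diagonal case $v=x$ is not covered by \cref{lemma:atomic_values_can_be_found}, but $g(x,x)$ is just the number of edges incident to $x$ whose other endpoint lies outside $\desc[T]{x}$: for each neighbor $y$, node $x$ tests whether $y\in \desc[T]{x}$ by checking whether $x\in \ancestor[T]{y}$, which is exactly the information already collected. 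Thus after $O(Depth(T))$ rounds every $x$ stores $g(v,x)$ for all $v\in \ancestor[T]{x}$, meeting the hypothesis of \cref{fact:congest_general}(2).

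Finally, I would run the procedure of \cref{fact:congest_general}(2) with $f$ and $g$ as above, which delivers $f(v)=C(\desc[T]{v})$ to every node $v$ in an additional $O(Depth(T))$ rounds. Summing the ancestor-learning phase, the local computation, and the convergecast gives the claimed $O(Depth(T))$-round bound. The only mild subtlety is the diagonal $v=x$, which falls just outside the hypothesis of \cref{lemma:atomic_values_can_be_found}; I do not expect this to cause a real obstacle because the ancestor information gathered in the first step already suffices to handle it by direct inspection of $x$'s incident edges.
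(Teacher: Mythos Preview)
Your proposal is correct and follows essentially the same route as the paper: write $C(\desc[T]{v})=\sum_{x\in\desc[T]{v}} C(\desc[T]{v},x)$ via \cref{obs:required_quantities_are_sum}, use \cref{obs:ancestor_list_can_be_found} and \cref{lemma:atomic_values_can_be_found} so each $x$ can precompute $g(v,x)$ for every ancestor $v$, and then invoke \cref{fact:congest_general}(2). The only difference is that you explicitly treat the diagonal case $v=x$, which the paper glosses over; this is a harmless refinement, since the local rule in \cref{lemma:atomic_values_can_be_found} (the branch $v\in\anc_T(u)$) in fact already gives the correct value when $v=u$.
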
\longOnly{
\begin{proof}
Let us fix a node $v$. As per \cref{obs:required_quantities_are_sum}, we know that for any node $v$, $C(\desc[T]{v}) =\sum_{x\in\desc[T]{v}} C(\curly{x},\desc[T]{v})$. Based on \cref{lemma:atomic_values_can_be_found} and \cref{obs:ancestor_list_can_be_found}, we know that, in $O(Depth(T))$ rounds, each node $x$ can find $C(x,\desc[T]{v})$ for all the ancestors $v \in \ancestor[T]{x}$. Hence computing $C(\desc[T]{v})$ takes $O(Depth(T))$ rounds by \cref{fact:congest_general}(2). 
\end{proof}}

\begin{claim}
	Let $T$ be a rooted spanning tree. In $O(n)$ rounds, every node $v \in V$ can find $C(\desc[T]{u})$ and $C(\desc[T]{u},\desc[T]{v})$ for all $u \in V \setminus \desc[T]{v}$.	
	\label{lemma:atomic_value_type_2}
\end{claim}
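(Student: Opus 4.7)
The plan is to split the work into two pipelined routines on $T$, each finishing in $O(n + Depth(T)) = O(n)$ rounds, which together yield the claim.

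First I would distribute the scalars $C(\desc[T]{u})$ to every node. By \cref{lemma:atomic_value_type_1}, each node $u$ already knows its own value $C(\desc[T]{u})$ after $O(Depth(T))$ rounds. These $n-1$ tagged pairs $(\id(u), C(\desc[T]{u}))$ are then gathered at $r_T$ by a standard pipelined convergecast along $T$ and broadcast back down, both in $O(n + Depth(T))$ rounds because each message is $O(\log n)$ bits. After this step every $v$ knows $C(\desc[T]{u})$ for every $u \in V$, and in particular for every $u \in V \setminus \desc[T]{v}$.

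Second, for the pairwise quantities I would exploit \cref{obs:required_quantities_are_sum}(2), which for any $u \notin \desc[T]{v}$ gives
\[C(\desc[T]{v}, \desc[T]{u}) \;=\; C(\{v\}, \desc[T]{u}) + \sum_{c \in \child[T]{v}} C(\desc[T]{c}, \desc[T]{u}).\]
By \cref{obs:ancestor_list_can_be_found}, in $O(Depth(T))$ rounds each node $x$ can learn $\ancestor[T]{x}$ together with the ancestor list of every one of its neighbors, after which \cref{lemma:atomic_values_can_be_found} lets $x$ compute $C(\{x\}, \desc[T]{u})$ locally for every $u \notin \desc[T]{x}$ with no further communication. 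For each non-root $u$ define $f_u(v) := C(\desc[T]{v}, \desc[T]{u})$ and $g_u(v) := C(\{v\}, \desc[T]{u})$. Because $u \notin \desc[T]{v}$ forces $u \notin \desc[T]{c}$ for every child $c$ of $v$, the recurrence above takes exactly the form $f_u(v) = g_u(v) + \sum_{c \in \child[T]{v}} f_u(c)$ required by \cref{fact:congest_general}(3), and $g_u(v)$ has been precomputed. Applying that statement to the whole batch of $k = n-1$ functions $\{f_u\}_{u \in V \setminus \{r_T\}}$ simultaneously produces, at each $v$, the value $f_u(v) = C(\desc[T]{v}, \desc[T]{u})$ for every $u \in V \setminus \desc[T]{v}$ in $O(Depth(T) + n) = O(n)$ rounds.

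Adding the two parts gives the claimed $O(n)$ round bound. The one delicate point, and what I would focus attention on in a full write-up, is the congestion analysis behind Part 2: we are pipelining $n-1$ independent subtree summations up the same tree $T$, and we must verify that the batched version of \cref{fact:congest_general}(3) really delivers $O(Depth(T)+k)$ rounds rather than $O(k\cdot Depth(T))$. This is exactly the use case that item (3) of that lemma was designed for, so the check is mechanical, but it is the only nontrivial step.
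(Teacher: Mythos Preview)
Your proposal is correct and follows essentially the same approach as the paper: broadcast the values $C(\desc[T]{u})$ globally in $O(n+D)$ rounds, then for the pairwise quantities set up exactly the recurrence from \cref{obs:required_quantities_are_sum}(2), precompute the base terms locally via \cref{lemma:atomic_values_can_be_found} and \cref{obs:ancestor_list_can_be_found}, and invoke the batched version of \cref{fact:congest_general}(3) with $k\approx n$ functions. Your explicit naming of $f_u,g_u$ and your remark about the pipelining being the only nontrivial congestion check are exactly the points the paper's proof relies on implicitly.
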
\longOnly{
\begin{proof}
	Every node $v$ broadcasts $C(\desc[T]{v})$ computed from  \cref{lemma:atomic_value_type_1}. Since there are $O(n)$  many such messages, this can be done in $O(n + D)$ rounds. 
	
	Let us fix a node $u$, we will show that for all $v$ such that $u \in V \setminus \desc[T]{v}$, node $v$ can find $C(\desc[T]{u},\desc[T]{v})$ in $O(Depth(T))$ rounds. 
	
	By \cref{obs:required_quantities_are_sum}(2), we know that $C(\desc[T]{u},\desc[T]{v}) = \sum_{x \in \desc[T]{v}}C(\desc[T]{u},\curly{x})$. Let choose an arbitrary $x\in \desc[T]{v}$. Since $u \in V\setminus \desc[T]{v}$ thus $u \in  V\setminus \desc[T]{x}$; this allows us to invoke \cref{lemma:atomic_values_can_be_found}  and \cref{obs:ancestor_list_can_be_found}	 to make sure that each $x \in \desc[T]{v}$ can find $C(\desc[T]{u},\curly{x})$ in $O(Depth(T))$ rounds. Thus by \cref{fact:congest_general}(3), we can find $C(\desc[T]{u},\desc[T]{v})$ for all $u \in V \setminus \desc[T]{v}$ in $O(Depth(T))$. There could be at most $n$ such nodes $u$. Thus every node $v \in V$ can find $C(\desc[T]{u},\desc[T]{v})$ for all $u \in V \setminus \desc[T]{v}$ in $O(Depth(T) + n)$ rounds.	
\end{proof}}
\begin{claim}
	Let $T$ be a rooted spanning tree.  In $O(n)$ rounds, for any two nodes $v,u \in V\setminus \curly{r_T}$, one of $v$ or $u$ can find $|\partial(\desc[T]{u} \oplus \desc[T]{v})|$.
	\label{claim:tree_T_delta_u_plus_v}
\end{claim}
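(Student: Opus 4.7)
The plan is to reduce the claim to the preceding Claims~\ref{lemma:atomic_value_type_1} and~\ref{lemma:atomic_value_type_2} via the identity from Lemma~\ref{lem:imp_characteristics}, namely
\[
|\partial(\desc[T]{u} \oplus \desc[T]{v})| = C(\desc[T]{v}) + C(\desc[T]{u}) - 2\,C(\desc[T]{v},\desc[T]{u}),
\]
so that it suffices to gather all three right-hand-side quantities at a single, suitably chosen endpoint of the pair.

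First I would handle the selection of which node carries out the local computation. Because descendant sets in a rooted tree form a laminar family, for any $u,v \in V \setminus \{r_T\}$ either $\desc[T]{u} \cap \desc[T]{v} = \emptyset$ or one set is contained in the other. In the disjoint case both $u \in V \setminus \desc[T]{v}$ and $v \in V \setminus \desc[T]{u}$ hold automatically; in the nested case, say $\desc[T]{u} \subsetneq \desc[T]{v}$, the node $v$ is a strict ancestor of $u$, so $v \in V \setminus \desc[T]{u}$. Hence, possibly after swapping the labels $u \leftrightarrow v$, I may assume $u \in V \setminus \desc[T]{v}$, and designate $v$ as the node responsible for reporting the answer for this pair.

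Next I would assemble the three quantities at $v$. Claim~\ref{lemma:atomic_value_type_1} already supplies $C(\desc[T]{v})$ at every node $v$ in $O(\mathit{Depth}(T))$ rounds, and Claim~\ref{lemma:atomic_value_type_2} supplies both $C(\desc[T]{u})$ and $C(\desc[T]{u},\desc[T]{v})$ at $v$ for every $u \in V \setminus \desc[T]{v}$ in $O(n)$ rounds. Plugging these three values into the identity above lets $v$ output $|\partial(\desc[T]{u} \oplus \desc[T]{v})|$ by a purely local computation, with no additional communication.

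The overall round complexity is dominated by the $O(n)$ bound from Claim~\ref{lemma:atomic_value_type_2}, which matches what the statement asks for. There is no substantive obstacle in this proof: all of the genuine distributed work (the convergecasts for $C(\desc[T]{\cdot})$ and the broadcasts used to pipeline the $C(\desc[T]{u},\desc[T]{v})$ values) has already been done inside the previous two claims, and the only care required here is the laminar-family case analysis that designates the correct endpoint $v$ to be the one for which the needed data is already available.
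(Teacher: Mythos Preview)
Your proposal is correct and follows essentially the same argument as the paper: invoke the identity from \cref{lem:imp_characteristics}, use the laminar structure of descendant sets to do the two-case analysis (disjoint versus nested), and then appeal to \cref{lemma:atomic_value_type_1} and \cref{lemma:atomic_value_type_2} so that the deeper node (or either node, in the disjoint case) has all three quantities locally. The paper's proof is organized identically, with the same WLOG relabeling in the nested case.
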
\longOnly{
\begin{proof}
	Firstly, for any two nodes $u$ and $v$, by \cref{lemma:atomic_value_type_2}, in $O(n)$ rounds both of them know $C(\desc[T]{v})$ and  $C(\desc[T]{u})$. In this proof we will show that at least one of $v$ or $u$ will be able to find $C(\desc[T]{u},\desc[T]{v})$ in $O(n)$ rounds using \cref{lemma:atomic_value_type_2}. Thus the same node can also find $|\partial(\desc[T]{u} \oplus \desc[T]{v})|$ by \cref{lem:imp_characteristics}(1).  
	
	Here again we will consider the two cases illustrated in \cref{figure:two_cases_section_6} that is either $\desc[T]{v} \cap \desc[T]{u} = \emptyset$ or  $\desc[T]{v} \cap \desc[T]{u} \neq \emptyset$. Firstly, lets consider  $\desc[T]{v} \cap \desc[T]{u} = \emptyset$. In this case, $u \in V \setminus \desc[T]{v}$ and $v \in V \setminus \desc[T]{u}$, thus both of them know $C(\desc[T]{u},\desc[T]{v})$ by \cref{lemma:atomic_value_type_2}. Secondly, when $\desc[T]{v} \cap \desc[T]{u} \neq \emptyset$, here WLOG consider that $v \in \desc[T]{u}$. Hence, $u \in V \setminus \desc[T]{v}$. Again from  \cref{lemma:atomic_value_type_2}, node $v$ can find $C(\desc[T]{u},\desc[T]{v})$.  
	\end{proof}}
From \cref{obs:ancestor_list_can_be_found,lemma:atomic_value_type_1,lemma:atomic_value_type_2,claim:tree_T_delta_u_plus_v} we get the following lemma.
\begin{lemma}
	Let $u,v \in V$  be two nodes, let $T$ be any spanning tree $G$, if any node $x \in V$ knows $\anc_{T}(x)$ and $\anc_{T}(y)$ then it can find the edges incident to it which are part of the cut $\partial(\desc[T]{u}\oplus \desc[T]{v})$ in $O(D)$ rounds.
	\label{lemma:edges_can_be_found}
\end{lemma}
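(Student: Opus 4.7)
My plan is to reduce the problem to a simple local check after a single short broadcast. The key observation is the ancestor characterization of descendants: for any vertex $w$, we have $w \in \desc[T]{u}$ if and only if $u \in \anc_T(w)$, and similarly for $\desc[T]{v}$. Consequently, $w \in \desc[T]{u} \oplus \desc[T]{v}$ if and only if exactly one of $u$ and $v$ appears in $\anc_T(w)$, i.e.\ $\left|\{u,v\} \cap \anc_T(w)\right| = 1$. An edge $(x,y) \in E$ therefore belongs to the cut $\partial(\desc[T]{u}\oplus\desc[T]{v})$ iff the indicator predicate ``exactly one of $u,v$ lies in $\anc_T(\cdot)$'' takes opposite truth values on $x$ and $y$.

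With this characterization in hand, the algorithm has three steps. First, broadcast the identifiers $\id(u)$ and $\id(v)$ to every node in the network along a BFS tree of $G$; this takes $O(D)$ rounds. Second, each node $x$ locally evaluates the predicate $[\,|\{u,v\}\cap \anc_T(x)| = 1\,]$ using the list $\anc_T(x)$ it already holds; this takes $O(1)$ local work. Third, for each incident edge $(x,y)$, node $x$ evaluates the same predicate for $y$ using the list $\anc_T(y)$ it is assumed to know, and marks $(x,y)$ as being in the cut iff the two predicate values differ; again $O(1)$ local work per edge.

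Correctness follows immediately from the characterization above: the edges marked by $x$ are exactly those incident edges with one endpoint in $\desc[T]{u}\oplus\desc[T]{v}$ and the other outside, which is precisely the definition of $\partial(\desc[T]{u}\oplus \desc[T]{v})$. The total round complexity is dominated by the broadcast step, giving $O(D)$ rounds as claimed.

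The only potentially nontrivial part is justifying that $O(D)$ suffices to disseminate $\id(u)$ and $\id(v)$; this is standard in \CONGEST via a BFS tree constructed once in $O(D)$ rounds, after which $O(\log n)$-sized messages propagate in $O(D)$ rounds. Everything else is a local computation that uses only the ancestor information assumed in the lemma's hypothesis, so there is no real algorithmic obstacle beyond this broadcast.
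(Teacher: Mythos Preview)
Your proof is correct and follows the same overall strategy as the paper: broadcast the identifiers of $u$ and $v$ in $O(D)$ rounds, then let each node decide locally, using its own and its neighbors' ancestor lists, which incident edges cross the cut. The paper's proof additionally has $u$ and $v$ broadcast whether the other lies in their ancestor set, and then splits into two cases (disjoint subtrees versus nested subtrees), applying a case-specific membership rule in each. Your single predicate $|\{u,v\}\cap\anc_T(w)|=1$ handles both cases uniformly, so your argument is slightly cleaner while being otherwise equivalent in content and cost.
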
\longOnly{
\begin{proof}
Some leader node $z$ broadcasts a message to every node to find the edges incident to them which are part of the cut $\partial(\desc[T]{u}\oplus \desc[T]{v})$. Node $u,v$ also receives such a message. On receiving such a message, node $v$ broadcast to all the nodes if $u$ is in $\anc_T(v)$ set. Similarly, $u$ broadcasts if $v$ is in $\anc_T(u)$. We consider two cases illustrated in \cref{figure:two_cases_section_6}. Firstly let $\desc[T]{v} \cap \desc[T]{u} = \emptyset$. Here the cut is given by $(\desc[T]{v} \cup \desc[T]{u}, V \setminus \desc[T]{v} \cup \desc[T]{u})$. Here both $u$ and $v$ broadcasts that the other node is not in its ancestor set. For any two nodes $x,y$ such that $(x,y) \in E$, then the edge $(x,y) \in \partial(\desc[T]{u}\oplus \desc[T]{v})$ iff one of $u$ or $v$ is an ancestor of $x$ and not of $y$ or vice versa. This can be determined by both $x,y$ using $\anc_T(x)$ and $\anc_T(y)$.

In the second case when $\desc[T]{v} \cap \desc[T]{u} \neq \emptyset$, then WLOG let $\desc[T]{u} \subset \desc[T]{v}$. Here node $u$ broadcasts that $v$ is in $\anc_T(u)$. And $v$ broadcasts that $u$ is not in $\anc_T(v)$. The cut, in this case, is given by $(\desc[T]{v} \setminus \desc[T]{u}, V\setminus(\desc[T]{v} \setminus \desc[T]{u}) )$. For any two nodes $x,y$ such that $(x,y) \in E$, then the edge $(x,y) \in \partial(\desc[T]{u}\oplus \desc[T]{v})$ iff exactly one of $x$ or $y$ has $v$ in its ancestor set and not $u$. This also be determined by both $x,y$ using $\anc_T(x)$ and $\anc_T(y)$.
\end{proof}}
\begin{proof}[Proof of \cref{thm:min_cut_o_n_general_graph}]
	Firstly, from \cref{lem:set_of_spanning_trees} we can find a set of spanning trees $\tree$ of size $O(\log^{2.2} n)$ such that at least one of them 2-respects a min-cut. Having received a set of spanning trees, our task is to find the size of minimum cut in each one of them which shares at most two edges with the tree. Let us fix a tree $T$ in the set of spanning trees $\tree$. Our goal is to find the size of the minimum cut which shares two edges with the tree $T$. Firstly, by \cref{lem:imp_characteristics}(2),  we know that for any two nodes $u,v \in V\setminus \curly{r_T}$, $E[T] \cap \partial(\desc[T]{v}\oplus\desc[T]{u}) =  \curly{\edge[T]{v},\edge[T]{u}}$. Hence the value of the minimum cut which 2-respects the tree $T$ is $\min_{\forall u,v} |\partial(\desc[T]{u} \oplus \desc[T]{v})|$. From \cref{claim:tree_T_delta_u_plus_v}, for a fixed rooted spanning tree $T$, in $O(n)$ rounds for any two nodes $u,v$ at least one of them know $|\partial(\desc[T]{u} \oplus \desc[T]{v})|$. Hence, $\min_{\forall u,v} |\partial(\desc[T]{u} \oplus \desc[T]{v})|$ can be found in $O(D)$ rounds. We do this across all the trees. Thus we can find the size of the minimum among all cuts which 2-respects the trees in the set $\tree$ in $O(n \log^{2.2} n)$ rounds. Also, by \cref{lemma:edges_can_be_found}, all the edges incident to any node $x$ in this min-cut can be found locally by node $x$.
\end{proof}

\subsection{Min-Cut of the Contracted Graph from \cref{thm:main_contracted_graph_can_be_found}}

In this section, our goal is to find the min-cut in the contracted graph $\overline{G} = \MSGC(G,\epsilon)$ given by \cref{thm:main_contracted_graph_can_be_found}. We will follow the same idea as in the previous subsection; that is use \cref{lem:set_of_spanning_trees} to find a set of spanning trees such that a min-cut shares only two edges in one of them. Further we will give a lemma similar to \cref{fact:congest_general} for contracted graph $\overline{G}$ and spanning trees $\overline{T}$. 
 
 \begin{thm}
 	For any  $\epsilon \in (0,\frac{1}{2})$ such that $\delta = n^{2\epsilon}$, let $\overline{G} = \MSGC(G,\epsilon)$ as given by  \cref{thm:main_contracted_graph_can_be_found}. Then w.h.p. $\tilde{O}(D + {n}^{1-\epsilon/22})$ rounds (i) every node knows the edge connectivity $\lambda$ of $\overline{G}$, and (ii) there is a cut $C$ of size $\lambda$ such that every node knows which of its incident edges are in $C$.
 	\label{thm:min-cut-in-contracted-graph}
 \end{thm}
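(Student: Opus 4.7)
The plan is to mimic the general-graph algorithm of Theorem~\ref{thm:min_cut_o_n_general_graph} (Subsection~\ref{sub_sec:min-cut-contracted-graph:warmup1}), but run it on the contracted graph $\overline{G} = \MSGC(G,\epsilon)$ rather than on $G$ itself. Since $\overline{G}$ has $\tilde{O}(n^{1-\epsilon/22})$ super-nodes by \cref{lemma:regular_nodes_are_bounded} and \cref{lemma:number_of_clusters}, the same blueprint should yield a running time proportional to the number of super-nodes, matching the target $\tilde{O}(D + n^{1-\epsilon/22})$.

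First I would invoke tree packing (\cref{lem:set_of_spanning_trees}) on $\overline{G}$ to obtain a family $\mathcal{T}$ of $O(\log^{2.2} n)$ spanning trees such that w.h.p. at least one $T \in \mathcal{T}$ is $2$-respected by some min-cut of $\overline{G}$. Then, for each $T \in \mathcal{T}$, I would execute the analog of \cref{lemma:atomic_value_type_1,lemma:atomic_value_type_2,claim:tree_T_delta_u_plus_v}: root $T$, compute $C(\desc[T]{v})$ for each super-node $v$, compute cross-terms $C(\desc[T]{u},\desc[T]{v})$ for pairs with $u \notin \desc[T]{v}$, and minimize $|\partial(\desc[T]{u}\oplus\desc[T]{v})|$ over all pairs and all trees. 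This requires realizing the three tree primitives of \cref{fact:congest_general} (downcast, bottom-up sum, per-child sum) in $\tilde{O}(D + n^{1-\epsilon/22})$ rounds on $\overline{G}$ rather than in $O(\mathrm{Depth}(T))$ rounds as in Subsection~\ref{sub_sec:min-cut-contracted-graph:warmup1}.

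The main obstacle is that super-nodes are not atomic in the underlying network: a super-node corresponding to $\core(C)$ is a contracted cluster whose internal diameter can be as large as $O(n^{\epsilon/11}\log n)$. To keep each primitive within budget I would use a two-level scheme: an intra-cluster BFS tree inside each cluster (convergecast/broadcast within a cluster), aggregated by the global BFS tree of $G$ (depth $D$) carrying super-node-level messages. Pipelining the $\tilde{O}(n^{1-\epsilon/22})$ super-node messages along the global BFS tree costs $\tilde{O}(D + n^{1-\epsilon/22})$ rounds, while the intra-cluster portions pipeline together within the total-diameter bound $\sum_{C}\mathrm{diam}(G[C]) = O(n^{1-\epsilon/20})$ guaranteed by \cref{lemma:diameter_cluster_sublinear}.

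The trickiest step will be the contracted analog of \cref{obs:ancestor_list_can_be_found,lemma:atomic_values_can_be_found}: every vertex of $G$ must know $\ancestor[T]{u}$ for each super-node $u$ reached by an incident inter-cluster edge, so that it can locally classify its incident edges into the appropriate $C(\cdot)$ and $C(\cdot,\cdot)$ terms. I would disseminate each super-node's ancestor signature through the global BFS tree in pipelined fashion and then propagate it into each cluster along the intra-cluster BFS; the total cost is $\tilde{O}(D + n^{1-\epsilon/22})$ by the same pipelining argument. Once these ancestor lists are in place, the aggregation formulas of \cref{obs:required_quantities_are_sum} compute $C(\desc[T]{v})$ and the cross-terms along $T$, and a final scan over pairs and trees yields the min-cut; each node recovers its incident cut-edges by the contracted analog of \cref{lemma:edges_can_be_found}.
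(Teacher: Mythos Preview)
Your overall plan matches the paper's: lift the $\tilde O(n)$ algorithm to $\overline G$ via tree packing (\cref{lem:set_of_spanning_trees_contracted_graph}) and then realize the three tree primitives of \cref{fact:congest_general} for a spanning tree $\overline T$ of $\overline G$. You are also right that the intra-cluster work must use a BFS tree of each full cluster $C$ (not of $\core(C)$); the paper stresses this point explicitly because the subgraph induced by a core need not even be connected.

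The gap is in how you route the inter-super-node part. You propose to carry ``super-node-level messages'' along a global BFS tree of $G$ of depth $D$ and pipeline. That works for pure broadcasts (e.g., disseminating the $\bar n=O(n^{1-\epsilon/22})$ parent pointers of $\overline T$), but it does \emph{not} realize primitives (2) and (3): those are convergecasts that must follow the parent/child structure of $\overline T$. Concretely, \cref{lemma:atomic_value_type_2} needs, for every super-node $u$, the tree sum $C(\desc[\overline T]{u},\desc[\overline T]{v})=\sum_{x\in\desc[\overline T]{v}} C(\{x\},\desc[\overline T]{u})$ at every $v$; doing this by global broadcast would require shipping $\Theta(\bar n^2)$ scalars, which is already $\omega(n)$. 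A depth-$D$ BFS tree of $G$ simply has no relation to the ancestor structure of $\overline T$, so there is no way to pipeline these $\bar n$ convergecasts along it in $\tilde O(D+\bar n)$ rounds.

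What the paper does instead is define a single communication structure in $G$ that \emph{is} $\overline T$: the $\mapping(\overline T,\overline G)$ consisting of the inter-cluster edges of $\overline T$ (which are genuine edges of $G$) together with a BFS tree $\overline T[C]$ inside each cluster, rooted at the endpoint $r_C$ of the tree edge to $\parent[\overline T]{s(C)}$. The key fact (\cref{claim:small_path_Size}) is that every root-to-node path in this mapping has length $O(n^{1-\epsilon/22})$ in $G$, because it uses at most $\bar n$ inter-cluster edges and at most $\sum_C \mathrm{diam}(G[C])=O(n^{1-\epsilon/20})$ intra-cluster steps. Running the primitives of \cref{fact:congest_general} directly along this mapping then gives \cref{lemma:properties_of_mapping}, from which the analogues of \cref{obs:ancestor_list_can_be_found,lemma:atomic_value_type_1,lemma:atomic_value_type_2,claim:tree_T_delta_u_plus_v} follow verbatim with $n$ replaced by $n^{1-\epsilon/22}$. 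The global BFS tree of $G$ plays no role in the aggregation; the $D$ in the final bound comes only from building the MSTs in \cref{lem:set_of_spanning_trees_contracted_graph} and from the final broadcast of the minimum.
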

 
Let $\overline{V}$ be the vertex set in this contracted graph $\overline{G}$ and $\overline{E}$ be the remaining edge set after contraction. Below we give a lemma similar to \cref{lem:set_of_spanning_trees} which gives us a set of spanning tree for the graph $\overline G$.
 
 \begin{lem}
 		Given a contracted graph $\overline G$, in total of $\tilde O(\sqrt{n} +D)$ rounds, we can find a set of spanning trees $\overline\tree = \curly{\overline T_1,\ldots,\overline T_{{\Theta}(\log^{2.2} n)}}$ such that w.h.p there exists at least one spanning tree $\overline T \in \overline \tree$, which 2-respects a min-cut of $\overline G$.
 		\label{lem:set_of_spanning_trees_contracted_graph}
 \end{lem}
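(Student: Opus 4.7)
}
The plan is to mirror the proof of \Cref{lem:set_of_spanning_trees} but apply Karger's tree packing on $\overline{G}$ and then reduce the distributed MST computations on $\overline{G}$ to MST computations on the original graph $G$, so that \Cref{thm:c-slot-algorithm} can be invoked directly. Concretely, I would (i) use the Karger--Thorup tree packing argument to certify that $\Theta(\log^{2.2}n)$ independently weighted MSTs of $\overline{G}$ suffice, and (ii) show how each such MST of $\overline{G}$ is obtained as the restriction of an MST of $G$ under a carefully chosen weight function, which lets the whole batch be produced by a single $c$-slot MST invocation in $\tilde O(\sqrt{n}+D)$ rounds.

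For the correctness of tree packing, I would apply the argument of~\cite{karger1999random,Thorup07} directly to $\overline{G}$: since $\overline{G}$ has at most $n$ vertices, assigning each inter-cluster edge an i.i.d.\ random integer weight and taking the resulting MST yields, with constant probability, a spanning tree of $\overline{G}$ that $2$-respects some min-cut of $\overline{G}$. Taking $c=\Theta(\log^{2.2}n)$ independent repetitions boosts this to w.h.p.\ — identical to the argument used inside \Cref{lem:set_of_spanning_trees}.

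The main technical step is the reduction to MST on $G$. I would define $c$ weight functions $W_1,\dots,W_c$ on $E(G)$ as follows: for every edge $e$ with both endpoints in the same $\core(C)$ for some cluster $C\in\mathcal{C}$, set $W_i(e)=0$ for all $i$; for every remaining edge $e$ (i.e.\ an edge of $\overline{E}$), independently draw a random weight in the Karger--Thorup range and set $W_i(e)$ accordingly. By Kruskal's rule, any MST of $G$ under $W_i$ first picks a spanning tree of each core using the zero-weight intra-core edges, and then extends it using the cheapest available edges of $\overline{E}$; the cut property implies that the restriction of this MST to $\overline{E}$ is exactly an MST of $\overline{G}$ under the random weights. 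Generating each $W_i(e)$ consistently at both endpoints of $e$ takes $O(c)=O(\polylog n)$ rounds (the endpoint of larger $\id$ samples and sends the value). Invoking \Cref{thm:c-slot-algorithm} with these $c=\Theta(\log^{2.2}n)$ weight functions produces all MSTs in $\tilde O(\sqrt{nc}+D)=\tilde O(\sqrt{n}+D)$ rounds, and each node of $G$ locally knows which incident edges are in which MST. Discarding the intra-core edges from each computed MST, each $\overline{T_i}$ is precisely the MST of $\overline{G}$ we wanted; each vertex of $\overline{G}$ (a core or a regular node) knows its incident tree edges because the corresponding endpoints in $G$ already know their incident MST edges.

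The main obstacle is the reduction step: one must ensure that treating intra-core edges as weight-$0$ never distorts the relative order among inter-cluster edges in Kruskal's execution, and that the restricted tree is indeed spanning in $\overline{G}$. The first is immediate from the choice of $0<$ min inter-cluster weight, and the second follows because $G$ is connected so every pair of cores must eventually be merged using an edge of $\overline{E}$. Given this, all remaining ingredients are either a direct invocation of \Cref{thm:c-slot-algorithm} or the standard Karger--Thorup probabilistic analysis.
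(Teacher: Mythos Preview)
Your reduction idea—simulate an MST of $\overline G$ by an MST of $G$ after reweighting the intra-core edges, then invoke the argument behind \cref{lem:set_of_spanning_trees}—is exactly what the paper does. The only cosmetic difference is the weight you assign to intra-core edges: you set them to $0$ (so Kruskal merges each core first and then behaves like Kruskal on $\overline G$), whereas the paper sets them to $\infty$ (so Kruskal avoids them). Either convention yields the desired spanning tree of $\overline G$ once the inter-cluster edges are extracted, and your weight-$0$ choice is in fact the cleaner one to argue about via the Kruskal cut rule.

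There is, however, a genuine gap in how you describe the tree-packing step. You write that ``assigning each inter-cluster edge an i.i.d.\ random integer weight and taking the resulting MST yields, with constant probability, a spanning tree that $2$-respects some min-cut,'' and then amplify by $\Theta(\log^{2.2}n)$ \emph{independent} repetitions. That is not what \cite{karger1999random,Thorup07}—and hence \cref{lem:set_of_spanning_trees}—actually prove. The argument there is: (i) subsample a skeleton $H=G_p$ so that $\lambda_H=\Theta(\log^{1.1}n)$; (ii) build a \emph{greedy} tree packing of $H$ with $\Theta(\log^{2.2}n)$ trees, where the $i$-th tree is an MST with respect to the load induced by the first $i-1$ trees. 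A single MST under i.i.d.\ random weights is not known to $2$-respect the min-cut with constant probability, and your ``independent repetitions'' plus the single $c$-slot MST call are tailored to that incorrect picture: the weight function for tree $i$ depends on trees $1,\ldots,i-1$, so the MSTs cannot be computed in one parallel batch. The fix is easy and does not hurt the round bound: run the $\Theta(\log^{2.2}n)$ MST computations \emph{sequentially} (each with intra-core weights forced to $0$ and inter-cluster weights equal to the current load in the sampled skeleton), which still costs $\tilde O(\sqrt{n}+D)$ in total. With that correction your proof lines up with the paper's.
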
\longOnly{
 \begin{proof}
 	The proof follows from \cref{lem:set_of_spanning_trees} where we constructed $O(\log^{2.2} n)$ MSTs. Here we are only left to show, how an MST can be constructed in the contracted graph. Recall that by \cref{thm:main_contracted_graph_can_be_found}, each vertex $v$ knows if it is in $\core(C)$ of some cluster $C \in \mathcal{C}$ of $\MSGC(G,\epsilon)$. In $O(1)$ rounds, it can find all its neighbours $u$ which are in the same core by finding their $\clustid(u)$. The weights of edges going between vertices of same core are locally set to $\infty$. And then to construct the tree packing, we just require to construct $\Theta(\log^{2.2} n)$ MSTs. This takes $\tilde O(\sqrt{n} +D)$ rounds.
 \end{proof}}
Having received the set of spanning trees, we explain how to find atomic values similar to \cref{lemma:atomic_value_type_2}. One of the key difference here in comparison to  the previous section is the depth of any tree $\overline T \in \overline \tree$. The depth of any such tree w.r.t the contracted graph $\overline G$ is $n^{1-\frac{\epsilon}{22}}$. But for computing the properties (here the size of all induced cuts which share two tree edges) in $O(n^{1-\frac{\epsilon}{22}})$ rounds, we would require the depth to be $O(n^{1-\frac{\epsilon}{22}})$ when the tree is considered w.r.t. the original graph $G$. Here, we map all $T \in \overline{T}$ to a spanning tree of $\overline{G}$, this will enable for efficiently calculating the properties which involve the whole graph. 

A trivial mapping for any spanning tree $\overline T$ of $\overline{G}$ to a spanning tree $T$ of $G$ would be to construct a smallest depth sub-tree in each of the induced subgraph of a core of a cluster. But unfortunately, this does not work because the guarantees we have from \cref{thm:main_contracted_graph_can_be_found} are in terms of the diamater of the clusters and not specifically about the core which might be linear in $n$, even worse, \emph{the subgraph induced by core of a cluster may not even be a connected component}. Thus, here instead of constructing a BFS tree in each subgraph induced by the core of a cluster, we will construct BFS tree in the subgraph induced by the whole cluster. We define this mapping more precisely as below. 

Let $\overline T$ be a rooted spanning tree of $\overline G$. Let $C$ be a non-trivial cluster of $\overline G$ such that $C = S \cup R$,
 where $S= \core(C)$ is the set of vertices corresponding to \emph{core} and  $R = \regnodes(C)$ is the set of regular nodes of cluster $C$. Let $s(C)$ be a vertex of $\overline{G}$ formed by collapsing vertices in $\core(C)$.
 
Now, for every spanning tree $\overline{T}$ of $\overline{G}$, we will define a way to construct a BFS tree in each cluster. 
For every cluster $C$, we assign a leader node $L_{\overline T}(C)$. If for some cluster $C$, $s(C)$ is a root of $\overline T$ then define
  $L_{\overline{T}}(C)$ as any arbitrary node from $\core(C)$. 
Otherwise, if $s(C)$ is any other node of $\overline T$ then define  
   $L_{\overline{T}}(C)$ as a node $r_C \in \core(C)$ such that $(r_C,\parent[\overline T]{s(C)})$ is a tree edge in $\overline{T}$. Note that there will be an unique $r_C$ beecause $s(C)$ has only one parent in the spanning tree $\overline{T}$.
Further, define $\overline{T}[C]$ as a BFS tree in the induced subgraph $G[C]$ rooted at $\parent[\overline T]{r_C}$. 
Now, define the mapping as a multi-set of edges which is the union of these constructed BFS tree edges, preserving the multiplicity: $$\mapping(\overline T,\overline G) \triangleq \overline T \bigcup_{C\text{ is cluster of }\overline G}
      \overline{T}[C]$$
We give the properties of the $\mapping$ in the following lemma.
\begin{lemma}
	Let $\overline T$ be a rooted spanning tree of $\overline G = \MSGC(G,\epsilon)$ received from \cref{thm:main_contracted_graph_can_be_found}. Then  $\operatorname{mapping}(\overline T,\overline G)$ has the following properties
	\begin{enumerate}
		\item If every node $s$ of $\overline G$ (a core node or regular node) has a message $\msg_s$ to be sent to each and every node in $\desc[\overline T]{s}$, then to deliver all such messages it takes $O({n}^{1-\frac{\epsilon}{22}})$ rounds.
		 \label{point:downcast_messages}
		 \item Let $f:\overline{V} \rightarrow \curly{0,1,\ldots,\poly(n)}$ and \sloppy$g:\overline{V}^2\rightarrow \curly{0,1,\ldots,\poly(n)}$ be some functions. Let $f(s) = \sum_{x\in \desc[\overline{T}]{s}}g(s,x)$ and $g(s,x)$ is precomputed by every node $x$ for all $s \in \anc_{\overline T}(x)$. Then in $O(n^{1-\frac{\epsilon}{22}})$ rounds $f$ can be computed by all nodes $s$.
		 
		 \item  Let $f,g: \overline{V} \rightarrow \curly{0,1,\ldots,\poly(n)}$ be some functions. For every node $s$ of $\overline{G}$, let $f(s) = g(s) + \sum_{c \in \child[\overline T]{s}} f(c)$. If $g(s)$ is precomputed by every node $s$, then $f$ can be computed in $O(n^{1-\frac{\epsilon}{22}})$ rounds by every node $s$ of $\overline G$. Further, if we have $k$ such functions  then every node $s$ of $\overline{G}$ can compute  them in $O(n^{1-\frac{\epsilon}{22}} + k)$.

		  \label{point:downcast_}
	\end{enumerate}
\label{lemma:properties_of_mapping}
\end{lemma}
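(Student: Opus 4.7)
The plan is to first bound the depth of the mapping tree as a rooted tree in $G$, and then derive each of the three properties as a pipelined broadcast or convergecast procedure on that tree.

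\textbf{Step 1: Bound the depth of $\mapping(\overline T, \overline G)$ in $G$.} I would argue that any root-to-leaf path in $\mapping$ decomposes into intra-cluster segments drawn from the BFS trees $\overline T[C]$ interleaved with inter-cluster edges drawn from $\overline T$. The intra-cluster segment inside a cluster $C$ has at most $diam(G[C])$ edges, while there are at most $|\overline V|-1$ inter-cluster edges in total. Hence the depth $D_{\mapping}$ of $\mapping$ in $G$ satisfies
\[
D_{\mapping} \;\le\; |\overline V| \;+\; \sum_{C\in \mathcal C} diam(G[C]),
\]
which by \cref{thm:main_contracted_graph_can_be_found} (together with \cref{lemma:regular_nodes_are_bounded} and the $O(n^{1-\Theta(\epsilon)})$ bound on the number of non-trivial clusters) is $O(n^{1-\epsilon/22})$.

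\textbf{Step 2: Derive the three properties by pipelining.} Each $s \in \overline V$ has a canonical representative in $G$: the leader $r_C = L_{\overline T}(C)$ if $s=s(C)$, and $s$ itself otherwise. Under this identification, the $\overline T$-descendant set of $s$ is naturally contained in the $\mapping$-subtree rooted at that representative. For Property~1 I would run a pipelined downcast on $\mapping$: each node $s$ injects its message at its representative and messages flow downward. Any edge of $\mapping$ only ever transmits the messages of those $\overline T$-ancestors of (the node represented by) its lower endpoint, which is at most $|\overline V|$ messages, so standard pipelining yields $O(D_{\mapping}+|\overline V|) = O(n^{1-\epsilon/22})$ rounds. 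For Property~2 I would run a pipelined convergecast: each vertex maintains, for each of its $\overline T$-ancestors $s$, a running partial sum of $g(s,\cdot)$ over its $\mapping$-descendants, and these partial sums travel upward one per round per edge; as before, at most $|\overline V|$ values ever cross any edge, giving the same $O(n^{1-\epsilon/22})$ bound. For Property~3 a single $f$ is computed by an ordinary convergecast on $\mapping$ in $O(D_{\mapping})$ rounds, and $k$ independent such sweeps pipeline into $O(D_{\mapping}+k)$ rounds.

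\textbf{Main obstacle.} The hard part will be making the correspondence between $\overline T$ and $\mapping$ precise enough that these pipelined procedures are provably correct. Specifically, when $s = s(C)$ is a core vertex, $s$ is realized by every vertex of $\core(C)$ in $G$, so one must be careful that (i) the value ``at $s$'' is held only by the single representative $r_C$, (ii) partial sums over $\desc[\overline T]{s}$ correctly aggregate contributions that arrive through the BFS tree $\overline T[C]$ from sub-clusters hanging below non-leader core vertices of $C$, and (iii) the number of distinct messages or partial sums crossing any edge of $\mapping$ remains bounded by $|\overline V|$ so that pipelining actually achieves the claimed bound. Once this bookkeeping is cleanly set up, all three statements follow from standard distributed tree routines applied to $\mapping$.
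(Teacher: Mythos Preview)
Your proposal is correct and follows essentially the same route as the paper: bound the root-to-node path length in $\mapping(\overline T,\overline G)$ by $|\overline V|+\sum_{C}diam(G[C])=O(n^{1-\epsilon/22})$ and then reduce Properties~1--3 to the pipelined downcast/convergecast routines already proved for ordinary spanning trees in \cref{fact:congest_general}. The one technical point the paper makes explicit, which you only gesture at in your ``Main obstacle,'' is that $\mapping(\overline T,\overline G)$ is a \emph{multiset} of edges rather than a tree in $G$: a $G$-edge incident to a regular node of a cluster $C$ can belong simultaneously to $\overline T$ and to the intra-cluster BFS tree $\overline T[C]$, so in a single round that edge may be asked to carry both a $\overline T$-message and a $\overline T[C]$-message; the paper resolves this by tagging each message with which of the two roles it is playing, incurring only a factor of~$2$.
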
 
\longOnly{The proof of \cref{lemma:properties_of_mapping} is similar to \cref{fact:congest_general} and is given in \cref{appendix:proof_section_6}.} Here we use the $\mapping(\overline{T},\overline{G})$ for message passing. The key idea behind this is the fact that there is a path between any two nodes in $G$ of size $O(n^{1-\frac{\epsilon}{22}})$ using the edges of $\mapping(\overline{T},\overline{G})$ and each edge is repeated at most twice in the $\mapping(\overline{T},\overline{G})$ contributed by either one of the BFS tree $\overline{T}[C]$ for some cluster $C$ or by $\overline{T}$ or by both.

Recall that \cref{obs:ancestor_list_can_be_found,lemma:atomic_value_type_1,lemma:atomic_value_type_2,claim:tree_T_delta_u_plus_v} were application of \cref{fact:congest_general} coupled with \cref{obs:required_quantities_are_sum,lemma:atomic_values_can_be_found}. Since, \cref{obs:required_quantities_are_sum,lemma:atomic_values_can_be_found} depend only on hierarchy of nodes established by a spanning tree thus these will be applicable for $\overline{T}$ and $\overline{G}$ as well. Similar to  \cref{fact:congest_general}, for $\overline{G}$ and $\overline{T}$, we have \cref{lemma:properties_of_mapping}.
Thus, \cref{obs:ancestor_list_can_be_found,lemma:atomic_value_type_1,lemma:atomic_value_type_2,claim:tree_T_delta_u_plus_v} can be proved for $\overline{T}$ and $\overline{G}$ as well. This will imply that for any spanning tree $\overline{T}$ and any two vertices $r$ and $s$ of $ \overline{G}$ at least one of them can find $C(\desc[\overline T]{r}),C(\desc[\overline T]{s})$ and $C(\desc[\overline T]{r},\desc[\overline T]{s})$. Hence, using \cref{lem:set_of_spanning_trees_contracted_graph} we can prove \cref{thm:min-cut-in-contracted-graph} similar to  \cref{thm:min_cut_o_n_general_graph}.

%!TEX root = ../main_Min-Cut_latex.tex

\section{Putting Everything Together \label{sec:concl}}
Here, we prove \cref{thm:main}. Let $\delta = n^{2\epsilon}$ for some $\epsilon \in (0,\frac{1}{2})$. We use a combination of \cref{thm:main_contracted_graph_can_be_found,thm:min-cut-in-contracted-graph} which allows us to find the min-cut of $G$ w.h.p., as required by \cref{thm:main} in $\tilde{O}(D + n^{1-\epsilon/44})$ rounds. Call this algorithm $\mathcal{A}$. Also, by \cref{thm:NS14_main}, we know that min-cut can be found in $\tilde{O}((\sqrt{n}+D)\lambda^{4})$ rounds. We use a combination of both these algorithms. Firstly, it is a well know fact that the approximate diameter $D'$ can be estimated in $O(D)$ rounds in \CONGEST model such that $D \leq D' \leq 2D$.  If $D$ is linear in $n$, then we cannot do much and to find the min-cut we require $\tilde{O}(n)$ rounds for instance by using \cref{thm:min_cut_o_n_general_graph}. Suppose that for some $\mu$, $D' \leq n^{1-\mu}$.

Using the above mentioned parameter, the runtime of \cref{thm:NS14_main} is $\tilde{O}((\sqrt{n}+n^{1-\mu})n^{8\epsilon}) = \tilde{O}(n^{\frac{1}{2}+8\epsilon} + n^{1-\mu + 8\epsilon})$ and the runtime of Algorithm $\mathcal{A}$ is $\tilde{O}(n^{1-\mu} + n^{1-\epsilon/44})$. Firstly, note that both $\mu$ and $\epsilon$ can be determined using a distributed algorithm in $O(D)$ rounds. The runtime of \cref{thm:NS14_main} has two components $n^{\frac{1}{2}+8\epsilon}$ and $n^{1-\mu + 8\epsilon}$ such that when $\mu > 1/2$ the former dominates and when $\mu < 1/2$ then the later.  When $\mu > \frac{1}{2}$, then from runtime complexity of both the algorithms, the first term dominates and the break-point on deciding which among the two algorithms occurs at $\epsilon =\frac{22}{353}$, which leads to $n^{1-\frac{1}{706}}$ contribution from this part. When $\mu \leq \frac{1}{2}$, in this case, in both Algorithm $\mathcal{A}$ and [NS14] the first term dominates. The break-point on deciding which among the two algorithms should occurs at $\epsilon = \frac{44\mu}{353}$. Thus this part gives contributes $n^{1-\frac{1}{353}}D^{\frac{1}{353}}$ to the running time. Combining these two, the runtime complexity of  our algorithm is  $O(n^{1-\frac{1}{706}} + n^{1-\frac{1}{353}}D^{\frac{1}{353}})$.
 
\section{Open Problems}\label{sec:open}

An obvious open problem from our work is whether there are sublinear time distributed algorithms for computing the minimum cut for {\em multi-graphs}, where parallel edges allow more communication per round, and ultimately for {\em weighted graph}, where edge weights do not affect communication. Recall that we showed an $\tilde O(n)$ bound for these problems in \Cref{sub_sec:min-cut-contracted-graph:warmup1}. Note that the same questions are open for centralized deterministic algorithms, where we borrow some techniques from \cite{kawarabayashi2015deterministic}. Understanding these questions in one setting might shed some light on the other. 

To answer the above, it might help to understand the two-party communication complexity of the following minimum cut problem: Nodes of a graph $G=(V, E)$ are partition into two sets, denoted by $V_A$ and $V_B$. Let $C=E(V_A, V_B)$. There are two players, Alice and Bob, who know the information about all edges incident to $V_A$ and $V_B$, respectively. Can Alice and Bob compute the value of the minimum cut of $G$ by communicating $\tilde O(n^{1-\epsilon}|C|)$ bits? A negative answer to this question would imply a lower bound in the CONGEST model by a standard technique (e.g. \cite{FrischknechtHW12,AbboudCK16,Censor-HillelKP17,Nanongkai-STOC14}). A positive answer would rule out pretty much the only known technique to prove lower bounds and might lead to a fast algorithm in the CONGEST model, as happened for all-pairs shortest paths \cite{Censor-HillelKP17,BernsteinN19}.

It is also very interesting to show tight bounds for computing the minimum cut on unweighted simple graphs. Since we already achieve sublinear time, past experiences from approximation distributed algorithms suggest that this might be $\tilde \Theta(\sqrt{n}+D)$. An $\tilde O(\sqrt{n}\poly(D))$-time algorithm would be a big step towards this bound. Ruling out such algorithm should be very interesting, since it should imply a bound between $\tilde O(\sqrt{n}+D)$ and $\tilde O(n)$.

A special case that deserves attention is when the graph connectivity is small. For example, is there an algorithm that can check whether an unweighted network has connectivity at most $k$ in $\poly(k,D,\log(n))$ time? A less ambitious goal that is already interesting is to get a  $f(k)\poly(D,\log(n))$-time algorithm, for some function $f$ that is independent of $D$ and $n$ (an algorithm ``parameterized by $k$''). 
%, a bound in the same style as the those from parameterized algorithms. 
%
% Can we avoid the polynomial dependency on $n$? It will be very interesting to, e.g., design a distributed algorithm that can check whether an unweighted network has connectivity at most $k$ in $\poly(k,D,\log(n))$ time. 
%
Bounds in these forms are currently known only for $k\leq 2$ \cite{PritchardT11}.\footnote{Update: We recently learned that such bound can be essentially extended to $k=O(1)$ in the sense that there is a $\poly(D)$-time algorithm \cite{Parter19-smallcut}. We thank Merav Parter for this information.}

As noted earlier, this paper is part of an effort to understand {\em exact} distributed graph algorithms. So far, not many problems admit tight bounds when it comes to exact solutions.  (Minimum spanning tree \cite{GarayKP98,KuttenP98} and all-pairs shortest paths \cite{BernsteinN19} are among a few that we are aware of.) Many problems are yet to be explored, e.g. single-source shortest paths \cite{ForsterN18}, maximum weight/cardinality matching \cite{AhmadiKO18-matching}, st-cut/flow \cite{GhaffariKKLP15}, vertex connectivity \cite{Censor-HillelGK14-decomposition}, densest subgraph \cite{DasSarmaLNT12}, and betweenness and closeness centralities \cite{HoangPDGYPR19}.

A more general question that was raised recently \cite{Censor-HillelKP17} is to classify complexities of global problems in the CONGEST model. Tight bounds witnessed so far are in the form of either $\tilde \Theta(D)$, $\tilde \Theta(\sqrt{n}+D)$, $\tilde \Theta(n)$, or $\tilde \Theta(n^2)$. Are there (preferably natural) graph problems with complexity in-between (e.g. $\tilde \Theta(n^{1/2+\epsilon}+D)$ or $\tilde \Theta(n^{1+\epsilon})$ for some constant $\epsilon>0$)? A bound in the form $\tilde \Theta(n^{1/2}D^\epsilon+D)$ will be also interesting, and we suspect that it might be achievable when the two-party communication rounds are considered (as in \cite{ElkinKNP14}). 

%There is a possibility to prove a lower bound in the form $\tilde \Theta(\sqrt{n^{1/2}}D^{\epsilon}+D)$ by involving rounds of communication (see \cite{ElkinKNP14}), but  

% Deterministic

% - Lower bound of n^2 and better 

\section*{Acknowledgement}  
	This project has received funding from the European Research Council (ERC) under the European Union's Horizon 2020 research and innovation programme under grant agreement No 715672. Daga, Nanongkai, and Saranurak  were also supported by the Swedish Research Council (Reg. No. 2015-04659). 
The research leading to these 
results has received funding from the European Research Council under 
the European Union's Seventh Framework Programme (FP/2007-2013) / ERC 
Grant Agreement no. 340506.   	
	
	\bibliographystyle{alpha}
	\bibliography{main_Min-Cut_latex.bib}

	\pagebreak{}
	
	\appendix
	%!TEX root = ../main_Min-Cut_latex.tex
\section{Proof of \cref{thm:random_skeletons_all_cuts}\label{appendix:proof_random_skeletons_all_cuts}}
Firstly, we state a known fact about the number of cuts of a particular size
\begin{lem}[\cite{karger2000minimum}\ {Theorem 3.2}]
Let $G$ be any unweighted and undirected graph. Let $\lambda$ be the size of the minimum cut. Then for any constant $\alpha \geq 1$, the number of cuts in $G$ which are of the size at most $\alpha \lambda$ is ${n^{\ceil{2\alpha}}}$
\label{lem:number_of_cuts}
\end{lem}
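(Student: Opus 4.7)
The plan is to prove this classical bound via Karger's random contraction argument. Consider the random process that, at each step, picks an edge uniformly at random and contracts it (removing self-loops), continuing until exactly $r = \lceil 2\alpha \rceil$ super-nodes remain. I will fix an arbitrary cut $C$ of size $c \le \alpha \lambda$ and lower-bound the probability that $C$ ``survives'' the process, meaning that no edge of $C$ is ever contracted, so that $C$ remains a cut of the final contracted $r$-vertex multigraph.

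First I would bound the per-step survival probability. At step $i$ the working graph has $n - i + 1$ vertices and, since contraction never creates a smaller mincut, its minimum degree is still at least $\lambda$. Hence it has at least $\lambda (n-i+1)/2$ edges, and the probability that the contracted edge lies in $C$ is at most
\[
\frac{c}{\lambda (n-i+1)/2} \;\le\; \frac{2\alpha}{n-i+1}.
\]
Multiplying $1$ minus this quantity over $i = 1, \dots, n - r$ yields a telescoping product
\[
\Pr[C \text{ survives}] \;\ge\; \prod_{j=r+1}^{n} \frac{j - 2\alpha}{j} \;\ge\; \frac{1}{\binom{n}{r}},
\]
where the last inequality uses $r \ge 2\alpha$ and standard cancellation of factorials (this step needs care when $2\alpha$ is not an integer, but monotonicity of each factor in $\alpha$ makes the ceiling choice $r = \lceil 2\alpha \rceil$ sufficient).

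Next, I would observe that when $C$ survives, the contracted multigraph on $r$ vertices inherits $C$ as a genuine $2$-partition of its vertex set, and distinct surviving cuts of $G$ map to distinct $2$-partitions of the contracted graph. Since an $r$-vertex graph has at most $2^{r-1} - 1 < 2^{r}$ nontrivial cuts, and the events ``$C$ survives'' for distinct near-mincuts $C$ correspond to disjoint outcomes of the contracted-graph cut structure, summing the survival lower bound gives
\[
\bigl(\text{\# cuts of size } \le \alpha\lambda\bigr) \cdot \frac{1}{\binom{n}{r}} \;\le\; 2^{r-1}.
\]
Rearranging yields the desired bound $O\bigl(2^{r}\binom{n}{r}\bigr) = O(n^{\lceil 2\alpha \rceil})$, absorbing the constant $2^{r}$ (valid since $\alpha$ is treated as a constant).

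The main obstacle is handling the non-integer case $2\alpha \notin \mathbb{Z}$ cleanly in the telescoping product, and formalizing the ``distinct surviving cuts $\mapsto$ distinct partitions'' step so that the union-bound-style inequality is rigorous rather than a sum of correlated events; the standard trick is to observe that the probabilities being summed are probabilities of disjoint final outcomes of a single random experiment (each run of contraction ends in one specific cut), so the sum of probabilities is at most one times the number of possible final cuts, which is $2^{r-1}$.
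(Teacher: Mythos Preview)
The paper does not give its own proof of this lemma; it is simply cited from Karger~\cite{karger2000minimum}, Theorem~3.2. Your proposal reproduces the standard random-contraction argument from that source, and the overall structure---contract down to $r=\lceil 2\alpha\rceil$ supernodes, lower-bound the survival probability of a fixed near-mincut by $1/\binom{n}{r}$, then count---is correct.

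One point deserves tightening. You write that the survival events for distinct near-mincuts are ``probabilities of disjoint final outcomes of a single random experiment (each run of contraction ends in one specific cut).'' This is not accurate: a single run ends in an $r$-vertex multigraph, which exhibits up to $2^{r-1}-1$ cuts simultaneously, so many near-mincuts can survive the same run and the events are certainly not disjoint. The inequality you state,
\[
(\text{\# cuts of size}\le \alpha\lambda)\cdot\frac{1}{\binom{n}{r}} \;\le\; 2^{r-1},
\]
is nonetheless correct, but the clean justification is via expectation: letting $X_C$ be the indicator that cut $C$ survives, the pointwise bound $\sum_C X_C \le 2^{r-1}-1$ holds in every outcome (distinct surviving cuts induce distinct bipartitions of the $r$ supernodes), and taking expectations together with $\mathbb{E}[X_C]\ge 1/\binom{n}{r}$ gives the result. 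Replace the ``disjoint outcomes'' language with this argument and the proof is complete.
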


\begin{proof}[Proof of \cref{thm:random_skeletons_all_cuts}]
	Let $\alpha\geq 1$ and let us fix an arbitrary cut $C$ of size at most $\alpha k \leq \alpha \lambda$. Using \cref{lem:number_of_cuts} we know that the number of such cuts is $n^{\ceil{2\alpha}}$ Then using Chernoff bound we have
	$$\Pr[\text{\# edges sampled from }C \geq (1+\epsilon)p\alpha k] \leq e^{-\frac{\epsilon^2p\alpha k}{3}} = e^{-\tau \alpha \ln n}$$
	For the last equality in the above equation, we choose $p = 3\tau \frac{\ln n}{\epsilon^2 k} = \theta(\frac{\ln n}{\epsilon^2 k})$. Also from \cref{lem:number_of_cuts}, we know that the number of such cuts is $n^{2\alpha}$. Thus by union bound we have
	\begin{align}
		&\Pr[\text{\# edges sampled from any cut of size at most $\alpha\cdot k$} \geq ((1+\epsilon)p\alpha k)] &\leq n^{2\alpha}\cdot e^{-\tau \alpha \ln n}\nonumber\\
		&=e^{2\alpha \ln n}\cdot e^{-\tau \alpha \ln n}\nonumber\\
		&=e^{-(\tau-2)\alpha\ln n} \nonumber \\
		&=n^{-(\tau-2)\alpha}
		\label{eqn:union_bound_number_of_edges_sampled}
	\end{align}
	Further, the minimum value of $k$ could be $1$ and the size of any cut is at most $n^{2}$. Thus we have at most $n^2$ values of $\alpha$. Hence using the union bound again \cref{eqn:union_bound_number_of_edges_sampled} for all values of alpha we have $$\Pr[\text{\# edges sampled from any cut $C$} \geq ((1+\epsilon)p|C|)] \leq n^{-(\tau-2)-2}$$
	
	Thus by choosing $\tau \geq 3$ this would imply that w.h.p edges sampled from all the cuts $C$ are less than $(1+\epsilon)p|C|$
\end{proof}	
	\section{Omitted proofs from \cref{sec:min-cut-contracted-graph}\label{appendix:proof_section_6}}
\subsection{Proof of \cref{lem:set_of_spanning_trees_contracted_graph}}
Firstly, in this section we prove \cref{lem:set_of_spanning_trees}. To prove this we review the \emph{greedy tree packing} as given in \cite{Thorup07} and mentioned earlier in \cite{plotkin1995fast,young1995randomized,thorup2000dynamic}. 
\begin{defn}
	For any set of spanning tree $\tree$, let the load of an edge $e$ be defined as $L^{\tree}(e) = |\curly{T \mid e \in T}|$. 
	A set of spanning tree $\tree = \curly{T_1,T_2,\ldots,T_k}$ is a greedy tree packing if each $T_i$ is a 
	minimum spanning tree with respect to the load on each edge given by $L^{\tree_{i-1}}(e)$ where 
	$\tree_{i-1} = \curly{T_1,T_2,\ldots,T_{i-1}}$.
\end{defn}
We now state known results about tree packing
\begin{lemma}[\cite{Thorup07}Lemma~6]
Let $C$ be any cut with $<1.1\lambda$ edges and let $\tree$ be a greedy
tree packing with $\omega(\lambda \ln m)$ trees. Then a fraction $1/3$ of the trees in $\tree$ cross
$C$ at most twice.
\label{lemma:fraction_cuts_crosses_greedy_packing}
\end{lemma}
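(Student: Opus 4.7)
My plan is to follow Thorup's charging argument. Define the total cut-crossing count
\[
\ell(C) := \sum_{T \in \tree}|T \cap C| \;=\; \sum_{e \in C} L^{\tree}(e),
\]
establish matching lower and upper bounds on $\ell(C)$, and then extract the fraction-$1/3$ guarantee by a Markov-style argument.

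\emph{Lower bound.} Every spanning tree crosses every cut at least once, so trivially
\[
\ell(C) \ge k, \quad \text{where } k = |\tree| = \omega(\lambda \ln m).
\]

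\emph{Upper bound, the technical heart.} Because each $T_j$ is an MST with respect to the load function $L^{T_{1..j-1}}$, the greedy procedure balances loads across fundamental cuts via the MST cycle/cut property: once an edge of $C$ has accumulated many crossings it becomes ``expensive,'' so future MSTs swap it out for lighter cut-crossing alternatives. Adapting the Plotkin--Shmoys--Tardos multiplicative-weights analysis to this integral MST setting (which is precisely what Thorup does), one shows that for $k = \omega(\lambda \ln m)$ the average per-tree crossing of $C$ converges to its LP value:
\[
\frac{\ell(C)}{k} \;\le\; \frac{|C|}{\tau^*} + o(1),
\]
where $\tau^*$ is the value of the fractional tree-packing LP. Nash--Williams gives $\tau^* \ge \lambda/2$, and combined with $|C| < 1.1\lambda$ this yields $\ell(C)/k \le 2.2 + o(1)$.

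\emph{Markov step.} Let $q$ denote the fraction of trees $T \in \tree$ with $|T \cap C| \ge 3$. Since each of the remaining $(1-q)k$ trees still contributes at least $1$ to $\ell(C)$,
\[
\ell(C) \;\ge\; (1-q)\cdot 1\cdot k + q\cdot 3\cdot k \;=\; (1+2q)\,k.
\]
Combining with the upper bound $\ell(C) \le (2.2+o(1))k$ gives $q \le 0.6+o(1)$, so at least a $0.4-o(1) > 1/3$ fraction of the trees in $\tree$ cross $C$ at most twice, as claimed.

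The principal obstacle is the upper bound in the second step. It requires designing a suitable potential (typically exponential in the edge loads), showing that each MST iteration makes definite progress toward the LP optimum via the MST exchange property, and converting the resulting regret bound into a per-cut load bound; this is the substantive content of Thorup's analysis. Once this concentration is in hand, the lower bound and Markov argument are routine.
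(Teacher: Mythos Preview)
The paper does not prove this lemma; it is quoted verbatim as Lemma~6 of \cite{Thorup07} and used as a black box in the proof of \cref{lem:set_of_spanning_trees}. So there is no ``paper's own proof'' to compare against.

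That said, your sketch is a faithful outline of Thorup's argument: the double-counting identity $\ell(C)=\sum_{e\in C}L^{\tree}(e)$, the Plotkin--Shmoys--Tardos/multiplicative-weights style load-balancing bound showing $\ell(C)/k \le |C|/\tau^* + o(1)$ once $k=\omega(\lambda\ln m)$, the Nash--Williams lower bound $\tau^*\ge\lambda/2$, and the final Markov step are exactly the ingredients. Your arithmetic is correct: $|C|<1.1\lambda$ and $\tau^*\ge\lambda/2$ give $\ell(C)/k\le 2.2+o(1)$, and then $1+2q\le 2.2+o(1)$ yields $q\le 0.6+o(1)$, leaving a fraction at least $0.4-o(1)>1/3$ of trees crossing $C$ at most twice.

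You are also right that the only substantive work is the upper bound on $\ell(C)$, and you have correctly flagged it as a sketch rather than a proof. If you want to make this self-contained you would need to write out the potential-function argument (Thorup's presentation, or the earlier ones in \cite{plotkin1995fast,young1995randomized,thorup2000dynamic}) that turns the MST exchange property into the per-cut load bound. For the purposes of this paper, however, a citation suffices, which is what the authors do.
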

In the above lemma we are required to construct $\omega(\lambda \ln m)$, which could be linear in $n$. This is too large for our purpose. Thus we use the sampling idea from \cite{karger1999random} which will reduce the size of min-cut to $\omega(\ln m)$
\begin{lemma}
Let $p$ be a probability and $H =G_p$ be a random subgraph
of $G$ including each edge independently with probability $p$. Let $\lambda_H$ be the
edge connectivity of $H$. Suppose $p\lambda = \omega(\log n)$. Then, w.h.p., $\lambda_H = (1 \pm o(1))p\lambda$. Moreover, w.h.p., min-cuts of $G$ are near-minimal in H and vice
versa. More precisely, a min-cut $C$ of $G$ has $(1+o(1))\lambda_H$ cross edges in $H$.
Conversely, a min-cut $C_H$ of $H$ has $(1+o(1))\lambda$ cross edges in $G$.
	
\label{lemma:karger_skeleton}
\end{lemma}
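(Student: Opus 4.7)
The plan is to follow Karger's original skeleton-graph analysis, which combines a Chernoff bound for a single cut with Karger's cut-counting inequality (the $n^{2\alpha}$ bound on the number of cuts of $G$ of size $\le\alpha\lambda$, already invoked in the appendix as \cref{lem:number_of_cuts}). The whole statement will drop out of a single ``every cut is sampled to within $(1\pm o(1))$ of its expectation'' event.

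First I would choose a parameter $\epsilon=\epsilon(n)=o(1)$ with $\epsilon^2 p\lambda=\omega(\log n)$; such $\epsilon$ exists because $p\lambda=\omega(\log n)$ by hypothesis. For a single cut $C$ of $G$ of size $c$, the sampled count $|C\cap E(H)|$ is a sum of $c$ independent $\mathrm{Bernoulli}(p)$ variables with mean $pc$, so the standard Chernoff bound gives
$$\Pr\bigl[\,\bigl||C\cap E(H)|-pc\bigr|>\epsilon pc\,\bigr]\;\le\;2\exp\!\bigl(-\Theta(\epsilon^2 pc)\bigr).$$

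Next I would union-bound over all cuts of $G$ by binning them according to size: by \cref{lem:number_of_cuts}, at most $n^{2c/\lambda}$ cuts of $G$ have size at most $c$, so the total failure probability is dominated by a series whose generic term is
$$n^{2c/\lambda}\cdot 2\exp\!\bigl(-\Theta(\epsilon^2 pc)\bigr)\;=\;2\exp\!\Bigl(c\cdot\bigl(2\ln n/\lambda-\Theta(\epsilon^2 p)\bigr)\Bigr).$$
The choice of $\epsilon$ makes the exponent negative by an amount linear in $c$, so the series sums to $n^{-\Omega(1)}$. Hence, w.h.p., every cut $C$ of $G$ satisfies
$$(1-\epsilon)p|C|\;\le\;|C\cap E(H)|\;\le\;(1+\epsilon)p|C|. \qquad(\star)$$

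Finally I would read off each conclusion from $(\star)$. Taking $C=C^*$ to be a min-cut of $G$ shows $\lambda_H\le|C^*\cap E(H)|\le(1+o(1))p\lambda$. Conversely, if $C_H$ is a min-cut of $H$, the corresponding vertex bipartition defines a cut of $G$ of size at least $\lambda$, and by $(\star)$ its sampled size is at least $(1-o(1))p\lambda$; hence $\lambda_H\ge(1-o(1))p\lambda$. Combined, $\lambda_H=(1\pm o(1))p\lambda$. The two ``near-minimality'' statements are immediate corollaries of $(\star)$ and the value of $\lambda_H$: for a min-cut $C$ of $G$ we have $|C\cap E(H)|\le(1+o(1))p\lambda=(1+o(1))\lambda_H$, and for a min-cut $C_H$ of $H$ its $G$-size is at most $\lambda_H/((1-\epsilon)p)=(1+o(1))\lambda$. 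The only real obstacle is arranging the union bound so that the cut-count blowup $n^{2c/\lambda}$ is uniformly dominated by the Chernoff decay; the hypothesis $p\lambda=\omega(\log n)$ is precisely what makes the exponent negative by an amount growing with $n$, yielding geometric convergence in $c$ and polynomially small overall failure probability.
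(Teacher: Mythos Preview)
The paper does not actually prove this lemma: it is stated in the appendix as a known result (``we use the sampling idea from \cite{karger1999random}'') and then immediately applied to prove \cref{lem:set_of_spanning_trees}. So there is no in-paper proof to compare against.

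Your proposal is the standard Karger skeleton argument and is correct. The only point I would tighten is the bookkeeping in the union bound: \cref{lem:number_of_cuts} bounds the number of cuts of size at most $\alpha\lambda$ by $n^{\lceil 2\alpha\rceil}$ only for $\alpha\ge 1$, so the binning should be stated over $\alpha\ge 1$ (equivalently $c\ge\lambda$), and the summation is over $O(n^2)$ integer values of $c$ (or $\alpha$). With your choice of $\epsilon$ the per-term bound is $\exp\bigl(c(2\ln n/\lambda-\Theta(\epsilon^2 p))\bigr)$, and since $\epsilon^2 p\lambda=\omega(\ln n)$ the bracket is $-\omega(\ln n)/\lambda$, giving a geometric series whose first term (at $c=\lambda$) is already $n^{-\omega(1)}$; the $O(n^2)$ summands do not hurt. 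This in fact yields failure probability $n^{-\omega(1)}$, stronger than the $n^{-\Omega(1)}$ you wrote, and matches the ``w.h.p.'' convention used in the paper.
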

Using the above we can prove \cref{lem:set_of_spanning_trees}. This is similar to  proof of \cite[Lemma 8]{Thorup07}.
\begin{proof}[Proof of \cref{lem:set_of_spanning_trees}]
We need to prove that w.h.p., we can construct a set of spanning trees such that at least one of them 2-respects a min-cut. At the beginning, we do not know $\lambda$, but we know that for some $i$, $\lambda/2^{i} = \Theta(\log^{1.1} n)$. We choose this value of $i$. Let $p = \frac{1}{2^{i}}$. Let $H = G_p$ as given by \cref{lemma:karger_skeleton}. By the same lemma, we know that the edge connectivity $\lambda_H$ of $H$ is $\Theta(\log^{1.1} n)$. Thus $H$ has at least  $\Theta(\log^{1.1} n)$ spanning forests. We also have that any min-cut $C$ of G has $(1+o(1))\lambda_H$ edges. Thus by  \cref{lemma:fraction_cuts_crosses_greedy_packing}, a tree packing $\tree$ of $H$ with $O(\log^{2.2} n)$ trees has a tree $T$ which crosses the min-cut at most twice. To construct a MST in \CONGEST model we require $\tilde{O}(\sqrt{n} + D)$ rounds. Thus this lemma follows.
\end{proof}
\subsection{Proof of \cref{fact:congest_general} and \cref{lemma:properties_of_mapping}} 
\cref{fact:congest_general} gives known algorithms in \CONGEST model. \cref{fact:congest_general}(1) is a simple downcast with pipelined messages.
\begin{proof}[Proof of \cref{fact:congest_general}(1)]
	Here each node $v$ has at most $Depth(T)$ ancestors. Thus it receives at most $Depth(T)$ messages. The idea here is to perform a message passing from top to bottom in a pipelined fashion. At round $t = 0$, the root $r_T$ sends its messages to all its children which immediately send to their children. Subsequently, any internal node $v$ of the tree which receives some $\texttt{msg}$ from $\parent[T]{v}$ in round $t$ immediately sends $\msg$ to all its children in round $t+1$. At round $t=1$ nodes at level $1$ (distance $1$ from root $r_T$) release there messages. At any round $t = t' \leq Depth(T)$, nodes at level $t'$ release there message. Thus in $O(Depth(T))$ rounds all nodes $v$ receive messages $\msg_a$ from all $a \in \anc_T(v)$.
\end{proof}
Let $\level[T]{v}$ of any node $v$ be the distance from the root $r_T$ following the tree edges. For \cref{fact:congest_general}(2),  we give a distributed-two phased procedure in \cref{algo:trsf_phase-2}.  
\begin{algorithm}[h]	
	\DontPrintSemicolon
	\SetKwFor{Forp}{for}{parallely}{endfor}
	\SetKwFor{Forw}{for}{wait}{endfor}
	\SetKwProg{phase}{Phase}{}{}
	\SetKwProg{preprocessing}{Pre-Processing}{}{}
	\SetKwProg{availableInfo}{Available Info:}{}{}
	\preprocessing{}{For any node $x, \forall v \in \ancestor[T]{x}$, node $x$ knows $g(v,x)$ through a pre-processing step} 
	
	\setcounter{AlgoLine}{0}
	\phase{1 : Aggregation phase run on all node $x$, aggregates $g(v,\desc[T]{x}) = \sum_{x'\in \desc[T]{x}}g(v,x')\ \forall v\in \ancestor{x}$}{
		
		\lForw{rounds $t = 1$ to $Depth(T)-\level[T]{x}$}{}
		$l \gets 0$\;
		\For{rounds $t = Depth(T) - \level[T]{x} + 1$ to $Depth\paren{T}$}{
			$v \leftarrow$ ancestor of node $x$ at level $l$\;
			\lIf{$x$ is leaf node}{$g(v,\desc[T]{x}) \gets g(v,x)$}
			\Else{
				\lForp{$c \in \child[T]{x}$}{collect $\angularbrac{l,g(v,\desc[T]{c})}$}
				$g(v,\desc[T]{x})\gets g(v,{x}) + \sum_{c \in \child[T]{x}} g(v,\desc[T]{c})$\;
			} \label{interstep:cut-1-respects-tree-internal-node}
			send to the parent  $\angularbrac{l,g(v,\desc[T]{x})}$\;
			$l \gets l+1$\;
		}	
	} \setcounter{AlgoLine}{0}
	\phase{2: Computation Phase (run on all node $v \in V$), finds f(v)}{
		\lavailableInfo{}{Each node $v$ knows $g(v,\desc[T]{c})$ for all $c \in \child[T]{v}$ }
		\lIf{$v$ is a leaf node}{$f(v) \gets g(v,v)$}
		\Else{
			$f(v) \gets g(v,v) + \sum_{c \in \child[T]{v}} g(v,\desc[T]{c})$
		}
	}

	\caption{Computes $f$, if $f(v) = \sum_{x\in \desc[T]{v}}g(v,x)$}
	\label{algo:trsf_phase-2}
\end{algorithm}
\begin{proof}[Proof of \cref{fact:congest_general}(2)]
	
	For any node $v$, $f(v)$ depends on the value $g(v,x)$ for all $x\in \desc[T]{v}$, thus each such node $x$ \emph{convergecasts} (see \cite[Chapter 3]{peleg2000distributed}) the required information up the tree which is supported by aggregation of the values.  We will give an algorithmic proof for this lemma. The algorithm to efficiently  compute function $f(\cdot)$ is given in \shortOnly{Algorithm }\cref{algo:trsf_phase-2}. 
	
	The aggregation phase of the algorithm given in Phase {1} runs for at most $Depth(T)$ rounds and facilitates a coordinated aggregation of the required values and convergecasts them in a synchronized fashion. 
	Each node $x$ in Phase {2}, 
	sends $\level[T]{x}$ messages of size $O(\log n)$  to its parent, 
	each message include $g(v,{x^{\downarrow T}})$ where $v \in \ancestor[T]{x}$; which as defined earlier is the contribution of nodes in $x^{\downarrow T}$ to $f(v)$. 
	This message passing takes $O(1)$ time since $1 \leq g(v,{x^{\downarrow T}})  \leq \poly(n)$ is of size $O(\log n)$ bits. 
	For brevity, we assume this takes exactly $1$ round, this enables us to talk about each round more appropriately as follows: Any node $x$ at level $\level[T]{x}$ waits for round $t = 1$ to $Depth(T) - \level[T]{x}$. 
	For any $l \in [0, \level[T]{x}-1]$, in round $t = Depth(T) - \level[T]{a} + l + 1$  node  $x$ sends to its parent $\angularbrac{l, g(v,{x^{\downarrow T}})}$ where $v$ is the ancestor of $x$ at level $l$. 
	When node $x$ is an internal node then, $g(v,\desc[T]{x})$  
	depends on $g(v,x)$ which can be pre-calculated. 
	Also, $g(v,{x^{\downarrow T}})$ depends on $g(v,\desc[T]{c})$
	for all $c \in \child[T]{x}$ which are at level $\level[T]{x}+1$ 
	and have send to $x$ (which is their parent) the
	message $\angularbrac{l, g(v,\desc[T]{c})}$ in the $(Depth(T) - \level[T]{x} + l)^\text{th}$ round. 
	For a leaf node $x$, $g(v,{x^{\downarrow T}}) = g(v,x)$ which again is covered in pre-processing step.
	
	In Phase {2}, node $v$ computes function $f(v)$. As per definition of $f$ each internal node $v$ requires $g(v,{\desc[T]{c}})\ \forall c \in \child[T]{v}$ and $g(v.v)$ is computed in the pre-processing step. And $g(v,{\desc[T]{c}})$ is received by $v$ in the aggregation phase. When node $v$ is a leaf node, $f(v)$ depends only on $g(v,v)$ since $\desc[T]{v} = \curly{v}$.
\end{proof}
For \cref{fact:congest_general}(3), we use similar technique as Proof of \cref{fact:congest_general}(1). But instead of sending a train of messages towards the leaf nodes, we send a train of messages towards the root in a synchronized fashion.
\begin{proof}[Proof of \cref{fact:congest_general}(3)]
	Here we are given that $f(v) = g(v) + \sum_{c \in \child[T]{v}}f(c)$. We know that $1\leq f(v) \leq \poly(n)$. Thus, for any node $v$ to send $f(v)$ from one node to another through a physical link it takes $O(1)$ rounds. For brevity let's assume that this takes exactly $1$ round. For this lemma, if we can show that any node $x$ at level $\level[T]{x} = t$ computes $f(x)$ and sends it to $\parent[T]{x}$ in round $Depth(T) - \level[T]{x}$, then $f(v)$ can be computed by each node $v$ in $O(Depth(T))$ rounds.
	For the base case, at round $t = 0$, leaf nodes $x$ such that $\level[T]{x} = Depth(T)$ send $f(x) = g(x)$ (pre computed by $x$) to $\parent[T]{x}$. Fix a $t \leq Depth(T)$, assume that all node $x$ at level $\level[T]{x} = t$ computes $f(x)$ and sends it to $\parent[T]{x}$ in round $Depth(T) - \level[T]{x}$. Now using this information nodes $x$ at $\level[T]{x} = t+1$ can compute $f(x)$ and send it to $\parent[T]{x}$. If $x$ is a leaf node then it has the precomputed value of $f(x) = g(x)$. Otherwise it uses $f(x) = g(x) + \sum_{c\in\child[T]{x}}f(c)$. Recall that by induction hypothesis all children $c$ of $x$ have sent $f(c)$ to $x$ in round $Depth(T) - \level[T]{c} = Depth(T) - \level[T]{x} - 1$. Thus $f(x)$ can be sent to $\parent[T]{x}$ in round $Depth(T) - \level[T]{x}$. 
	
	Further, if we have $k$ such functions $f_1,\ldots,f_k$, then we can use a train of $k$ messages sent by each node $x$ with the values of $f_1(x),\ldots,f_k(x)$ to $\parent[T]{x}$
\end{proof} 

\paragraph{Proof of \cref{lemma:properties_of_mapping}}
The proof of \cref{lemma:properties_of_mapping} is similar to \cref{fact:congest_general}. In a general physical network $G$, for any spanning tree $T$ in $O(1)$ round a node $v$ can send a message of $O(\log n)$ bits to $\parent[T]{x}$ or to all child nodes $c$ in $\child[T]{v}$. But for a contracted graph $\overline{G} = \MSGC(G,\epsilon)$ and a spanning tree $\overline T$ of $\overline G$, a node $s$ of $\overline{G}$ can not send a message of $\log n$ bits to $\parent[\overline  T]{s}$ or to all child nodes $c$ in $\child[\overline T]{s}$ in $O(1)$ rounds because some of these nodes are a set of nodes in the original network $G$ and, thus, are not immediate neighbors of s in G. But due to condition of low diameter \cref{thm:main_contracted_graph_can_be_found}, we can show that in total for computation of any of the functions described  \cref{lemma:properties_of_mapping} we just pay an over head of $O(n^{1-\frac{\epsilon}{22}})$ rounds. Firstly, we prove some properties of $\mapping(\overline{G},\overline{T})$.

Recall that the nodes of $\overline{G}$ are either physical nodes or formed by collapsing $\core(C)$ of some cluster $C$. Also recall, that in each $\core(C)$, we have chosen $r_C$ which is the root of the BFS tree $\overline{T}[C]$. 
\begin{claim}
	\label{claim:small_path_Size}
Any node $v$ of the contracted graph $\overline G$, has a path of length $O(n^{1-\frac{\epsilon}{22}})$ to all nodes $a \in \anc_{\overline T}(v)$ using the edges of $\operatorname{mapping}({\overline G},{\overline T})$. In case $a$ is a vertex formed by collapsing $\core(C)$ of a cluster $C$ then there is a  path from $v$ to $r_C$ of length $O(n^{1-\frac{\epsilon}{22}})$.
\label{}
\end{claim}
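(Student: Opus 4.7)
The plan is to build an explicit walk from $v$ to $a$ in $\mapping(\overline{T},\overline{G})$ by alternating between edges of $\overline{T}$ and intra-cluster detours through the BFS subtrees $\overline{T}[C]$, and then to bound the two contributions separately. First I would bound the number of $\overline{T}$-edges used. The vertex set of $\overline{G}$ consists of the regular nodes together with one super-node $s(C)$ per non-trivial cluster. By \cref{lemma:regular_nodes_are_bounded}, the total number of regular nodes is $O(n^{1-\epsilon/22})$, and by \cref{lemma:number_of_clusters}, the number of non-trivial clusters is $O(n^{1-2\epsilon})$. Hence $|\overline{V}| = O(n^{1-\epsilon/22})$, and the depth of the spanning tree $\overline{T}$ is at most $|\overline{V}| = O(n^{1-\epsilon/22})$. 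In particular, the ancestor chain in $\overline{T}$ from $v$ to $a$ contains $O(n^{1-\epsilon/22})$ edges.

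Next I would account for the intra-cluster detours. As we climb from $v$ to $a$ along $\overline{T}$, each intermediate super-node $s(C)$ is realized physically by a set $\core(C)$ of vertices of $G$; the path enters $\core(C)$ at the $G$-endpoint of the tree edge from the child side and exits at $r_C$, the physical endpoint of the tree edge to $\parent[\overline{T}]{s(C)}$. Since both endpoints lie in $C$, routing through the BFS tree $\overline{T}[C]$ incurs at most $\operatorname{diam}(G[C])$ additional steps. The same routing is used inside the cluster $C_v$ containing $v$, and, in the special case $a = s(C)$, the walk simply terminates at $r_C$ inside $\core(C)$ with the same BFS detour. By \cref{lemma:diameter_cluster_sublinear}, $\sum_{C \in \mathcal{C}} \operatorname{diam}(G[C]) = O(n^{1-\epsilon/20})$, so the total length of intra-cluster detours, summed over the clusters traversed, is $O(n^{1-\epsilon/20})$.

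Adding the two contributions yields a path of length $O(n^{1-\epsilon/22} + n^{1-\epsilon/20}) = O(n^{1-\epsilon/22})$, as claimed, and every edge used lies in $\overline{T} \cup \bigcup_C \overline{T}[C] = \mapping(\overline{T},\overline{G})$. The only delicate step is the hand-off between $\overline{T}$-edges and BFS detours at each cluster boundary: this works precisely because $r_C$ was chosen as a vertex of $\core(C)$ incident to the tree edge $(r_C, \parent[\overline{T}]{s(C)})$, so the $G$-endpoints of the $\overline{T}$-edges immediately above and below cluster $C$ are both vertices of $C$, and can be connected within $G[C]$ by a walk of length at most $\operatorname{diam}(G[C])$ using $\overline{T}[C]$. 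Apart from this bookkeeping, the argument is a straightforward diameter-summation bound, with the main conceptual point being that the ``low average diameter'' guarantee of \cref{thm:main_contracted_graph_can_be_found} is exactly the tool that amortizes the detour cost against the depth of $\overline{T}$.
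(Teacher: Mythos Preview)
Your proof is correct and follows essentially the same approach as the paper: bound the number of $\overline{T}$-edges by $|\overline{V}| = O(n^{1-\epsilon/22})$, bound the intra-cluster detours by $\sum_{C} \operatorname{diam}(G[C]) = O(n^{1-\epsilon/20})$, and add. Your write-up is actually more explicit than the paper's---you spell out the vertex count via \cref{lemma:regular_nodes_are_bounded} and \cref{lemma:number_of_clusters}, and you articulate the hand-off at $r_C$, whereas the paper simply cites the node-count and diameter properties from \cref{defn:Min-Cut-Preserving-Sub-linear-Graph-Contraction}.
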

\begin{proof}
Let $\mathcal{C}$ be the cluster set of $\overline{G}$. Recall that in $\overline G$ 
there are $\Theta(n^{1-\frac{\epsilon}{22}})$ 
nodes. Thus, any spanning tree $\overline{T}$ of $\overline{G}$ has depth $O(n^{1-\eta_1})$.  Let $v$ be any node. Lets fix an arbitrary $a \in \anc_{\overline T}(v)$. Following the tree edges of $\overline{T}$ we have a path of length $O(n^{1-\eta_1})$ between $a$ and $v$. But this is in the contracted graph $\overline{G}$ and not in the given physical graph $G$. The difference here is that, some of the nodes on this path are formed by collapsing $\core(C)$ of some cluster $C \in \mathcal{C}$. Thus to traverse through such nodes the path uses the BFS tree $\overline{T}[C]$ which is part of $\mapping(\overline G,\overline T)$. As per \cref{defn:Min-Cut-Preserving-Sub-linear-Graph-Contraction}(4) we know that $\sum_{c \in \mathcal{C}}diam(G[C]) = O(n^{1-\frac{\epsilon}{20}})$ . Hence, $\sum_{c \in \mathcal{C}}Depth(\overline{T}[C])  = O(n^{1-\frac{\epsilon}{22}})$.
\end{proof}
The proof \cref{lemma:properties_of_mapping} uses \cref{claim:small_path_Size}. 
\begin{proof}[Proof of \cref{lemma:properties_of_mapping}]
	 Let $C \in \mathcal{C}$ be some cluster. Recall that $s(C)$ is a node formed by collapsing $\core(C)$. Any node $x \in C \setminus \core(C)$  may have some incident edges which are part of both $\overline T[C]$ and $\overline{T}$ at the same time. In any given round, these edges will be tasked to carry a message of two forms by node $x$: messages sent from children of $s(C)$ in $\overline{T}$ to $s(C)$ or messages sent from $x$ to $\parent[\overline T]{x}$. Thus we include an extra label to the message to indicate which one of the two cases it belongs to so that it can be routed appropriately either using the edges of $\overline{T}$ or by $\overline{T}[C]$. This will increase the complexity by a factor of $2$.  Hence using the same arguments in proof of \cref{fact:congest_general} and \cref{claim:small_path_Size} this lemma follows.
\end{proof}
	%!TEX root = ../main_Min-Cut_latex.tex
\section{Proof of \cref{thm:fast_exp_decomp}\label{sec:proof_exp_decomposition}}
\paragraph{Disclaimer}: \textbf{This section is taken almost as is from [CPZ18] except for a few changes of  parameters to suit our need. It is included only for the sake of verification.}

We first introduce some notation.
Let $\deg_H(v)$ be the degree of $v$ in the subgraph $H$,
or in the graph induced by edge/vertex set $H$.
Let $V(E^\ast)$ be the set of vertices induced by the edge
set $E^\ast \subseteq E$.
The {\em strong diameter} of a subgraph $H$ of $G$ is defined as $\max_{u,v \in H} {\operatorname{dist}}_H(u,v)$ and the {\em weak diameter} of  $H$  is $\max_{u,v \in H} {\operatorname{dist}}_G(u,v)$.

The goal of this is to prove \cref{thm:fast_exp_decomp}. The algorithm for \cref{thm:fast_exp_decomp} is based on repeated application of a black box algorithm $\mathcal{A}^\ast$, which is given a subgraph $G'=(V',E')$
of the original graph $G=(V,E)$, where $V' = V(E')$, $n'=|V'|$, and $m'=|E'|$.
In $\mathcal{A}^\ast$, vertices may halt the algorithm at different times.

\paragraph{Specification of the Black Box.}
The goal of $\mathcal{A}^\ast$ is, given $G'=(V',E')$,
to partition $E'$ into $E'=\EhExp'\cup E_s'\cup E_r'$ satisfying some conditions.
The edge set $\EhExp'$ is partitioned into $\EhExp' = \bigcup_{i=1}^t \mathcal{E}_i$.
We write $\mathcal{V}_i = V(\mathcal{E}_i)$ and
$\mathcal{G}_i = (\mathcal{V}_i, \mathcal{E}_i)$,
and define $S=V'\setminus \left( \bigcup_{i=1}^t \mathcal{V}_i \right)$.

\begin{description}
\item[(C1)] The vertex sets $\mathcal{V}_1, \ldots, \mathcal{V}_t, S$ are disjoint and partition $V'$.
\item[(C2)] The edge set $E_s'$ can be decomposed as $E_s'=\bigcup_{v\in S}E_{s,v}'$, where $E_{s,v}'$
is a subset of edges incident to $v$, viewed as oriented away from $v$.
This orientation is acyclic.
For each vertex $v$ such that $E_{s,v}'\neq \emptyset$, we have $|E_{s,v}'|+ \deg_{\EhExp'}(v) \leq n^{\gamma}$. Each vertex $v$ knows the set $E_{s,v}'$.
\item[(C3)] Consider a subgraph $\mathcal{G}_i=(\mathcal{V}_i,\mathcal{E}_i)$.
Vertices in $\mathcal{V}_i$ halt after the same
number of rounds, say $K$.
Exactly one of the following subcases will be satisfied.
\begin{enumerate}
    \item [{\bf (C3-1)}] All vertices in $\mathcal{V}_i$ have degree $\Omega(n^{\gamma})$ in the subgraph $\mathcal{G}_i$,  each connected component of $\mathcal{G}_i$ has $\tilde{O}(n^{\rho})$ mixing time, and $K = \tilde{O}(n^{10\rho})$. Furthermore, every vertex in $\mathcal{V}_i$ knows that they are in this sub-case.
    \item [{\bf (C3-2)}] $|\mathcal{V}_i|\leq n' -\tilde{\Omega}(n^{\gamma})$, and every vertex in $\mathcal{V}_i$ knows they are in this subcase.
\end{enumerate}
\item[(C4)] Each vertex $v \in S$ halts in $\tilde{O}(n'/n^{\gamma})$ rounds.
\item[(C5)] The inequality
$E_r'\leq \Big(|E'| \log|E'| -\sum_{i=1}^t |\mathcal{E}_i| \log |\mathcal{E}_i|\Big)/
(6n^{\rho}\log m)$ is met.
\item[(C6)] Each cluster $\mathcal{V}_i$ has a distinct identifier. When a vertex $v \in \mathcal{V}_i$ terminates, $v$ knows the identifier of  $\mathcal{V}_i$. If $v \in S$, $v$ knows that it belongs to $S$.
\end{description}

We briefly explain the intuition behind these conditions.
The algorithm $\mathcal{A}^\ast$ will be applied recursively to all
subgraphs $\mathcal{G}_i$ that have yet to satisfy the minimum degree
and mixing time requirements specified in \cref{thm:fast_exp_decomp} and \cref{defn:decomposition}.
Because vertices in different components halt at various times,
they also may begin these recursive calls at different times.

The goal of (C2) is to make sure that once a vertex $v$ has $E_{s,v}' \neq \emptyset$, the total number of edges added to $E_{s,v}$ cannot exceed $n^\gamma$.
The goal of (C3) is to guarantee that the component size drops at a fast rate.
The idea of (C5) is that the size of $E_r'$ can be mostly charged to the number of the edges in the small-sized edge sets $\mathcal{E}_i$;
this is used to bound the size of $E_r$ of our graph partitioning algorithm.

Note that in general the strong diameter of a subgraph $\mathcal{G}_i$ can be much higher than the maximum running time of vertices in  $\mathcal{G}_i$, and it could be possible that $\mathcal{G}_i$ is not even a connected subgraph of $G$. However, (C6) guarantees that each vertex $v \in \mathcal{V}_i$ still knows that it belongs to $\mathcal{V}_i$. This property allows us to recursively execute  $\mathcal{A}^\ast$ on each subgraph $\mathcal{G}_i$.

\begin{lemma}\label{le:auxiliary}
There is an algorithm $\mathcal{A}^\ast$ that finds a partition $E'=\EhExp'\cup E_s'\cup E_r'$ meeting the above specification in the $\CONGEST$ model, w.h.p.
\end{lemma}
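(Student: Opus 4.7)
My plan is to adapt the expander-decomposition scheme of Chang, Pettie, and Zhang rather than develop a new one, with the parameters $\gamma$ and $\rho$ replaced to match conditions (C1)--(C6). The overall structure is an iterated scheme that, on each residual subgraph, does one of two things: either certifies a high-degree component as having small mixing time (contributing a piece of $\EhExp'$), or peels off low-degree vertices into $S$ and sparse cuts into $E_r'$.

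The plan proceeds in three phases. First, I would iteratively prune low-degree vertices: any vertex whose current residual degree drops below $n^{\gamma}$ is moved to $S$, and its still-present incident edges are oriented away from it into $E_{s,v}'$. Each such removal can cascade to neighbors, but because $|E_{s,v}'|+\deg_{\EhExp'}(v)\leq n^{\gamma}$ by construction and the orientation follows removal order (hence acyclic), conditions (C2) and (C4) hold; the entire pruning phase completes in $\tilde{O}(n'/n^{\gamma})$ rounds since the residual graph shrinks by at least one vertex per $O(1)$ rounds at the active frontier.

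Second, on the residual graph of high-degree vertices I would invoke the distributed random-walk mixing test from [CPZ18] on each connected component. In parallel, every vertex seeds $\polylog(n)$ lazy random walks of length $\tilde{\Theta}(n^{\rho})$ and checks whether the endpoint distribution is close to the degree-weighted stationary distribution; this is implementable in the \CONGEST model in $\tilde O(n^{10\rho})$ rounds using standard parallel random-walk simulation, the exponent $10$ absorbing the bandwidth overhead of routing the walks together with the two-sided testing machinery. The test returns, for each component $\mathcal{G}_i$, either a certificate of mixing time $\tilde O(n^{\rho})$ (case (C3-1)), placing $\mathcal{E}_i$ into $\EhExp'$; or a sparse-cut witness $(\mathcal{V}_i', \mathcal{V}_i\setminus \mathcal{V}_i')$ of conductance at most $1/(6n^{\rho}\log m)$ with $|\mathcal{V}_i'|\leq |\mathcal{V}_i|-\tilde\Omega(n^{\gamma})$ (case (C3-2)), whose crossing edges go into $E_r'$, after which I recurse on each side.

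The telescoping bound in (C5) is the standard expander-decomposition charging: whenever a sparse cut contributes $r$ edges to $E_r'$ from a residual component $F$, sparsity gives $r\leq |F|/(6n^{\rho}\log m)$, and summing across the recursion telescopes into $(|E'|\log|E'|-\sum_i |\mathcal{E}_i|\log|\mathcal{E}_i|)/(6n^{\rho}\log m)$. For (C6), each cluster elects a leader and broadcasts its ID using the just-certified mixing property in case (C3-1), or inherits the parent cluster ID in case (C3-2). The main obstacle is the \CONGEST implementation of the mixing test on subgraphs $\mathcal{G}_i$ that need not be connected subgraphs of the ambient network $G$, so the propagation of cluster identifiers and the routing of random-walk tokens must be carried out using only the internal mixing structure of $\mathcal{G}_i$ once it is certified, rather than relying on global BFS; synchronizing vertices to a common termination time $K$ while keeping the total round count at $\tilde O(n^{10\rho})$ is the delicate part of the adaptation.
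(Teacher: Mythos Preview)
Your outline captures the high-level shape of the Chang--Pettie--Zhang scheme, but there are two concrete gaps that would cause the algorithm as you describe it to violate the round bounds in (C3)/(C4).

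\textbf{Missing high-diameter case.} You jump directly from low-degree pruning to the random-walk/low-conductance test and claim the latter runs in $\tilde O(n^{10\rho})$ rounds on every residual component. But the Low Conductance subroutine of [CPZ18] (the paper's \cref{le:lowconductance}) needs $O(D+\poly(\log|E^\ast|,1/\phi))$ rounds, where $D$ is the \emph{strong diameter of the component}; it cannot agree on a cut without a round of BFS/broadcast. If the residual component has diameter $\omega(n^{10\rho})$---which the min-degree condition alone does not prevent---your step already exceeds the budget. The paper handles this with a separate High Diameter subroutine (\cref{le:highdiameter}): when $D\ge 48n^{\rho}\log^2 m$, one finds a balanced sparse cut directly from BFS level sets in $O(D)$ rounds, and the cut is guaranteed to drop the vertex count by $\Omega(D\,n^{\gamma})$, so the time spent is charged to the size reduction and (C3-2) holds. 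Without this case split your algorithm simply cannot meet (C3).

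\textbf{Pruning round complexity.} Your claim that iterative peeling finishes in $\tilde O(n'/n^{\gamma})$ rounds does not follow from ``shrinks by at least one vertex per $O(1)$ rounds at the active frontier'': that argument gives $O(|V^\ast|-|V^\diamond|)$ rounds, which can be $\Theta(n')$. The paper's Low Degree subroutine (\cref{le:lowdegree}) achieves the bound you state by \emph{batch} deletion---in each iteration it removes every vertex of residual degree at most $n^{\gamma}$ (not $n^{\gamma}/2$), guaranteeing that at least $n^{\gamma}/2$ vertices leave per iteration as long as the process continues, and pipelines the iterations over a BFS tree. You need this trick; vertex-by-vertex peeling is too slow.

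Finally, a structural remark: the paper's $\mathcal{A}^\ast$ does \emph{not} recurse internally. It performs one pass---\Remove{1}, then per component either the high-diameter cut or (low-degree prune, then high-diameter cut or low-conductance test)---and outputs the resulting partition; the recursion lives in the proof of \cref{thm:fast_exp_decomp}. Your internal recursion would require re-verifying (C3) and (C5) for the final pieces, with $K$ now the cumulative time, which is a different (and harder) accounting.
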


Assuming \cref{le:auxiliary},
we are now in a position to prove \cref{thm:fast_exp_decomp}.

\begin{proof}[Proof of \cref{thm:fast_exp_decomp}]
Let $\mathcal{A}^\ast$ be the algorithm for \cref{le:auxiliary}.
Initially, we apply $\mathcal{A}^\ast$ with $G'=G$, and this returns a partition $E'=\EhExp'\cup E_s'\cup E_r'$.

For each  subgraph $\mathcal{G}_i$ in the partition output by an invocation of $\mathcal{A}^\ast$, do the following.
	If $\mathcal{G}_i$ satisfies (C3-1), by definition it must have $\tilde{O}(n^{\rho})$ mixing time, and all vertices in $\mathcal{G}_i$ have degree $\Omega(n^{\gamma})$ in $\mathcal{G}_i$; we add the edge set $\mathcal{E}_i$ to the set $\EhExp$ and all vertices in $\mathcal{V}_i$ halt.	
	Otherwise we apply the algorithm recursively to $\mathcal{G}_i$, i.e., we begin by applying $\mathcal{A}^\ast$ to $G' = \mathcal{G}_i$ to further partition its edges. All recursive calls proceed in parallel, but may begin and end at different times. 
	Conditions (C1) and (C6) guarantee that this is possible. (Note that if $\mathcal{G}_i$ is disconnected, then each connected component of $\mathcal{G}_i$ will execute the algorithm in isolation.)
	
Initially $E_r = \emptyset$ and $E_s = \emptyset$. After each invocation of $\mathcal{A}^\ast$, we update $E_r \gets E_r\cup E_r'$, $E_s \gets E_s\cup E_s'$, and $E_{s,u} \gets E_{s,u}\cup E_{s,u}'$ for each vertex $u$.

\paragraph{Analysis.} We verify that the three conditions of \cref{defn:decomposition} are satisfied.
First of all, note that each connected component of $\EhExp$ terminated
in (C3-1) must have $\tilde{O}(n^{\rho})$ mixing time, and all vertices in the component have degree $\Omega({n^{\gamma}})$ within the component.
Condition (a) of \cref{defn:decomposition} is met.
Next, observe that Condition (b) of \cref{defn:decomposition}
is met due to (C2). If the output of $\mathcal{A}^\ast$ satisfies that $E_{s,v}' \neq \emptyset$, then $|E_{s,v}|$ together with the number of remaining incident edges (i.e., the ones in $\EhExp'$) is less then ${n^{\gamma}}$. Therefore, $|E_{s,v}|$ cannot exceed ${n^{\gamma}}$, since only the edges in $\EhExp'$ that are incident to $v$ can be added to $E_{s,v}$ in future recursive calls.
Lastly, we argue that (C5) implies that Condition (c) of \cref{defn:decomposition} is satisfied.
Assume, inductively, that a recursive call on edge set $\mathcal{E}_i$
eventually
contributes at most $|\mathcal{E}_i|\log|\mathcal{E}_i|/(6n^{\rho}\log m)$ edges
to $E_r$.  It follows from (C5) that the recursive call on edge set $E'$
contributes $|E'|\log|E'|/(6n^{\rho}\log m)$ edges to $E_r$.  We conclude
that $|E_r| \le |E|\log|E|/(6n^{\rho}\log|E|)=|E|/6n^{\rho}$.

Now we analyze the round complexity. In one recursive call of $\mathcal{A}^\ast$, consider a component $\mathcal{G}_i$ in the output partition, and let $K$ be the running time of vertices in $\mathcal{V}_i$.
Due to (C3), there are two cases. If $\mathcal{G}_i$ satisfied (C3-1),
it will halt in $K = O(n^{10\rho})$ rounds.
Otherwise, (C3-2) is met, and we have $|\mathcal{V}_i|\leq n' -\tilde{\Omega}(n^{\gamma})$.
Let $v \in V$ be any vertex, and let $K_1, \ldots, K_z$ be the running times of all calls to
$\mathcal{A}^\ast$ that involve $v$.
(Whenever $v$ ends up in $S$ or in a component satisfying (C3-1) it halts permanently, so $K_1,\ldots,K_{z-1}$ reflect executions that satisfy (C3-2) upon termination). Here $z$ can be at most $n^{1-\gamma}$, thus we have $\sum_{i=1}^z K_i \leq \tilde{O}(n^{1-\gamma  + 10\rho})$. And this is the running time since the whole algorithm stops within $\tilde{O}(n^{1-\gamma+9\rho})$ rounds.
\end{proof}

\subsection{Subroutines}
Before proving \cref{le:auxiliary}, we first introduce some helpful subroutines.
\cref{le:highdiameter} shows that for subgraphs of sufficiently high strong diameter, we can find a sparse cut of the subgraph, with runtime proportional to the strong diameter.
\cref{le:lowdegree} offers a procedure that removes a set of edges in such a way that the vertices in the remaining graph have high degree, and the removed edges form a low arboricity subgraph.
\cref{le:lowconductance} shows that if a subgraph already has a low conductance cut, then we can efficiently
find a cut of similar quality.

All these subroutines are applied to a connected subgraph $G^\ast=(V^\ast,E^\ast)$ of the underlying network $G=(V,E)$, and the computation does not involve vertices outside of $G^\ast$. In subsequent discussion in this section, the parameters $n$ and $m$ are always defined as $n = |V|$ and $m = |E|$, which are independent of the chosen subgraph  $G^\ast$.

\begin{lemma}\label{le:largediameter}
	Let $m$ and $D$ be two numbers.
	Let $(a_1,\ldots,a_D)$ be a sequence of positive integers such that  $D\geq 48n^{\rho}\log^2 m$ and $\sum_{i=1}^D a_i\leq m$.
	Then there exists an index $j$ such that $j\in [D/4, 3D/4]$
	and
	\[
	a_j\leq \frac{1}{12n^{\rho}\log m}\cdot \min\left(\sum_{i=1}^{j-1} a_i, \;\, \sum_{i=j+1}^D a_i\right).
	\]
\end{lemma}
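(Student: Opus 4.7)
The plan is to argue by contradiction. Set $T := 1/(12 n^{\rho} \log m)$, define the left and right partial sums $L_j := \sum_{i<j} a_i$ and $R_j := \sum_{i>j} a_i$, and suppose toward a contradiction that for every $j \in [D/4, 3D/4]$ we have $a_j > T \cdot \min(L_j, R_j)$. For each such $j$, at least one of the two events $a_j > T L_j$ or $a_j > T R_j$ must hold. The first forces multiplicative growth of the left prefix: $L_{j+1} = L_j + a_j > (1+T)\, L_j$. The second, when the sequence is read from right to left, forces multiplicative growth of the right suffix: $R_{j-1} > (1+T)\, R_j$.

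Next I would let $J_L, J_R \subseteq [D/4, 3D/4]$ be the index sets on which the two events respectively hold. By assumption $J_L \cup J_R = [D/4, 3D/4]$, which contains roughly $D/2$ integers, so pigeonhole yields (without loss of generality) $|J_L| \geq D/4$. Since $L_{j+1} \geq L_j$ trivially and $L_{j+1} > (1+T) L_j$ whenever $j \in J_L$, telescoping the inequalities across $[D/4, 3D/4]$ gives
\[
L_{3D/4 + 1} \;\geq\; L_{D/4} \cdot (1+T)^{|J_L|} \;\geq\; L_{D/4} \cdot (1+T)^{D/4}.
\]
The base case $L_{D/4} \geq D/4 - 1 \geq 1$ is immediate because each $a_i$ is a positive integer (and $D \geq 48 n^{\rho}\log^2 m \geq 8$), while $L_{3D/4+1} \leq \sum_i a_i \leq m$ by hypothesis. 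Combining these two bounds and taking logarithms, together with the standard estimate $\log(1+T) = \Theta(T)$ for small $T$, yields an upper bound of the form $D = O(n^{\rho} \log^2 m)$ whose hidden constant is strictly smaller than $48$, contradicting the hypothesis $D \geq 48 n^{\rho} \log^2 m$. A symmetric argument handles the case $|J_R| \geq D/4$, using the growth of $R_j$ going right to left.

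The hard part will be making the numerical constants line up: the hypothesized constant $48$ and the parameter $1/12$ appearing inside $T$ are clearly tuned together, so the write-up needs to account for each source of slack, namely the halving from pigeonhole, the correction term in $\log(1+T) \geq T(1-T/2)$, the weak base-case bound $L_{D/4} \geq 1$, and any conversion between $\log$ and $\ln$. Conceptually, however, the argument is a standard exponential-growth obstruction: a multiplicative factor of $(1+T)$ accumulated over $\Omega(D)$ consecutive positions is incompatible with the total sum remaining bounded by $m$ unless $D = O((\log m)/T)$, and this matches the hypothesized lower bound on $D$ up to constants.
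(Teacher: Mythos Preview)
Your proposal is correct and follows essentially the same exponential-growth argument as the paper. The only cosmetic difference is in how the two sides are handled: the paper first reduces by symmetry (reversing the sequence if necessary so that $S_{\lfloor D/2\rfloor}\le S_D-S_{\lfloor D/2\rfloor}$) and then scans only the quarter $[D/4,D/2]$, whereas you scan the full middle half $[D/4,3D/4]$ and use pigeonhole on $J_L$ versus $J_R$. Both routes land on the same bound $(1+T)^{D/4}\le m$ with $T=1/(12n^{\rho}\log m)$, so the constants match; the paper's inequality $(1+\tfrac{1}{x})^{x\log m}\ge m$ (with $x=12n^{\rho}\log m$) holds once $\log$ is read as $\log_2$, which resolves the constant worry you flagged.
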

\begin{proof} Define $S_k=\sum_{i=1}^k a_i$ to be the $k$th prefix sum. By symmetry, we may assume $S_{\lfloor D/2 \rfloor}\leq S_D-S_{\lfloor D/2 \rfloor}$, since otherwise we can reverse the sequence.
	Scan each index $j$ from $D/4$ to $D/2$. If an index $j$ does not satisfy $a_j\leq \frac{1}{12 n^{\rho}\log m} \cdot S_{j-1}$, then this implies that $S_j > S_{j-1}\left(1+\frac{1}{12n^{\rho}\log m}\right)$. If no index $j\in[D/4,D/2]$ satisfies this condition then $S_{\lfloor D/2 \rfloor}$ is larger than
	\[
	S_{\lfloor D/4 \rfloor} \cdot \left(1+\frac{1}{12n^{\rho}\log m}\right)^{D/4}
	\ge
	S_{\lfloor D/4 \rfloor} \cdot \left(1+\frac{1}{12n^{\rho}\log m}\right)^{12n^{\rho}\log^2 m}
	\geq S_{\lfloor D/4 \rfloor}\cdot m,
	\]
	which is impossible since $\sum_{i=1}^D a_i\leq m$. Therefore, there must exist an index $j\in[D/4,D/2]$ such that $a_j\leq \frac{1}{12n^{\rho} \log m}\cdot S_{j-1} = \frac{1}{12n^{\rho} \log m}\cdot \sum_{i=1}^{j-1}a_i$.
	By our assumption that $S_{\lfloor D/2 \rfloor}\leq S_D-S_{\lfloor D/2 \rfloor}$, we also have
	$a_j\leq \frac{1}{12n^{\rho} \log m}\cdot \min\left(\sum_{i=1}^{j-1}a_i,\;
	\sum_{i=j+1}^D a_i\right)$.
\end{proof}

\begin{lemma}[High Diameter subroutine]\label{le:highdiameter}
	Let $G^\ast=(V^\ast,E^\ast)$ be a connected subgraph and
	$x\in V^\ast$ be a vertex for which
	$\tilde{D} = \max_{v\in V^\ast} {\operatorname{dist}}_{G^\ast}(x,v) \geq 48n^{\rho}\log^2 m$.
	Define $\vlow=\{v\in V^\ast \mid \degree_{G^\star}(v)\leq{n^{\gamma}}/2\}$.
	Suppose there are no edges connecting two vertices in $\vlow$. Then we can find a cut $(C,\bar{C})$ of $G^\ast$ such that $\min(|C|,|\bar{C}|)\geq\frac{\tilde{D}}{32}n^{\gamma}$ and $\partial(C)\leq \min(\newvol(C),\newvol(\bar{C}))/ (12n^{\rho}\log m)$ in $O(\tilde{D})$ rounds deterministically in the $\CONGEST$ model. Each vertex in $V^\ast$ knows whether or not it is in $C$.
\end{lemma}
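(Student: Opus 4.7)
The plan is to do a BFS from $x$ in $G^\ast$ and cut the graph across a carefully chosen BFS layer, using \cref{le:largediameter} to control the number of edges across the cut and the minimum-degree hypothesis to lower-bound the number of vertices on each side. Since $G^\ast$ is connected, the BFS partitions $V^\ast$ into layers $L_0=\{x\}, L_1, \ldots, L_{\tilde D}$, where $\ell(v) = \operatorname{dist}_{G^\ast}(x,v)$. Let $b_i$ denote the number of edges with one endpoint in $L_{i-1}$ and the other in $L_i$; since every edge of $G^\ast$ lies inside a single layer or between two consecutive layers, $\sum_{i=1}^{\tilde D} b_i \leq |E^\ast| \leq m$, and each $b_i\geq 1$ because the layers are nonempty up to $\tilde D$. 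The assumption $\tilde D \geq 48 n^{\rho}\log^2 m$ is exactly what \cref{le:largediameter} requires, so applying it to $(b_1,\ldots,b_{\tilde D})$ yields an index $j \in [\tilde D/4, 3\tilde D/4]$ with
\[
b_j \;\leq\; \frac{1}{12n^{\rho}\log m}\cdot \min\!\left(\textstyle\sum_{i<j} b_i,\ \sum_{i>j} b_i\right).
\]

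I would then set $C = L_0\cup\cdots\cup L_{j-1}$ and $\bar C = L_j\cup\cdots\cup L_{\tilde D}$. By definition of BFS, the only edges crossing $(C,\bar C)$ are those counted by $b_j$, so $\partial(C) = b_j$. Moreover, each edge counted by $b_i$ for $i<j$ has both endpoints in $C$ and hence contributes to $\newvol(C)$, giving $\newvol(C)\geq \sum_{i<j} b_i$; symmetrically $\newvol(\bar C)\geq \sum_{i>j} b_i$. Combining these with the bound on $b_j$ yields the required conductance inequality $\partial(C)\leq \min(\newvol(C),\newvol(\bar C))/(12n^{\rho}\log m)$.

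The main obstacle, and the only place where the hypothesis on $\vlow$ is used, is the balance bound $\min(|C|,|\bar C|)\geq \frac{\tilde D}{32} n^{\gamma}$. For any index $i$ with $2\le i\le \tilde D-2$, I would show that the five consecutive layers $W_i = L_{i-2}\cup\cdots\cup L_{i+2}$ contain more than $n^{\gamma}/2$ vertices: pick any $v\in L_i$; if $v\notin \vlow$ then $v$ itself has more than $n^{\gamma}/2$ neighbors, all inside $L_{i-1}\cup L_i\cup L_{i+1}\subseteq W_i$, while if $v\in \vlow$ then, since $\vlow$ is independent, any neighbor $u$ of $v$ lies outside $\vlow$ and contributes more than $n^{\gamma}/2$ neighbors all within the three layers adjacent to $u$'s layer, which are still contained in $W_i$. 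Since $j\in[\tilde D/4, 3\tilde D/4]$, each side of the cut contains $\Omega(\tilde D)$ disjoint windows of five consecutive layers, and summing their vertex counts yields $|C|,|\bar C| = \Omega(\tilde D\, n^{\gamma})$; the explicit constant $1/32$ follows from a direct count.

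Finally, the procedure is implementable in $O(\tilde D)$ rounds of \CONGEST: BFS from $x$ takes $O(\tilde D)$ rounds and lets each $v$ learn $\ell(v)$; each vertex then knows which of its incident edges contribute to each $b_i$, and these $\tilde D$ integer values are pipelined up the BFS tree to $x$ in $O(\tilde D)$ rounds; $x$ computes $j$ locally and broadcasts it down the tree in $O(\tilde D)$ rounds; each vertex outputs that it belongs to $C$ iff $\ell(v)<j$. The algorithm is fully deterministic. The hard part of the argument is the balance bound, since the BFS/conductance part is standard; it is exactly there that the independence of $\vlow$ together with the minimum-degree lower bound on $V^\ast\setminus\vlow$ must be combined with the interval $[\tilde D/4, 3\tilde D/4]$ supplied by \cref{le:largediameter}.
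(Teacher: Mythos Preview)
Your approach is essentially identical to the paper's: build a BFS from $x$, apply \cref{le:largediameter} to the inter-layer edge counts, cut at the resulting level $j$, and pipeline the $b_i$ values to and from $x$ along the BFS tree. The conductance bound and the \CONGEST implementation are exactly as in the paper.

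The one place you are slightly off is the constant in the balance bound. Your case analysis on an arbitrary $v\in L_i$ needs five consecutive layers (when $v\in\vlow$ you must pass to a neighbor $u$ and then to $u$'s neighborhood), and with five-layer windows a direct count gives $\min(|C|,|\bar C|)\gtrsim (\tilde D/4)/5\cdot n^{\gamma}/2=\tilde D n^{\gamma}/40$, not $\tilde D n^{\gamma}/32$. The paper sharpens this by noting that the BFS tree provides an edge between every pair of consecutive layers $L_i,L_{i+1}$; since $\vlow$ is independent, one endpoint of that edge is high-degree, so already $|L_{i-1}\cup L_i\cup L_{i+1}\cup L_{i+2}|>n^{\gamma}/2$. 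Four-layer windows then give $(\tilde D/4)/4\cdot n^{\gamma}/2=\tilde D n^{\gamma}/32$, matching the statement.
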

\begin{proof}
	The algorithm is as follows.
	First, build a BFS tree of  $G^\ast$ rooted at
	$x \in V^\ast$ in $O(\tilde{D})$ rounds.
	Let $L_i$ be the set of vertices of
	level $i$ in the BFS tree,
	and let $p_i$ be the number of edges $e=\{u,v\}$ such that $u\in L_i$ and $v\in L_{i+1}$.
	We write
	$L_{a..b} = \bigcup_{i=a}^b L_i$.
	In $O(\tilde{D})$ rounds we can let the root
	$x$ learn the sequence $(p_1, \ldots, p_{\tilde{D}})$.
	
	Note that in a BFS tree, edges do not connect
	two vertices in non-adjacent levels.
	By \cref{le:largediameter}, there exists an index $j\in[\tilde{D}/4,3\tilde{D}/4]$ such that $p_j\leq \frac{1}{12n^{\rho}\log m}\cdot \min\left(\newvol(L_{1..j}),\newvol(L_{j+1..\tilde{D}})\right)$,
	and such an index $j$ can be computed locally at the vertex $x$.
	
	The cut is chosen to be $C=L_{1..j}$, so we have $\partial(C)\leq \min(\newvol(C),\newvol(\bar{C}))/ (12n^{\rho}\log m)$. As for the second condition, due to our assumption in the statement of the lemma, for any two adjacent levels $L_i,L_{i+1}$, there must exist a vertex $v\in L_i\cup L_{i+1}$ such that $v\notin \vlow$.
	By definition of $\vlow$, $v$ has more than ${n^{\gamma}}/2$ neighbors in $G^\ast$, and they are all within $L_{i-1..i+2}$. Thus, the number of vertices within any four consecutive levels must be greater than ${n^{\gamma}}/2$. Since $j\in[\tilde{D}/4,3\tilde{D}/4]$, we have
	\[
	\min(|C|,|\bar{C}|)\geq \frac{\tilde{D}}{4}/4\cdot {n^{\gamma}}/2\geq \frac{\tilde{D}}{32}n^{\gamma}.
	\]
	To let each vertex in $V^\ast$ learn whether or not it is in $C$,  the root $x$ broadcasts the index $j$ to all vertices in $G^\ast$. After that, each vertex in level smaller than or equal to $j$ knows that it is in $C$; otherwise it is in $\bar{C}$.
\end{proof}

Intuitively, \cref{le:lowdegree} says that after the removal of a subgraph of small arboricity (i.e., the edge set $E_s^\diamond$), the remaining graph (i.e., the edge set $E^\diamond$) has high minimum degree. The runtime is proportional to the number of removed vertices (i.e., $|V^\ast|-|V^\diamond|$) divided by the threshold ${n^{\gamma}}$.
Note that the second condition of \cref{le:lowdegree} implies that $E_{s,v}^\diamond = \emptyset$ for all $v \in V^\diamond$.

\begin{lemma}[Low Degree subroutine]\label{le:lowdegree}
	Let $G^\ast=(V^\ast,E^\ast)$ be a connected subgraph with strong diameter $D$. We can partition $E^\ast =E^\diamond \cup E_s^\diamond$ meeting the following two conditions.
	\begin{itemize}
		\item[1.] Let $V^\diamond$ be the set of vertices induced by $E^\diamond$. Each $v \in V^\diamond$ has more than ${n^{\gamma}}/2$ incident edges in $E^\diamond$.
		\item[2.] The edge set $E_s^\diamond$ is further partitioned as $E_s^\diamond=\bigcup_{v\in V^\ast \setminus V^\diamond} E_{s,v}^\diamond$, where $E_{s,v}^\diamond$ is a subset of incident edges of $v$, and $|E_{s,v}^\diamond|\leq {n^{\gamma}}$. Each vertex $v$ knows $E_{s,v}^\diamond$.
	\end{itemize}
	This partition can be found in $O(D+(|V^\ast|-|V^\diamond|)/{n^{\gamma}})$  rounds deterministically in the $\CONGEST$ model.
\end{lemma}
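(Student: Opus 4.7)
The plan is to peel off low-degree vertices iteratively, charging each removed vertex's remaining incident edges to its own outgoing set $E_{s,v}^{\diamond}$. I would first build a BFS tree of $G^{\ast}$ in $O(D)$ rounds as a coordination backbone. Each vertex $v$ maintains a local counter of its current surviving degree, decrementing it whenever a neighbor announces removal. As soon as $v$'s counter drops to at most $n^{\gamma}/2$, $v$ declares itself removed, notifies its still-surviving neighbors in one round, and places its surviving incident edges---oriented outward from $v$---into $E_{s,v}^{\diamond}$. Ties (edges whose two endpoints declare removal in the same round) are broken by ID, with the lower-ID endpoint claiming the edge, making $E_s^{\diamond}=\bigcup_v E_{s,v}^{\diamond}$ a genuine disjoint union. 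Finally, set $V^{\diamond}$ to be the vertices that never declared removal and $E^{\diamond}=E^{\ast}\setminus E_s^{\diamond}$.

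Correctness of condition~(1) is immediate: every $v \in V^{\diamond}$ kept its counter strictly above $n^{\gamma}/2$, and that counter equals its degree in $E^{\diamond}$ at termination. For condition~(2), $E_{s,v}^{\diamond}$ is exactly $v$'s set of surviving incident edges at its removal instant, so $|E_{s,v}^{\diamond}| \le n^{\gamma}/2 \le n^{\gamma}$; the outward orientation is acyclic because ordering removed vertices by (removal-round, ID) makes every edge in $E_{s,v}^{\diamond}$ point either to a vertex in $V^{\diamond}$ or to a vertex removed strictly later than $v$ in this order.

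The main obstacle is matching the round bound $O(D+(|V^{\ast}|-|V^{\diamond}|)/n^{\gamma})$. Building the BFS tree and performing a final termination convergecast together cost $O(D)$. For the peeling itself I would schedule removals in macro-rounds of up to $n^{\gamma}$ vertices each, coordinated via the BFS tree: every vertex whose counter has dropped to $\le n^{\gamma}/2$ eagerly reports readiness up the tree, and the root greedily admits up to $n^{\gamma}$ ready vertices per macro-round for simultaneous removal. Since each removal touches at most $n^{\gamma}/2$ neighbors and \CONGEST bandwidth suffices for the per-edge notifications, a macro-round costs $O(1)$ rounds, and $k=|V^{\ast}|-|V^{\diamond}|$ total removals fit into $O(k/n^{\gamma})$ macro-rounds. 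The delicate point is sustaining the full $n^{\gamma}$-wide batching while respecting peeling dependencies (a vertex becomes eligible only once enough of its neighbors have actually been removed), but the eager-reporting rule handles this because an admitted vertex stays eligible even when concurrent removals in the same macro-round only further decrease its counter.
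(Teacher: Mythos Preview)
Your peeling rule and the correctness arguments for (1) and (2) are fine, but the round-complexity analysis has a real gap, and it is exactly the gap the paper's proof is designed to close.

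The problem is your threshold. Peeling at $n^{\gamma}/2$ can force a long sequential dependency chain: it is easy to build instances where at each moment only a single vertex has degree $\le n^{\gamma}/2$, and removing it drops exactly one neighbour below the threshold, and so on for $\Theta(n)$ steps. (The paper notes explicitly that naive peeling ``requires $\Omega(n)$ peeling iterations'' on some graphs.) Your macro-round scheme does not escape this: if only one vertex is \emph{ready} at a time, the root can admit at most one vertex per macro-round, so you get $|V^{\ast}|-|V^{\diamond}|$ macro-rounds, not $(|V^{\ast}|-|V^{\diamond}|)/n^{\gamma}$. The sentence ``the eager-reporting rule handles this'' does not address the case where readiness itself is sequential. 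Separately, having the root \emph{admit} vertices requires a round trip through the BFS tree, so a macro-round costs $\Theta(D)$ rounds, not $O(1)$; this dependency is genuine and cannot be pipelined away, because a vertex must wait for its admission decision before announcing removal.

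The paper's fix is a small but essential idea: peel with slack. In each iteration, remove \emph{all} vertices whose current degree is at most $n^{\gamma}$ (not $n^{\gamma}/2$). This is a purely local decision, so one iteration of peeling takes $O(1)$ rounds. After the removals, the root counts how many vertices were removed; if the count exceeds $n^{\gamma}/2$, continue, otherwise stop. The point of the gap between $n^{\gamma}$ and $n^{\gamma}/2$ is twofold: (i) every non-final iteration removes more than $n^{\gamma}/2$ vertices, bounding the number of iterations by $O((|V^{\ast}|-|V^{\diamond}|)/n^{\gamma})$; and (ii) at termination, each surviving vertex had degree $>n^{\gamma}$ before the last iteration and lost at most $n^{\gamma}/2$ neighbours in it, so its final degree exceeds $n^{\gamma}/2$. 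Because the per-iteration peeling is local and only the termination test is global, the iterations can be pipelined (execute speculatively; once the root sees a small count it broadcasts a rollback), yielding $O(D+(|V^{\ast}|-|V^{\diamond}|)/n^{\gamma})$ overall. Your proposal is missing precisely this slack-threshold trick.
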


\begin{proof}
	To meet Condition $1$, a naive approach is to iteratively ``peel off''  vertices that have degree at most ${n^{\gamma}}/2$, i.e., put all their incident edges in $E_s$, so long as any such vertex exists.
	On some graphs this process requires $\Omega(n)$ peeling iterations.
	
	We solve this issue by doing a batch deletion.
	First, build a BFS tree of $G^\ast$ rooted at an arbitrary vertex
	$x  \in V^\ast$.
	We use this BFS tree to let $x$  count the number of vertices that have degree less than ${n^{\gamma}}$ in the remaining subgraph in $O(D)$ rounds.
	
	The algorithm proceeds in iterations.
	Initially we set $E^\diamond \gets E^\ast$ and $E_s^\diamond \gets \emptyset$.
	In each iteration, we identify the subset $Z \subseteq V^\ast$
	whose vertices have at most ${n^{\gamma}}$ incident edges in $E^\diamond$.
	We orient all the $E^\diamond$-edges
	touching $Z$ away from $Z$, if one endpoint is in $Z$,
	or away from the endpoint with smaller ${{\operatorname{ID}}}$, if both endpoints are in $Z$.
	Edges incident to $v$ oriented away from $v$ are added to $E_{s,v}^\diamond$
	and removed from $E^\diamond$.
	The root $x$ then counts the number $z=|Z|$ of such vertices via the BFS tree.
	If  $z > {n^{\gamma}}/2$, we proceed to the next iteration; otherwise we terminate the algorithm.
	
	The termination condition ensures that each vertex has degree at least
	$({n^{\gamma}}+1) - z > {n^{\gamma}}/2$, and so Condition 1 is met.
	It is straightforward to see that
	the set $E_s^\diamond$ generated by the algorithm meets Condition 2,
	since for each $v$, we only add edges to $E_{s,v}^\diamond$ once,
	and it is guaranteed that $|E_{s,v}^\diamond| \leq  {n^{\gamma}}$.
	Tie-breaking according to vertex-${\operatorname{ID}}$ ensures the orientation is acyclic.

	Throughout the process, each time one vertex puts any edges into $E_s^\diamond$, it no longer stays in $V^\diamond$. Each iteration can be done in $O(D)$ time. We proceed to the next iteration only if there are more than  ${n^{\gamma}}/2$ vertices being removed from  $V^\diamond$. A trivial implementation can lead to an algorithm taking $O(D\left\lceil (|V^\ast|-|V^\diamond|)/{n^{\gamma}}\right\rceil))$ rounds.
	The round complexity can be further improved to
	$O(D+ (|V^\ast|-|V^\diamond|)/{n^{\gamma}})$ by pipelining the iterations. At some point the root $x$ detects that iteration $i$ was the last iteration; in $O(D)$ time it broadcasts a message to all nodes instructing them to
	roll back iterations $i+1,i+2,\ldots$, which have been executed speculatively.\end{proof}

The proof of the following lemma is given in \cite[Section 3]{chang2018distributed}

\begin{lemma}[Low Conductance subroutine]\label{le:lowconductance}
	Let $G^\ast=(V^\ast,E^\ast)$ be a connected subgraph with strong diameter $D$.
	Let $\phi\leq 1/12$ be a number.
	Suppose that there exists a subset $S\subset V^\ast$ satisfying
	\[
	\newvol(S)\leq (2/3)\newvol(V^\ast) ~~~\text{and}~~~ \Phi(S)\leq \frac{\phi^3}{19208\ln ^2(|E^\ast|e^4)}.
	\]
	Assuming such an $S$ exists,
	there is a $\CONGEST$ algorithm
	that finds a cut $C \subset V^\ast$
	such that $\Phi(C)\leq 12\phi$
	in $O(D + \poly (\log |E^\ast|, 1/\phi))$ rounds,
	with failure probability $1/\poly(|E^\ast|)$.
	Each vertex in $V^\ast$ knows whether or not it
	belongs to $C$.
\end{lemma}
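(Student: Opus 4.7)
The plan is to adapt the local clustering paradigm of Spielman--Teng (nibble) and Andersen--Chung--Lang (personalized PageRank) to the $\CONGEST$ model. The guiding principle is the classical fact that if $G^\ast$ contains a cut $S$ whose conductance is at most $\phi^3/\polylog(|E^\ast|)$, then a truncated random walk or PageRank diffusion started from a \emph{typical} vertex of $S$, when ``swept'' by sorting vertices according to their probability-over-degree ratio and cutting at the best threshold, produces a cut of conductance $O(\phi)$. Thus the task is to find such a good starting vertex and to simulate the diffusion in $\CONGEST$ time $O(D+\poly(\log|E^\ast|,1/\phi))$.

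First, I would build a BFS tree of $G^\ast$ in $O(D)$ rounds; this will be reused throughout for global coordination (broadcasts, aggregations, and computing the best sweep cut). Then each vertex $v\in V^\ast$ marks itself as a \emph{seed} independently with probability $\Theta(\deg(v)\log|E^\ast|/\newvol(V^\ast))$. Since $\newvol(S)$ is $\Omega(\newvol(V^\ast))$ (the cut $S$ is balanced) and a constant fraction of $S$ is a ``good start'' for the nibble procedure, with high probability at least one seed $s^\star\in S$ is a good start; the number of seeds is $O(\log|E^\ast|)$ by a Chernoff bound.

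Next, from every seed in parallel I would execute a nibble-style diffusion for $\poly(\log|E^\ast|,1/\phi)$ steps, maintained as a sparse vector $p_i$ whose support always has volume at most $\newvol(V^\ast)/2$. Each diffusion step requires every active vertex $v$ to send the residuals $p_i(v)/\deg(v)$ to its neighbors, which in $\CONGEST$ takes time proportional to the total number of simultaneous messages crossing an edge. The \emph{main obstacle} is exactly this congestion: up to $O(\log|E^\ast|)$ diffusions run in parallel, and within a single diffusion the nibble vector must be pipelined across edges. This is handled by (i) the sparsity invariant of nibble so that each edge carries only $\poly(\log|E^\ast|,1/\phi)$ messages, and (ii) scheduling these messages using standard pipelining, paying only a $\poly(\log|E^\ast|,1/\phi)$ factor in round complexity.

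Finally, for each seed we must evaluate its best sweep cut. For this, after the diffusion terminates, each vertex holds its weight $p_i(v)$; the BFS tree is used to aggregate the $\newvol$-sums and cut-size sums along the sweep order (implemented by sorting via a distributed routine on the tree), in $O(D+\poly(\log|E^\ast|,1/\phi))$ rounds. Each seed then broadcasts the conductance of its best sweep cut up the BFS tree; the seed attaining the global minimum is declared the winner, and every $v\in V^\ast$ locally decides membership in $C$ based on its position in the winning sweep order. Correctness follows from the nibble guarantee applied to $s^\star$, giving $\Phi(C)\leq 12\phi$, and the overall round complexity matches the claimed $O(D+\poly(\log|E^\ast|,1/\phi))$.
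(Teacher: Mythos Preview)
The paper does not give its own proof of this lemma; it simply cites \cite[Section~3]{chang2018distributed}.  Your high-level plan (truncated lazy walk / PageRank from a few random seeds, then a sweep cut) is indeed the approach taken there, so the overall shape is right.  Two of your steps, however, do not go through as written.

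\paragraph{Seeding.}
You claim ``$\newvol(S)$ is $\Omega(\newvol(V^\ast))$ (the cut $S$ is balanced)'', but the hypothesis only supplies the \emph{upper} bound $\newvol(S)\le (2/3)\newvol(V^\ast)$; nothing prevents $\newvol(S)$ from being tiny.  If $\newvol(S)=\polylog|E^\ast|$, sampling $\Theta(\log|E^\ast|)$ seeds proportional to degree misses $S$ with high probability, and a walk started anywhere in $V^\ast\setminus S$ can mix inside that side and look essentially stationary (because $\pi(S)$ is negligible).  Your procedure would then report ``no sparse cut'' even though $S$ exists, violating the lemma.  The argument in \cite{chang2018distributed} avoids this by not trying to land in $S$ at all: it tests whether the walk from a stationary-sampled vertex has mixed, and when it has not, the Lov\'asz--Simonovits inequality produces a low-conductance level set from \emph{that} starting vertex, whatever it is.  The existence of the hypothesised $S$ is used only to force a detectable deviation somewhere, not to locate the seed.

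\paragraph{Sweep.}
``Sorting via a distributed routine on the tree'' in $O(D+\poly(\log|E^\ast|,1/\phi))$ rounds is not possible in \CONGEST: globally ordering $|V^\ast|$ keys pushes $\Omega(|V^\ast|)$ words through the root edge of the BFS tree.  The way around this is to avoid sorting entirely.  One fixes in advance a set of $\poly(\log|E^\ast|,1/\phi)$ candidate thresholds (a geometric grid on $p(v)/\deg(v)$ together with the $O(\log|E^\ast|)$ volume scales arising in the Lov\'asz--Simonovits curve), and for each threshold aggregates $\newvol$ and $|\partial|$ of the corresponding level set over the BFS tree; these aggregations pipeline to give the claimed additive $O(D)+\poly(\log|E^\ast|,1/\phi)$ bound.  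Without this refinement your round complexity degrades to $\Omega(|V^\ast|)$.
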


\subsection{Proof of \cref{le:auxiliary}}

We prove \cref{le:auxiliary} by presenting and analyzing a specific distributed algorithm, which makes use of the subroutines specified
in \cref{le:highdiameter}, \cref{le:lowdegree}, and \cref{le:lowconductance}.

Recall that we are given a subgraph with edge set $E'$ and must ultimately
return a partition of it into $\EhExp'\cup E_s' \cup E_r'$.
The algorithm initializes $\EhExp' \gets E'$, $E_s' \gets \emptyset$,
and $E_r' \gets \emptyset$.
There are two types of special operations.
\begin{description}
	\item[Remove.] In an {\sf Remove} operation, some edges are moved from $\EhExp'$ to either $E_s'$ or $E_r'$. For the sake of a clearer presentation, each such operation is tagged \Remove{$i$}, for some index $i$.
	\item[Split.] Throughout the algorithm we maintain a partition of the current set $\EhExp'$. In a {\sf Split} operation, the partition subdivided. Each such operation is tagged as \Split{$i$}, for some index $i$, such that \Split{$i$} occurs right after \Remove{$i$}.
\end{description}

Throughout the algorithm, we ensure that any part $E^\star$ of the partition of $\EhExp'$ has an identifier that is known to all members of $V(E^\star)$.
It is not required that each part forms a connected subgraph.
The partition at the end of the algorithm,  $\EhExp'=\bigcup_{i=1}^t \mathcal{E}_i$, is the output partition.

\paragraph{Notations.}
Since we treat $\EhExp'$ as the ``active'' edge set and $E_s'$ and $E_r'$ as repositories of removed edges, $\deg(v)$ refers to the degree of $v$ in the subgraph induced by the \emph{current} $\EhExp'$.
We write $\vlow=\{v\in V' \mid \degree(v)\leq \ndel\}$.

\paragraph{Algorithm.}
In the first step of the algorithm,
move each edge $\{u,v\}\in \EhExp'$ in the subgraph
induced by $\vlow$to $E_{s,u}'$, where $\ID(u) < \ID(v)$ (\Remove{1}).
(Breaking ties by vertex-$\ID$ is critical to
keep the orientation acyclic.)

After that, $\EhExp'$ is divided into connected components. Assume these components are $G_1=(V_1,E_1)$, $G_2=(V_2,E_2), \ldots$, where $V_i = V(E_i)$.
Let $D_i$ be the depth of a BFS tree rooted at an arbitrary vertex in $G_i$.
In $O(D_i)$ rounds, the subgraph $G_i$ is assigned an identifier that is known to all vertices in $V_i$ (\Split{1}).
Note that this step is done in parallel for each $G_i$, and the time for this step is different for each $G_i$.
From now on there will be no communication between different subgraphs in $\{G_1, G_2, \ldots \}$, and we focus on one specific subgraph $G_i$ in the description of the algorithm.

Depending on how large $D_i$ is, there are two cases.
If $D_i \geq 48\log^2m$, we go to Case 1,
otherwise we go to Case 2.

\paragraph{Case 1:} In this case, we have
$D_i \geq 48n^{\rho}\log^2 m$.
Since there are no edges connecting two vertices in $\vlow$, we can apply the High Diameter subroutine, \cref{le:highdiameter}, which finds a cut $(C,\bar{C})$ of $G_i$ such that $\min(|C|,|V_i \setminus C|)\geq\frac{D_i}{32}n^{\gamma}$ and $\partial(C)\leq \min(\newvol(C),\newvol(V_i \setminus C))/ (12n^{\rho}\log m)$ in $O(D_i)$ rounds.
Every vertex in $V_i$ knows whether it is in $C$ or not. All edges of the cut $(C,\bar{C})$ are put into $E_r'$ (\Remove{2}). Then $E_i$ splits into two parts according to the cut  $(C,\bar{C})$ (\Split{2}).
After that, all vertices in $V_i$ terminate.
(Observe that the part containing the BFS tree root is connected, but the other part is not necessarily connected.)

\paragraph{Case 2:} In this case,
we have $D_i \leq 48n^{\rho} \log^2 m$.
Since $G_i=(V_i,E_i)$ is a small diameter graph,
a vertex $v \in V_i$ is able broadcast a message to all
vertices in $V_i$ very fast.
We apply the Low Degree subroutine, \cref{le:lowdegree}, to obtain a partition $E_i=E^\diamond \cup {E}_s^\diamond$.
We add all edges in ${E}_s^\diamond$ to $E_s'$ in such a way that $E_{s,v}' \gets E_{s,v}' \cup {E}_{s,v}^\diamond$ for all $v \in V_i \setminus V^\diamond$, where $V^\diamond = V(E^\diamond)$  (\Remove{3}).

After removing these edges, the remaining edges of $E_i$ are divided into several connected components, but all remaining vertices have degree larger than $\ndel/2$.
Assume these connected components are $G_{i,1}=(V_{i,1},E_{i,1})$, $G_{i,2}=(V_{i,2},E_{i,2})$, $\ldots$.
Let $D_{i,j}$ be the depth of the BFS tree from an arbitrary root vertex in $G_{i,j}$.
In $O(D_{i,j})$ rounds we compute such a BFS tree
and assign an identifier that is known to all vertices in
$V_{i,j}$ (\Split{3}).
That is, the remaining edges in $E_i$ are partitioned into $E_{i,1}$, $E_{1,2}$, $\ldots$.

In what follows, we focus on one subgraph $G_{i,j}$
and proceed to Case 2-a or Case 2-b.

\paragraph{Case 2-a:} In this case, $D_{i,j}\geq 48n^{\rho}\log^2 m$.
The input specification of the High Diameter subroutine (\cref{le:highdiameter}) is satisfied, since every vertex has degree larger than $\ndel/2$. We apply the High Diameter subroutine to $G_{i,j}$. This takes $O(D_{i,j})$ rounds. This case is similar to Case 1, and we do the same thing as what we do in Case 1, i.e., remove the edges in the cut found by the subroutine (\Remove{4}),  split the remaining edges (\Split{4}), and then all vertices in $V_{i,j}$ terminate.

\paragraph{Case 2-b:} In this case,
$D_{i,j} \leq 48n^{\rho} \log^2 m$.
Note that every vertex has degree larger than $\ndel/2$, and $G_{i,j}$ has small diameter. What we do in this case is to test whether $G_{i,j}$ has any low conductance cut; if yes, we will split $E_{i,j}$ into two components. To do so, we apply the Low Conductance subroutine, \cref{le:lowconductance}, with $\phi=\frac{1}{144n^{\rho}\log m}$.
Based on the result, there are two cases.

\paragraph{Case 2-b-i:} The subroutine finds a set of vertices $C$ that $\Phi(C)\leq 12\phi=\frac{1}{12n^{\rho}\log m}$, and every vertex knows whether it is in $C$ or not.
We move $\partial(C)$ to $E_r'$ (\Remove{5}), and then split the remaining edges into two edge sets according to the cut $(C, \bar{C})$ (\Split{5}). After that, all vertices in $V_{i,j}$ terminate.

\paragraph{Case 2-b-ii:} Otherwise, the subroutine does not return a subset $C$, and it means with probability at least $1 - 1/\poly(|E_{i,j}|) = 1 - 1/\poly(n)$, there is no cut $(S,\bar{S})$ with conductance less than $\frac{\phi^3}{19208\ln^2( |E_{i,j}| e^4)} = \Theta(\log^{-5}m)$.
Recall the relation between the mixing time
$\mix(G_{i,j})$ and the conductance $\Phi=\Phi_{G_{i,j}}$: $\Theta(\frac{1}{\Phi}) \leq \mix(G_{i,j}) \leq \Theta(\frac{\log |V_{i,j}|}{\Phi^2})$.
Therefore, w.h.p., $G_{i,j}$ has $O(\poly \log n)$ mixing time. All vertices in $V_{i,j}$ terminate without doing anything in this step.

Note that in the above calculation, we use the fact that every vertex in $V_{i,j}$ has degree larger than $\ndel/2$ in $G_{i,j}$, and this implies that $|V_{i,j}| = \Omega(\ndel)$ and $|E_{i,j}| = \Omega(\ndell)$, and so $\Theta(\log m) = \Theta(\log n) = \Theta(\log |E_{i,j}|) = \Theta(\log |V_{i,j}|)$.

\paragraph{Analysis.}
We show that the output of
$\mathcal{A}^\ast$ meets its
specifications (C1)--(C6).
Recall that
$\EhExp'=\bigcup_{i=1}^t \mathcal{E}_i$ is the final partition of the edge set $\EhExp'$ when all vertices terminate. Once an edge is moved from $\EhExp'$
to either $E_r'$ or $E_s'$,
it remains there for the rest of the computation.
Condition (C1) follows from the fact that each time we do a split operation, the induced vertex set of each part is disjoint. Condition (C6) follows from the fact that each vertex knows which part of $\EhExp'$ it belongs to after each split operation. In the rest of this section, we prove that the remaining conditions are  met.

\begin{claim}
	Condition (C2) is met.
\end{claim}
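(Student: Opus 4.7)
My plan is to prove (C2) by identifying the only two places where edges enter $E_s'$: the step \Remove{1} (which takes each edge $\{u,v\}$ with both endpoints in $\vlow$ and $\ID(u)<\ID(v)$ and places it in $E_{s,u}'$, oriented $u\to v$) and the step \Remove{3} (which, after invoking the Low Degree subroutine on a component $G_i$, places each edge of $E_s^\diamond$ into the set $E_{s,v}'$ at the endpoint $v\in V_i\setminus V^\diamond$ dictated by the subroutine's acyclic orientation). The decomposition $E_s'=\bigcup_v E_{s,v}'$, with each edge lying in exactly one $E_{s,v}'$, and each vertex $v$ locally knowing its own $E_{s,v}'$, are immediate from these assignments together with the corresponding guarantee stated in \cref{le:lowdegree}.

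The main obstacle will be showing that the combined orientation on $E_s'$ is acyclic, because a $\vlow$-vertex with a non-$\vlow$ neighbor produces outgoing edges in both \Remove{1} and in iteration $1$ of the Low Degree subroutine, and \emph{a priori} these two orientation rules could close into a directed cycle. The key structural observation I will use is that after \Remove{1} no edge of $\EhExp'$ has both endpoints in $\vlow$, while every $v\notin\vlow$ still has $\deg_{\EhExp'}(v)>n^{\gamma}$ since \Remove{1} does not touch its incident edges. Consequently the peeling set $Z_1$ in iteration $1$ of the Low Degree subroutine on $G_i$ is contained in $\vlow\cap V_i$, and from iteration $2$ onwards every remaining vertex lies outside $\vlow$. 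I will then define a total order $\prec$ on $V'$ that lists the $\vlow$-vertices first, ordered by ID, followed by the non-$\vlow$ vertices ordered lexicographically by (iteration of peeling in \Remove{3}, then ID). A short case check shows that every oriented edge of $E_s'$ points from a $\prec$-smaller endpoint to a $\prec$-larger one: \Remove{1} edges are $\vlow$-to-$\vlow$ from smaller to larger ID; iteration-$1$ edges of \Remove{3} go from a source in $Z_1\subseteq\vlow$ to a target in $V_i\setminus Z_1\subseteq V_i\setminus\vlow$; and iteration-$i$ edges of \Remove{3} for $i\geq 2$ lie entirely on non-$\vlow$ vertices and inherit the acyclic order from \cref{le:lowdegree}. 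Acyclicity of the combined orientation follows at once.

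For the degree bound I split into two cases. If $v\in\vlow$, then $E_{s,v}'$ receives a \Remove{1} contribution of at most the number of $\vlow$-neighbors of $v$ with larger ID, and a \Remove{3} contribution consisting of at most the residual edges at $v$ after \Remove{1} (all of which are $v$-to-non-$\vlow$ edges oriented away from $v$ in iteration $1$ of the subroutine, since the other endpoints lie outside $Z_1$). Summed together these give $|E_{s,v}'|\leq \deg(v)\leq n^{\gamma}$, and since every edge originally incident to $v$ has now either been moved to $E_s'$ or remains in $\EhExp'$ only to be subsequently moved to $E_r'$, the residual degree $\deg_{\EhExp'}(v)$ equals $0$ right after \Remove{3} and can only decrease thereafter. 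If $v\notin\vlow$, then $E_{s,v}'$ can be nonempty only when $v\in V_i\setminus V^\diamond$ in some invocation of \Remove{3}, in which case \cref{le:lowdegree} directly gives $|E_{s,v}'|=|E_{s,v}^\diamond|\leq n^{\gamma}$ together with $\deg_{\EhExp'}(v)=0$ after that step. In both cases the required inequality $|E_{s,v}'|+\deg_{\EhExp'}(v)\leq n^{\gamma}$ holds at termination.
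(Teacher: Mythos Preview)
Your argument is correct and follows the same skeleton as the paper's proof: both identify \Remove{1} and \Remove{3} as the only operations that populate $E_s'$ and then verify the bound $|E_{s,v}'|+\deg_{\EhExp'}(v)\le n^{\gamma}$. Where you go further is in treating acyclicity explicitly. The paper's proof of (C2) does not address the combined orientation at all; you supply a global linear order~$\prec$ (all $\vlow$-vertices first by $\ID$, then non-$\vlow$ vertices by peeling iteration and $\ID$) and check that every oriented edge points $\prec$-forward. The crucial structural fact you isolate---that after \Remove{1} the first peeling set $Z_1$ in the Low Degree subroutine coincides with $\vlow\cap V_i$, so every iteration-1 edge is oriented from $\vlow$ into $V_i\setminus\vlow$---is correct and is precisely what makes the $\ID$-based orientation of \Remove{1} compatible with the iteration-based orientation of \Remove{3}.

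One small gap: your degree-bound paragraph for $v\in\vlow$ implicitly assumes the component $G_i$ containing $v$ reaches Case~2 so that \Remove{3} runs and drives $\deg_{\EhExp'}(v)$ to zero. If $G_i$ instead enters Case~1, \Remove{3} is never applied and $v$ may terminate with $\deg_{\EhExp'}(v)>0$. The conclusion still holds, because already right after \Remove{1} one has $|E_{s,v}'|+\deg_{\EhExp'}(v)\le \deg_{E'}(v)\le n^{\gamma}$ and neither summand can increase afterward (this is exactly the paper's one-line argument for \Remove{1}); you should add a sentence covering this branch.
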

\begin{proof}
	Note that only \Remove{1} and \Remove{3} involve $E_s'$.
	In \Remove{1}, any $E_{s,u}'$ that becomes non-empty must have had
	$u\in \vlow$, so $\deg(u)\le \ndel$ before \Remove{1}, and
	therefore $|E_{s,u}'| + \deg(u) \leq \ndel$ after \Remove{1}.
	In \Remove{3}, the Low Degree subroutine of \cref{le:lowdegree} computes a partition $E_i=E^\diamond \cup {E}_s^\diamond$,
	and then we update $E_{s,u}' \gets E_{s,u}' \cup {E}_{s,u}^\diamond$ for all $u \in V_i \setminus V^\diamond$.
	By \cref{le:lowdegree}, for any $u$ such that ${E}_{s,u}^\diamond \neq \emptyset$, we have $|{E}_{s,u}^\diamond|\leq \ndel$, and $u \notin V^\diamond$, where $V^\diamond$ is the vertex set induced by the remaining edge set $E^\diamond$. In other words, once $u$ puts at least one edge  into $E_{s,u}'$, we have $\deg(u)=0$ after \Remove{3}.
\end{proof}

\begin{claim}
	Conditions (C3) and (C4) are met.
\end{claim}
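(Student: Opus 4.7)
The plan is to verify (C3) and (C4) by tracing $\mathcal{A}^\ast$ along each termination path of a vertex.

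For (C4), a vertex $v$ enters $S$ either at \Remove{1}, in which case it halts in $O(1)$ rounds, or at \Remove{3}, in which case its halt time is dominated by \Split{1} and the Low Degree subroutine. Since reaching \Remove{3} requires being in Case 2, we have $D_i = O(n^{\rho}\log^2 m)$, so the total is $O(D_i + (|V_i|-|V^\diamond|)/n^{\gamma}) = \tilde{O}(n'/n^{\gamma})$.

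For (C3), I handle the four terminal branches in turn. In Cases 1 and 2-a the High Diameter subroutine (\cref{le:highdiameter}) returns a cut with $\min(|C|,|\bar C|) \geq (\tilde D/32)\,n^{\gamma} \geq \tilde{\Omega}(n^{\gamma+\rho})$, since the triggering diameter exceeds $48n^{\rho}\log^2 m$; each resulting cluster therefore has $|\mathcal{V}_i| \leq n' - \tilde{\Omega}(n^{\gamma})$, giving (C3-2), and the whole subgraph halts simultaneously $O(D_i)$ (resp.\ $O(D_i+D_{i,j})$) rounds after the initiating split. In Case 2-b-ii, failure of the Low Conductance subroutine certifies w.h.p.\ that $G_{i,j}$ has no cut of conductance below $\Theta(\log^{-5} m)$; combined with the min-degree $\Omega(n^{\gamma})$ guaranteed by the preceding Low Degree step and the standard bound $\mix = O(\log n/\Phi^2)$, we get $\mix(G_{i,j}) = \tilde{O}(1)\subseteq\tilde{O}(n^{\rho})$. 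Summing the costs of \Remove{1}, \Split{1}, Low Degree, \Split{3}, and the Low Conductance subroutine (the last being $\operatorname{poly}(\log m, 1/\phi) = \tilde{O}(n^{O(\rho)})$) yields halt time $\tilde{O}(n^{10\rho})$, establishing (C3-1). At every split each surviving vertex learns its new cluster identifier and its applicable subcase.

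The main subtlety is Case 2-b-i, where the Low Conductance subroutine returns $C$ with $\Phi(C)\leq 12\phi$ but does not, on its face, bound $\min(|C|,|\bar C|)$. Assume WLOG $|C|\leq |\bar C|$ and suppose for contradiction that $|C| < n^{\gamma}/4$. Each $v\in C$ has $\deg(v)>n^{\gamma}/2$ (by the preceding Low Degree step) and at most $|C|-1$ of its neighbours inside $C$, so $|\partial(C)| > |C|\,n^{\gamma}/4$. A short calculation from the min-degree bound and $|C|<n^{\gamma}/4$ (namely $|C|\cdot(|V_{i,j}|-1) \leq (|V_{i,j}|-|C|)\cdot n^{\gamma}/2$) rules out $\newvol(C) > \newvol(\bar C)$, so $\Phi(C) = |\partial(C)|/\newvol(C)$; then $\newvol(C) = |\partial(C)| + 2|E(C,C)| \leq |\partial(C)| + |C|^2 < 2|\partial(C)|$, giving $\Phi(C) > 1/2$, which contradicts $\Phi(C)\leq 12\phi < 1/2$. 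Hence $\min(|C|,|\bar C|) \geq n^{\gamma}/4 = \Omega(n^{\gamma})$, so the split witnesses (C3-2); the halt time, dominated by the Low Conductance subroutine, is $\tilde{O}(n^{10\rho})$.
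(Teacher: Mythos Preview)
Your argument tracks the paper's case analysis closely, and your explicit contradiction argument in Case~2-b-i (showing $\min(|C|,|\bar C|)\ge n^{\gamma}/4$ from the min-degree bound) is correct and more detailed than the paper's one-line assertion. The treatment of Cases~1, 2-a, and 2-b-i is fine, since there you only claim (C3-2), which carries no constraint on the halt time~$K$.

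There is, however, a genuine gap in your handling of Case~2-b-ii. You assert that ``summing the costs of \Remove{1}, \Split{1}, Low Degree, \Split{3}, and the Low Conductance subroutine \ldots\ yields halt time $\tilde{O}(n^{10\rho})$,'' and conclude (C3-1). But the Low Degree subroutine (\cref{le:lowdegree}) runs in $O\!\left(D_i + (|V_i|-|V^\diamond|)/n^{\gamma}\right)$ rounds, and the second term can be as large as $n'/n^{\gamma}$, which in general is \emph{not} $\tilde{O}(n^{10\rho})$; e.g.\ at the top level $n'=n$, so $n'/n^{\gamma}=n^{1-\gamma}$, which with the paper's parameters $\gamma=\epsilon$, $\rho=\epsilon/11$ vastly exceeds $n^{10\rho}$. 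So you cannot conclude $K=\tilde{O}(n^{10\rho})$ unconditionally, and hence cannot conclude (C3-1) unconditionally.

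The paper closes this gap with a dichotomy you are missing: in Case~2-b-ii, \emph{either} $K=\tilde{O}(n^{10\rho})$ and (C3-1) holds, \emph{or} $K$ is larger, in which case the excess time must have been spent in the Low Degree subroutine, whence $|V_i|-|V^\diamond|=\tilde{\Omega}(K\,n^{\gamma})$; consequently $|V_{i,j}|\le |V^\diamond|\le n'-\tilde{\Omega}(n^{\gamma})$ and (C3-2) holds instead. Adding this one observation repairs your proof. (Your running-time remarks in Cases~2-a and 2-b-i likewise omit the Low Degree cost, but since you are claiming (C3-2) there, those omissions are harmless.)
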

\begin{proof}
	We need to verify that in each part of the algorithm, we either spend at most $\tilde{O}(n^{10\rho} )$ (because the run time of low conductance routine \cref{le:lowconductance} is $O(  \log^9/\phi^{10})$ rounds and here $\phi = \frac{1}{144n^{\rho\log m}}$, whereas the run time of high conductance cut is $O(n^{\rho}\log^2 m)$), or the size of the current component shrinks by $\tilde{\Omega}(n^{\gamma})$ vertices \emph{per round}.

	After removing all edges in the subgraph induced by $\vlow$, the rest of $E'$ is partitioned into
	connected components
	$\mathcal{E}_1,\mathcal{E}_2,\ldots$.
	Consider one such component $\mathcal{E}_i$,
	and suppose it goes to Case 1.
	We find a sparse cut $(C,\bar{C})$, and
	moving $\partial(C)$ to $E_r'$ breaks
	$\mathcal{E}_i$ into
	$\mathcal{E}_i^1$ and $\mathcal{E}_i^2$.
	By \cref{le:highdiameter},
	we have  $\min(|C|,|\bar{C}|) \geq \frac{D_i}{32}\ndel$, so the size of both $V(\mathcal{E}_i^1)=C$ and
	$V(\mathcal{E}_i^2) = \bar{C}$ are at most $|V(\mathcal{E}_i)|-\frac{D_i}{32}\ndel
	\leq n' - \Omega(D_i)\ndel = n' - \Omega(n^{\rho+\gamma})$.
	Since the running time for each vertex in $V(\mathcal{E}_i^1)$ and $V(\mathcal{E}_i^2)$
	is $O(D_i)$, the condition (C3-2) is met.
	
	Now suppose that $\mathcal{E}_i$ goes to Case 2.
	Note that the total time spent before it reaches Case 2 is $O(D_i)= \poly \log n$.
	In Case 2 we execute the Low Degree subroutine of \cref{le:lowdegree}, and let
	the time spent in this subroutine be $\tau$.
	By \cref{le:lowdegree}, it is either the case that (i) $\tau = O(D_i)$
	or
	(ii) the remaining vertex set $V^\diamond$ satisfies $|V(E_i)| - |V^\diamond| = \Omega(\tau \ndel)$.
	In other words, if we spend too much time (i.e., $\omega(D_i)$) on this subroutine, we must lose $\Omega(\ndel)$ vertices per round.
	
	After that, $\mathcal{E}_i$ is split into
	$\mathcal{E}_{i,1}$, $\mathcal{E}_{i,2}$, $\ldots$.
	We consider the set $\mathcal{E}_{i,j}$.
	If $\mathcal{E}_{i,j}$ goes to Case 2-a,
	then the analysis is the same as that in Case 1,
	and so (C3-2) is met.
	
	Now suppose that $\mathcal{E}_{i,j}$ goes to Case 2-b. Note that the time spent during the Low Conductance subroutine of \cref{le:lowconductance} is
	$\tilde{O}(n^{10\rho})$. Suppose that a low conductance cut $(C,\bar{C})$ is found (Case 2-b-i). Since the cut has conductance less than $\frac{1}{12n^{\rho}\log m}$, by the fact that every vertex has degree higher than $\ndel/2$, we must have $\min(|C|, |\bar{C}|) = \Omega(\ndel)$. Assume $\mathcal{E}_{i,j} \setminus \partial(C)$ is split into $\mathcal{E}_{i,j}^1$ and $\mathcal{E}_{i,j}^2$.
	The size of both $V(\mathcal{E}_{i,j}^1)$ and $V(\mathcal{E}_{i,j}^2)$
	must be at most $|V(\mathcal{E}_{i,j})|- \Omega(\ndel)$.
	Thus, (C3-2) holds for both parts
	$\mathcal{E}_{i,j}^1$ and $\mathcal{E}_{i,j}^2$.
	
	Suppose that no cut $(C,\bar{C})$ is found (Case 2-b-ii). If the running time $K$ among vertices in $V_{i,j}$ is $\tilde{O}(n^{10\rho})$, then (C3-1) holds.
	Otherwise, we must have $|{V}_{i,j}|\leq n' -\tilde{\Omega}(K\ndel)$ due to the Low Degree subroutine, and so (C3-2) holds.
	
	Condition (C4) follows from the the above proof of (C3), since for each part of the algorithm, it is either the case that
	(i) this part takes $O(n^{10\rho})$ time,
	or
	(ii) the number of vertices in the current subgraph is reduced by $\tilde{\Omega}(\ndel)$ \emph{per round}.
\end{proof}

\begin{claim}
	Condition (C5) is met.
\end{claim}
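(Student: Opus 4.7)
The plan is to charge every edge added to $E_r'$ against the decrease of the potential
$\Phi \;=\; \sum_{i} |\mathcal{E}_i|\log|\mathcal{E}_i|$
over the current partition of $\EhExp'$. First I would observe that $E_r'$ grows only in operations \Remove{2}, \Remove{4}, and \Remove{5}, and that each such operation is immediately followed by a \Split that cuts one part $\mathcal{E}$ of size $a$ into two parts of sizes $b$ and $c$ (with $b+c+r = a$, where $r = |\partial(C)|$). In every case the subroutine guarantees the same low-conductance bound $|\partial(C)| \le \min(\newvol(C),\newvol(\bar C))/(12n^{\rho}\log m)$: for \Remove{2} and \Remove{4} this is exactly \cref{le:highdiameter}, and for \Remove{5} it follows from \cref{le:lowconductance} because we invoke it with $\phi = 1/(144n^{\rho}\log m)$, so that $\Phi(C)\le 12\phi = 1/(12n^{\rho}\log m)$.

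Next I would quantify the potential drop from one such split. Using $\newvol(C) = 2|\mathcal{E}_C| + r$ and the analogous identity for $\bar C$, the side of smaller volume is automatically the side with fewer edges, so I may assume $b \le c$ where $b = |\mathcal{E}_C|$. The key algebraic step is the identity
$$a\log a - b\log b - c\log c \;=\; b\log(a/b) + c\log(a/c) + r\log a \;\geq\; b + r,$$
where the final inequality uses $a \geq 2b$ (because $c\geq b$) and $a \geq 2$ (whenever $r\geq 1$; the case $r=0$ is trivial). Combining this with the conductance bound $r(12n^{\rho}\log m - 1) \leq 2b$, which rearranges to $r(6n^{\rho}\log m - \tfrac12) \leq b$, I get
$$6n^{\rho}\log m \cdot r \;\leq\; b + \tfrac12 r \;\leq\; b + r \;\leq\; a\log a - b\log b - c\log c.$$
Thus each edge added to $E_r'$ is paid for by a $(6n^{\rho}\log m)^{-1}$ fraction of the potential drop caused by the accompanying split.

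Finally I would sum across all the splits performed by $\mathcal{A}^{\ast}$. The splits \Split{1} and \Split{3} decompose a part into its connected components after some edges have been moved into $E_s'$; they contribute nothing to $E_r'$ and can only further decrease $\Phi$, so they are harmless for this accounting. Initially the partition of $\EhExp'$ consists of the single set $E'$ (potential $|E'|\log|E'|$), and at the end it is $\bigcup_i \mathcal{E}_i$ (potential $\sum_i|\mathcal{E}_i|\log|\mathcal{E}_i|$). Summing the per-split inequality over the splits that contribute to $E_r'$ and using that all other splits only decrease $\Phi$ gives
$$6 n^{\rho}\log m \cdot |E_r'| \;\leq\; |E'|\log|E'| - \sum_{i=1}^{t} |\mathcal{E}_i|\log|\mathcal{E}_i|,$$
which is exactly (C5).

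The only real obstacle I expect is making sure the relabeling "smaller-volume side = smaller-edge side" is justified (it is, thanks to the identity $\newvol(\cdot) = 2|\mathcal{E}_\cdot| + r$) and that the constant matches: the factor $12$ in the conductance bound was tuned precisely so that $2b/(12n^{\rho}\log m - 1) \leq b/(6n^{\rho}\log m - \tfrac12)$ absorbs the extra $r/2$ slack needed to reach the target denominator $6n^{\rho}\log m$ in (C5). Everything else is bookkeeping.
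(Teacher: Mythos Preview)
Your proposal is correct and is essentially the paper's own argument: both track the potential $\sum_i |\mathcal{E}_i|\log|\mathcal{E}_i|$, observe that $E_r'$ grows only in \Remove{2}/\Remove{4}/\Remove{5} where the uniform bound $|\partial(C)|\le \min(\newvol(C),\newvol(\bar C))/(12n^{\rho}\log m)$ holds, and use the identity $a\log a - b\log b - c\log c = b\log(a/b)+c\log(a/c)+r\log a$ together with $b\le a/2$ to show the potential drop dominates $6n^{\rho}\log m\cdot r$. The paper phrases it as maintaining the invariant $|E_r'|\le f$ throughout, but the algebra and the charging are identical to yours.
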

\begin{proof}
	Condition (C5) says that after the algorithm $\mathcal{A}^\ast$ completes, $|E_r'|\leq f$, where
	\[
	f=\left(|E'| \log|E'| -\sum_{i=1}^t |\mathcal{E}_i| \log |\mathcal{E}_i|\right)/ (6n^{\rho}\log m).
	\]
	We prove the stronger claim that this inequality holds at all times
	w.r.t.~the current edge partition
	$\mathcal{E}_1\cup \cdots\cup\mathcal{E}_t$ of $\EhExp'$.
	In the base case this is clearly true, since $t=1$ and
	$E' = \EhExp' = \mathcal{E}_1$ and
	$E_r'=\emptyset$.  Moving edges from $\EhExp'$ to $E_s'$
	increases $f$ and has no effect on $E_r'$, so we only have to consider the movement of edges from $\EhExp'$ to $E_r'$.
	Note that this only occurs in \Remove{$i$} and \Split{$i$}, for $i \in \{2,4,5\}$, where in these operations we find a cut
	$(C,\bar{C})$ and split one of the parts
	$\mathcal{E}_j$ according to the cut.
	In all cases we have
	\[
	|\partial(C)| \leq \frac{\min(\newvol(C),\newvol(\bar{C}))}{12n^{\rho}\log m}.
	\]
	Suppose that removing $\partial(C)$ splits $\mathcal{E}_j$ into $\mathcal{E}_j^1$ and $\mathcal{E}_j^2$,
	with $|\mathcal{E}_j^1| \le |\mathcal{E}_j^2|$
	and $C=V(\mathcal{E}_j^1)$.
	We bound the change in $|E_r'|$ and $f$ separately.  Clearly
	\begin{align*}
	\Delta|E_r'| &= |\partial(C)| \leq \frac{2|\mathcal{E}_j^1| + \partial(C)}{12n^{\rho}\log m} \leq \frac{|\mathcal{E}_j^1|}{6n^{\rho}\log m} + \frac{\partial(C)}{12n^{\rho}\log m}.
	\intertext{and}
	\Delta f &= \frac{1}{6n^{\rho}\log m}\cdot \left(|\mathcal{E}_j|\log|\mathcal{E}_j| - \sum_{k\in\{1,2\}} |\mathcal{E}_j^k|\log|\mathcal{E}_j^k|\right)\\
	&\ge \frac{1}{6n^{\rho}\log m}\cdot \left( |\mathcal{E}_j^1|\log(|\mathcal{E}_j|/|\mathcal{E}_j^1|) + \partial(C)\log|\mathcal{E}_j|\right)\\
	&> \Delta|E_r'|  &  \mbox{(Because $|\mathcal{E}_j^1| < |\mathcal{E}_j|/2$.)}
	\end{align*}
	Thus, $|E_r'| \leq f$ also holds after \Remove{$i$} and \Split{$i$},
	for $i\in\{2,4,5\}$.
\end{proof}
	\end{document}